\tikzstyle{overbrace text style}=[font=\tiny, above, pos=.5, yshift=5pt]
\tikzstyle{overbrace style}=[decorate,decoration={brace,raise=5pt,amplitude=3pt}]
\theoremstyle{definition}
\newtheorem{definition}{Definition}
\theoremstyle{plain}
\newtheorem{theorem}{Theorem}[section]
\newtheorem{inftheorem}{Informal Theorem}
\newtheorem{lemma}[theorem]{Lemma}
\newtheorem{proposition}[theorem]{Proposition}
\newtheorem{fact}[theorem]{Fact}
\theoremstyle{remark}
\newtheorem{remark}{Remark}
\newtheorem*{remark*}{Remark}
\newtheorem{claim}{Claim}
\newtheorem*{claim*}{Claim}
\DeclareMathOperator*{\argmin}{argmin}
\DeclareMathOperator*{\argmax}{argmax}
\newcommand{\sz}{\textup{size}}
\newcommand{\tfnp}{\textup{TFNP}}
\newcommand{\ppa}{\textup{PPA}}
\newcommand{\ppad}{\textup{PPAD}}
\newcommand{\ppp}{\textup{PPP}}
\newcommand{\degtag}[1]{[\# #1]}
\newcommand{\ppak}[1]{\textup{PPA-$#1$}}
\newcommand{\neck}[1]{{\normalfont \scshape $#1$-Necklace-Splitting}}
\newcommand{\pstartucker}[1]{{\normalfont \scshape $\mathbb{Z}_{#1}$-star-Tucker}}
\newcommand{\imba}[1]{{\normalfont \scshape Imbalance-mod-$#1$}}
\newcommand{\ppakl}[2]{\ppak{#1\degtag{#2}}}
\newcommand{\bipa}[1]{\textsc{Bipartite-mod-$#1$}}
\newcommand{\bipal}[2]{\bipa{#1\degtag{#2}}}
\newcommand{\imbal}[2]{\imba{#1\degtag{#2}}}
\newcommand{\bss}{\textsc{BSS}}
\newcommand{\tucker}{\textsc{BSS-Tucker}}
\newcommand{\polygonTucker}[1]{{\normalfont \scshape $#1$-polygon-Tucker}}
\newcommand{\polygonBorsukUlam}[1]{{\normalfont \scshape $#1$-polygon-Borsuk-Ulam}}
\newcommand{\idx}{\mathsf{index}}
\newcommand{\val}{\mathsf{value}}
\newcommand{\chull}{\mathsf{conv}}
\newcommand{\bbR}{\mathbb{R}}
\newcommand{\bbZ}{\mathbb{Z}}
\newcommand{\bv}{\mathbf{v}}
\newcommand{\bx}{\mathbf{x}}
\newcommand{\bu}{\mathbf{u}}
\newcommand{\by}{\mathbf{y}}
\newcommand{\be}{\mathbf{e}}
\newcommand{\bz}{\mathbf{z}}
\newcommand{\bs}{\mathbf{s}}
\newcommand{\norm}[1]{\left\|#1\right\|}
\newcommand{\abs}[1]{\left|#1\right|}
\renewcommand{\vec}[1]{\boldsymbol{#1}}
\newcommand{\thickhline}{%
    \noalign {\ifnum 0=`}\fi \hrule height 1.5pt
    \futurelet \reserved@a \@xhline
}
\newcolumntype{"}{@{\hskip\tabcolsep\vrule width 1.5pt\hskip\tabcolsep}}
\title{A Topological Characterization of Modulo-$p$ Arguments and Implications for Necklace Splitting}
\author{
\begin{tabular}{cc}
& \\
\textbf{Aris Filos-Ratsikas} & \textbf{Alexandros Hollender}\\
\small{University of Liverpool, United Kingdom} & \small{University of Oxford, United Kingdom}\\
\href{mailto:Aris.Filos-Ratsikas@liverpool.ac.uk}{\small{\texttt{Aris.Filos-Ratsikas@liverpool.ac.uk}}} & \href{mailto:Alexandros.Hollender@cs.ox.ac.uk}{\small{\texttt{Alexandros.Hollender@cs.ox.ac.uk}}}\\
& \\
\textbf{Katerina Sotiraki} & \textbf{Manolis Zampetakis}\\
\small{University of California Berkeley, USA} & \small{University of California Berkeley, USA}\\
\href{mailto:katesot@berkeley.edu}{\small{\texttt{katesot@berkeley.edu}}} & \href{mailto:mzampet@berkeley.edu}{\small{\texttt{mzampet@berkeley.edu}}}\\
& \\
\end{tabular}
}
\date{}
\begin{document}

\maketitle

\begin{abstract}

The classes \ppak{p} have attracted attention lately, because they are the main candidates for capturing the complexity of \emph{Necklace Splitting} with \emph{$p$ thieves}, for prime $p$. However, these classes were not known to have complete problems of a topological nature, which impedes any progress towards settling the complexity of the Necklace Splitting problem.
On the contrary, topological problems have been pivotal in obtaining completeness results for PPAD and PPA, such as the PPAD-completeness of finding a Nash equilibrium \citep{Daskalakis2009,chen2009settling} and the PPA-completeness of Necklace Splitting with \emph{$2$ thieves} \citep{FRG18-Necklace}.

In this paper, we provide the first \emph{topological characterization} of the classes \ppa-$p$. First, we show that the computational problem associated with a simple generalization of Tucker's Lemma, termed \textsc{$p$-polygon-Tucker}, as well as the associated Borsuk-Ulam-type theorem, \textsc{$p$-polygon-Borsuk-Ulam}, are \ppa-$p$-complete. Then, we show that the computational version of the well-known \emph{BSS Theorem} \citep*{BSS81}, as well as the associated \textsc{BSS-Tucker} problem are \ppa-$p$-complete. Finally, using a different generalization of Tucker's Lemma (termed \textsc{$\mathbb{Z}_p$-star-Tucker}), which we prove to be \ppa-$p$-complete, we prove that $p$-thief Necklace Splitting is in \ppa-$p$. This latter result gives a new combinatorial proof for the Necklace Splitting theorem, the only proof of this nature other than that of \citet{Meunier2014simplotopal}.

All of our containment results are obtained through a new combinatorial proof for $\mathbb{Z}_p$-versions of Tucker's lemma that is a natural generalization of the standard combinatorial proof of Tucker's lemma by \citet{FT81}. We believe that this new proof technique is of independent interest.

\end{abstract}

\clearpage
\tableofcontents

\section{Introduction}
The class \tfnp\ \citep{Megiddo1991} is the class of \emph{Total Search Problems in NP}, 
i.e., problems for which a solution is always guaranteed to exist, and can be verified 
in polynomial time. In a seminal paper, attempting to capture the complexity of numerous
interesting problems, \citet{Papadimitriou94-TFNP-subclasses} defined several subclasses of 
TFNP, such as \ppad, \ppa\ and \ppp, among others, each of which is associated with a different
existence principle. For example, \ppad\ is based on the principle that given a source in 
a directed graph with in-degree and out-degree at most 1, there must exist another vertex of
degree 1; \ppa\ is based on a similar principle on an undirected graph, and \ppp\ is based on
the pigeonhole principle.

Among those, \ppad\ has been largely  successful in capturing the complexity of many well-known 
problems, most prominently that of computing Nash equilibria in games 
\citep{Daskalakis2009,chen2009settling}. \ppa\ and \ppp\ have been more elusive in that regard
for nearly two decades, until the recent results of \citet{FRG18-Consensus,FRG18-Necklace} and
\citet{sotirakiZZ18}, who provided the first ``natural'' complete problems for these classes
respectively. Here, a ``natural'' problem is one that does not explicitly include a polynomial-sized
circuit in its definition (as termed in \citep{Grigni2001}).
Interestingly, the \ppa-complete problems in \citep{FRG18-Consensus,FRG18-Necklace} that solidified the status of \ppa\ as a class containing such natural problems were the well-known \emph{Necklace Splitting} problem for \emph{two thieves} as well as its continuous variant (coined the \emph{Consensus-Halving} problem in \citep{SS03-Consensus}). 

The Necklace Splitting problem is a classical problem in combinatorics, dating back to the mid-1980s and the works of \citet{Goldberg1985}, \citet{Alon1986} and \citet{Alon87-Necklace}, among others. In this problem, $k$ thieves are aiming to split an open necklace containing $n$ beads of $t$ different colors (with exactly $k\cdot a_i$ beads of color $i$, for some $a_i \in \mathbb{N}$), such that each thief receives exactly $a_i$ beads of color $i$. Furthermore, the thieves are allowed to use only $(k-1)t$ cuts to obtain this division.
A solution to the Necklace Splitting problem is guaranteed to exist, as was proven by \citet{Alon87-Necklace}. Earlier on, \citet{Goldberg1985} and \citet{Alon1986} had proven the existence of a solution for the case of $2$ thieves; this result invokes a fundamental tool from mathematics, the \emph{Borsuk-Ulam Theorem} \citep{Borsuk1933}. Alon's proof for the general case proceeds in two steps. First, using a simple argument, he proves that if the theorem holds for any prime number $p$ of thieves, it also holds for any other number of thieves. In the second step, which is significantly more involved, he proves the theorem for any prime number by using the \emph{BSS Theorem}, a generalization of the Borsuk-Ulam Theorem due to \citet*{BSS81}.

Questions about the computational complexity of finding a solution were raised explicitly as early as when the first existence results were proven \citep{Goldberg1985} and then later on by a series of papers \citep{alon1988some,alon1990non,Meunier2008,Meunier2009,Meunier2012}. The first definitive answer was provided by \citet{FRG18-Necklace}, via their \ppa-completeness result, following an initial \ppad-hardness result by \citet{filos2018hardness}. Crucially however, their result only applies to the case of two thieves.\footnote{The authors also extend the \ppa\ membership straightforwardly to numbers of thieves which are powers of $2$.} In fact, the authors observed (also referencing \citet{DeLongueville2006} and \citet{Meunier2014simplotopal} as previously having made similar observations) that the version of the problem with $p \geq 3$ thieves does not seem to boil down to the principle associated with the class \ppa. To this end, they conjectured that $p$-thief Necklace Splitting is complete for the computational class \ppa-$p$, also defined by \citet{Papadimitriou94-TFNP-subclasses}, in which the associated principle is the following; Given a vertex with degree which is not a multiple of $p$ in a bipartite graph, find another such vertex. It follows from the definition that $\ppak{2}=\ppa$.

The classes \ppak{p} have been very recently studied by \citet{Hollender19} and \citet{GoosKSZ2019}. The authors of the latter paper in fact provide the first \ppak{p}-completeness result for a natural problem, a computational version of the Chevalley-Warning theorem. Importantly, they were able to obtain their completeness result via reductions to and from equivalent variants of the canonical problems of the class. To prove any results about $p$-thief Necklace Splitting however, such an approach seems insufficient. 

To see this, note that the results for $p=2$, both for the \ppa-membership \citep{filos2018hardness} and for \ppa-hardness \citep{FRG18-Necklace} of the problem, are obtained via reductions to/from a computational version of \emph{Tucker's Lemma} \citep{tucker1945some}, a discrete analogue to the Borsuk-Ulam theorem, proven to be \ppa-complete by \citet{Papadimitriou94-TFNP-subclasses} and \citet{ABB15-2DTucker}. Tucker's lemma asserts that if we have an antipodally symmetric triangulation of a $d$-dimensional ball $B$ and a labeling function which assigns complementary labels to antipodal points on the boundary, then there is a \emph{complementary edge}, i.e., two adjacent points with equal-and-opposite labels. This connection is not a coincidence; for example, the idea in \citep{filos2018hardness} is in fact a direct adaptation of a combinatorial existence proof of \citet{SS03-Consensus} for the Consensus-Halving problem, which goes via Tucker's lemma.

In order to obtain a \ppak{p} result for $p$-thief Necklace Splitting, it seems rather imperative to develop an ``arsenal'' of computational problems of a related nature, that we could reduce to/from, namely generalizations of the Borsuk-Ulam theorem and Tucker's lemma. Such ``topological characterizations'' did not only enable researchers in settling the complexity of the problem (and some other related problems) for $\ppa = \ppak{2}$, but also facilitated the success of \ppad\ to the utmost extent, seeing as virtually all the related important results go via the computational versions of the associated topological theorems (specifically \emph{Brouwer's Fixed-Point Theorem} \citep{brouwer1911abbildung} and \emph{Sperner's Lemma} \citep{sperner1928neuer}).\footnote{To be more precise, several important \ppad-hardness results were obtained via reductions from the \emph{Generalized Circuit problem} \citep{Daskalakis2009,chen2009settling,rubinstein2018inapproximability}, which was however proven to be \ppad-complete via the aforementioned topological problems.} In a very related manner, \citet{SS03-Consensus} stressed the importance of obtaining such a generalization, in the quest for obtaining a combinatorial proof of \emph{Consensus-$1/k$-Division} (the generalization of Consensus-Halving) and therefore, for Necklace Splitting with $k$ thieves, for $k>2$. Lastly, \citet{GoosKSZ2019} explicitly raised the complexity of one of these generalizations, the BSS Theorem, as an open problem.

\subsection{Our Results}
In this paper, we obtain such a \emph{topological characterization} of the classes \ppak{p}, for all primes $p \geq 3$. Namely, we provide generalizations of the computational versions of the Borsuk-Ulam theorem and Tucker's lemma, parameterized by $p$, which are complete for \ppak{p}. A highlight of our generalizations is the \ppa-$p$ completeness of the computational version of the BSS Theorem. Finally, we use a further generalization to prove that Necklace Splitting with $p$ thieves lies in \ppak{p}.\\

\noindent The strength of our results lies in that

\begin{itemize}
    \item[$\diamond$] they solidify the status of the classes \ppa-$p$ as classes containing interesting well-known problems (adding to the recent results of \citet{GoosKSZ2019}), and
    \item[$\diamond$] they set up an essential toolkit for obtaining more completeness results for the classes, e.g., possibly the \ppa-$p$-completeness of $p$-thief Necklace Splitting.
\end{itemize}

\noindent All of our \ppak{p}-membership results are obtained via a new combinatorial proof for $\mathbb{Z}_p$-versions of Tucker's lemma. This new proof can be seen as a natural generalization of the standard combinatorial proof of Tucker's lemma by \citet{FT81}. Thus, as a byproduct of our techniques we also obtain the following results:
\begin{enumerate}
    \item A combinatorial proof of the BSS theorem. The original proof by \citet{BSS81} is not combinatorial, as it uses various tools from algebraic topology. Using our new technique, we are able to provide the first combinatorial proof for this theorem.
    \item A combinatorial proof of the Necklace Splitting theorem. The existence of such a combinatorial proof had been an open problem since \citep{Alon87-Necklace}. This open problem was solved by \citet{Meunier2014simplotopal} using a rather complicated argument. In contrast, our new combinatorial proof uses more elementary tools and is conceptually simpler than Meunier's proof. 
    \item A stronger statement of the continuous Necklace Splitting theorem which is called Consensus-$1/p$-Division \citep{Alon87-Necklace, SS03-Consensus}. The main advantage of our new theorem is that it actually works for valuation functions that are not necessarily additive and non-negative, for details see \cref{thm:strongConsensusDivision}.
\end{enumerate}
As a result, we believe that this new technique is of independent interest and will be useful for providing combinatorial proofs of other topological existence theorems such as Dold's Theorem \cite{Dold1983}. We remark here that although the original proof of the Necklace Splitting theorem in \citep{Alon87-Necklace} is via the BSS Theorem, our \ppak{p} membership result for Necklace Splitting does \emph{not} go via the \ppak{p} membership result that we prove for BSS. This is due to the fact that for several steps of the proof of \citet{Alon87-Necklace}, it is quite unclear whether they can be carried out in polynomial time. Instead, we construct a reduction directly from \pstartucker{p}, which we prove to be \ppak{p}-complete.

\smallskip

    \noindent In the remainder of this section, we informally state our main problems and 
results, and we give a short and high-level description of our proof 
techniques. We start with a topological theorem that is easy to state in \cref{sec:intro:kPolygonBorsukUlam}, which we call $k$-Polygon Tucker's Lemma. Then, we present
our results about the completeness of the well-studied BSS Theorem in
\cref{sec:intro:BSSTheorem}.  Finally, we briefly explain how we get our new
combinatorial proof of Necklace Splitting and the containment in
$\ppak{p}$, in \cref{sec:intro:Necklace}.

\smallskip

\noindent Throughout this paper, unless otherwise specified, $k$ denotes an integer larger or equal to $2$, and $p$ denotes a prime number.
  
\subsubsection{The $k$-Polygon Borsuk-Ulam Theorem} 
\label{sec:intro:kPolygonBorsukUlam}
 
The $k$-Polygon Borsuk-Ulam theorem can be understood  via the corresponding
statement of Borsuk-Ulam in $2$ dimensions. First, let us introduce the following notion of equivariance for a function. 
Let $S^1$ be the unit circle
in $2$-dimensions and $B^2$ be the unit disk. We say that a function $g : S^1 \to \bbR^2$
is \textit{equivariant to a rotation of $a^{\circ}$ degrees} if whenever we
rotate the input $\vec{x} $ by $a^{\circ}$, the image
$g(\vec{x})$ is also rotated by $a^{\circ}$ degrees. We extend this definition to
functions $f : B^2 \to \bbR^2$ and we say that $f$ is \textit{equivariant to a rotation of
$a^{\circ}$ degrees on the boundary} if the restriction of $f$ to the boundary $S^1$
is equivariant to a rotation of $a^{\circ}$ degrees. 
Using this language, the following
is one of the many equivalent ways to state the classical Borsuk-Ulam Theorem in $2$ dimensions (see the book by  \citet{Mat03BorsukUlam} for various equivalent versions).

\begin{inftheorem}[\textsc{$2$D Borsuk-Ulam Theorem}] \label{itm:2DBorsukUlamTheorem}
    Let $f : B^2 \to \bbR^2$ be a continuous function that is equivariant to a rotation of
  $180^{\circ}$ degrees on the boundary. Then, there exists 
  $\vec{x}^{\star} \in B^2$ such that $f(\vec{x}^{\star}) = 0$.
\end{inftheorem}

\noindent  The generalization, that we call $k$-Polygon Borsuk-Ulam Theorem,
comes as a clean extension of Borsuk-Ulam where instead of assuming equivariance to a 
rotation of $180^{\circ}$ degrees, we assume equivariance to a rotation of 
$(360/k)^{\circ}$ degrees.

\begin{inftheorem}[\textsc{$k$-Polygon Borsuk-Ulam Theorem}] \label{itm:kPolygonBorsukUlamTheorem}
    Let $f : B^2 \to \bbR^2$ be a continuous function that is equivariant to a rotation of
  $\frac{360^{\circ}}{k}$ degrees on the boundary. Then, there exists 
  $ \vec{x}^{\star} \in B^2$ such that $f(\vec{x}^{\star}) = 0$.
\end{inftheorem}

\noindent   Apart from its own mathematical interest, the $k$-Polygon Borsuk-Ulam Theorem is
essential for our results, since it serves as a stepping stone towards showing all our topological
$\ppak{p}$-completeness results. Also, it is the only one of our topological
problems which is defined for any $k \geq 2$, and thus the only problem which we are able to relate to the classes \ppak{k}, for general $k$.
 
  To prove the $k$-Polygon Borsuk-Ulam Theorem, we deviate from the topological 
techniques that have been used in the proof of similar extensions of the Borsuk-Ulam
Theorem \citep{BSS81} and we instead provide a combinatorial proof of a corresponding generalization of Tucker's lemma. We call this lemma $k$-Polygon Tucker's Lemma and an informal statement follows.

\begin{inftheorem}[\textsc{$k$-Polygon Tucker's Lemma}] \label{itm:polygonTucker}
    For $k\geq 3$, let $T$ be a triangulation of $B^2$ with $k$-fold rotationally symmetric boundary. Suppose that every vertex
  $\bx \in T$ has a color $\lambda(\bx) \in \mathbb{Z}_k$ such that $\lambda$ is
  equivariant to a rotation of $\frac{360^{\circ}}{k}$ degrees on the boundary. Then,
  in $T$ there exists (i) a trichromatic triangle, or (ii) an edge with distinct non-consecutive colors.
\end{inftheorem}

\noindent   The coloring function $\lambda$ is equivariant
to a rotation of $\alpha^{\circ}$ degrees on the boundary if whenever the 
argument $\vec{x}$ is rotated by $\alpha^{\circ}$, the color is increased by 
$1 \pmod{k}$. Arguably, for $k=3$ the above statement of $k$-Polygon Tucker's Lemma bears more resemblance to
 Sperner's Lemma than to the original version of Tucker's Lemma, because the solution is necessarily a 
trichromatic triangle. However, this is due to the fact that we are considering only the 
two-dimensional case here. The connection with the original version of Tucker's Lemma will become more
apparent when we present other high-dimensional modulo-$p$ generalizations of Tucker's Lemma.

  In the proof of $k$-Polygon Tucker's Lemma, the only inefficient step is the use
of a \textit{modulo-}$k$ counting argument. A simple way to visualize this 
argument is to imagine that if a space has cardinality 
that is non-zero modulo $k$ and if we can group points in this space that are non-solutions into groups of
size $k$, then in this space there should exist a solution. This kind of existential argument has
been formalized by Papadimitriou in his seminal paper 
\citep{Papadimitriou94-TFNP-subclasses} and there have been various
instantiations of this principle from which we can define corresponding computational
problems \citep{GoosKSZ2019, Hollender19}. In this paper, we mostly rely on 
the following instantiation defined by \citet{Hollender19}:

\begin{quote}
  \imba{k} : Given a directed graph and a vertex that is \emph{imbalanced-mod-$k$}, i.e., $(\mathsf{out}\text{-}\mathsf{degree}) - (\mathsf{in}\text{-}\mathsf{degree}) \neq 0 \pmod k$, find another such vertex.
\end{quote}

\noindent   Our main technical contribution in this section is to show that the 
computational problem associated with $k$-Polygon Tucker's Lemma is 
polynomial time equivalent to the computational version of \imba{k}. Towards this goal, we provide a generalization of the standard \emph{path-following} proof of Tucker's lemma by \citet{FT81}. Importantly, in our proof (which is a reduction to \imba{k}) the edges of the path are \emph{directed} by using a consistent direction of triangulations (inspired by the idea of \citet{Freund84a}). This technique was in fact incorrectly applied in the past to the case of $k=2$, leading to a false statement of $\ppad$-membership for Borsuk-Ulam. However, it turns out that the technique is very relevant in showing the equivalence between topological problems and modulo-$k$ arguments for $k>2$.
We illustrate
the appropriate way to use this technique via the \imba{k} problem and we believe 
that this will be useful for future reductions in $\ppak{k}$.

  The computational equivalence between $k$-Polygon Tucker's Lemma and \imba{k} 
together with the computational equivalence of 
$k$-Polygon Tucker's Lemma and the $k$-Polygon Borsuk-Ulam implies the following 
theorem, which is our main result in this section.

\begin{inftheorem} \label{itm:kPolygon:completeness}
    If $k$ is a prime power, the computational problem associated with the $k$-Polygon Borsuk-Ulam Theorem is \ppak{k}-complete. If $k$ is not a prime power, then it is complete for a subclass of
  \ppak{k}, denoted by $\ppak{k}[\#1]$.
\end{inftheorem}

\noindent The reason for this differentiation depending on whether $k$ is a prime power or not, is that we actually show equivalence of $k$-Polygon Tucker's Lemma with a \emph{special case} of \imba{k}. If $k$ is a prime power, then this special case is in fact \ppak{k}-complete, but in general it can be weaker.

  The formal definitions and the complete proofs, including the proof of \cref{itm:kPolygon:completeness}, about the $k$-Polygon Borsuk-Ulam Theorem
appear in \cref{sec:2dBSS}.

\subsubsection{Complexity of Finding Solutions to the BSS Theorem} \label{sec:intro:BSSTheorem}

  In this section, we present our results regarding the computational complexity of 
finding solutions  guaranteed to exist by the BSS Theorem, a famous
generalization of the celebrated Borsuk-Ulam Theorem. The BSS Theorem should be thought of as the 
corresponding $k$-Polygon Borsuk-Ulam in higher dimensions. We clarify though
that the BSS Theorem only works with $k = p$ where $p$ is a prime, and it applies only when
the number of dimensions is a multiple of $p - 1$. Hence, for $p \ge 5$ the
$p$-Polygon Borsuk-Ulam Theorem does not follow directly from the BSS Theorem. This is why we 
consider it a separate result and devote a separate section to it.

  When moving to more than two dimensions, we need to find an equivariance notion
corresponding to the equivariance of a rotation that we defined in the previous
section. A fundamental feature of a rotation in the plane is that it is a 
\emph{free action} on the boundary, i.e., there is no point on the boundary
$S^1$ that remains fixed if we apply the rotation. This free action property of 
rotations is crucial in the proof of $k$-Polygon Borsuk-Ulam Theorem and without
this property the theorem does not hold. Unfortunately, in higher
dimensions the rotations with respect to any axis no longer possess this property, as they always have a 
fixed point on the boundary $S^{m - 1}$ of the $m$-dimensional ball $B^m$. Thus, in higher dimensions  other operations acting 
 freely on $S^{m - 1}$ emerge.

  For the case of $k = 2$, there is a very simple generalization of the rotation by
$180^{\circ}$ that is a free action in any number of dimensions, namely, \textit{the
point reflection with respect to the origin}, i.e., $x \mapsto -x$. Hence, we have the following
informal statement of the Borsuk-Ulam Theorem for a general number of dimensions.

\begin{inftheorem}[\textsc{Borsuk-Ulam Theorem}] \label{itm:BorsukUlamTheorem}
    Let $f : B^m \to \bbR^m$ be a continuous function that on the boundary is 
  equivariant to point reflection with respect to the origin. Then, there exists
  $\vec{x}^{\star} \in B^m$ such that $f(\vec{x}^{\star}) = 0$.
\end{inftheorem}

\noindent  Observe that the \textit{order} of the point reflection operation is equal to 
$k = 2$, since if we apply the same operation twice, we return to the same
point. It is also trivial to see that the rotation by $360^{\circ}/k$ degrees has 
order $k$, since after $k$ times of applying this operation we return to the same 
point and if we apply this operation less than $k$ times, then we end up on a different 
point. These observations together with the requirement for a free action suggest
that in order to generalize the $k$-Polygon Borsuk-Ulam theorem to higher dimensions we need
a free action of order $k$ on the boundary $S^{m - 1}$ of the $m$-dimensional ball
$B^m$. Unfortunately, finding such operations is not as easy as in the case $k = 2$.
In particular, for $m = 2 \ell + 1$ the sphere $S^{2 \ell}$ has a free action of order $k$
only for $k = 2$ \citep{Hatcher02}. The starting point of the BSS Theorem is 
defining an operation $\alpha_p$ that has order $p$, where $p$ is a prime number,
and defines a free action on the sphere $S^{n (p - 1) - 1}$. These restrictions on $\alpha_p$ 
are the reason why, as we mentioned in the beginning of the section,
BSS only applies to dimensions that are multiples of $p - 1$ and for $k = p$ 
where $p$ is a prime number. Using the operation $\alpha_p$, we can informally state
the BSS Theorem as follows:

\begin{inftheorem}[\textsc{BSS Theorem} \citep*{BSS81}] \label{itm:BSSTheorem}
    Let $f : B^{n (p - 1)} \to \bbR^{n (p - 1)}$ be a continuous function that is
  equivariant with respect to $\alpha_p$ on the boundary. Then, there exists 
  $\vec{x}^{\star} \in B^{n (p - 1)}$ such that $f(\vec{x}^{\star}) = 0$.
\end{inftheorem}

\noindent  The original proof of the BSS Theorem \citep{BSS81} goes through the definition of
indices of functions in algebraic topology. One of our main contributions is to provide 
a combinatorial proof of the BSS Theorem. As in the case of 
$k$-Polygon Borsuk-Ulam, we provide a combinatorial proof of the BSS Theorem via
the corresponding version of Tucker's Lemma, which we call BSS Tucker's Lemma and we informally
state below.

\begin{inftheorem}[\textsc{BSS Tucker's Lemma}] \label{itm:BSSTucker}
    Let $p$ be a prime and $T$ be a triangulation of $B^{n (p - 1)}$ with an $\alpha_p$-symmetric boundary. Suppose that
  every vertex $\vec{x} \in T$ has a color 
  $\lambda(\vec{x}) \in \mathbb{Z}_p \times [n]$ such that $\lambda$ is 
  equivariant with respect to $\alpha_p$ on the boundary. Then, there exists
  a $(p - 1)$-simplex in $T$ that has all the colors $(1, j), \dots (p, j)$ for 
  some $j \in [n]$.
\end{inftheorem}

\noindent Our main technical contribution in this section is to show the following 
statements.
\vspace{-7pt}
\begin{itemize}
  \item[$\triangleright$] The computational problem that is associated with
  BSS Tucker's Lemma is polynomial time equivalent to the computational problem
  associated with the BSS Theorem (\cref{thm:BSStuckerBSSequiv}).
  \item[$\triangleright$] The computational problem associated with 
  $p$-Polygon Tucker's Lemma is reducible to the computational problem associated with
  BSS Tucker's Lemma (\cref{thm:BSS-tucker-ppak-hard}).
  \item[$\triangleright$] The computational problem associated with BSS Tucker's
  Lemma is reducible to the computational problem associated with 
  $\mathbb{Z}_p$-star Tucker's Lemma (\cref{lem:bss-to-pstar}).
\end{itemize}

\noindent $\mathbb{Z}_p$-star Tucker's Lemma is an existence theorem that we define 
informally in the next section and is the basic building block for proving the membership
of Necklace Splitting in $\ppak{p}$. As we will see in the next section, we 
prove that the computational problem associated with $\mathbb{Z}_p$-star Tucker's
Lemma is inside $\ppak{p}$. This result combined with \cref{itm:kPolygon:completeness},
which shows the PPA-$p$-completeness of the
$p$-Polygon Borsuk-Ulam Theorem, and the equivalence of the $p$-Polygon Borsuk-Ulam Theorem and $p$-Polygon Tucker's lemma, implies the main result of this section.

\begin{inftheorem} \label{itm:BSS:completeness}
    The computational problem associated with the BSS Theorem 
  is $\ppak{p}$-complete.
\end{inftheorem}

\noindent We defer the formal definition of the computational problem associated with the BSS Theorem and the complete proofs, including the proof of
 \cref{itm:BSS:completeness}, about the BSS Theorem appear in
\cref{sec:BSS-BSStucker}.
 
\subsubsection{Necklace Splitting with $p$ thieves is in \ppak{p}} \label{sec:intro:Necklace}

  Our main goal in the last part of the paper is to show that the computational
problems associated with the $p$-Necklace Splitting Theorem and the Consensus-$1/p$-Division Theorem are in $\ppak{p}$. Towards this goal we provide a full 
combinatorial proof for both of these problems that simplifies the only existing 
combinatorial proof by \citet{Meunier2014simplotopal}.

Our proof uses a different generalization of Tucker's Lemma which we call $\mathbb{Z}_p$-star Tucker's Lemma. Both
the statement of $\mathbb{Z}_p$-star Tucker Lemma and the proof are simple enough so 
that they do not invoke the involved definition of \emph{simplotopal} complexes of 
\citet{Meunier2014simplotopal}. These simplotopal complexes were used by Meunier in order to obtain a direct proof of the Necklace-Splitting theorem, without proving its continuous version. In contrast, we use standard simplicial complexes, since we are not only interested in the Necklace Splitting theorem, but also in its continuous generalization: the Consensus-$1/k$-Division Theorem (informally defined below).

  Recall that in the $k$-Necklace Splitting problem, $k$ thieves are aiming to split an open 
necklace containing $n$ beads of $t$ different colors, with exactly $k \cdot a_i$
beads of color $i$, for some $a_i \in \mathbb{N}$, such that each thief
receives exactly $a_i$ beads of color $i$. The $k$-Necklace Splitting Theorem states
the following.

\begin{inftheorem}[\textsc{$k$-Necklace Splitting Theorem} \citep{Alon87-Necklace}] \label{itm:pNecklaceSplitting}
    The $k$-Necklace Splitting problem always has a solution with $(k - 1)t$ cuts.
\end{inftheorem}

\noindent  The Consensus-$1/k$-Division problem resembles the continuous version of the
$k$-Necklace Splitting problem. In this problem, each one of $t$ agents has a 
probability measure $\mu_i$ over the unit interval $[0, 1]$. The goal is to cut the
interval $[0, 1]$ into pieces and assign one of $k$ possible colors to each piece
such that every agent measures the total mass of each different color the same.

\begin{inftheorem}[\textsc{Consensus-$1/k$-Division Theorem} \citep{Alon87-Necklace,SS03-Consensus}] \label{itm:pconhalving}
    The Consensus-$1/k$-Division problem always has a solution with $(k - 1)t$ cuts.
\end{inftheorem}

\noindent  Both the $k$-Necklace Splitting Theorem and the Consensus-$1/k$-Division Theorem 
have significant applications to combinatorics and social choice; see \cref{sec:previousWork}.

  As we have already mentioned, the proof that for prime $p$ the computational problems associated with
the $p$-Necklace Splitting Theorem and the Consensus-$1/p$-Division Theorem are 
inside $\ppak{p}$, is based on a different generalization of Tucker's Lemma for 
modulo-$p$ arguments, $\mathbb{Z}_p$-star Tucker's Lemma. The main difference of
$\mathbb{Z}_p$-star Tucker's Lemma with BSS Tucker's Lemma is the domain on which we 
define the triangulation. In the case of BSS Tucker's Lemma, the triangulation is
defined over a convex domain and hence is homeomorphic with the ball $B^m$. On the 
other hand, the triangulation for $\mathbb{Z}_p$-star Tucker's Lemma is defined over 
a \textit{star-convex} set which is not
homeomorphic to the ball $B^m$ anymore. This $d$-dimensional domain, which we denote by $R^d_p$, is a slightly modified version of the domain used by \citet{Meunier2014simplotopal} in his combinatorial proof of the $p$-Necklace Splitting Theorem. It admits a natural free action $\theta_p$ of order $p$. We can informally state $\mathbb{Z}_p$-star Tucker's
Lemma similarly to BSS Tucker's lemma as follows.

\begin{inftheorem}[\textsc{$\mathbb{Z}_p$-star Tucker's Lemma}] \label{itm:ZpStarTucker}
    Let $p$ be a prime and $T$ be a triangulation of $R^{t(p - 1)}_p$. Suppose 
  that every vertex $\vec{x} \in T$ has a color 
  $\lambda(\vec{x}) \in \mathbb{Z}_p \times [t]$ such that $\lambda$ is
  equivariant with respect to $\theta_p$ on the boundary. Then, there exists a
  $(p - 1)$-simplex of $T$ that has all the colors $(1, j), \dots (p, j)$ for some
  $j \in [t]$.
\end{inftheorem}

\noindent Our main technical contributions in this section are summarized in the following 
statements.
\vspace{-7pt}
\begin{itemize}
  \item[$\triangleright$] The computational problems that are associated with the 
  $p$-Necklace Splitting Theorem and the Consensus-$1/p$-Division Theorem are 
  polynomial time reducible to $\mathbb{Z}_p$-star Tucker's Lemma (\cref{thm:con1/p-to-ptucker}).
  \item[$\triangleright$] The computational problem associated with 
  $\mathbb{Z}_p$-star Tucker's Lemma is inside $\ppak{p}$ (\cref{thm:ptucker-in-ppa-p}). This is again proved by a reduction to \imba{p}, but this time we construct a \emph{weighted} directed graph.
  \item[$\triangleright$] The computational problem associated with BSS Tucker's Lemma
  is polynomial time reducible to the computational problem associated with 
  $\mathbb{Z}_p$-star Tucker's Lemma (\cref{lem:bss-to-pstar}).
\end{itemize}

  The above statements combined with the results in the previous sections imply our main result for
this part of the paper.

\begin{inftheorem} \label{itm:Necklace:main}
    The computational problem associated with $\mathbb{Z}_p$-star Tucker's
  Lemma is $\ppak{p}$-complete. The computational problems that are associated with
  the $p$-Necklace Splitting Theorem and the Consensus-$1/p$-Division Theorem are in
  $\ppak{p}$.
\end{inftheorem}

As a corollary of this result, we also obtain that for general $k \geq 2$, $k$-Necklace Splitting and Consensus-$1/k$-Division lie in \ppak{k} \emph{under Turing reductions}. In particular, if $k=p^r$ is a prime power, then the corresponding problems lie in \ppak{p}. Furthermore, our reductions also provide a new combinatorial proof of the Necklace Splitting theorem, that is conceptually simpler and does not use any involved machinery.

  The formal definitions and the complete proofs, including the proof of \cref{itm:Necklace:main}, about $\mathbb{Z}_p$-star Tucker's
  Lemma, the $p$-Necklace Splitting Theorem and the Consensus-$1/p$-Division Theorem appear in
\cref{sec:pstartucker} and \cref{sec:necklace}.

\smallskip

\noindent Our results are summarized in \cref{tab:landscape}, where we also highlight where they fit in the computational landscape of the classes of interest. Relevant further related work is discussed in the next section.

\newcommand{\citeps}[1]{\scriptsize{\citep{#1}}}
\begin{table}
\centering
\resizebox{16cm}{!}{\begin{tabular}{l"l||l|l}
& \ppad & \ppa & \ppak{p} $(p \geq 3)$ \\
\thickhline
\begin{tabular}[c]{@{}l@{}} \small{Topological}\\ \small{Existence}\\ \small{Theorem} \end{tabular}
& \begin{tabular}[c]{@{}l@{}} \textsc{Brouwer}\\ \citeps{Papadimitriou94-TFNP-subclasses}\\ \citeps{CD09-2DBrouwer}\\ \textsc{Hairy-Ball}\\ \citeps{GoldbergH19-HairyBall} \end{tabular}
& \begin{tabular}[c]{@{}l@{}} \textsc{Borsuk-Ulam}\\ \citeps{Papadimitriou94-TFNP-subclasses}\\ \citeps{ABB15-2DTucker} \end{tabular}
& \begin{tabular}[c]{@{}l@{}} \polygonBorsukUlam{p}\\ \textsc{$p$-BSS} \\ \scriptsize{[{\color{black!50!blue} This Work}]} \end{tabular} \\
\hline
\begin{tabular}[c]{@{}l@{}} \small{Combinatorial}\\ \small{Lemma} \end{tabular}
& \begin{tabular}[c]{@{}l@{}} \textsc{Sperner}\\
\citeps{Papadimitriou94-TFNP-subclasses} \\ \citeps{CD09-2DBrouwer} \end{tabular}
& \begin{tabular}[c]{@{}l@{}} \textsc{Tucker}\\ \citeps{Papadimitriou94-TFNP-subclasses}\\ \citeps{ABB15-2DTucker} \end{tabular}
& \begin{tabular}[c]{@{}l@{}} \polygonTucker{p}\\ \textsc{$p$-BSS-Tucker}\\
\pstartucker{p}\\ \scriptsize{[{\color{black!50!blue} This Work}]} \end{tabular}\\
\hline
\begin{tabular}[c]{@{}l@{}} \small{Notable}\\ \small{Problems} \end{tabular}
& \begin{tabular}[c]{@{}l@{}} \textsc{Nash}\\ \citeps{Daskalakis2009}\\ \citeps{chen2009settling}\\ 
\textsc{Market-Equilibrium} \\ \citeps{chen2009bsettling,chen2013complexity}\\
\ \ \ \  \small{\emph{and many more...}} \end{tabular}
& \begin{tabular}[c]{@{}l@{}} \textsc{$2$-Necklace-Splitting}\\
\textsc{Consensus-Halving}\\ \textsc{Discrete-Ham-Sandwich}\\ \citeps{FRG18-Necklace} \end{tabular}
& \begin{tabular}[c]{@{}l@{}} \textsc{Symmetric-Chevalley-mod-$p$}\\ \citeps{GoosKSZ2019}\\
\textsc{$p$-Necklace-Splitting}\\ \textsc{Consensus-$1/p$-Division}\\
\scriptsize{[{\color{black!50!blue} Membership: This Work}]}\\
\scriptsize{[{\color{black!50!red} Hardness: Open}]}
\end{tabular}\\
\end{tabular}}
\caption{An overview of the computational landscape for the related TFNP classes.}\label{tab:landscape}
\end{table}

\subsection{Discussion and Further Related Work}
\label{sec:previousWork}

\textbf{Computational Classes:} As mentioned earlier, among the classes of TFNP, PPAD has been the most successful in capturing the complexity of well-known problems. Besides the complexity of computing a Nash equilibrium \citep{Daskalakis2009,chen2009settling,rubinstein2018inapproximability}, other \ppad-complete problems are computing equilibria in markets \citep{chen2009bsettling,chen2013complexity}, versions of envy-free cake cutting 
\citep{deng2012algorithmic} and fixed-point theorems \citep{mehta2014constant,GoldbergH19-HairyBall}. 

For \ppa, the recent results by \citet{FRG18-Consensus,FRG18-Necklace} have solidified the status of the class as one that contains natural problems. In particular, they showed that $2$-thief Necklace Splitting is \ppa-complete; the proof goes via its continuous version, the \emph{Consensus-Halving} problem of \citet{SS03-Consensus}\footnote{The hardness result of \cite{FRG18-Consensus,FRG18-Necklace} for the Consensus-Halving problem was strengthened recently by \citet{FRHSZ2020consensus-easier} to the case of simpler measures.} Our \ppak{p}-membership result for the problem with $p$ thieves also uses the continuous variant, termed as \emph{Consensus-$1/p$-Division} in \citep{SS03-Consensus}; we note that \citet{Alon87-Necklace} used the same problem in his existence proof, referring to it as a \emph{generalized Hobby-Rice Theorem}. As we explained earlier, \citet{FRG18-Necklace} conjectured that the problem with $p$ thieves is complete for \ppak{p}.

The classes \ppak{p} were introduced by \citet{Papadimitriou94-TFNP-subclasses} for any prime $p$, in the context of classifying a computational version of the Chevalley-Warning theorem \citep{Chevalley1935,Chevalley1935b}. He proved that the corresponding problem \textsc{Chevalley-mod-$p$} lies in \ppak{p}. Recently, \citet{GoosKSZ2019} showed that an explicit version of the problem is complete for \ppak{p}, therefore obtaining the first \ppak{p}-completeness result for a natural problem. The authors of \citep{GoosKSZ2019}, as well as \citet{Hollender19}, independently also extended the definition of the classes \ppak{k} to any $k \geq 2$, and provided several characterizations in terms of their defined-for-primes counterparts. \citet{Hollender19} also investigated the connection with the classes PMOD-$k$, which bear strong resemblance to \ppak{k}, and were defined seemingly independently of Papadimitriou's work by \citet{johnson2011reductions}. From the work of \citet{Hollender19}, we primarily make use of the \ppak{k}-complete computational problem \imba{k}, a generalization of the \ppad-complete problem \textsc{Imbalance} \citep{Beame1998,GoldbergH19-HairyBall}.\\

\noindent \textbf{Necklace Splitting:} The origins of the Necklace Splitting problem (\cref{itm:pNecklaceSplitting}), and its continuous variant (\cref{itm:pconhalving}), can be traced back to work of \citet{neyman1946theoreme} and \citet{hobby1965moment}. The problem was firstly phrased as a necklace splitting problem by \citet{BL82}. Later on, \citet{Goldberg1985} and \citet{Alon1986} proved the first existence results for two thieves, and \citet{Alon87-Necklace} extended the result to the case of any number $k \geq 2$ of thieves. For two thieves, the continuous version was studied independently by \citet{SS03-Consensus}, who termed the problem as ``Consensus-Halving'' and came up with a combinatorial, constructive proof of existence; this proof was later modified into a \ppa\ membership proof by \citet{filos2018hardness}. The importance of obtaining combinatorial proofs of existence was highlighted in a series of papers \citep{DeLongueville2006,Meunier2008,Meunier2014simplotopal,Meunier2012}; note that the proof of \citet{Alon87-Necklace} uses two results of \citet{BSS81} and is therefore not combinatorial. The first fully combinatorial proof for Necklace Splitting (according to the definition of \citet{Ziegler2002}) is due to \citet{Meunier2014simplotopal}. However, the proof comes up with a rather involved construction based on the notion of simplotopal complexes, and thus is notably hard to follow without an advanced understanding of the theory of simplicial complexes. Since our proof is based on a reduction, it is also inherently combinatorial. Very recently, \citet{alon2020efficient} studied approximate versions of both the continuous and the discrete variants, and provided polynomial-time algorithms for either large enough approximations or a larger number of cuts (i.e., cases not captured by the hardness results of \citep{FRG18-Consensus,FRG18-Necklace,FRHSZ2020consensus-easier}).\\

\noindent \textbf{The BSS Theorem:} The BSS Theorem (\cref{itm:BSSTheorem}, \cref{thm:BSStheorem}), due to \citep{BSS81}, is perhaps the most well-known generalization of the Borsuk-Ulam theorem. Besides \citep{Alon87-Necklace}, it has been used to prove existence of other interesting problems, including a generalization of the Kneser-Lov\'{a}sz Theorem \citep{kneser1955aufgabe,lovasz1978kneser} due to \citet{alon1986chromatic}, a generalized van Kampen-Flores Theorem \citep{sarkaria1991generalized} and the generalization to Tverberg's Theorem, proven in \citep{BSS81}. We believe that our \ppa-$p$-completeness result paves the way for studying the complexity of those problems as well.

\section{Preliminaries}

\subsection{Total NP Search Problems}

Let $\{0,1\}^*$ denote the set of all finite length bit-strings. For $x \in \{0,1\}^*$, let $|x|$ be its length. A computational search problem is given by a binary relation $R \subseteq \{0,1\}^* \times \{0,1\}^*$. The associated problem is: given an instance $x \in \{0,1\}^*$, find a $y \in \{0,1\}^*$ such that $(x,y) \in R$, or return that no such $y$ exists. The search problem $R$ is in FNP (\emph{Functions in NP}), if $R$ is polynomial-time computable (i.e., $(x,y) \in R$ can be decided in polynomial time in $|x|+|y|$) and there exists some polynomial $p$ such that $(x,y) \in R \implies |y| \leq p(|x|)$.  Thus, FNP is the search problem version of NP.

The class TFNP (\emph{Total Functions in NP} \citep{Megiddo1991}) contains all FNP search problems $R$ that are \emph{total}: for every $x \in\{0,1\}^*$ there exists $y \in\{0,1\}^*$ such that $(x,y) \in R$. Note that the totality of problems in TFNP does not rely on any ``promise''. Instead, there is a \emph{syntactic} guarantee of totality: for any instance in $\{0,1\}^*$, there is always at least one solution.

Let $R$ and $S$ be total search problems in TFNP. We say that $R$ (many-one) reduces to $S$, denoted $R \leq S$, if there exist polynomial-time computable functions $f,g$ such that
$$(f(x),y) \in S \implies (x,g(x,y)) \in R.$$
Note that if $S$ is polynomial-time solvable, then so is $R$. We say that two problems $R$ and $S$ are (polynomial-time) equivalent, if $R \leq S$ and $S \leq R$.

Sometimes a more general notion of reduction is used. A Turing reduction from $R$ to $S$ is a polynomial-time oracle Turing machine that solves problem $R$ with the help of queries to an oracle for $S$. Note that a Turing reduction that only makes a single oracle query immediately yields a many-one reduction.

\subsection{The Classes \ppak{k}}\label{sec:def-ppa-k}

For $k \geq 2$, \ppak{k} is a subclass of TFNP that aims to capture the complexity of TFNP problems whose totality is proved by using an argument modulo $k$. The classes \ppak{p} (for prime $p$) were introduced by \citet{Papadimitriou94-TFNP-subclasses}. The case $p=2$ corresponds to \emph{parity arguments}, and in that case the class \ppak{2} is simply called \ppa. Recently, the definition of \ppak{k} was extended to any $k \geq 2$ by \citet{GoosKSZ2019,Hollender19}. The class \ppak{k} is defined as the class of TFNP problems that reduce to the following problem.

\begin{definition}[\bipa{k} \citep{Papadimitriou94-TFNP-subclasses}]\label{def:bipartite-k}
Let $k \geq 2$. The problem \bipa{k} is defined as: given a Boolean circuit $C: \{0,1\} \times \{0,1\}^n \to (\{0,1\} \times \{0,1\}^n)^k$ that computes a bipartite graph on the vertex set $(\{0\} \times \{0,1\}^n, \{1\} \times \{0,1\}^n)$ with $|C(00^n)| \in \{1,\dots, k-1\}$, find
\begin{itemize}
    \item $x \neq 00^n$ such that $|C(x)| \notin \{0,k\}$
    \item or $x,y$ such that $y \in C(x)$ but $x \notin C(y)$.
\end{itemize}
\end{definition}

\noindent The circuit $C$ computes a bipartite graph as follows. The output of circuit $C$ on some input $x$ is a list of $k$ bit-strings, each of length $n+1$. If $x \in \{0\} \times \{0,1\}^n$, then we let $C(x)$ denote the set of distinct bit-strings that appear in that list and that lie in $\{1\} \times \{0,1\}^n$. Similarly, if $x \in \{1\} \times \{0,1\}^n$, then $C(x)$ denotes the set of distinct bit-strings that appear in that list and that lie in $\{0\} \times \{0,1\}^n$. For any $x,y \in \{0,1\} \times \{0,1\}^n$, there exists an edge between $x$ and $y$ if and only if $y \in C(x)$ and $x \in C(y)$. It is easy to see that this indeed defines a bipartite graph.

\begin{definition}
Let $k \geq 2$ and $1 \leq \ell \leq k-1$. The problem \bipal{k}{\ell} is defined as \bipa{k} (\cref{def:bipartite-k}) but with the additional condition $|C(00^n)| = \ell$.
\end{definition}

\noindent Note that this condition can be enforced syntactically and so this problem also lies in TFNP (see \citep{Papadimitriou94-TFNP-subclasses} for a definition of ``syntactic''). 

\begin{definition}[\ppakl{k}{\ell}]
Let $k \geq 2$ and $1 \leq \ell \leq k-1$. The class \ppakl{k}{\ell} is defined as the set of all \textup{TFNP} problems that many-one reduce to \bipal{k}{\ell}.
\end{definition}

\noindent The following result relates these special subclasses to the main ones.

\begin{proposition}[\citep{Hollender19}]
$\ppak{k}[\#1] = \cap_{p \in PF(k)} \ppak{p}$, where $PF(k)$ denotes the set of prime factors of $k$.
\end{proposition}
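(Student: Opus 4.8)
The plan is to prove the two inclusions separately by manipulating the canonical problems \bipa{\cdot} and \bipal{\cdot}{1}; the two facts driving the argument are that $p \mid k$ (so $k \mid x$ forces $p \mid x$) and the Chinese Remainder Theorem. Write $k = p_1^{a_1}\cdots p_m^{a_m}$, so that $PF(k) = \{p_1,\dots,p_m\}$.

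\emph{The inclusion $\ppak{k}[\#1] \subseteq \bigcap_{p \in PF(k)} \ppak{p}$.} It is enough to reduce the complete problem \bipal{k}{1} to \bipa{p} for each prime $p \mid k$. I would take the same bipartite graph with the same (degree-$1$) start vertex, and only reshape the out-circuit to the required width by a small degree-reduction gadget: replace each vertex $v$ by a ``fan'' of one vertex incident to $\deg(v) \bmod p$ of the edges at $v$ and $\lfloor \deg(v)/p\rfloor$ vertices each incident to exactly $p$ of them, all on $v$'s side of the bipartition and distributed by a fixed ordering of the neighbourhoods. The new graph has degree at most $p$ and start degree $1$; an inconsistency edge of it is one of the original graph, and a non-start vertex of degree $\not\equiv 0 \pmod p$ must be the residue-carrier of some original vertex $v$ with $\deg(v) \not\equiv 0 \pmod p$ (the degree-$p$ fan vertices are balanced mod $p$), whence $\deg(v) \in \{1,\dots,k-1\}$, since $\deg(v) \le k$ while $0 \equiv k \equiv 0 \pmod p$, so $v$ is a legal \bipal{k}{1}-solution.

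\emph{The inclusion $\bigcap_{p \in PF(k)} \ppak{p} \subseteq \ppak{k}[\#1]$.} This is the substantial direction. Suppose $R$ reduces to \bipa{p_i} for each $i$. Invoking the prime-power collapse $\ppak{p_i^{a_i}} = \ppak{p_i}$ (this proposition for $k = p_i^{a_i}$), I would first upgrade each reduction to a reduction $(f_i,g_i)$ from $R$ to \bipal{p_i^{a_i}}{1}, so that $G_i := f_i(x)$ is bipartite of degree $\le p_i^{a_i}$ with start vertex $\sigma_i$ of degree $1$. Put $c_i := k/p_i^{a_i}$, so $\gcd(c_i,k) = c_i$ and $k/\gcd(c_i,k) = p_i^{a_i}$. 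The target \bipal{k}{1}-instance $G$ is built in three steps. First, multiply all degrees in $G_i$ by the constant $c_i$ via a \emph{clean} blow-up gadget — one replacing each edge by a constant-size piece that adds $c_i$ to each endpoint's degree while giving every new vertex a degree that is a multiple of $k$ — obtaining $G_i^{\star}$, in which a vertex of $G_i$ is imbalanced mod $k$ iff its $G_i$-degree is $\not\equiv 0 \pmod{p_i^{a_i}}$, i.e.\ (degrees being $\le p_i^{a_i}$) iff it is a \bipal{p_i^{a_i}}{1}-solution of $G_i$, while $\sigma_i$ now has degree $c_i \in \{1,\dots,k-1\}$. Second, take the disjoint union of the $G_i^{\star}$ and identify all $\sigma_i$ (which lie on the left side) into a single vertex $\sigma$ of degree $D := \sum_i c_i$; by the Chinese Remainder Theorem, modulo each $p_l^{a_l}$ all terms with $i \ne l$ vanish and the $i = l$ term is the unit $k/p_l^{a_l}$, so $D$ is coprime to $k$. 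Third, rescale the whole graph by the constant unit $e := D^{-1} \bmod k$ via another clean blow-up, so $\sigma$ reaches degree $eD \equiv 1 \pmod k$, and then ``fan out'' every vertex (degree $d$ becomes one vertex of degree $d \bmod k$ plus some degree-$k$ vertices) so that all degrees become $\le k$; declare the residue-carrier of $\sigma$, of degree exactly $1$, the start vertex. Then $G$ is a legal \bipal{k}{1}-instance, and a solution to it is either an inconsistency edge — traced through the blow-ups to an inconsistency edge, hence a \bipal{p_i^{a_i}}{1}-solution, of some $G_i$ — or a non-start vertex of degree $\not\equiv 0 \pmod k$; the latter is neither a blow-up-gadget vertex nor a degree-$k$ fan vertex (those have degree $\equiv 0 \pmod k$), so it is the residue-carrier of some $v \in V(G_i) \setminus \{\sigma_i\}$, and since $e$ is a unit and $k/\gcd(c_i,k) = p_i^{a_i}$ this forces $\deg_{G_i}(v) \not\equiv 0 \pmod{p_i^{a_i}}$, i.e.\ $v$ is a \bipal{p_i^{a_i}}{1}-solution of $G_i$; applying $g_i$ then yields a solution to $R$.

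The main obstacle is what hides inside ``clean blow-up gadget'': for each constant $c$ one must exhibit a constant-size bipartite replacement for an edge that raises each endpoint's degree by $c$, keeps all new degrees multiples of $k$ (so that after the final fan-out all degrees are $\le k$), and through which inconsistency edges and solution-carriers can be traced back in polynomial time. The only a priori constraint is the handshake congruence $\sum_{\textup{left}} \deg \equiv \sum_{\textup{right}} \deg \pmod k$, which always holds here, so suitable gadgets do exist and — $k$, hence $c_i$, $D$, $e$, being fixed parameters — can be hard-coded; verifying this, together with the correctness of all the solution-tracing, the start-vertex identification, and the final fan-out, is the laborious part. A separate and genuinely deep ingredient is the prime-power case $\ppak{p^r} = \ppak{p}$ used above, whose proof is not a blow-up but a ``base-$p$ carrying'' reduction resolving the mod-$p^r$ structure through $r$ nested mod-$p$ arguments; I would treat that as a cited black box.
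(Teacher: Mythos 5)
The paper states this proposition as a cited fact from \citet{Hollender19} and does not supply a proof, so there is no internal proof to compare against. Judged on its own terms, your first inclusion $\ppak{k}[\#1] \subseteq \bigcap_{p\in PF(k)}\ppak{p}$ is fine: the fan-out gadget is standard, and the observation that $\deg(v)\not\equiv 0\pmod p$ together with $\deg(v)\le k$ and $p\mid k$ forces $\deg(v)\in\{1,\dots,k-1\}$ correctly recovers a \bipal{k}{1}-solution.

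The second inclusion has a genuine gap at its base case, and it is not where you put it. Your CRT construction requires, for each $i$, a reduction of $R$ to $\bipal{p_i^{a_i}}{1}$: the start degree being exactly $1$ is essential, since the glued start vertex then has degree $D=\sum_i c_i$, and you use that each $c_i$ is a unit modulo $p_i^{a_i}$ to conclude $\gcd(D,k)=1$. If instead each $G_i$ were merely a \bipa{p_i^{a_i}} instance with arbitrary start degree $\ell_i\in\{1,\dots,p_i^{a_i}-1\}$, then $D=\sum_i c_i\ell_i$, and whenever some $\ell_i$ is divisible by $p_i$ the sum $D$ is no longer coprime to $k$ and cannot be rescaled to $1$. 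To get $\bipal{p_i^{a_i}}{1}$ you invoke ``the prime-power collapse $\ppak{p_i^{a_i}}=\ppak{p_i}$'' and parenthetically equate it with ``this proposition for $k=p_i^{a_i}$,'' but these are different statements: $\ppak{p^r}=\ppak{p}$ gives you a reduction into $\bipa{p^r}$ with \emph{some} start degree, while what you need is $\ppak{p}\subseteq\ppak{p^r}[\#1]$, i.e.\ the proposition you are proving restricted to $k=p^r$. And that case cannot be dispatched by your own machinery — for $m=1$ the CRT step degenerates (there is nothing to glue and no unit to exploit) — nor can it be derived from $\ppak{p^r}=\ppak{p}$ together with your first inclusion, which jointly only give $\ppak{p^r}[\#1]\subseteq\ppak{p^r}$, not the converse. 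So the ``base-$p$ carrying'' argument you allude to is exactly the missing ingredient and must be proved (or cited in the correct form, namely $\ppak{p^r}[\#1]=\ppak{p}$), not substituted by $\ppak{p^r}=\ppak{p}$. The blow-up gadget description also wavers between ``contributes $c$ edges per endpoint'' and ``raises degree by $c$,'' but that is a wording issue that is easily fixed; the prime-power base is the substantive gap.
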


\noindent In this paper, we will also use the following definition of \ppak{k}, which was shown to be equivalent to the standard one \citep{Hollender19}. The class \ppak{k} is the set of all TFNP problems that reduce to the following problem.

\begin{definition}
Let $k \geq 2$. The problem \imba{k} is defined as: given Boolean circuits $S,P : \{0,1\}^n \to (\{0,1\}^n)^k$ with $|S(0^n)|-|P(0^n)| \neq 0 \mod k$, find
\begin{itemize}
    \item $x \neq 0^n$ such that $|S(x)|-|P(x)| \neq 0 \mod k$
    \item or $x,y$ such that $y \in S(x)$ but $x \notin P(y)$, or $y \in P(x)$ but $x \notin S(y)$.
\end{itemize}
For $1 \leq \ell \leq k-1$, \imbal{k}{\ell} is defined with the additional condition $|S(0)|-|P(0)| = \ell$.
\end{definition}

\noindent We obtain a seemingly more general version of this problem by allowing the edges to have integer weights. In that case the imbalance of a vertex is measured as the difference of the weights of all incoming edges and the weights of all outgoing edges. It is easy to see that this problem is in fact equivalent to \imba{k}. First, all the weights can be assumed to be in $\{0,1, \dots, k-1\}$, since we can reduce them modulo $k$. Next, we can split an edge with weight $\ell$ into $\ell$ copies of the edge. Finally, to ensure that we don't have multi-edges, we add a new vertex in the middle of every edge.
Note that the new vertex will be balanced by construction, and will thus not introduce any new solutions.

  It is easy to see that in all the aforementioned problems a solution is always
guaranteed to exist. The search problems are not trivial though, because the graph can
have exponential size with respect to its description. Indeed, the graph is given by 
Boolean circuits that compute the successors and predecessors of every vertex. 

We make use of the following known properties of these classes:

\begin{proposition}[\citet{GoosKSZ2019,Hollender19}]\label{prop:ppak-properties}
It holds that:
\begin{itemize}
    \item for any prime $p$ and any $r \geq 1$, $\ppak{p^r} = \ppak{p}$
    \item for any $k, \ell \geq 1$, $\ppak{k} \subseteq \ppak{k\ell}$
\end{itemize}
\end{proposition}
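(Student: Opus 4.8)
The plan is to work throughout with the \imba{\cdot} formulation of the classes, and to establish the second bullet first, since it already delivers one half of the first bullet: taking $k=p$ and $\ell=p^{r-1}$ in $\ppak{k}\subseteq\ppak{k\ell}$ gives $\ppak{p}\subseteq\ppak{p^r}$. It then remains to prove $\ppak{k}\subseteq\ppak{k\ell}$ and the reverse inclusion $\ppak{p^r}\subseteq\ppak{p}$.

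\textbf{The inclusion $\ppak{k}\subseteq\ppak{k\ell}$.} I would reduce \imba{k} to the weighted version of \imba{k\ell} (which is equivalent to \imba{k\ell}, as noted after its definition): keep the directed graph given by the circuits $S,P$ unchanged, but put weight $\ell$ on every edge. Each vertex's imbalance is thereby multiplied by $\ell$, and since $k\ell\mid \ell m \iff k\mid m$, we get that (i) the source, with imbalance $\ell c$, is imbalanced mod $k\ell$ precisely because $c$ is imbalanced mod $k$, so the produced instance is valid; (ii) any vertex imbalanced mod $k\ell$ in the new graph was imbalanced mod $k$ in the old one; and (iii) ``bad edge'' solutions are unaffected since the underlying graph is the same. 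Composing with the weighted-to-unweighted reduction gives \imba{k} $\le$ \imba{k\ell}, hence $\ppak{k}\subseteq\ppak{k\ell}$.

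\textbf{The inclusion $\ppak{p^r}\subseteq\ppak{p}$.} Here I would argue by induction on $r$, so that it suffices to reduce \imba{p^r} to \imba{p^{r-1}}. Passing to the equivalent weighted problem, I may assume all edge weights lie in $\{0,\dots,p^r-1\}$; writing each weight as $w = q\cdot p^{r-1}+\rho$ with $q\in\{0,\dots,p-1\}$ and $\rho\in\{0,\dots,p^{r-1}-1\}$ splits the graph into a ``low part'' $G_\rho$ (weights $<p^{r-1}$) and a ``high part'' $G_q$ (weights $<p$), with $\mathrm{imb}_G(v)=p^{r-1}\cdot\mathrm{imb}_{G_q}(v)+\mathrm{imb}_{G_\rho}(v)$ at every $v$ as integers. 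If the source imbalance $c$ already satisfies $p^{r-1}\nmid c$, then $G$ itself is a valid \imba{p^{r-1}} instance and each of its solutions is a solution of $G$ (imbalanced mod $p^{r-1}$ implies imbalanced mod $p^r$; bad edges coincide). The substantive case is $p^{r-1}\mid c$, equivalently $c\equiv p^{r-1}u\pmod{p^r}$ with $p\nmid u$. There I would take $G'$ to consist of a low copy $v^{\mathrm{lo}}$ (carrying $G_\rho$) and a high copy $v^{\mathrm{hi}}$ (carrying $G_q$) of every vertex $v$, together with carry gadgets that, for each $v$, feed the quotient $\lfloor \mathrm{imb}_{G_\rho}(v)/p^{r-1}\rfloor$ into $v^{\mathrm{hi}}$, the gadget vertices being balanced by construction so they introduce no new solutions. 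With the carries routed correctly, the high copy of the source gets imbalance $\equiv u\not\equiv 0\pmod p$ (hence also mod $p^{r-1}$), so the instance is valid, and any vertex of $G'$ imbalanced mod $p^{r-1}$ — whether a low copy, where $\mathrm{imb}_{G_\rho}(v)\equiv\mathrm{imb}_G(v)\pmod{p^{r-1}}$, or a high copy, which reconstructs $\mathrm{imb}_G(v)/p^{r-1}$ — pulls back to a vertex $v$ of $G$ with $\mathrm{imb}_G(v)\not\equiv 0\pmod{p^r}$.

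\textbf{Main obstacle.} The crux is this last step: ``dividing a directed graph by $p^{r-1}$'' is not a local operation on edges — only the \emph{total} imbalance at each vertex is controlled — so the carry from the low copies into the high copies must be synthesized by gadgets that correctly transport $\lfloor \mathrm{imb}_{G_\rho}(v)/p^{r-1}\rfloor$ while staying balanced and handling the (possibly negative) range of $\mathrm{imb}_{G_\rho}(v)$. Making this bookkeeping exactly right, so that no spurious solutions are created and every genuine one pulls back, is the entire content of the hard direction (this is the route taken by \citet{GoosKSZ2019}, which also admits a proof via their polynomial/Chevalley-type characterizations of $\ppak{p^r}$ and $\ppak{p}$); everything else is routine modular arithmetic and circuit plumbing.
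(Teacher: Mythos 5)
The paper does not prove this proposition itself — it is cited from \citet{GoosKSZ2019,Hollender19} — so I evaluate your argument on its own terms. Your proof of $\ppak{k}\subseteq\ppak{k\ell}$ (multiply all weights by $\ell$ and use $k\ell\mid\ell m\iff k\mid m$) is correct, and so is the derived inclusion $\ppak{p}\subseteq\ppak{p^r}$. The gap is in the hard inclusion. Even granting the carry gadgets, the invariant you need at the high copy fails once $r\geq 3$: with the carry added, $v^{\mathrm{hi}}$ has imbalance $\mathrm{imb}_{G_q}(v)+\mathrm{imb}_{G_\rho}(v)/p^{r-1}=\mathrm{imb}_G(v)/p^{r-1}$, and for a non-solution $v$ (so $\mathrm{imb}_G(v)\equiv 0\pmod{p^r}$) this quotient is only guaranteed $\equiv 0\pmod p$, \emph{not} $\pmod{p^{r-1}}$. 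Concretely, for $p=2$, $r=3$, a non-solution vertex with $\mathrm{imb}_G(v)=8$ yields $v^{\mathrm{hi}}$ with imbalance $8/4=2\not\equiv 0\pmod 4$, a spurious solution of the constructed $\imba{4}$ instance that does not pull back to an $\imba{8}$ solution. So the claim that ``any vertex of $G'$ imbalanced mod $p^{r-1}$ pulls back to a vertex imbalanced mod $p^r$'' is false and the induction breaks.

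Moreover, the carry gadgets themselves cannot be made balanced even if you retarget directly at $\imba{p}$ (which would repair the invariant above). Any local subgraph that withdraws $p^{r-1}c_v$ from $v^{\mathrm{lo}}$ and injects $c_v$ into $v^{\mathrm{hi}}$ has net imbalance $c_v-p^{r-1}c_v\equiv c_v\pmod p$ spread over its internal vertices, and a source-only gadget injecting $c_v$ has imbalance $c_v\pmod p$; since $c_v$ is an essentially arbitrary small integer this is generically nonzero mod $p$, i.e., a spurious solution. The identity $\sum_v c_v=0$ is a \emph{global} cancellation that cannot be realized by gadgets of polynomially bounded degree, which is all the circuit model permits. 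This is not bookkeeping to be filled in later: the digit-splitting/carry approach does not work, and I do not believe it is the route taken by \citet{GoosKSZ2019}. The inclusion $\ppak{p^r}\subseteq\ppak{p}$ needs a genuinely different mechanism — e.g., a Lucas-type counting argument on the neighborhoods, or the polynomial/Chevalley characterization developed in those references.
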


\paragraph{Topological Definitions and Details:} Our results in the next sections will require several definitions from topology, as well as the corresponding notation. The more experienced reader might already be familiar with some of these concepts, but all the relevant details are included in \cref{ap:definitions}. There, we also define our \emph{value} and \emph{index} functions, which allow us to enumerate over simplices, and access simplices that contain a point $\mathbf{x}$ in the domain respectively.

\section{$k$-Polygon Borsuk-Ulam: a $\ppak{k}[\#1]$-complete Problem in 2D-space}\label{sec:2dBSS}

  In this section we present a generalization of the Borsuk-Ulam Theorem in two 
dimensional space. Surprisingly this theorem is not captured by the BSS Theorem and
hence has its own topological interest. Our proof of this theorem is combinatorial
and is based on a generalization of Tucker's Lemma which we call 
\textit{Polygon Tucker's Lemma}, hence it is very different from the proof of the BSS
Theorem. Our main result is that $k$-Polygon Borsuk-Ulam is complete for $\ppak{k}[\#1]$.
This gives the first topological characterization of the classes $\ppak{k}[\#1]$ but also
reveals in a very intuitive way the relation between the different $\ppak{k}[\#1]$ classes
and $\ppad$. Recall that when $k = p^r$ is a prime power, $\ppak{k}[\#1] = \ppak{k} = \ppak{p}$.
\smallskip

  We start this section with a unified description of Brouwer's Fixed Point Theorem, the Borsuk-Ulam Theorem and our generalization: the $k$-Polygon Borsuk-Ulam Theorem. For 
this we will need the following definition.

\begin{definition}[\textsc{Rotation Operator}] \label{def:rotationOperator}
    We define the $k$-th \textit{rotation operator} 
  $\theta_{k} : \bbR^2 \to \bbR^2$ as follows: 
  $\theta_{k}(\vec{x}) = R_k \vec{x}$, where $R_k$ is the following two 
  dimensional rotation matrix 
  \[R_k = \begin{bmatrix}
            \cos\left(-\frac{2 \pi}{k}\right) & -\sin\left(-\frac{2 \pi}{k}\right) \\
            \sin\left(-\frac{2 \pi}{k}\right) & \cos\left(-\frac{2 \pi}{k}\right)
         \end{bmatrix}. \]
In other words, $\theta_k$ corresponds to a clockwise rotation by an angle of $2 \pi/k$.
\end{definition}

  We continue with a statement of Brouwer's Fixed Point Theorem that is 
different from the standard statement, but is well known to be equivalent to that (e.g., see \citep{Mat03BorsukUlam}).

\begin{theorem}[\textsc{Brouwer's Fixed Point Theorem}] \label{thm:BrouwersTheorem}
    Let $f : B^2 \to \bbR^2$ be a continuous function such that $f(\vec{x}) = \vec{x}$
  for all $\vec{x} \in \partial B^2$. Then there exists $\vec{x}^{\star} \in B^2$
  such that $f(\vec{x}^{\star}) = 0$.
\end{theorem}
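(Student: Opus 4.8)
The plan is to deduce this form of Brouwer's theorem from Sperner's Lemma, in the same spirit in which the later sections deduce the $k$-Polygon Borsuk--Ulam Theorem from $k$-Polygon Tucker's Lemma. I would argue by contradiction, assuming $f(\vec{x}) \neq 0$ for every $\vec{x} \in B^2$. Then, by compactness of $B^2$ and continuity of $f$, the quantity $\delta := \min_{\vec{x} \in B^2} \norm{f(\vec{x})}$ is strictly positive, and $f$ is uniformly continuous.

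The heart of the argument is a Sperner-type colouring. Cover $\bbR^2 \setminus \{0\}$ by the three closed $120^\circ$ sectors $K_1, K_2, K_3$ centred on the directions $0^\circ$, $120^\circ$, $240^\circ$, so that any two of them meet in a single ray. Pick a triangulation $T$ of $B^2$ with all boundary vertices on $S^1$ and with mesh small enough that, by uniform continuity, $\norm{f(\vec{x}) - f(\vec{y})} < \eta$ whenever $\vec{x}, \vec{y}$ lie in a common simplex of $T$, where $\eta$ is a small fraction of $\delta$ to be fixed below. Colour each vertex $\vec{v}$ of $T$ by some $a \in \{1,2,3\}$ with $f(\vec{v}) \in K_a$; this is possible since $f(\vec{v}) \neq 0$. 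Now identify $B^2$ with a $2$-simplex whose three corners are the points of $S^1$ in the directions $0^\circ$, $120^\circ$, $240^\circ$. Because $f(\vec{v}) = \vec{v}$ on $S^1$, the colour of a boundary vertex is governed entirely by its angular position, and one checks that each corner receives its own colour while the boundary arc joining two consecutive corners only ever uses the two corresponding colours --- i.e., the colouring is a legal Sperner labelling. Sperner's Lemma then yields a trichromatic triangle $\{\vec{v}_1, \vec{v}_2, \vec{v}_3\}$ of $T$ with $f(\vec{v}_a) \in K_a$; since these three vertices are pairwise within the mesh, the vectors $f(\vec{v}_1), f(\vec{v}_2), f(\vec{v}_3)$ are pairwise within distance $\eta$, yet each has norm at least $\delta$ and lies in a different one of the three $120^\circ$ sectors. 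A short planar computation shows this is impossible once $\eta$ is small relative to $\delta$, and this contradiction exhibits a point where $f$ vanishes.

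I expect the only genuinely non-trivial bookkeeping to be the verification that the sector colouring restricts to a valid Sperner labelling on $\partial B^2$: this is where --- and the only place where --- the hypothesis $f|_{\partial B^2} = \mathrm{id}$ is used, and it works precisely because the corners of the reference simplex sit at the sector centres. The accompanying elementary fact (three vectors of norm at least $\delta$, one in each $120^\circ$ sector, cannot be mutually $\eta$-close for small $\eta$) is routine and merely dictates how fine to take the triangulation; everything else --- compactness, uniform continuity, taking a fine enough mesh --- is standard.

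For a shorter, less self-contained alternative, one can invoke the standard Brouwer Fixed Point Theorem as a black box: if $f$ had no zero, then $r(\vec{x}) := f(\vec{x})/\norm{f(\vec{x})}$ would be a continuous retraction of $B^2$ onto $\partial B^2$ (it is the identity on $S^1$ exactly because $f$ is), and applying the usual fixed-point theorem to $\vec{x} \mapsto -r(\vec{x})$ would give a fixed point forced to lie on $S^1$ and to satisfy $\vec{x} = -\vec{x}$, a contradiction. This is the well-known equivalence referred to just above the statement, and it is the one-line proof I would record if self-containedness is not needed.
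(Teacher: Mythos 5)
The paper does not actually prove this theorem: it is stated without proof as a well-known reformulation of the standard Brouwer Fixed Point Theorem, with a citation to Matou\v{s}ek's book for the equivalence. There is therefore nothing in the paper's text to weigh your argument against, but both of your routes are correct. Your short alternative is precisely the equivalence the paper's citation alludes to: if $f$ never vanished, then $r(\vec{x}) := f(\vec{x})/\norm{f(\vec{x})}$ would be a continuous retraction of $B^2$ onto $S^1$, and applying the usual Brouwer theorem to $\vec{x} \mapsto -r(\vec{x})$ forces a fixed point on $S^1$ satisfying $\vec{x}^\star = -\vec{x}^\star$, which is impossible there. Your self-contained Sperner argument is also sound, and it is worth noting that it is structurally the $k=3$, identity-boundary special case of the paper's own derivation of the $k$-Polygon Borsuk--Ulam Theorem from $k$-Polygon Tucker's Lemma in \cref{lem:polygonTuckerImpliesPolygonBorsukUlam}: there the same $120^\circ$-sector (Voronoi) colouring, the same appeal to a trichromatic triangle, and the same observation that three vectors of norm at least $\delta$ lying one in each of the three $120^\circ$ sectors must contain a pair whose angular separation is at least $2\pi/3$, hence whose Euclidean distance is at least $\sqrt{3}\,\delta$, are used to reach the contradiction. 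The only substantive difference is the source of the trichromatic triangle: Sperner's Lemma under the boundary condition $f|_{S^1} = \mathrm{id}$ in your proof, versus $3$-Polygon Tucker under the $\mathbb{Z}_3$-equivariant boundary condition in \cref{lem:polygonTuckerImpliesPolygonBorsukUlam}. This is exactly the Brouwer/Sperner versus Borsuk--Ulam/Tucker correspondence the paper is built around, so the parallel is a feature, not a coincidence. Two small points worth making explicit in a written-up version: require that the three points of $S^1$ at angles $0^\circ$, $120^\circ$, $240^\circ$ are vertices of $T$, so that the identification of $B^2$ with a triangulated $2$-simplex sends $T$ to a genuine simplicial subdivision with those three points as the corners; and note that since the sectors overlap only along rays, any tie-breaking rule keeps the boundary labelling within the legal Sperner set, because a boundary vertex's image (being the vertex itself) never lands in the opposite sector.
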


  Next, using the same language, we give a statement of the Borsuk-Ulam Theorem.

\begin{theorem}[\textsc{Borsuk-Ulam Theorem}] \label{thm:BorsukUlamTheorem}
    Let $f : B^2 \to \bbR^2$ be a continuous function such that 
  $f(\theta_2(\vec{x})) = \theta_2(f(\vec{x}))$ for all 
  $\vec{x} \in \partial B^2$. Then there exists $\vec{x}^{\star} \in B^2$ such that
  $f(\vec{x}^{\star}) = 0$.
\end{theorem}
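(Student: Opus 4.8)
The plan is to deduce the theorem from the classical Tucker's Lemma in the plane --- the combinatorial analogue of Borsuk--Ulam --- together with a routine compactness argument. Suppose, for contradiction, that $f(\bx) \neq 0$ for every $\bx \in B^2$. Since $B^2$ is compact and $f$ is continuous, there is $\delta > 0$ with $\norm{f(\bx)} \geq \delta$ for all $\bx \in B^2$, and by uniform continuity there is $\varepsilon > 0$ such that $\norm{\bx - \by} < \varepsilon$ implies $\norm{f(\bx) - f(\by)} < \delta$.

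Next I would fix a triangulation $T$ of $B^2$ of mesh less than $\varepsilon$ whose vertex set and simplices are invariant under the antipodal map $\theta_2 \colon \bx \mapsto -\bx$; such triangulations are obtained, for instance, by taking an antipodally symmetric triangulation of the boundary circle $S^1$ and coning plus subdividing into the interior. To each vertex $\bv$ of $T$ assign the Tucker label $\lambda(\bv) = \mathrm{sign}(f_i(\bv)) \cdot i$, where $i \in \{1,2\}$ is the coordinate maximizing $\abs{f_i(\bv)}$ (with ties and vanishing coordinates broken by any rule that is itself antipodally odd). Because $\norm{f(\bv)} \geq \delta > 0$, the label is well defined and $\abs{f_{|\lambda(\bv)|}(\bv)} \geq \norm{f(\bv)}/\sqrt{2} \geq \delta/\sqrt{2}$. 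On $\partial B^2$ the hypothesis $f(\theta_2(\bx)) = \theta_2(f(\bx)) = -f(\bx)$ gives $\lambda(-\bv) = -\lambda(\bv)$, so $\lambda$ satisfies the antipodal boundary condition required by Tucker's Lemma.

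Tucker's Lemma then produces a complementary edge: adjacent vertices $\bu, \bv$ of $T$ with $\lambda(\bu) = -\lambda(\bv)$, say $\lambda(\bu) = +1$ and $\lambda(\bv) = -1$. Then $f_1(\bu) \geq \delta/\sqrt{2}$ and $f_1(\bv) \leq -\delta/\sqrt{2}$, hence $\abs{f_1(\bu) - f_1(\bv)} \geq \sqrt{2}\,\delta > \delta$; but $\norm{\bu - \bv} < \varepsilon$ forces $\norm{f(\bu) - f(\bv)} < \delta$, a contradiction. Therefore $f$ must vanish somewhere in $B^2$.

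The main obstacle is not the estimate above but the infrastructure around it: producing an antipodally symmetric triangulation of $B^2$ of arbitrarily small mesh, and setting up the tie-breaking rule for $\lambda$ so that it is genuinely antipodally odd. Both are entirely routine in two dimensions, but they are precisely the ingredients that the later sections have to rework when classical Tucker is replaced by the $k$-Polygon, BSS, and $\bbZ_p$-star generalizations, where the symmetry is a free $\bbZ_k$-action rather than the antipodal involution. As a sanity check / alternative route one can avoid triangulations altogether: if $f$ never vanishes then $f/\norm{f}$ restricted to $S^1$ is an odd self-map of $S^1$, hence of odd --- in particular nonzero --- degree and not null-homotopic, contradicting the fact that it extends continuously over the disk; I would nonetheless present the combinatorial argument as the primary one, since it is the template the rest of the paper generalizes.
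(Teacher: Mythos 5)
The paper does not actually prove this statement; it records the classical 2D Borsuk--Ulam theorem as the $k=2$ reference point and only establishes a Tucker-style proof for the $k$-Polygon generalization with $k \geq 3$ (the $k$-gon $W_k$ and \cref{thm:polygonTucker} are defined only for $k \geq 3$), so for $k=2$ the paper implicitly falls back on the known result. Your proof via the classical signed Tucker's Lemma is correct and complete: the labeling $\lambda(\bv) = \operatorname{sign}(f_i(\bv)) \cdot i$ with $i = \argmax_j |f_j(\bv)|$ is well-defined once $\|f\| \geq \delta > 0$, the estimate $|f_{|\lambda(\bv)|}(\bv)| \geq \|f(\bv)\|/\sqrt{2}$ is right, the boundary condition $f(-\bx) = -f(\bx)$ makes $\lambda$ antipodally odd on $S^1$ (since the maximizing coordinate depends only on $|f_j|$, which is preserved by $\bx \mapsto -\bx$), and a complementary edge in a mesh-$<\varepsilon$ antipodal triangulation forces $|f_1(\bu) - f_1(\bv)| \geq \sqrt{2}\,\delta$ against $\|f(\bu) - f(\bv)\| < \delta$. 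Your argument is the exact $k=2$ analogue of the paper's \cref{lem:polygonTuckerImpliesPolygonBorsukUlam}: there the label is the index of the $W_k$-vertex $\bu_i$ nearest $h(\bx)$ (equivalently maximizing $\langle h(\bx),\bu_i\rangle$), and your $\operatorname{sign}(f_i)\cdot i$ picks the nearest element of $\{\pm\be_1,\pm\be_2\}$ by the same inner-product criterion. The only structural difference is that classical Tucker works with $2d$ signed labels and complementary-edge solutions, whereas $k$-Polygon Tucker works with $k$ unsigned labels and trichromatic or non-consecutive-edge solutions --- which is precisely why the two frameworks bifurcate at $k = 2$ versus $k \geq 3$, as you anticipate. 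Your degree-theoretic aside is also a valid alternative proof, though of course it encodes the same topological content that Tucker's Lemma makes combinatorial.
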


  It is clear from the above expression that the Borsuk-Ulam Theorem is a
generalization of Brouwer's Fixed Point Theorem. This observation is in line with 
the fact that $\ppad \subseteq \ppa$, since Brouwer is 
complete for $\ppad$ and Borsuk-Ulam is complete for $\ppa$. We now present our 
extension, that we call $k$-Polygon Borsuk-Ulam Theorem.

\begin{theorem}[\textsc{$k$-Polygon Borsuk-Ulam Theorem}] \label{thm:kPolygonBorsukUlamTheorem}
    Let $f : B^2 \to \bbR^2$ be a continuous function such that 
  $f(\theta_k(\vec{x})) = \theta_k(f(\vec{x}))$ for all 
  $\vec{x} \in \partial B^2$. Then there exists $\vec{x}^{\star} \in B^2$ such that
  $f(\vec{x}^{\star}) = 0$.
\end{theorem}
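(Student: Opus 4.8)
The plan is to derive the $k$-Polygon Borsuk-Ulam Theorem (\cref{thm:kPolygonBorsukUlamTheorem}) from the $k$-Polygon Tucker's Lemma (\cref{itm:polygonTucker}) by a standard triangulation-and-limit argument. First I would assume, for contradiction, that $f : B^2 \to \bbR^2$ is continuous, equivariant to $\theta_k$ on $\partial B^2$, but has no zero. Since $B^2$ is compact and $f$ is continuous and nowhere zero, $\|f\|$ attains a positive minimum $\varepsilon > 0$ on $B^2$, and $f$ is uniformly continuous.

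Next I would fix a triangulation $T$ of $B^2$ that is $\theta_k$-symmetric on the boundary (e.g.\ a $k$-fold rotationally symmetric refinement of a simple star triangulation) and fine enough that $\|f(\bx) - f(\by)\| < \varepsilon/2$ whenever $\bx, \by$ lie in a common simplex of $T$. I then need to define a coloring $\lambda : T \to \bbZ_k$ from $f$. The natural recipe is to look at the direction (argument/angle) of the vector $f(\bx) \in \bbR^2 \setminus \{0\}$ and quantize it into one of $k$ sectors, i.e.\ partition the circle of directions into $k$ equal arcs and let $\lambda(\bx)$ record which arc $f(\bx)/\|f(\bx)\|$ falls into, with a consistent tie-breaking rule. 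Because $f$ is $\theta_k$-equivariant on $\partial B^2$ — rotating the input by $2\pi/k$ rotates the output by $2\pi/k$ — the direction of $f$ on the boundary shifts by exactly one sector, so $\lambda$ is equivariant to a rotation of $(360/k)^\circ$ on the boundary, as required by the hypothesis of $k$-Polygon Tucker's Lemma. Applying that lemma yields either a trichromatic triangle (vertices colored with three distinct sectors) or an edge with non-consecutive distinct colors.

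The final step is to show that either outcome contradicts the choice of $\varepsilon$ and the fineness of $T$. If two vertices $\bx, \by$ of a common simplex receive non-consecutive sector-colors, their images $f(\bx), f(\by)$ point into two sectors that are not adjacent, so the angle between $f(\bx)$ and $f(\by)$ is bounded below by roughly $2\pi/k$ (more carefully, at least $2\pi/k$ minus the tie-breaking slack); combined with $\|f(\bx)\|, \|f(\by)\| \geq \varepsilon$, this forces $\|f(\bx) - f(\by)\|$ to be bounded below by a constant multiple of $\varepsilon$ (something like $2\varepsilon \sin(\pi/k)$), contradicting $\|f(\bx) - f(\by)\| < \varepsilon/2$ once $k$ and the slack are handled. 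Similarly, a trichromatic triangle contains a pair of vertices in non-consecutive sectors (three distinct sectors among $k \geq 3$ cannot all be pairwise consecutive when $k \geq 4$; and for $k = 3$ the three distinct colors are $0,1,2$, where $\{0,2\}$ is non-consecutive), so it is covered by the same estimate. Hence no sufficiently fine $\theta_k$-symmetric triangulation can be colored this way, contradicting the lemma, so $f$ must have a zero.

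The main obstacle is purely bookkeeping around the sector quantization: one must choose the arc partition and tie-breaking so that (i) $\lambda$ is genuinely well-defined and computable-enough for the later computational equivalence, (ii) the boundary equivariance of $f$ translates cleanly into the "+1 mod $k$" equivariance of $\lambda$ without the tie-breaking convention breaking on the boundary, and (iii) the quantitative gap between non-consecutive sectors survives the slack introduced by tie-breaking — this forces either a slightly asymmetric choice of arcs or a refinement of the lemma's "non-consecutive" conclusion to "differs by at least $2$ mod $k$". A secondary point is constructing an explicit $\theta_k$-symmetric triangulation of $B^2$ of arbitrarily small mesh; this is routine (iterated barycentric-type subdivision of a $k$-gon fan around the origin) but should be stated. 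Since the paper elsewhere develops the computational equivalence, here the limiting/compactness argument above suffices for the existence statement.
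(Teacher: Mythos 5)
Your overall strategy is the same as the paper's: construct a labeling from $f$ by quantizing the \emph{direction} of $f(\bx)$ into $k$ sectors (the paper phrases this as picking the vertex $\bu_i$ of the regular $k$-gon closest to $f(\bx)$, which is exactly a $k$-sector partition of directions), check that $\theta_k$-equivariance of $f$ yields the $+1 \pmod k$ equivariance of $\lambda$, apply $k$-Polygon Tucker, and use uniform continuity and a fine triangulation to conclude. Your housekeeping concerns (i)--(iii) about tie-breaking and the mesh size are exactly the issues the paper resolves (with a $\mathbb{Z}_k$-equivariant tie-breaking rule and a mesh chosen as a function of $k$ and $\varepsilon$, not just $\varepsilon/2$).

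However, there is one genuine error: your claim that ``for $k=3$ the three distinct colors are $0,1,2$, where $\{0,2\}$ is non-consecutive'' is false. In $\mathbb{Z}_3$, $0 - 2 \equiv 1 \pmod 3$ and $2 - 0 \equiv -1 \pmod 3$, so $0$ and $2$ differ by $\pm 1 \pmod 3$ and are \emph{consecutive} under the lemma's definition. In fact, for $k=3$ every pair of distinct labels is consecutive (the three cyclic gaps of any three distinct elements of $\mathbb{Z}_k$ sum to $k$; they can all equal $1$ only when $k=3$). Thus for $k=3$ a trichromatic triangle contains no non-consecutive pair, and your reduction of the trichromatic case to the ``non-consecutive edge'' estimate collapses. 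The case $k=3$ needs a separate geometric argument: three vectors pointing into the three sectors that partition the circle of directions must contain a pair at angle $\geq 2\pi/3$ (this is what the paper proves), from which the same distance lower bound follows with constant $2\sin(\pi/3) = \sqrt{3}$. For $k \geq 4$ your reduction is correct, since any three distinct elements of $\mathbb{Z}_k$ then necessarily contain a non-consecutive pair.
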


  As we will see the $k$-Polygon Borsuk-Ulam Theorem is also a generalization of 
Brouwer's Fixed Point Theorem and it is complete for $\ppak{k}[\#1]$ which is also in
line with the fact that $\ppad \subseteq \ppak{k}[\#1]$. Another interesting fact about the
$k$-Polygon Borsuk-Ulam Theorem is that it does not directly follow from the traditional 
generalization of the Borsuk-Ulam Theorem, namely the BSS Theorem, as we will see
in the next section.

\subsection{$k$-Polygon Tucker's Lemma and $k$-Polygon Borsuk-Ulam in $\ppak{k}[\#1]$}
\label{sec:kPolygonTucker}

  In this section, we define $k$-Polygon Tucker's Lemma and we prove
that it is equivalent to the $k$-Polygon Borsuk-Ulam Theorem. Additionally, we provide a 
combinatorial proof of $k$-Polygon Tucker's Lemma using a modulo-$k$
argument. This combinatorial proof puts both $k$-Polygon Tucker and $k$-Polygon Borsuk-Ulam
in the class $\ppak{k}[\#1]$. 

Before showing the equivalence of the two statements, we provide some necessary notation and 
definitions.
\begin{definition}[\textsc{$k$-Polygon \& Nice Triangulation}]
\label{def:kregularPolygonNiceTriangulation}
    For $k \geq 3$, let $W_k$ be the regular $k$-polygon, i.e., regular $k$-gon, centered at $\vec{0} \in \mathbb{R}^2$
  with radius $1$. Let
  $\bu_1, \dots, \bu_k$ denote the vertices of $W_k$, ordered such that $\theta_k(\bu_i) = \bu_{i+1 \pmod{k}}$ for all $i \in [k]$. We define $T^{\star}$ to be
  the triangulation of $W_k$ that includes the simplices 
  $\sigma_i = \chull(\{\vec{0}, \vec{u}_i, \vec{u}_{i + 1 \pmod{k}}\})$ for 
  $i \in [k]$. We call a triangulation $T$ \textit{nice} if it satisfies the following two 
  properties:
  \begin{itemize}
      \item it is a refinement of $T^{\star}$, and
      \item it is symmetric with respect to $\theta_k$ on the boundary. This
  means that for every edge $\psi \in T$ such that $\psi \subseteq \partial W_k$ it 
  holds that $\theta_k(\psi) \in T$.
  \end{itemize}
\end{definition}

\begin{theorem}[\textsc{$k$-Polygon Tucker's Lemma}] \label{thm:polygonTucker}
    For $k\geq 3$, let $W_k$ be a $k$-regular polygon. Fix some nice triangulation 
  $T$ of $W_k$. Suppose that every vertex $\bx \in T$ has a label 
  $\lambda(\bx) \in \mathbb{Z}_k$ such that for any $\by \in \partial T$ we have
  $\lambda(\theta_k(\by)) = \lambda(\by) + 1 \pmod{k}$. Then at least one of the
  following exists: (1) a simplex $\sigma \in T$ with vertices $\vec{v}_1$,
  $\vec{v}_2$, $\vec{v}_3$ such that all the labels $\lambda(\vec{v}_1)$,
  $\lambda(\vec{v}_2)$, $\lambda(\vec{v}_3)$ are different, or (2) an edge 
  $\psi \in T$ with vertices $\vec{v}_1$, $\vec{v}_2$ such that 
  $\lambda(\vec{v}_1) - \lambda(\vec{v}_2) \pmod{k} \notin \{0, 1, -1\}$.
\end{theorem}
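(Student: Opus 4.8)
The plan is to adapt the classical path-following proof of Tucker's lemma due to Freund and Todd to this rotational setting, with the twist that we track simplices carrying a distinguished pair of "consecutive" labels and follow a walk whose endpoints are the desired solutions, while exploiting the $k$-fold symmetry to cancel the boundary contributions modulo $k$. Concretely, I would say a $1$-simplex (edge) of $T$ is \emph{good} if its two endpoints carry labels $j$ and $j+1 \pmod k$ for some $j$; call the value $j$ its \emph{type}. A $2$-simplex $\sigma$ of $T$ is a \emph{door-room} if it has exactly two distinct labels and those labels are consecutive (so $\{j,j+1\}$ for some $j$); such a room has exactly two good facets, both of type $j$. If on the other hand every $2$-simplex were either trichromatic or had a facet with non-consecutive labels, we would already be done; so assume for contradiction that neither type of solution exists. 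Then every $2$-simplex uses a set of labels that is a subset of some $\{j,j+1\}$, and the good edges of type $j$ together with the door-rooms of "color-pattern $\{j,j+1\}$" form, for each fixed $j$, a disjoint union of paths and cycles in the usual way (each interior good edge lies on exactly two rooms, each boundary good edge on exactly one).

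Next I would analyze the boundary $\partial W_k$. By the equivariance hypothesis $\lambda(\theta_k(\by)) = \lambda(\by)+1$, the boundary labeling is forced to "wind" once around $\mathbb{Z}_k$ as we traverse one of the $k$ arcs of $\partial W_k$ between consecutive vertices $\bu_i, \bu_{i+1}$ of $W_k$; and since $T$ is a nice triangulation (a refinement of $T^\star$ symmetric under $\theta_k$ on the boundary), the boundary is a cycle on which $\theta_k$ acts freely, partitioning the boundary vertices and edges into orbits of size exactly $k$. Counting good boundary edges of a fixed type $j$: the winding argument shows the total number of sign changes of the "consecutive-label" pattern around the whole boundary is congruent to something nonzero mod $k$ — more precisely, summing over all $k$ rotates, the number of good boundary edges of type $j$ is equal for each $j$ (by the free $\mathbb{Z}_k$-action), and the total number of good boundary edges (over all types) has a parity/modular structure forced by the fact that the label must increase by exactly $1$ each time we cross one fundamental arc. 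I expect that the clean bookkeeping here is: look at the single label value $0$, count boundary edges on which the label goes $k-1 \to 0$ or $0 \to k-1$ etc., and observe that the $\theta_k$-symmetry makes the number of good boundary edges of each type equal, while the total is odd in a suitable sense — hence not a multiple of $k$ in the relevant aggregate.

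Then the path-following finishes the argument: fix the type $j=0$; the graph $G_0$ whose nodes are (door-rooms with label-set $\subseteq\{0,1\}$) together with a phantom "outside" node attached to every good boundary edge of type $0$, and whose edges are the good edges of type $0$, has all internal nodes of degree $2$ (interior good edge $\Leftrightarrow$ two rooms; a door-room has exactly two good facets). Hence the number of good boundary edges of type $0$ is even — but combined with the symmetry computation above this contradicts the modular count unless $k$ divides that number, and the winding hypothesis rules that out. That contradiction is what yields the solution; to make the final step correspond to \imba{k} rather than a mod-$2$ argument, I would instead orient the good edges using a consistent orientation of the $2$-simplices (as in Freund's trick, as the paper flags), so that each door-room becomes a single directed edge of a digraph, each interior good edge a vertex of in-degree-plus-out-degree $2$ with the orientations forced to be $(1,1)$, making every interior vertex balanced, while the boundary good edges of type $j$, aggregated over the $k$ symmetric copies via $\theta_k$, produce an imbalance that is $\not\equiv 0 \pmod k$ at a designated source vertex; then \imba{k} guarantees another imbalanced vertex, which can only be a trichromatic triangle or a non-consecutive edge.

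The main obstacle I anticipate is getting the boundary count exactly right: one must verify that the equivariance $\lambda(\theta_k \by) = \lambda(\by)+1$ forces precisely the right modular obstruction — i.e., that the $\theta_k$-orbit structure of the boundary together with the "increase by $1$ per fundamental arc" condition makes the relevant count of good boundary edges land in a nonzero residue class mod $k$ — and that orienting the rooms consistently does localize this obstruction to one vertex of the constructed digraph. Handling the case of a boundary vertex whose $\theta_k$-orbit has fewer than $k$ elements does not arise (the action is free since $W_k$ has radius $1$ and $\vec 0$ is interior), so the orbit count is clean; the delicate part is purely the winding/sign-change accounting, which I would do by reducing to a single fundamental arc and multiplying by $k$.
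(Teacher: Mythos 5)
There is a real gap here, and it stems from omitting the ingredient that makes the paper's argument work: the fan decomposition $T^\star$ of $W_k$ into the cones $\sigma^\star_i = \chull(\{\vec{0},\vec{u}_i,\vec{u}_{i+1}\})$, and the ``happy'' condition $S(\sigma) \subseteq \lambda(\sigma)$ that ties a simplex's labels to which cone it sits in. Your door-rooms and good edges are defined purely by labels, so a $2$-simplex with label set $\{0,1\}$ is treated the same whether it sits near $\vec{u}_1$ or near $\vec{u}_7$. Consequently there is no node in your digraph that can carry a localized imbalance $\not\equiv 0 \pmod k$. Concretely: interior good edges are balanced $(1,1)$; a trichromatic triangle or a non-consecutive edge is a solution, so you may assume those are absent; and a boundary $\theta_k$-orbit super-vertex $[\tau]$, once you identify the $k$ rotated copies, has $k$ incident door-room edges which (by exactly the equivariance you invoke) are all oriented the same way, giving imbalance $\pm k \equiv 0 \pmod k$. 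So in the absence of a solution \emph{every} vertex of your digraph is balanced mod $k$, and there is nothing for \imba{k} to bite on. The ``designated source vertex'' you refer to does not exist in your construction; in the paper it is the $0$-dimensional simplex $\{\vec{0}\}$, which only becomes a degree-$1$ node because the graph includes happy simplices of \emph{all} dimensions, with happiness measured against $S(\sigma)$ (so e.g.\ the happy $0$-simplices off-center are exactly those on rays $\chull(\{\vec{0},\vec{u}_i\})$ with label $i$, and those get balanced; $\{\vec{0}\}$ with $S=\emptyset$ is the unique forced singleton).

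The fallback undirected argument also does not close. Fixing type $0$ and letting $a_0,b_0$ be the counts of forward/backward boundary good edges, the handshake in $G_0$ gives that $a_0+b_0$ is even, while the equivariance (``$+1$ per fundamental arc'', $k$ arcs) only pins down the signed winding $a_0-b_0 = 1 + wk$ for some integer $w$. These two are compatible whenever $1+wk$ is even, which happens for every odd $k$ with $w$ odd (e.g.\ $k=3$, $w=1$, $a_0=4$, $b_0=0$), so there is no contradiction ``unless $k$ divides that number'' — the winding constrains $a_0-b_0$, not $a_0+b_0$, and nothing forces $w=0$. What does always prove existence is replacing the parity count by a signed telescoping (oriented Sperner): summing, over all $2$-simplices $\sigma$, the signed number of $\{0,1\}$-facets of $\sigma$ gives exactly $a_0-b_0 = 1+wk \neq 0$, so some $\sigma$ with a nonzero contribution exists, and such a $\sigma$ is trichromatic. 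That is a valid existence proof, but it is an exact-count/PPAD-style argument, not the mod-$k$ argument you set out to produce, and it is not what the paper does: the paper builds its graph on the set of \emph{all} simplices of $T$ (dimensions $0$, $1$, $2$), declares $\sigma$ happy when $S(\sigma) \subseteq \lambda(\sigma)$ (with $S(\sigma)$ determined by the cone of $T^\star$ containing $\sigma$), directs edges by a sign of a $2\times 2$ determinant in the $\vec{u}_i$-coordinates, identifies boundary simplices with their $\theta_k$-orbit representative in $\chull(\{\vec{u}_1,\vec{u}_2\})$, and then shows $\{\vec{0}\}$ has out-degree $1$, in-degree $0$, while every other non-solution node is balanced mod $k$. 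You would need to import that cone-dependent happiness notion to recover the source of imbalance.
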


\begin{remark}
    The above theorem can be proved if we invoke Dold's 
  Theorem from algebraic topology \cite{Dold1983}. However this proof is 
  not a constructive proof and hence it cannot be used for our purposes and
  for this reason we reprove this theorem using a constructive combinatorial
  proof. Of course, this is only a special case of the general Dold's Theorem
  and it is an interesting open problem what is the relation of Dold's 
  Theorem with the subclasses of TFNP as we state in the Conclusions 
  section.
\end{remark}

\subsubsection{Equivalence of $k$-Polygon Tucker and $k$-Polygon Borsuk-Ulam} \label{sec:kPolygonTuckerkPolygonBorsukUlam:equivalence}

  We start by showing that $k$-Polygon Tucker's Lemma is implied by $k$-Polygon
Borsuk-Ulam Theorem and then we also show the converse, in 
\cref{lem:polygonBorsukUlamImpliesPolygonTucker} and 
\cref{lem:polygonTuckerImpliesPolygonBorsukUlam}.

\begin{lemma} \label{lem:polygonBorsukUlamImpliesPolygonTucker}
    $k$-Polygon Borsuk-Ulam (\cref{thm:kPolygonBorsukUlamTheorem}) implies
 $k$-Polygon Tucker's Lemma (\cref{thm:polygonTucker}).
\end{lemma}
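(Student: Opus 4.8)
The plan is to argue by contradiction: assume we have a nice triangulation $T$ of $W_k$ with a labeling $\lambda : V(T) \to \mathbb{Z}_k$ that is equivariant on the boundary (i.e. $\lambda(\theta_k(\by)) = \lambda(\by) + 1 \pmod k$ for $\by \in \partial T$) but such that no trichromatic simplex and no edge with non-consecutive labels exists. From this combinatorial data I would build a continuous map $f : B^2 \to \bbR^2$ that is $\theta_k$-equivariant on the boundary and nowhere zero, contradicting \cref{thm:kPolygonBorsukUlamTheorem}. Since $W_k$ is homeomorphic to $B^2$ via a radial homeomorphism that commutes with $\theta_k$ (both $W_k$ and $B^2$ are centered at the origin and $\theta_k$-invariant), it suffices to construct such an $f$ on $W_k$ instead.

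The construction is the standard ``place label targets around a circle'' trick, adapted to $\mathbb{Z}_k$. First I would choose $k$ points $g_0, g_1, \dots, g_{k-1} \in \bbR^2 \setminus \{0\}$ to serve as the images of the $k$ labels, picking them to be the vertices of a regular $k$-gon so that $\theta_k(g_j) = g_{j+1 \pmod k}$; this guarantees that assigning $f(\bx) = g_{\lambda(\bx)}$ at each vertex $\bx$ of $T$ makes $f$ equivariant on the boundary in the sense required, because $\lambda$ shifts by $1$ under $\theta_k$ on $\partial T$ exactly when $g_{\lambda}$ rotates by $2\pi/k$. Then I extend $f$ to all of $W_k$ by barycentric (piecewise-linear) interpolation over the simplices of $T$. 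The key point is that $f$ is nowhere zero: on any simplex $\sigma$ of $T$, $f(\sigma)$ is contained in the convex hull of $\{g_i : i \in \lambda(V(\sigma))\}$, and the ``no bad simplex/edge'' hypothesis forces this label set to be a singleton or a pair of consecutive labels $\{j, j+1\}$ — and by the choice of the $g_j$ as vertices of a regular $k$-gon, neither a single $g_j$ nor the segment $[g_j, g_{j+1}]$ between consecutive ones contains the origin. Hence $0 \notin f(W_k)$, contradicting $k$-Polygon Borsuk-Ulam.

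A couple of technical points need care. One must check that the piecewise-linear extension over the nice triangulation $T$ is genuinely continuous and well-defined, which is automatic for barycentric interpolation over a simplicial complex. One must also verify that $f$ is equivariant on the \emph{whole} boundary $\partial W_k$, not just at vertices of $T$: this follows because $T$ is symmetric with respect to $\theta_k$ on the boundary (part of ``nice''), so $\theta_k$ maps each boundary edge of $T$ to a boundary edge of $T$, and PL interpolation commutes with the affine map $\theta_k$. The main obstacle — though it is really a bookkeeping issue rather than a deep one — is ensuring the geometry of the target points $g_0, \dots, g_{k-1}$ rules out the origin for \emph{every} allowed label pattern on a simplex; choosing them as the vertices of a regular $k$-gon (equivalently, $g_j = (\cos(2\pi j/k), \sin(2\pi j/k))$, consistent with the clockwise convention of $\theta_k$) handles this cleanly, since for $k \geq 3$ the closed segment joining two adjacent vertices of a regular $k$-gon is a chord that stays strictly away from the center. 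If $k = 2$ this would fail (the chord is a diameter through the origin), but the lemma is stated for $k \geq 3$, which is exactly why that case is excluded.
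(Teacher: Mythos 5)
Your proposal is correct and is essentially the paper's own argument: the paper likewise interprets each label $i$ as the vertex $\bu_i$ of the regular $k$-gon $W_k$, extends piecewise-linearly over the triangulation, composes with a $\theta_k$-equivariant homeomorphism $W_k \to B^2$, and observes that a zero can only occur in a simplex whose label set is not contained in a consecutive pair. The only difference is cosmetic — you phrase it as a contrapositive (nowhere-zero contradiction) while the paper runs the implication directly — and your extra care about equivariance holding on the full boundary (via niceness of $T$) is a welcome elaboration of a step the paper passes over quickly.
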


\begin{proof}
    We interpret each label $i \in \mathbb{Z}_k$ as the vector $\bu_i$, which is 
  the $i$-th vertex of the polygon $W_k$. Let $h: W_k \to W_k$  be the piecewise linear extension of 
  the function that has value $\bu_{\lambda(\vec{x})}$ on any vertex $\vec{x} \in T$. Finally 
  we define $g : B^2 \to \bbR^2$ as the composition of $h$ and a homeomorphism between $W_k$ and
  $B^2$ that maps $\vec{0}$ to $\vec{0}$ and is $\theta_k$-equivariant. Notice that by the definition of $h$ and $g$
  and the equivariance assumption on $\lambda$, it holds that 
  $g(\theta_k(\bx)) = \theta_k(g(\bx))$ for $\bx \in \partial B^2$. Then, it
  follows from \cref{thm:kPolygonBorsukUlamTheorem} that there exists an 
  $\by^{\star} \in B^2$ such that $g(\by^{\star}) = 0$ and hence there exists a
  $\bx^{\star} \in W_k$ such that $h(\bx^{\star}) = 0$.
     
    Now, let $\sigma^{\star}$ be a full-dimensional simplex of $T$ that
  contains $\bx^{\star}$. If the vertices of $\sigma^{\star}$ have only two consecutive labels,
  say $1$ and $2$ (corresponding to vectors $\bu_1$ and $\bu_2$), then it is impossible to 
  have $h(\bx^{\star}) = 0$ since it is a non-zero linear
  interpolation of $\bu_1$ and $\bu_2$ and these vectors are linearly independent.
  Hence, it has to be that the vertices of $\sigma^{\star}$ have either two non-consecutive labels or three different labels
  and $k$-Polygon Tucker's Lemma follows.
\end{proof}

\begin{lemma} \label{lem:polygonTuckerImpliesPolygonBorsukUlam}
    $k$-Polygon Tucker's Lemma (\cref{thm:polygonTucker}) implies
  $k$-Polygon Borsuk-Ulam (\cref{thm:kPolygonBorsukUlamTheorem}).
\end{lemma}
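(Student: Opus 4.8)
The plan is to derive the $k$-Polygon Borsuk--Ulam conclusion from \cref{thm:polygonTucker} by the standard approximation-and-contradiction scheme used to deduce Borsuk--Ulam from Tucker's lemma (and Brouwer from Sperner). Assume toward a contradiction that $f : B^2 \to \bbR^2$ is continuous, satisfies $f(\theta_k(\vec{x})) = \theta_k(f(\vec{x}))$ for all $\vec{x} \in \partial B^2$, and yet $f(\vec{x}) \neq 0$ for every $\vec{x} \in B^2$. Since $B^2$ is compact, $\varepsilon := \min_{\vec{x} \in B^2}\norm{f(\vec{x})} > 0$ and $f$ is uniformly continuous.

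First I would move everything onto the polygon $W_k$. Fix a radial homeomorphism $\phi : W_k \to B^2$ with $\phi(\vec{0}) = \vec{0}$; it is $\theta_k$-equivariant because $W_k$ is $\theta_k$-invariant. Set $\tilde f := f \circ \phi$, so that $\tilde f : W_k \to \bbR^2$ is continuous, satisfies $\norm{\tilde f(\vec x)} \geq \varepsilon$ everywhere, and is $\theta_k$-equivariant on $\partial W_k$. Partition $\bbR^2 \setminus \{\vec 0\}$ into $k$ half-open angular sectors $C_0,\dots,C_{k-1}$ of equal angular width $2\pi/k$, indexed so that $\theta_k(C_i) = C_{i+1 \bmod k}$ (possible because $\theta_k$ rotates by exactly one sector's width). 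Using uniform continuity of $\tilde f$, choose a \emph{nice} triangulation $T$ of $W_k$ --- a refinement of $T^\star$ that is $\theta_k$-symmetric on $\partial W_k$, of which there exist ones of arbitrarily small mesh (subdivide the fundamental triangle $\chull(\{\vec 0,\bu_1,\bu_2\})$ compatibly along its two radial edges and propagate the subdivision by $\theta_k$) --- whose mesh is so small that the $\tilde f$-values at the vertices of any simplex of $T$ all lie in directions spanning an angular range of width strictly less than $2\pi/k$ (this is possible because the simplices have diameter tending to $0$ and $\norm{\tilde f} \ge \varepsilon$). Finally, set $\lambda(\vec x) := i$ whenever $\tilde f(\vec x) \in C_i$. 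Equivariance of $\tilde f$ on $\partial W_k$ together with $\theta_k(C_i) = C_{i+1}$ yields $\lambda(\theta_k(\by)) = \lambda(\by) + 1 \pmod k$ for every boundary vertex $\by$, so $\lambda$ satisfies the hypotheses of \cref{thm:polygonTucker}.

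It then remains to exclude both conclusions of \cref{thm:polygonTucker}. The key elementary observation is that the $k$ sector boundary rays are spaced $2\pi/k$ apart, so any set of directions spanning an angular range of width less than $2\pi/k$ is contained in at most two of the $C_i$, and those two are cyclically consecutive. Applied to the two endpoints of any edge $\psi \in T$ this shows $\lambda(\vec v_1) - \lambda(\vec v_2) \in \{0,1,-1\} \pmod k$, so no edge with non-consecutive labels can occur; applied to the three vertices of any triangle of $T$ it shows that at most two distinct labels occur among them, so no trichromatic triangle can occur. This contradicts \cref{thm:polygonTucker}. Hence $f$ must vanish at some point of $B^2$, which is the conclusion of \cref{thm:kPolygonBorsukUlamTheorem}.

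I expect the delicate point to be the second paragraph: the sectors and the triangulation must be chosen \emph{together} so that (i) the coloring inherits exact $\theta_k$-equivariance on the boundary from $f$ (forcing both the sector indexing $\theta_k(C_i) = C_{i+1}$ and a $\theta_k$-symmetric boundary triangulation), and (ii) the mesh is fine enough relative to $\varepsilon = \min\norm{f}$ that both Tucker obstructions are ruled out at once. The quantitative core of (ii) is that two vectors of norm at least $\varepsilon$ within distance $\delta$ subtend an angle that $\to 0$ as $\delta \to 0$; shrinking $\delta$ below the threshold making this angle less than $\pi/k$ squeezes all three $\tilde f$-values of a triangle into an angular window of width under $2\pi/k$, which is exactly what makes the trichromatic case go through even when $k = 3$ (where all sectors are pairwise cyclically adjacent, so the non-consecutive-edge obstruction is vacuous and only the trichromatic one remains).
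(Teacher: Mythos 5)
Your proof is correct and takes essentially the same route as the paper: pull $f$ back to the polygon via a $\theta_k$-equivariant homeomorphism, refine a nice triangulation until uniform continuity and the lower bound on $\norm{\tilde f}$ force the direction of $\tilde f$ to vary by less than $2\pi/k$ across any simplex, label vertices by $\theta_k$-compatible angular sectors, and conclude from Tucker's Lemma. The paper's "nearest $\bu_i$" rule is the same sector partition up to a phase shift, and the only substantive presentational differences are that the paper argues by producing, for every $\varepsilon>0$, a point with $\norm{h(\bx)}\le\varepsilon$ (rather than deriving a contradiction from $\min\norm{f}>0$) and that it separates $k=3$ from $k>3$ with explicit distance estimates, whereas your angular-spread observation rules out trichromatic triangles and non-consecutive edges uniformly.
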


\begin{proof}
    Let $h : W_k \to \bbR^2$ be the function obtained from $f$ by using a  
  homeomorphism between $W_k$ and $B^2$ that fixes $\vec{0}$ and is $\theta_k$-equivariant. Using standard arguments,
  that are used in the proof of both Brouwer's Fixed Point Theorem via Sperner's
  Lemma and the proof of Borsuk-Ulam via Tucker's Lemma, it is enough if for every
  $\varepsilon > 0$ we find a point $\bx$ such that $\norm{h(\bx)} \leq \varepsilon$, for
  details we refer to \citep{Mat03BorsukUlam}. Since $h$ is a continuous function in a
  compact set, it is also uniformly continuous. Thus for every $\varepsilon > 0$ there
  exists a $\delta > 0$ such that for any $\bx, \bx' \in W_k$, if 
  $\norm{\bx - \bx'}_2 < \delta$, then 
  $\norm{h(\bx) - h(\bx')} < \varepsilon / k$. Assume that $T$ is a nice 
  triangulation of $W_k$ such that any simplex $\sigma \in T$ has diameter 
  at most $\delta$.

    We define the labeling $\lambda(\bx) = i$ to be equal to the index of the vertex
  $\vec{u}_i$ of $W_k$ that is closest to $h(\bx)$. We break ties between $i$ and $i+1$, by picking $i$. Note that the only other kind of tie that can occur is if $h(\bx)=0$. In that case all labels are tied. If $\bx=0$, we pick one arbitrarily. Otherwise, i.e., if $\bx \neq 0$, we apply the same rule as for $h(\bx)$ above, but for $\bx$ instead. It is easy to check that this tie-breaking is $\mathbb{Z}_k$-equivariant. For any $\bx \in \partial W_k$ because of the equivariance
  assumption on $g$ we have that $\lambda(\theta_k(\bx)) = \lambda(\bx) + 1\pmod{k}$
  and hence the assumptions of $k$-Polygon Tucker's Lemma are satisfied. Now we 
  distinguish two cases.
  \smallskip
  
  \paragraph{$\boldsymbol{k = 3}$.} In this case, $3$-Polygon Tucker's Lemma implies that there
  exists a simplex $\sigma \in T$ whose vertices $\vec{v}_1$, $\vec{v}_2$, 
  $\vec{v}_3$ contain all the labels $1$, $2$, and $3$ respectively. This implies the 
  following set of inequalities by the definition of the labeling $\lambda$
  \[ \norm{h(\vec{v}_1) - \vec{u}_1} \le \min(\norm{h(\vec{v}_1) - \vec{u}_2}, \norm{h(\vec{v}_1) - \vec{u}_3}) \]
  \[ \norm{h(\vec{v}_2) - \vec{u}_2} \le \min(\norm{h(\vec{v}_2) - \vec{u}_1}, \norm{h(\vec{v}_2) - \vec{u}_3}) \]
  \[ \norm{h(\vec{v}_3) - \vec{u}_3} \le \min(\norm{h(\vec{v}_3) - \vec{u}_1}, \norm{h(\vec{v}_3) - \vec{u}_2}). \]
  \noindent Hence, it is easy to see that the maximum angle between any of 
  $h(\vec{v}_1)$, $h(\vec{v}_2)$ and $h(\vec{v}_3)$ is at least $2 \pi/3$. Now, assume
  for the sake of contradiction that for all $\vec{v}_1$, $\vec{v}_2$, $\vec{v}_3$ it holds
  that
  \[ \norm{h(\vec{v}_1)} \ge \varepsilon, ~ \norm{h(\vec{v}_2)} \ge \varepsilon, ~ \norm{h(\vec{v}_3)} \ge \varepsilon. \]
  \noindent This together with the fact that the maximum angle is at least $2 \pi/3$ 
  implies that the maximum distance is at least $2 \sin(2 \pi/6) \varepsilon$. But this
  implies that 
  \[ \max(\norm{h(\vec{v}_1) - h(\vec{v}_2)}, \norm{h(\vec{v}_2) - h(\vec{v}_3)}, \norm{h(\vec{v}_1) - h(\vec{v}_3)}) \ge \sqrt{3} \cdot \varepsilon > \frac{\varepsilon}{3} \]
  which contradicts the definition of the triangulation $T$, where the 
  vertices of the same simplex are at most $\delta$ far from each other and hence 
  their images are at most $\varepsilon/k= \varepsilon/3$ far from each other. This implies that our
  assumption was wrong and hence for at least one $i \in [3]$ it holds that
  $\norm{h(\vec{v}_i)} \le \varepsilon$ and the result for this case follows.
  \smallskip
  
  \paragraph{$\boldsymbol{k > 3}$.} In this case, $k$-Polygon Tucker's Lemma implies that there
  exists an edge $\psi \in T$ whose vertices $\vec{v}_1$ and $\vec{v}_2$
  have labels that differ by more that one, without loss of generality assume that these
  labels are $1$ and $3$ respectively. By the definition of the labeling $\lambda$ this
  implies that
  \[ \norm{h(\vec{v}_1) - \vec{u}_1} \le \min_{i}(\norm{h(\vec{v}_1) - \vec{u}_i}), ~~~ \norm{h(\vec{v}_2) - \vec{u}_3} \le \min_{i}(\norm{h(\vec{v}_2) - \vec{u}_i}) \]
  \noindent Hence, it is easy to see that the angle between $h(\vec{v}_1)$ and $h(\vec{v}_2)$ is at
  least $2 \pi/k$. Now, for sake of contradiction we assume that
  \[ \norm{h(\vec{v}_1)} \ge \varepsilon, ~ \norm{h(\vec{v}_2)} \ge \varepsilon. \]
  \noindent This together with the fact that the angle is at least $2 \pi/k$ implies
  that the distance $\norm{h(\vec{v}_1) - h(\vec{v}_2)}$ is at least 
  $2 \sin(2 \pi / k) \varepsilon > \varepsilon / k$ which contradicts the definition of 
  $T$.
  \smallskip
  
  \noindent Therefore, in both cases for every $\varepsilon > 0$ we can find a $\vec{v} \in W_k$ such that 
  $\norm{h(\vec{v})} \le \varepsilon$. This, by standard arguments and the compactness of $W_k$,
  implies that there exists a point $\vec{x}^{\star} \in W_k$ such that 
  $h(\vec{x}^{\star}) = \vec{0}$ and hence the result follows.
\end{proof}

\subsubsection{Proof of $k$-Polygon Tucker's Lemma} \label{sec:kPolygonTucker:proof}

In this section, we prove $k$-Polygon Tucker's Lemma. A corollary of our proof
combined with \cref{lem:polygonTuckerImpliesPolygonBorsukUlam} is that the
computational problems associated with $k$-Polygon Tucker's Lemma and $k$-Polygon 
Borsuk-Ulam both belong to $\ppak{k}[\#1]$. The proof technique that we introduce here is a generalization of the combinatorial proof of Tucker's lemma given by \citet{FT81}.
\smallskip

  We use a modulo-$k$ argument to prove this theorem. We define a directed 
graph where the vertices correspond to the simplices of $T$, and we also identify the symmetric edges of
$T$ on the boundary of $W_k$ as the same vertex. Then, in this graph we describe a 
rule for defining edges such that there exist three types of vertices:
\begin{enumerate}
  \item the vertex that corresponds to the $0$-dimensional simplex $\{\vec{0}\}$ which 
  will have degree $1$,
  \item vertices that are balanced $\pmod{k}$, i.e., 
  $(\mathsf{out}\text{-}\mathsf{degree}) - (\mathsf{in}\text{-}\mathsf{degree}) = 0 \pmod{k}$,
  \item vertices that are different from $\{\vec{0}\}$ and are not balanced $\pmod{k}$.
\end{enumerate}
\noindent Due to a simple modulo-$k$ argument and because $\{\vec{0}\}$ is not balanced
we can conclude that the constructed graph contains a vertex of type 3. Finally, we 
prove that all vertices of type 3 correspond to either a trichromatic triangle or a 
bichromatic edge with distinct non-consecutive labels and hence $k$-Polygon Tucker's Lemma 
follows. Our proof also gives us a reduction of the computational problem associated
with $k$-Polygon Tucker's Lemma to the problem \imba{k[\#1]}.
\smallskip

\noindent For any simplex $\sigma \in T$ we define $S(\sigma)$ and $\lambda(\sigma)$: 
\begin{itemize}
    \item[$\triangleright$] $S(\sigma) \subseteq [k]$ is the minimal subset of $[k]$
    such that $\sigma$ lies in the cone defined by $\{\vec{u}_i : i \in S(\sigma)\}$,
    \item[$\triangleright$] 
    $\lambda(\sigma) = \{\lambda(\vec{x}) : \vec{x} \text{ is a vertex of } \sigma\}$,
\end{itemize}
and we let $S(\{\vec{0}\}) = \emptyset$.
 
\begin{remark}
    Observe that because $T$ is a refinement of $T^*$, we have that every 
  simplex $\sigma \in T$ is contained in a cone defined by two consecutive 
  vectors $\vec{u}_i, \vec{u}_{i +1}$. Hence, for $\sigma \neq \{\vec{0}\}$,  
  $S(\sigma)$ contains either a single number $i \in [k]$ or two consecutive numbers
  $i, i +1$. 
\end{remark}

\begin{figure}
    \centering
    \includegraphics{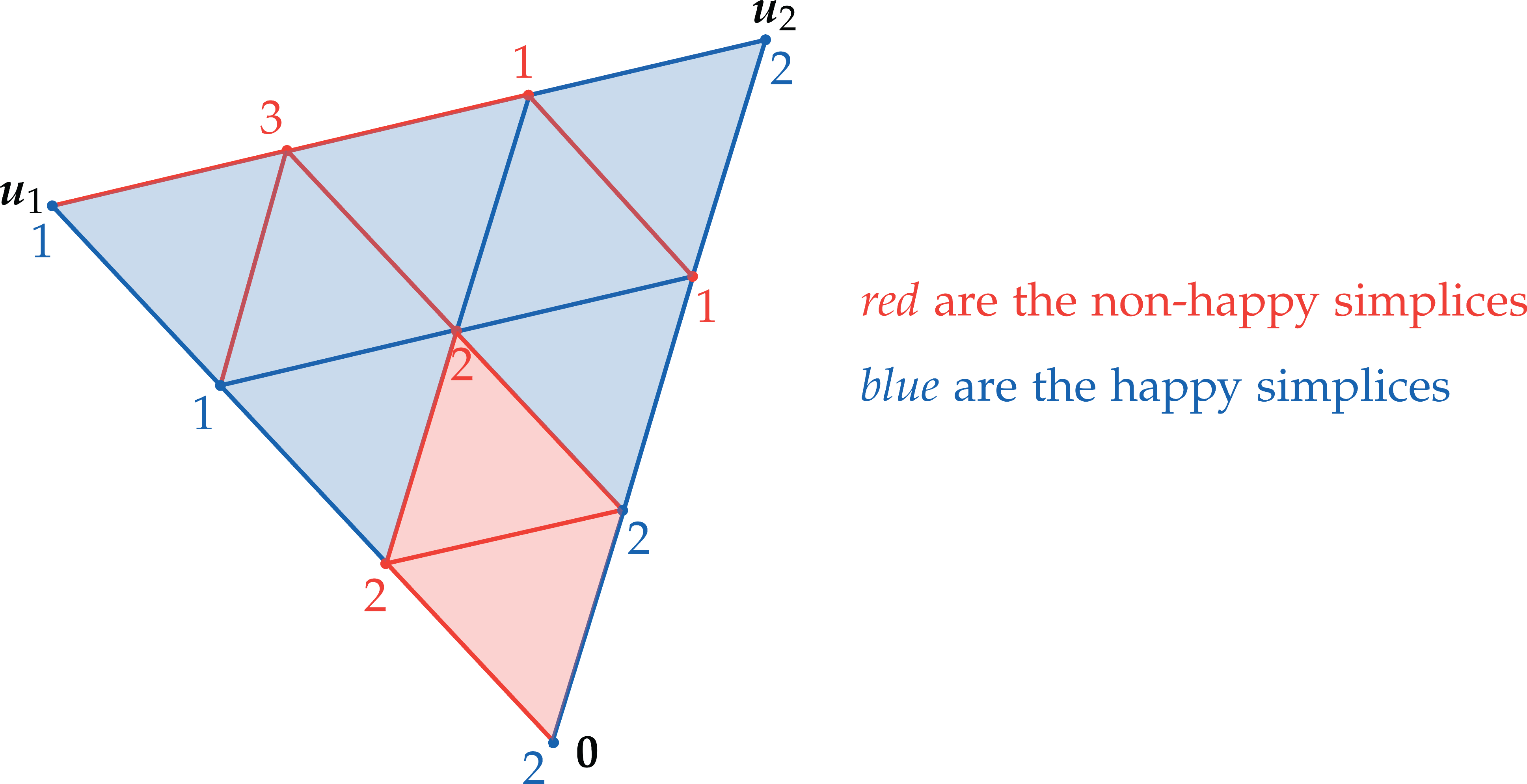}
    \caption{Example that shows happy and non-happy simplices in the cone $\chull(\{\vec{u}_1, \vec{0}, \vec{u}_2\})$.}
    \label{fig:happy}
\end{figure}

\begin{definition}[\textsc{Happy Simplices}] A simplex $\sigma \in T$ is called \textit{happy} if and only if 
$S(\sigma) \subseteq \lambda(\sigma)$. See also
\cref{fig:happy} for an example that explains the definition.
\end{definition}

  We will define a graph $G$ with vertex set $V(G) = T$. In $G$, we will only add
directed edges to the following vertices, which we call \textit{relevant}:
\begin{itemize}
  \item[(a)] vertices that correspond to a simplex $\sigma \in T$ such that 
  $\sigma \not\in \partial T$ and $\sigma$ is happy,
  \item[(b)] vertices that correspond to a simplex $\psi \in \partial T$ such that $\psi$ is happy and:
  \begin{itemize}
      \item $\psi = \{u_1\}$, if $\psi$ is $0$-dimensional
      \item $\psi \subseteq \chull(\{\vec{u}_1, \vec{u}_2\})$, if $\psi$ is $1$-dimensional.
  \end{itemize}
\end{itemize}
\noindent The reason for this distinction between simplices on the boundary and 
simplices not on the boundary is that we want to identify the symmetric simplices on
the boundary as a \emph{super vertex} in order to correctly use a modulo-$k$ argument, as
we described in the sketch of the proof.

\begin{remark}
    The rest of the vertices in $V(G)$ that do not correspond to type (a) or (b) simplices
  can be thought of as having 
  $(\mathsf{out}\text{-}\mathsf{degree}) = (\mathsf{in}\text{-}\mathsf{degree}) = 0$ or
  as having a self-loop. In both cases, these vertices are balanced.
\end{remark}

  We add an edge $(v, v')$ to the graph $G$ only if the simplices $\sigma$ and
$\sigma'$ that correspond to $v$ and $v'$ are both relevant and one of the following rules applies:
\begin{enumerate}
  \item \textbf{case} 
  $\boldsymbol{\sigma \not\in \partial T, \sigma' \not\in \partial T:}$ We
  add the edge if $\sigma' \subseteq \sigma$ and the labels of 
  $\sigma'$ suffice to make $\sigma$ happy, i.e., 
  $S(\sigma) \subseteq \lambda(\sigma')$,
  \item \textbf{case} 
  $\boldsymbol{\sigma \not\in \partial T, \sigma' \in \partial T:}$  Observe that since $v'$ is 
  relevant and $\sigma' \in \partial T$, $v'$ is of type (b). So, instead of checking whether $\sigma' \subseteq \sigma$, we check whether there exists $t \in [k]$ such that $\tau := \theta^{(t)}_k(\sigma') \subseteq \sigma$ and $\tau$ suffices to make $\sigma$ happy, i.e., $S(\sigma) \subseteq \lambda(\tau)$.
\end{enumerate}

\paragraph*{Directing the edges.} The edge between $\sigma$ and $\sigma'$ is directed in the following natural way:
\begin{itemize}
    \item if $\sigma$ is 1-dimensional, then $\sigma$ and $\sigma'$ lie in $\chull(\{0,\vec{u}_i\})$ for some $i$, and the edge is directed ``away from $0$''. Formally, if $\sigma = \{\vec{z}_0, \vec{z}_1\}$ and $\sigma' = \{\vec{z}_1\}$ are connected by an edge, then write $\vec{z}_0 = \alpha_i u_i$ and $\vec{z}_1 = \beta_i u_i$. If $\alpha_i-\beta_i > 0$, then the edge is incoming into $\sigma$. If $\alpha_i-\beta_i < 0$, then the edge is outgoing out of $\sigma$.
    \item if $\sigma$ is 2-dimensional, then $\sigma$ and $\sigma'$ lie in $\chull(\{0,\vec{u}_i,\vec{u}_j\})$ for some $i,j$ with $i-j = \pm 1 \pmod{k}$. If $j=i+1$, then the edge is directed such that ``we keep label $i$ to our right, and label $j=i+1$ to our left, when we move in the direction of the edge''. If $j=i-1$, then the edge is directed such that ``we keep label $j=i-1$ to our right, and label $i$ to our left, when we move in the direction of the edge''. Formally, if $\sigma = \{\vec{z}_0, \vec{z}_1, \vec{z}_2\}$ and $\sigma'=\{\vec{z}_1, \vec{z}_2\}$ are connected by an edge, where $\lambda(\vec{z}_1)=i$ and $\lambda(\vec{z}_2)=j$, then write $\vec{z}_0=\alpha_i \vec{u}_i+\alpha_j \vec{u}_j$, $\vec{z}_1=\beta_i \vec{u}_i+\beta_j \vec{u}_j$ and $\vec{z}_2=\gamma_i \vec{u}_i+\gamma_j \vec{u}_j$. Construct the matrix
    \begin{equation*}
        M= \left[\begin{tabular}{ll}
            $\alpha_i-\beta_i$ & $\alpha_i -\gamma_i$ \\
            $\alpha_j-\beta_j$ & $\alpha_j - \gamma_j$
        \end{tabular}\right]
    \end{equation*}
    If $\det M > 0$, then the edge is incoming into $\sigma$. If $\det M < 0$, then the edge is outgoing out of $\sigma$. Notice that $\det M \neq 0$, because $\sigma$ is a simplex. Furthermore, note that the determinant of the matrix does not change if we switch both $i$ and $j$, and $\vec{z}_1$ and $\vec{z}_2$ (i.e., $\beta$ and $\gamma$). Thus, the direction is well-defined.
\end{itemize}
If $\sigma'$ corresponds to a vertex of type (b), then we apply the rule above to $\sigma$ and $\tau$ instead.

\begin{figure}
    \centering
    \includegraphics{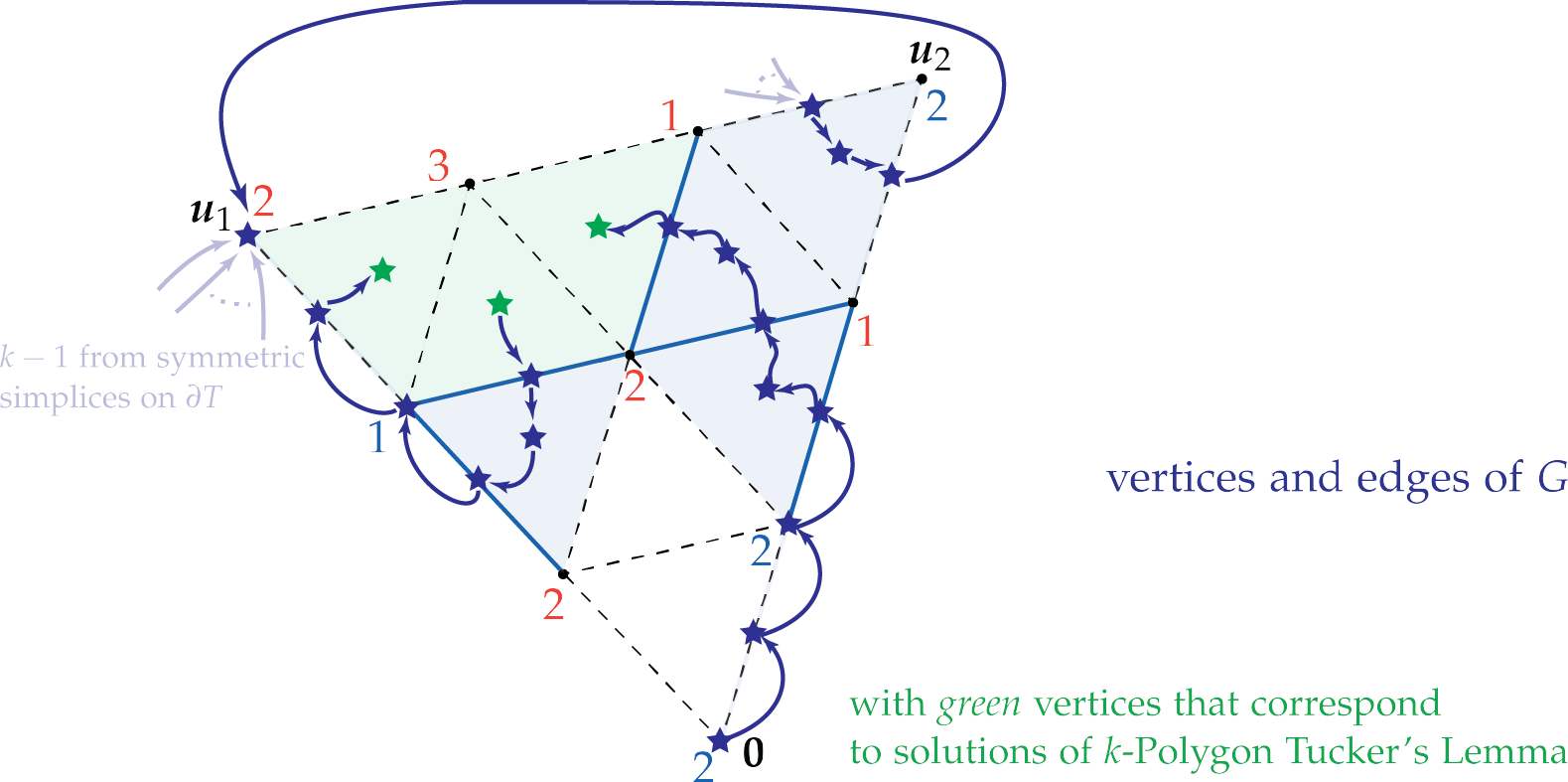}
    \caption{An example of the graph $G$ constructed in the proof of $k$-Polygon Tucker's Lemma, focused on the cone $\chull(\{\vec{u}_1, \vec{0}, \vec{u}_2\})$.}
    \label{fig:containment}
\end{figure}

  Based on the description of the edges in the graph $G$ we can prove the following
Lemmas which complete the proof of $k$-Polygon Tucker's Lemma.

\begin{lemma} \label{lem:kPolygonTucker:proof:trivialSimplex}
    The vertex that corresponds to the simplex $\{\vec{0}\} \in T$ has out-degree $1$ 
  and in-degree $0$.
\end{lemma}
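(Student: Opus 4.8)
The plan is to pin down exactly which edges of the graph $G$ touch the vertex $\{\vec{0}\}$ and then read off their orientation from the direction rule.

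I would start by noting that $\{\vec{0}\}$ is a relevant vertex: since $S(\{\vec{0}\}) = \emptyset$ we trivially have $S(\{\vec{0}\}) \subseteq \lambda(\{\vec{0}\})$, so it is happy, and it is not in $\partial T$ because $\vec{0}$ is interior to $W_k$; hence it is of type (a). Since every edge of $G$ joins a simplex $\sigma$ to a codimension-one face $\sigma'$, and $\{\vec{0}\}$ is $0$-dimensional (so has no codimension-one face), the vertex $\{\vec{0}\}$ can occur only as the smaller simplex $\sigma'$ of such a pair. Thus any edge at $\{\vec{0}\}$ comes from a $1$-simplex $\sigma = \{\vec{0}, \vec{z}\}$ of $T$ having $\vec{0}$ as a vertex, and the edge-creation condition demands $S(\sigma) \subseteq \lambda(\sigma') = \lambda(\{\vec{0}\}) = \{\lambda(\vec{0})\}$. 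In particular $S(\sigma)$ must be a singleton, which (by the remark that simplices of $T$ sit in a cone of two consecutive $\vec{u}_i$) forces $\sigma$ to lie inside a single spoke $\chull(\{\vec{0}, \vec{u}_j\})$, and then $j = \lambda(\vec{0})$.

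Next I would establish uniqueness. Because $T$ refines $T^\star$, the spoke $\chull(\{\vec{0}, \vec{u}_{\lambda(\vec{0})}\})$ is subdivided by $T$ into a path of edges from $\vec{0}$ to $\vec{u}_{\lambda(\vec{0})}$, and exactly one edge of this path is incident to $\vec{0}$, namely $\sigma := \{\vec{0}, \vec{w}\}$ where $\vec{w}$ is the first subdivision vertex along the spoke (with $\vec{w} = \vec{u}_{\lambda(\vec{0})}$ if the spoke is undivided). Every other $1$-simplex of $T$ through $\vec{0}$ either lies in a spoke $\chull(\{\vec{0}, \vec{u}_j\})$ with $j \neq \lambda(\vec{0})$ or enters the interior of some cone $\sigma_j$ (so $|S| = 2$), and in both cases the condition $S(\sigma) \subseteq \{\lambda(\vec{0})\}$ fails; so there is at most one edge at $\{\vec{0}\}$. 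It is actually present: $\sigma = \{\vec{0}, \vec{w}\}$ satisfies $S(\sigma) = \{\lambda(\vec{0})\} \subseteq \lambda(\sigma)$ (happy), lies outside $\partial T$ (it contains $\vec{0}$), hence is relevant, and the edge condition holds by construction.

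It then remains to orient this edge. Applying the $1$-dimensional direction rule with $\sigma' = \{\vec{0}\}$ in the role of $\{\vec{z}_1\}$ and $\sigma = \{\vec{0}, \vec{w}\}$, write $\vec{w} = c\,\vec{u}_{\lambda(\vec{0})}$ with $c > 0$ and $\vec{0} = 0 \cdot \vec{u}_{\lambda(\vec{0})}$; the relevant difference of coefficients is $c - 0 = c > 0$, so the edge is ``incoming into $\sigma$'', i.e.\ directed from $\{\vec{0}\}$ into $\sigma$. Hence $\{\vec{0}\}$ has out-degree $1$ and in-degree $0$. The only step requiring any care is the case analysis ruling out all other $1$-simplices through $\vec{0}$ — in particular those poking into the interiors of the cones $\sigma_j$; the rest is a routine unwinding of the definitions of ``relevant'' and of the edge rule, and the boundary-identification clause of case (2) is irrelevant here since $\{\vec{0}\} \notin \partial T$.
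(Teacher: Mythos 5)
Your argument is correct and follows essentially the same route as the paper's own proof: identify that $\{\vec{0}\}$ can only appear as the smaller simplex of an edge, use $S(\sigma)\subseteq\lambda(\{\vec{0}\})=\{\lambda(\vec{0})\}$ to force $\sigma$ to be the unique $1$-simplex at $\vec{0}$ on the spoke $\chull(\{\vec{0},\vec{u}_{\lambda(\vec{0})}\})$, and then read off the orientation from the ``directed away from $\vec{0}$'' rule. You are a little more explicit than the paper about relevance of $\{\vec{0}\}$ and about uniqueness of the incident edge, and you invoke ``every edge joins a simplex to a codimension-one face'' as a premise rather than deriving it, but since you also give the $|S(\sigma)|=1$ argument that actually does the work, this is harmless.
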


\begin{proof}
    Since $\{\vec{0}\}$ is a 0-dimensional simplex, it does not have any sub-simplices
  and hence it can only be connected to a 1-dimensional simplex, i.e., an edge. An
  edge $\psi \in T$ that is not contained in a linear segment of the form
  $\chull(\{\vec{0}, \vec{u}_i\})$ cannot be a neighbor of $\{\vec{0}\}$, because it requires
  two labels to become happy and $\{\vec{0}\}$ has only one single label. 
  Hence, $\{\vec{0}\}$ can only be connected to a $1$-dimensional simplex $\{\vec{0}, \vec{z}_i\}$ contained 
  in $\chull(\{\vec{0}, \vec{u}_i\})$ for some $i$. 
  Observe also that $S(\{\vec{0}, \vec{z}_i\}) = \{i\}$, which implies that
  $S(\{\vec{0}, \vec{z}_i\}) = \{\lambda(\vec{0})\}$. Therefore, there is exactly one 1-dimensional 
  simplex that is connected to $\{\vec{0}\}$, namely $\{\vec{0}, \vec{z}_i\}$, where $i = \lambda(\vec{0})$.
  
  Furthermore, since $\{\vec{0}\}$ and its neighbor $\{\vec{0}, \vec{z}_i\}$ lie in $\chull(\{\vec{0}, \vec{u}_i\})$, the edge is directed away from $\{\vec{0}\}$. Formally, if we write $\vec{z}_i = \alpha_i u_i$ and $\vec{0}= \beta_i u_i$, it will hold that $\alpha_i-\beta_i = \alpha_i > 0$. This means that the edge is incoming into $\{\vec{0}, \vec{z}_i\}$ and the lemma follows.
\end{proof}

\begin{lemma} \label{lem:kPolygonTucker:proof:solutionSimplex}
    Any vertex $v$ in $G$ that is imbalanced modulo-$k$, i.e.,
  $(\mathsf{out}\text{-}\mathsf{degree}(v)) \neq (\mathsf{in}\text{-}\mathsf{degree}(v)) \pmod{k}$ 
  and does not correspond to the simplex $\{\vec{0}\}$, corresponds to a simplex $\sigma$
  that is either trichromatic or $\lambda(\sigma) \not\subseteq \{i, i+1\}$ for all $i \in [k]$.
\end{lemma}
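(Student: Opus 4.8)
\noindent The plan is to prove the contrapositive: if a vertex $v \neq \{\vec{0}\}$ of $G$ corresponds to a simplex $\sigma$ that is \emph{not} a solution --- that is, $\sigma$ is not trichromatic and $\lambda(\sigma) \subseteq \{i, i+1\}$ for some $i \in [k]$ --- then $v$ is balanced modulo $k$. If $\sigma$ is not relevant, then $v$ has in-degree and out-degree $0$ and there is nothing to prove, so we may assume $\sigma$ is relevant, hence \emph{happy}: $S(\sigma) \subseteq \lambda(\sigma)$. Since $T$ refines $T^\star$, $S(\sigma)$ is either a singleton $\{j\}$ (exactly when $\sigma$ lies on the ray $\chull(\{\vec{0}, \vec{u}_j\})$) or a pair of consecutive indices $\{j, j+1\}$; together with happiness and the non-solution hypothesis this determines $\lambda(\sigma)$ in each case. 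I would then run a case analysis organized by $\dim \sigma \in \{0, 1, 2\}$ and by $S(\sigma)$, in each case listing the $G$-edges incident to $v$ and checking that their directions cancel modulo $k$ (the vertex $\{\vec{0}\}$ is excluded precisely because it is the distinguished imbalanced vertex of \cref{lem:kPolygonTucker:proof:trivialSimplex}).

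\noindent For $\dim \sigma = 0$, write $\sigma = \{\vec{z}\}$ with $\vec{z} \neq \vec{0}$; happiness forces $S(\sigma) = \{j\}$ and $\lambda(\vec{z}) = j$, so $\vec{z}$ lies on the ray $\chull(\{\vec{0}, \vec{u}_j\})$. If $\vec{z}$ is an interior vertex of that ray, it is the face of exactly two relevant $1$-simplices lying on the ray (the two sub-segments adjacent to $\vec{z}$), and the ``directed away from $\vec{0}$'' rule makes one of the two edges incoming and the other outgoing, so $v$ is balanced. If $\vec{z} = \vec{u}_j$, then relevance forces $j = 1$, and through the $\theta_k$-shifted matching of the second edge rule the super-vertex $\{\vec{u}_1\}$ collects exactly one edge from the outermost segment of \emph{each} of the $k$ rays, all of the same orientation, so its imbalance is $\pm k \equiv 0 \pmod k$. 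For $\dim \sigma \in \{1, 2\}$ the crux is a ``one door in, one door out'' phenomenon. For a relevant happy $2$-simplex we have $\lambda(\sigma) = S(\sigma) = \{j, j+1\}$ (consecutive, by happiness and the non-solution hypothesis); exactly two of its three $1$-faces carry both labels $j$ and $j+1$ while the third repeats a label, these two are the only faces whose label set contains $S(\sigma)$ and hence the only ones that yield $G$-edges, and moving the apex of $\sigma$ across the line through the shared face flips the sign of the orienting matrix $M$, so one resulting edge is incoming and one outgoing. For a relevant happy $1$-simplex $\sigma$ (not lying on $\partial W_k$) I would split into the sub-cases $S(\sigma) = \{j\}$ with $\lambda(\sigma) = \{j\}$ (two relevant $0$-faces, no relevant $2$-cofaces), $S(\sigma) = \{j\}$ with $\lambda(\sigma)$ a consecutive pair containing $j$ (one relevant $0$-face, namely the one labelled $j$, plus the single $2$-coface on the ``correct'' side of the ray, since the $2$-simplex on the other side has $S$-set $\not\subseteq \lambda(\sigma)$ when $k \geq 3$), and $S(\sigma) = \{j, j+1\}$ with $\lambda(\sigma) = \{j, j+1\}$ (no relevant $0$-faces, the two $2$-cofaces on the two sides of $\sigma$), and in each sub-case verify via the orienting rules that the two incident edges point in opposite directions.

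\noindent A final bookkeeping step handles the boundary identification. Every $1$-simplex $\psi \subseteq \partial W_k$ is identified with its $k-1$ rotational copies into one super-vertex; since $S(\psi)$ is a consecutive pair, $\psi$ is never a solution, and when $\psi$ is happy it is a face of exactly one $2$-simplex (on its inner side), contributing one edge per copy. Because $\lambda$ is $\theta_k$-equivariant and $\theta_k$ is orientation-preserving, all $k$ of these edges carry the same orientation relative to the super-vertex, so its imbalance is again $\pm k \equiv 0 \pmod k$; the non-corner boundary $0$-vertices and the corners $\vec{u}_j$ with $j \neq 1$ are never happy, hence not relevant and trivially balanced. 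Assembling all cases shows that every vertex other than $\{\vec{0}\}$ that is imbalanced modulo $k$ corresponds to a trichromatic $2$-simplex or to an edge with non-consecutive labels.

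\noindent The step I expect to be the main obstacle is the orientation bookkeeping: confirming, in each of the roughly half-dozen local configurations, that the determinant rule (``keep label $i$ to the right and label $i+1$ to the left as we move along the edge'') genuinely assigns opposite directions to the two incident edges, and that the $k$ copies of each boundary simplex line up with a single consistent orientation. This is the one place where the argument really uses that the triangulation is nice and that $\theta_k$ acts freely and orientation-preservingly on $\partial W_k$; once $S(\sigma)$ and $\lambda(\sigma)$ are fixed, the remaining combinatorics of which faces and cofaces are relevant is routine.
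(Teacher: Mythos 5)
Your proposal takes essentially the same approach as the paper's proof: pass to the contrapositive, restrict to relevant (hence happy) simplices, organize the argument by the dimension of $\sigma$ with sub-cases on $S(\sigma)$ versus $\lambda(\sigma)$, verify in each configuration that the two incident edges have opposite orientation (or that all $k$ edges of a boundary super-vertex share an orientation), and correctly identify the orientation bookkeeping via the determinant rule as the crux. One small imprecision: the corners $\vec{u}_j$ with $j \neq 1$ can in fact be \emph{happy} (boundary equivariance forces $\lambda(\vec{u}_j)=j$ once $\lambda(\vec{u}_1)=1$); they are excluded from the analysis not because they are unhappy but because they are not \emph{relevant} — they are identified with the super-vertex $\{\vec{u}_1\}$ under the type-(b) rule — which is the conclusion you need anyway, so the argument goes through.
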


\begin{proof}
    For this proof, we  consider all three cases for the dimension of the 
  simplex $\sigma^{\star}$ that corresponds to an imbalanced node of $G$ separately.
  \smallskip
  
  \noindent \textbf{Dimension of $\boldsymbol{\sigma^{\star}}$ is $\boldsymbol{0}$.} In this case, $\sigma^{\star} = \{\vec{z}^{\star}\}$. It is easy to see that
  $\sigma^{\star}$ cannot make happy any $2$-dimensional simplex since a  
  $2$-dimensional simplex $\sigma$ has $|S(\sigma)| = 2$. So, $\sigma^{\star}$ can only 
  be a neighbor of a $1$-dimensional simplex $\psi$. Additionally, $\sigma^{\star}$ cannot make happy
  any $1$-dimensional $\psi$ such that $|S(\psi)| = 2$. Thus, a neighbor of $\sigma^{\star}$
  should be contained in the segment
  $\chull(\{\vec{0}, \vec{u}_i\})$ where $i$ is such that $\lambda(\sigma^{\star}) = i$. If
  $\vec{z}^{\star} \neq \vec{u}_i$, then $\sigma^{\star}$ is connected to both of its
  two neighboring $1$-dimensional simplices in the segment $\chull(\{\vec{0}, \vec{u}_i\})$. Then, since the edges are directed away from $\vec{0}$, $\sigma^{\star}$ has one incoming and one outgoing edge. Formally, let $\{\vec{z}_0, \vec{z}^*\}$ and $\{\vec{z}^*,\vec{z}_0'\}$ be the two neighboring $1$-dimensional simplices, and write $\vec{z}_0 = \alpha_i u_i$, $\vec{z}_0' = \alpha_i' u_i$ and $\vec{z}^* = \beta_i u_i$. It is easy to see that $\alpha_i-\beta_i$ and $\alpha_i'-\beta_i$ always have opposite signs, since $\vec{z}_0$ and $\vec{z}_0'$ lie on opposite sides of $z^\star$ on $\chull(\{\vec{0}, \vec{u}_i\})$.
  
  If $\vec{z}^{\star} = \vec{u}_i$, then from the
  definition of relevant vertices of $G$ we have that $\vec{z}^{\star} = \vec{u}_1$ and $\{\vec{u}_1\}$ is happy, i.e., $\lambda(\vec{u}_1)=1$. Thus, by the boundary conditions, for every $t \in [k]$, it holds that $\lambda(\theta_k^{(t)}(\vec{u}_1)) = t + 1$. In other words, $\lambda(\vec{u}_i) = i$ for all $i \in [k]$. As a result, it follows that for each $i \in [k]$, $\{\vec{u}_1\}$ has an edge with the simplex $\{\vec{u}_i, \vec{z}_i\}$ which lies in $\chull(\{\vec{0},\vec{u}_i\})$. This holds because $\theta_k^{(i)}(\{\vec{u}_1\})$ suffices to make $\{\vec{u}_i, \vec{z}_i\}$ happy. Clearly, $\theta_k^{(i)}(\{\vec{u}_1\})$ cannot make any other simplex happy, by the same arguments as above. Finally, note that all the edges are incoming into $\{\vec{u}_1\}$, because edges are directed away from $\vec{0}$ on any $\chull(\{\vec{0},\vec{u}_i\})$. Formally, if we write $\vec{z}_i = \alpha_i u_i$ and $\vec{u}_i = \beta_i u_i$, then we always have $\alpha_i - \beta_i = \alpha_i -1 < 0$, so the edge is outgoing out of $\{\vec{u}_i,\vec{z}_i\}$. Thus, in this case, $\vec{z}^{\star}$ has $k$ neighbors and all of them with the 
  same direction. Therefore, $\vec{z}^{\star}$ is always balanced modulo-$k$. In conclusion, an
  imbalanced vertex of $G$ different from $\{\vec{0}\}$ cannot have dimension $0$.
  \smallskip
  
  \noindent \textbf{Dimension of $\boldsymbol{\sigma^{\star}}$ is $\boldsymbol{1}$.} First, assume 
  that $\sigma^{\star} = \{\vec{z}^{\star}_1, \vec{z}^{\star}_2\}$ belongs to
  one of the line segments $\chull(\{\vec{0}, \vec{u}_i\})$. If additionally 
  $\lambda(\vec{z}^{\star}_1) = \lambda(\vec{z}^{\star}_2)$, then $\sigma^{\star}$
  is happy if
  $\lambda(\vec{z}^{\star}_1) = \lambda(\vec{z}^{\star}_2) = i$. Therefore,
  $\abs{\lambda(\sigma^{\star})} = 1$ and hence $\sigma^\star$ cannot be a neighbor of a
  $2$-dimensional $\sigma$ since $\abs{S(\sigma)} = 2$ and $\sigma^{\star}$
  cannot make $\sigma$ happy. So, $\sigma^{\star}$ has exactly the two neighbors 
  $\{\vec{z}^{\star}_1\}$ and $\{\vec{z}^{\star}_2\}$ since both of them make 
  $\sigma^{\star}$ happy. Furthermore, the vertex is balanced, because one of the edges is incoming and the other one is outgoing, since edges are directed away from $\vec{0}$ on $\chull(\{\vec{0}, \vec{u}_i\})$. Formally, if we write $\vec{z}^\star_1 = \alpha_i u_i$ and $\vec{z}^\star_2 = \alpha_i' u_i$, then $\alpha_i -\alpha_i'$ and $\alpha_i'-\alpha_i$ always have opposite signs.
  
  The next case we consider is when
  $\sigma^{\star} \subseteq \chull(\{\vec{0}, \vec{u}_i\})$ with
  $i = \lambda(\vec{z}^{\star}_1) \neq \lambda(\vec{z}^{\star}_2) = j$. If 
  $i - j \neq \pm 1$, then $\sigma^\star$ yields a solution to $k$-Polygon Tucker, and so is allowed to be imbalanced. On the other hand, if $j = i \pm 1 \pmod{k}$, $\sigma^{\star}$ has
  exactly two neighbors, the simplex $\{\vec{z}^{\star}_1\}$, which makes
  $\sigma^{\star}$ happy, and the unique $2$-dimensional simplex $\sigma = \{\vec{z}^{\star}_1, \vec{z}^{\star}_2, \vec{z}_3\}$ that contains $\sigma^{\star}$ as a face
   and is contained in the cone 
  $\chull(\{\vec{u}_i, \vec{0}, \vec{u}_{j}\})$. Also, one of the edges is incoming and other one outgoing and hence $\sigma^{\star}$ cannot be imbalanced. Formally, write $\vec{z}_2^\star = \beta_i u_i$, $\vec{z}_1^\star = \beta_i' u_i$ and $\vec{z}_3 = \alpha_i u_i + \alpha_j u_j$. Then, it holds that the edge goes from $\{\vec{z}^{\star}_1\}$ to $\{\vec{z}^{\star}_1,\vec{z}^{\star}_2\}$ if $\beta_i'-\beta_i > 0$, and otherwise in the other direction. Furthermore, the edge goes from $\sigma = \{\vec{z}^{\star}_1, \vec{z}^{\star}_2\}$ to $\sigma = \{\vec{z}^{\star}_1, \vec{z}^{\star}_2, \vec{z}_3\}$, if $\det M > 0$, where we can compute that $\det M = (\alpha_i-\beta_i) \alpha_j - (\alpha_i-\beta_i')\alpha_j = \alpha_j (\beta_i'-\beta_i)$. Since $\alpha_j > 0$, it follows that both expressions have the same sign, and thus $\sigma^{\star}$ is balanced.
  
    The final case for $1$-dimensional $\sigma^{\star}$ is when 
  $\sigma^{\star}$ is not contained in any line segment of the form
  $\chull(\{\vec{0}, \vec{u}_i\})$. Let $\sigma^{\star}$
  be contained in the cone $\chull(\{\vec{u_i}, \vec{0}, \vec{u}_{i+1}\})$. If $\sigma^{\star}$
  is not relevant, then it cannot be imbalanced. To be relevant, and hence happy, it must hold that
  $\lambda(\vec{z}^{\star}_1) = i$ and $\lambda(\vec{z}^{\star}_2) = i+1$. If 
  $\sigma^{\star}$ is not in the boundary $\partial T$, then
  $\sigma^{\star}$ is the face of exactly two $2$-dimensional simplices 
  $\sigma$, $\sigma'$ that are both contained in 
  $\chull(\{\vec{u}_i, \vec{0}, \vec{u}_{i + 1}\})$. Therefore, $\sigma^{\star}$ makes
  happy both of them and no other simplex, and consequently it has an edge with both of them and no other edge. Furthermore, one of the edges is incoming and other one is outgoing (from the perspective of $\sigma^{\star}$), so it cannot be imbalanced. Intuitively, if we let $\sigma=\{\vec{z}^{\star}_1,\vec{z}^{\star}_2,\vec{z}_3\}$ and $\sigma'=\{\vec{z}^{\star}_1,\vec{z}^{\star}_2,\vec{z}_3'\}$, then $\vec{z}_3$ and $\vec{z}_3'$ lie on opposite sides of the line defined by $\{\vec{z}^{\star}_1,\vec{z}^{\star}_2\}$. Thus, the basis of $\mathbb{R}^2$ defined by $\{\vec{z}_3-\vec{z}^{\star}_1, \vec{z}_3-\vec{z}^{\star}_2\}$ has opposite orientation compared to the basis $\{\vec{z}_3'-\vec{z}^{\star}_1, \vec{z}_3'-\vec{z}^{\star}_2\}$. As a result, $\det M$ will have a different sign for the two edges. For a formal proof of this, we refer to the proof of \cref{thm:ptucker-in-ppa-p}, where we prove a more general version of this fact.

  If 
  $\sigma^{\star} \in \partial T$, then $\sigma^{\star}$ is relevant only if
  $\sigma^{\star} \subseteq \chull(\{\vec{u}_1, \vec{u}_2\})$. In this case, $\sigma^{\star}$ has exactly $k$ neighbors each 
   of which is a $2$-dimensional simplex that has as a face one of the $k$ 
  symmetric copies of $\sigma^{\star}$. Namely, the neighbors of $\sigma^{\star}$ are $\sigma_1, \dots, \sigma_k$, where $\theta_k^{(i)}(\sigma^\star)$ makes $\sigma_i$ happy for all $i \in [k]$. To see that all $k$ edges are incoming or all are outgoing, notice that if we have $1$ to our right and $2$ to our left when we reach the boundary, then it will hold that we have $i$ on our right and $i+1$ on our left when we reach the boundary at the corresponding position in cone $\chull(\{\vec{0},\vec{u}_i,\vec{u}_{i+1}\})$. More formally, the sign of the determinant of the matrix $M(\sigma_i, \theta_k^{(i)}(\sigma^\star))$ constructed to determine the direction of the edge with $\sigma_i$, will be the same for all $i \in [k]$. For a full formal proof, we again  refer to the proof of \cref{thm:ptucker-in-ppa-p}.

  \noindent \textbf{Dimension of $\boldsymbol{\sigma^{\star}}$ is $\boldsymbol{2}$.} Let 
  $\sigma^{\star} = \chull(\{\vec{z}^{\star}_1, \vec{z}^{\star}_2, \vec{z}^{\star}_3\})$.
  Assume that $\sigma^{\star}$ is contained in the simplex 
  $\chull(\{\vec{u}_i, \vec{0}, \vec{u}_j\})$, with $j = i \pm 1 \pmod{k}$ and
  without loss of generality $\lambda(\vec{z}^{\star}_1) = i$, 
  $\lambda(\vec{z}^{\star}_2) = j$. If $\lambda(\vec{z}^{\star}_3) \not\in \{i, j\}$ 
  then $\sigma^{\star}$ is a trichromatic triangle, and thus it can be imbalanced. 
  Assume that
  $\lambda(\vec{z}^{\star}_3) \in \{i, j\}$ and without loss of generality 
  $\lambda(\vec{z}^{\star}_3) = i$. In this case, $\sigma^{\star}$ has exactly two 
  neighbors: the $1$-dimensional simplices $\psi_1 = \chull(\{\vec{z}^{\star}_1, \vec{z}^{\star}_2\})$ and
  $\psi_2 = \chull(\{\vec{z}^{\star}_3, \vec{z}^{\star}_2\})$. Once again, one edge is incoming and the other one is outgoing, and thus $\sigma^{\star}$ is balanced. Intuitively, this follows from the fact that if we move with $i$ to our left and $j$ to our right, then we can ``enter'' the simplex from one side, and ``exit'' from the other one. More formally, if we write $\vec{z}^\star_1 = \alpha_i u_i + \alpha_j u_j$, $\vec{z}^\star_2 = \beta_i u_i + \beta_j u_j$ and $\vec{z}^\star_3 = \alpha_i' u_i + \alpha_j' u_j$, then it holds that:
  \begin{equation*}
      \det \left[ \begin{tabular}{ll}
        $\alpha_i-\beta_i$   & $\alpha_i-\alpha_i'$ \\
        $\alpha_j - \beta_j$   & $\alpha_j - \alpha_j'$
      \end{tabular} \right] = - \det \left[ \begin{tabular}{ll}
        $\alpha_i'-\beta_i$   & $\alpha_i'-\alpha_i$ \\
        $\alpha_j' - \beta_j$   & $\alpha_j' - \alpha_j$
      \end{tabular} \right]
  \end{equation*}
  by using standard rules about the determinant.
 
  \medskip
  
  Hence, the only imbalanced vertices different from $\{\vec{0}\}$ correspond to either trichromatic simplices or
  simplices such that $\lambda(\sigma) \not\subseteq \{i, i+1\}$ for all $i \in [k]$.
\end{proof}

  Finally, from the definition of the graph $G$ and 
\cref{lem:kPolygonTucker:proof:trivialSimplex}, we have that there has to be a vertex in
$G$ that is imbalanced $\pmod{k}$ and different from $\{\vec{0}\}$. By
\cref{lem:kPolygonTucker:proof:solutionSimplex}, any such vertex proves
the validity of $k$-Polygon Tucker's Lemma.

\subsubsection{Computational Problems and Containment in $\ppak{k}[\#1]$} \label{sec:kPolygonTuckerkPolygonBorsukUlam:containment}

  In this section, we define the computational problems associated with
$k$-Polygon Tucker's Lemma and the $k$-Polygon Borsuk-Ulam Theorem, which we call 
\polygonTucker{k} and \polygonBorsukUlam{k} respectively. Following the ideas 
presented in \cref{sec:kPolygonTucker:proof}, we show that \polygonTucker{k}
is in $\ppak{k}[\#1]$. The membership of \polygonTucker{k} in $\ppak{k}[\#1]$ combined with the results of
\cref{sec:kPolygonTuckerkPolygonBorsukUlam:equivalence} implies that 
\polygonBorsukUlam{k} is also in $\ppak{k}[\#1]$.

  To define the computational problem associated with $k$-Polygon Tucker's Lemma we
need succinct access to the labels $\lambda(\vec{x})$ for the vertices $\vec{x}$ in
a nice triangulation $T$ of $W_k$. This succinct access resembles the one in the 
definition of the computational version of the original Borsuk-Ulam Theorem
\citep{ABB15-2DTucker, Papadimitriou94-TFNP-subclasses} and the computational version 
of Brouwer's Fixed Point Theorem \citep{Papadimitriou94-TFNP-subclasses}. To define
this succinct access, we fix for any $m \in \mathbb{N}$ a triangulation $T(m)$ with 
 diameter (i.e., max distance between two vertices of a simplex)  at most $1/2^m$, where we can refer to a
simplex in $T(m)$ using $O(m + k)$ bits. For our case this triangulation will be the 
following.

\begin{definition}[\textsc{Edge Parallel Triangulation}]\label{def:edge-par-triang}
    For every $m \in \mathbb{N}$, we define $\widehat{T}(m)$ to be the following nice 
  triangulation of $W_k$. Starting from $T^{\star}$ (see  
  \cref{def:kregularPolygonNiceTriangulation}) we define a simplicial complex 
  $\widehat{T}_i(m)$ of every simplex 
  $\sigma^{\star}_i = \chull(\{\vec{u}_i, \vec{0}, \vec{u}_{i + 1 \pmod{k}}\})$ and 
  then  set $\widehat{T}(m) = \cup_{i \in [k]} \widehat{T}_i(m)$. To define $\widehat{T}_i(m)$, we 
  divide the edges $\psi^i_1 = \chull(\{\vec{u}_i, \vec{0}\})$, 
  $\psi^i_2 = \chull(\{\vec{u}_i, \vec{u}_{i + 1 \pmod{k}}\})$ and
  $\psi^i_3 = \chull(\{\vec{0}, \vec{u}_{i + 1 \pmod{k}}\})$ of $\sigma^{\star}_i$ equally into
  $2^{m + 1}$ intervals. Then, from any endpoint of the subintervals of the edge 
  $\psi^i_2$ we consider the lines that are parallel to either $\psi^i_1$ or $\psi^i_3$
  and from any endpoint of the subintervals in $\psi_1^i$ and $\psi_3^i$ we consider the line that is parallel
  to $\psi^i_2$, as shown in \cref{fig:containment}.  We define $\widehat{T}_i(m)$ to be the
  set of simplices that are created by these lines and lie inside $\sigma^{\star}_i$. Namely, the intersection points and endpoints of subintervals  are the $0$-dimensional simplices in $\widehat{T}_i(m)$, the line segments and the triangles between $0$-dimensional simplices are also simplices in $\widehat{T}_i(m)$. It is simple to see that $\widehat{T}(m)$ is nice;  we call $\widehat{T}(m)$ an \textit{edge parallel triangulation} of $W_k$.
\end{definition}

\noindent  For the edge parallel triangulation $\widehat{T}(m)$ of $W_k$ the following facts are
easy to verify.

\begin{fact} \label{fct:parallelTriangulationDiameter}
    The diameter of $\widehat{T}(m)$ is at most $1/2^m$.
\end{fact}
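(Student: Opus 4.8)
The plan is to verify this by reducing to an elementary fact about the triangle $\sigma^{\star}_i$ and then taking a union over $i$. Recall that, by the convention fixed just before \cref{def:edge-par-triang}, the diameter of a triangulation is the largest distance $\norm{\vec{x}-\vec{y}}$ between two vertices $\vec{x},\vec{y}$ of a common simplex, and that a face of a simplex has no larger diameter than the simplex itself; so it suffices to bound the diameter of every $2$-dimensional simplex of each $\widehat{T}_i(m)$ and then use $\widehat{T}(m)=\cup_{i\in[k]}\widehat{T}_i(m)$.

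First I would rewrite $\widehat{T}_i(m)$ in affine coordinates. Setting $A=\vec{0}$, $B=\vec{u}_i$ and $C=\vec{u}_{i+1\pmod{k}}$, a point of $\sigma^{\star}_i$ is $sB+tC$ with $s,t\ge 0$, $s+t\le 1$; splitting $\psi^i_1,\psi^i_2,\psi^i_3$ into $2^{m+1}$ equal parts and drawing the lines prescribed in \cref{def:edge-par-triang} produces exactly the three families of lines $\{t=j/2^{m+1}\}$, $\{s=j/2^{m+1}\}$ and $\{s+t=j/2^{m+1}\}$, $j\in\{0,\dots,2^{m+1}\}$. This is the standard triangular grid on $\sigma^{\star}_i$: each of its $2$-cells is either an ``upward'' triangle $\tfrac{1}{2^{m+1}}\chull(\{aB+bC,\,(a+1)B+bC,\,aB+(b+1)C\})$ or a ``downward'' triangle $\tfrac{1}{2^{m+1}}\chull(\{(a+1)B+bC,\,aB+(b+1)C,\,(a+1)B+(b+1)C\})$ for integers $a,b\ge 0$. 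In either case the three edge vectors of the small cell are $\pm\tfrac{1}{2^{m+1}}$ times the edge vectors $B-A$, $C-A$, $C-B$ of $\sigma^{\star}_i$, so the cell is a translate of $\tfrac{1}{2^{m+1}}\sigma^{\star}_i$ or of its point reflection through the origin; in particular every $2$-simplex $\sigma$ of $\widehat{T}_i(m)$ satisfies $\operatorname{diam}(\sigma)=\tfrac{1}{2^{m+1}}\operatorname{diam}(\sigma^{\star}_i)$.

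It then remains to note that $\operatorname{diam}(\sigma^{\star}_i)\le 2$: the vertices $\vec{0},\vec{u}_i,\vec{u}_{i+1\pmod{k}}$ of $\sigma^{\star}_i$ all lie in the closed unit disk $B^2$, whose diameter is $2$ (one can also compute directly that $\operatorname{diam}(\sigma^{\star}_i)=\max\{1,2\sin(\pi/k)\}\le\sqrt{3}$). Combining this with the previous identity, every simplex of every $\widehat{T}_i(m)$, and hence of $\widehat{T}(m)$, has diameter at most $2/2^{m+1}=1/2^m$, as claimed. I do not expect any real obstacle here; the only step that needs a couple of lines of justification is the identification of the prescribed lines with the standard triangular grid whose cells are isometric copies of $\tfrac{1}{2^{m+1}}\sigma^{\star}_i$, and this is immediate once everything is written in the coordinates $(s,t)$ above.
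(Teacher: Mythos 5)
Your proof is correct, and it supplies the argument that the paper deliberately leaves to the reader (the paper states \cref{fct:parallelTriangulationDiameter} with no proof, under the remark that these facts ``are easy to verify''). Your decomposition of $\widehat{T}_i(m)$ in the barycentric coordinates $(s,t)$ of $\sigma^\star_i$ correctly identifies the three line families in \cref{def:edge-par-triang} with $\{s=j/2^{m+1}\}$, $\{t=j/2^{m+1}\}$, $\{s+t=j/2^{m+1}\}$, and the observation that every $2$-cell is a translate of $\pm\tfrac{1}{2^{m+1}}\sigma^\star_i$ gives $\operatorname{diam}(\sigma)=\tfrac{1}{2^{m+1}}\operatorname{diam}(\sigma^\star_i)$. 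Since the three vertices $\vec 0,\vec u_i,\vec u_{i+1}$ of $\sigma^\star_i$ lie in the closed unit disk, $\operatorname{diam}(\sigma^\star_i)\le 2$, and the bound $1/2^m$ follows exactly. (Your sharper remark that $\operatorname{diam}(\sigma^\star_i)=\max\{1,2\sin(\pi/k)\}\le\sqrt3$ is also right, though not needed.) This is the expected argument; there is no discrepancy with the paper.
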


\begin{fact} \label{fct:parallelTriangulationIndex}
    We can indicate uniquely a simplex 
  $\sigma^{\star}_i$ using $\lceil \log(k) \rceil$ bits. Then, using $2 m + 3$ bits we can indicate uniquely a combination of
  two lines: (1) one of the $2 \cdot 2^{m + 1}$ lines that are parallel to either
  $\psi^i_1$ or $\psi^i_3$, and (2) one of the $2^{m + 1}$ lines that are parallel to
  $\psi^i_2$. This combination uniquely determines a point $\vec{x} \in \mathbb{R}^2$
  and we can efficiently check whether this point belongs to $\sigma^{\star}_i$ or 
  not. Of course this way the points that lie on the rays from $\vec{0}$ to $\vec{u}_i$ have two different representations but we can easily resolve this discrepancy by choosing as valid only the representation that is lexicographically first. Hence,  we can uniquely determine any vertex in
  $\widehat{T}(m)$ with $b = \lceil \log(k) \rceil + 2 m + 3$ bits.
\end{fact}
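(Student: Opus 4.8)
The plan is to verify the three components of the statement in turn — the $b=\lceil\log(k)\rceil+2m+3$ bit budget, the decoding of a codeword to a point of $\sigma^\star_i$ together with an efficient membership test, and the elimination of the double representations along the rays $\chull(\{\vec 0,\vec u_i\})$ — and then to read off the claimed bound by a direct count. Throughout I would work in barycentric coordinates on each cone $\sigma^\star_i=\chull(\{\vec 0,\vec u_i,\vec u_{i+1\bmod k}\})$: writing a point as $(a,b,c)/2^{m+1}$ with $a+b+c=2^{m+1}$ relative to $(\vec 0,\vec u_i,\vec u_{i+1\bmod k})$, the vertices of $\widehat T_i(m)$ are exactly the points with $a,b,c$ nonnegative integers, and the three families of parallel lines defining $\widehat T_i(m)$ are precisely the level sets of $a$, of $b$, and of $c$. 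This reduces everything to arithmetic on small integers, and in particular shows that the irrational entries of the $\vec u_i$ (which involve $\cos(2\pi/k)$ and $\sin(2\pi/k)$) never have to be evaluated: which full-dimensional simplex of $\widehat T(m)$ contains a given point, the neighbour lists of a simplex in the graph $G$ of \cref{sec:kPolygonTucker:proof}, and membership in $\partial W_k$ are all functions of $(a,b,c)$ and $i$ alone.

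For the bit count: there are $k$ cones, so $\lceil\log(k)\rceil$ bits name one. Inside a cone, a line parallel to $\psi^i_2$ is a level set $\{a=\alpha\}$ with $\alpha\in\{1,\dots,2^{m+1}\}$ (the extreme value degenerating to the point $\vec 0$), which takes $m+1$ bits; a line parallel to $\psi^i_1$ or to $\psi^i_3$ is a level set of $c$ or of $b$, so one of $2\cdot 2^{m+1}$ possibilities, which takes $m+2$ bits; a grid vertex is pinned down by one line of each kind, since the two level sets fix two barycentric coordinates and the third is their complement. This gives $2m+3$ bits per cone and $b=\lceil\log(k)\rceil+2m+3$ bits overall. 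Decoding is then immediate: a codeword yields $i$, an integer $\alpha$, and an integer $\beta$ or $\gamma$; the remaining barycentric coordinate is $2^{m+1}-\alpha-\beta$ (respectively $2^{m+1}-\alpha-\gamma$), and I would call the codeword \emph{valid} exactly when this is nonnegative — a single comparison of $(m+1)$-bit integers, hence polynomial time — in which case it names the unique point $\vec x=(b/2^{m+1})\vec u_i+(c/2^{m+1})\vec u_{i+1\bmod k}$ of $\sigma^\star_i$.

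It remains to deal with multiplicities. A vertex lying on a ray $\chull(\{\vec 0,\vec u_i\})$ is simultaneously the edge $\psi^i_1$ of $\sigma^\star_i$ and the edge $\psi^{i-1\bmod k}_3$ of $\sigma^\star_{i-1\bmod k}$, so it receives a valid codeword in each of two cones; I would keep only the lexicographically smaller one, again a poly-time comparison. After this tie-break, distinct valid codewords name distinct vertices and, conversely, every vertex of $\widehat T(m)$ — including $\vec 0$ and each corner $\vec u_i$, where one of the two level sets used in a codeword degenerates to a point — has exactly one valid codeword, which is the bijection asserted by the fact. I do not expect any genuine obstacle: the only delicate point is the off-by-one accounting for the parallel-line families and the shared rays — precisely the reason the statement is recorded as a fact — and the main thing I would be careful about is choosing the index ranges so that $\vec 0$ and the corners $\vec u_i$ are never left without a codeword.
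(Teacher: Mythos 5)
Your barycentric-coordinate reduction is exactly the right way to read this Fact, and it cleanly explains why the indexing is purely integer arithmetic and why the irrational coordinates of the $\bu_i$ never need to be evaluated. But there is a concrete error in the index range that undermines the claimed bijection.

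You take the line parallel to $\psi^i_2$ to be the level set $\{a=\alpha\}$ with $\alpha\in\{1,\dots,2^{m+1}\}$, with $\alpha=2^{m+1}$ degenerating to $\vec 0$. Since $a$ is the barycentric weight on $\vec 0$, the level set $\{a=0\}$ is precisely the outer edge $\psi^i_2$ of the cone. Excluding $\alpha=0$ therefore leaves every vertex on $\psi^i_2$ — that is, every vertex on $\partial\widehat T(m)$ — without a codeword. These are exactly the vertices to which the $\theta_k$-equivariance boundary conditions of \polygonTucker{k} apply, so omitting them is fatal for the downstream reduction. Note also that this simultaneously strands the corners $\bu_i$ and $\bu_{i+1\bmod k}$ of cone $i$, since they too have $a=0$; you flag $\vec 0$ and the corners as the points to be careful about, but your chosen range handles only $\vec 0$. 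The standard fix is the opposite slicing, $\alpha\in\{0,\dots,2^{m+1}-1\}$, which keeps all boundary vertices and instead leaves the single vertex $\vec 0$ without an $(\alpha,\beta)$-codeword; $\vec 0$ can then be given one reserved codeword (or, equivalently, one notes it is shared by all $k$ cones and treated specially in the graph construction of \cref{sec:kPolygonTucker:proof}, as the paper in fact does). Separately, the lexicographic tie-break needs to be applied not only across cones (a vertex on the ray $\chull(\{\vec 0,\bu_i\})$ appearing in both $\sigma^\star_i$ and $\sigma^\star_{i-1\bmod k}$), but also within a single cone: an interior vertex $(a,b,c)$ with $a\geq 1$ is the intersection of $\{a=\alpha\}$ with \emph{both} $\{b=\beta\}$ and $\{c=\gamma\}$, so it has two in-cone codewords. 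The same lexicographic rule disposes of this, but as written your argument acknowledges only the ray case.
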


\begin{fact} \label{fct:parallelTriangulationSimplexCheck}
    Given a set of binary vectors $A = \{\vec{a}_i\}_i$ where 
  $\vec{a}_i \in \{0, 1\}^b$, there exists an efficient procedure that determines
  whether $\chull(A)$ is a simplex of $\widehat{T}(m)$ or not.  
\end{fact}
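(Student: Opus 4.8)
The plan is to reduce the problem of recognizing a simplex of $\widehat{T}(m)$ to a handful of local, polynomial-time checks on the coordinates encoded by the vectors in $A$. First I would decode each $\vec{a}_i \in \{0,1\}^b$ into the concrete point $\vec{x}_i \in \mathbb{R}^2$ it represents, using the encoding of \cref{fct:parallelTriangulationIndex}: the first $\lceil \log k \rceil$ bits select the cone $\sigma^\star_{c_i}$, and the remaining $2m+3$ bits select the pair of grid lines whose intersection is $\vec{x}_i$. This decoding is clearly polynomial time, and along the way one also checks that each $\vec{a}_i$ is a \emph{valid} codeword — i.e.\ that the indicated intersection point actually lies in $\sigma^\star_{c_i}$, and that the representation is the lexicographically-first one for points on a ray $\chull(\{\vec{0},\vec{u}_j\})$ (as stipulated in \cref{fct:parallelTriangulationIndex}). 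If any $\vec{a}_i$ fails these checks, or if two of them decode to the same point, we reject.

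Next, since every simplex of $\widehat{T}(m)$ lies inside a single closed cone $\sigma^\star_i$ (the triangulation is a refinement of $T^\star$), I would determine the set of cones each $\vec{x}_i$ belongs to (a point can lie on the shared boundary ray of two adjacent cones) and check that the intersection over all $i$ is nonempty; let $\sigma^\star_j$ be a cone in this intersection. It then suffices to test whether $\chull(A)$ is a simplex of $\widehat{T}_j(m)$, the edge-parallel triangulation of the single triangle $\sigma^\star_j$. Transport everything into the affine coordinate system on $\sigma^\star_j$ in which $\widehat{T}_j(m)$ is the standard grid triangulation of a triangle subdivided into $(2^{m+1})^2$ congruent small triangles: using barycentric-type coordinates $(s,t)$ along the directions of $\psi^j_1$ and $\psi^j_3$, every vertex of $\widehat{T}_j(m)$ has coordinates that are integer multiples of $2^{-(m+1)}$, so I can extract integer coordinates $(\alpha_i,\beta_i)$ with $\alpha_i,\beta_i \ge 0$ and $\alpha_i+\beta_i \le 2^{m+1}$. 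Now $\chull(A)$ is a face of this grid triangulation precisely when (i) $|A| \le 3$, (ii) the integer points $\{(\alpha_i,\beta_i)\}$ are affinely independent, and (iii) they are pairwise ``grid-adjacent'' in the sense that for any two of them the differences $(\alpha_i-\alpha_{i'},\beta_i-\beta_{i'})$ lie in the allowed step set $\{\pm(1,0),\pm(0,1),\pm(1,-1)\}$ — this is exactly the incidence structure of the edge-parallel triangulation as described in \cref{def:edge-par-triang}. All of these are $O(1)$ arithmetic comparisons on numbers of $O(m)$ bits, so the whole procedure runs in time polynomial in $m$ and $\log k$.

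The one point that needs a little care — and which I expect to be the mildest obstacle — is handling simplices that straddle the boundary between two cones $\sigma^\star_j$ and $\sigma^\star_{j+1}$, i.e.\ simplices all of whose vertices lie on the ray $\chull(\{\vec{0},\vec{u}_{j+1}\})$. Such a simplex is at most $1$-dimensional (it lives on a line), and it is legitimately a simplex of both $\widehat{T}_j(m)$ and $\widehat{T}_{j+1}(m)$; the coordinate-extraction and adjacency test above still apply verbatim once we fix either one of the two containing cones, so no special case is truly needed beyond being careful that the ``cone membership'' step returns \emph{some} valid cone rather than failing. A second minor point is the degenerate codewords for points on rays through $\vec{0}$, which we already normalized away in the decoding step by insisting on the lexicographically-first representation; this guarantees that the map from codewords to points is injective on valid codewords, so the rejection of duplicate decodings in the first step is meaningful. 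Assembling these checks gives the claimed efficient procedure.
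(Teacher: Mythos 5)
The paper itself offers no proof of this Fact; it is dismissed as ``easy to verify'' alongside Facts~\ref{fct:parallelTriangulationDiameter} and~\ref{fct:parallelTriangulationIndex}. Your proposal supplies the verification in detail, and it is correct: working in the barycentric-type coordinates on a cone $\sigma^\star_j$ (writing $\bx = s\,\bu_j + t\,\bu_{j+1}$), the edge parallel triangulation of \cref{def:edge-par-triang} is exactly the grid triangulation with horizontal lines ($t = $ const, parallel to $\psi^j_1$), vertical lines ($s = $ const, parallel to $\psi^j_3$), and anti-diagonals ($s + t = $ const, parallel to $\psi^j_2$), so the edges of the triangulation are precisely those whose integer step vector lies in $\{\pm(1,0), \pm(0,1), \pm(1,-1)\}$. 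A short case-check (there are six triangle configurations around an interior grid point, all of which your condition accepts and which exhaust the triangles containing that point) confirms that your pairwise-adjacency criterion characterizes the $2$-simplices, and the degenerate cases ($|A| = 1$ or $2$, or simplices lying on a ray $\chull(\{\vec{0}, \bu_j\})$ shared by two cones) are handled by the cone-intersection step and the lexicographic normalization of codewords, as you note. The only cosmetic remark is that condition (ii) (affine independence) is actually implied by condition (iii) together with distinctness, since three distinct grid points with pairwise differences in $\{\pm(1,0),\pm(0,1),\pm(1,-1)\}$ cannot be collinear; keeping it does no harm.
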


  Because of \cref{fct:parallelTriangulationIndex}, we can assume that the labeling
$\lambda$ of $\widehat{T}(m)$ is given via a circuit $\mathcal{L}$ with 
$b = \lceil \log(k) \rceil + 2 m + 3$ input bits and $\lceil \log(k) \rceil$ output bits. The input to the
circuit $\mathcal{L}$ is the representation of a potential vertex $\vec{x}$ in $\widehat{T}(m)$ according to \cref{fct:parallelTriangulationIndex} 
and the output is the label $\lambda(x) \in [k]$ of this vertex $\vec{x}$. Observe that \cref{fct:parallelTriangulationIndex} also guarantees that it is easy to check whether
an input $\vec{a} \in \{0, 1\}^b$ to the circuit $\mathcal{L}$ corresponds to a valid
vertex $\vec{x} \in \widehat{T}(m)$. We are now ready to define the 
 total search problem that is associated with $k$-Polygon Tucker's Lemma.

\bigskip
\noindent\fbox{%
\colorbox{gray!10!white}{
    \parbox{\textwidth}{%
\noindent\underline{\polygonTucker{k}} 
\smallskip

\noindent \textsc{Input:} A circuit $\mathcal{L} : \{0, 1\}^b \to \{0, 1\}^{\lceil \log(k) \rceil}$,
with $b = \lceil \log(k) \rceil + 2 m + 3$.
\smallskip

\noindent \textsc{Output:} One of the following.
\vspace{-0.07in}
\begin{enumerate}
    \item Two binary vectors $\vec{a}_1, \vec{a}_2 \in \{0, 1\}^b$ such that $\vec{a}_1$, 
    $\vec{a}_2$ correspond to vertices $\vec{x}_1$, $\vec{x}_2$ on $\partial \widehat{T}(m)$ with 
    $\vec{x}_2 = \theta_k(\vec{x}_1)$, but $\mathcal{L}(\vec{a}_2) \neq \mathcal{L}(\vec{a}_1) + 1 \pmod{k}$.
    \vspace{-0.07in}
    \item Three binary vectors $\vec{a}_1, \vec{a}_2, \vec{a}_3 \in \{0, 1\}^b$ such 
    that the simplex $\sigma = \chull(\{\vec{a}_1, \vec{a}_2, \vec{a}_3\})$ belongs to
    $\widehat{T}(m)$ and all the labels $\mathcal{L}(\vec{a}_1)$,
    $\mathcal{L}(\vec{a}_2)$, $\mathcal{L}(\vec{a}_3)$ are different from each other.
    \vspace{-0.07in}
    \item Two binary vectors $\vec{a}_1, \vec{a}_2 \in \{0, 1\}^b$ such that the
    edge $\psi = \chull(\{\vec{a}_1, \vec{a}_2\})$ belongs to $\widehat{T}(m)$ and labels
    $\mathcal{L}(\vec{a}_1)$, $\mathcal{L}(\vec{a}_2)$ are different and they satisfy $\mathcal{L}(\vec{a}_1) - \mathcal{L}(\vec{a}_2) \neq \pm 1 \pmod{k}$.
\end{enumerate}}}}
\bigskip

\begin{lemma} \label{lem:kpolygonTuckerInPPAK}
    It holds that \polygonTucker{k} is in $\ppak{k}[\#1]$.
\end{lemma}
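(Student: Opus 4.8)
The plan is to formalize the combinatorial proof from Section~\ref{sec:kPolygonTucker:proof} as a many-one reduction from \polygonTucker{k} to \imba{k[\#1]}, which lies in $\ppak{k}[\#1]$ by definition. Given an instance of \polygonTucker{k} specified by a labeling circuit $\mathcal{L}$ on the edge parallel triangulation $\widehat{T}(m)$, I will build circuits $S$ and $P$ that describe the directed graph $G$ on the vertex set $V(G) = \widehat{T}(m)$ (with the $k$ symmetric boundary simplices in $\chull(\{\vec{u}_1,\vec{u}_2\})$ identified with the single ``super vertex'' representing the orbit, encoded canonically as the copy inside $\chull(\{\vec{u}_1,\vec{u}_2\})$). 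Each vertex of $G$ is a simplex of $\widehat{T}(m)$, which by Facts~\ref{fct:parallelTriangulationIndex} and~\ref{fct:parallelTriangulationSimplexCheck} can be encoded in polynomially many bits with efficiently checkable validity, and whose vertices (and hence labels, via $\mathcal{L}$) can be read off in polynomial time; so $S$ and $P$ are polynomial-size circuits. The successor/predecessor lists of a simplex $\sigma$ are computed by (i) determining $S(\sigma)$ and $\lambda(\sigma)$ and checking whether $\sigma$ is relevant (happy, and if on the boundary, the canonical boundary copy), (ii) if so, enumerating the $O(1)$ candidate neighbors — the faces $\sigma' \subsetneq \sigma$ and cofaces $\sigma \subsetneq \sigma'$ within $\widehat{T}(m)$, together with the ``rotated face'' matches $\tau = \theta_k^{(t)}(\sigma')$ for $t \in [k]$ when a neighbor is a boundary super vertex — and keeping those that satisfy the happiness-transfer rule $S(\cdot)\subseteq\lambda(\cdot)$, and (iii) computing the edge direction via the sign of the $2\times 2$ determinant $M$ (or the $1$-dimensional sign rule) exactly as in the ``Directing the edges'' paragraph. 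Since $\widehat{T}(m)$ has bounded local structure, each of these steps touches only a constant number of simplices and so runs in polynomial time.

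\textbf{Next}, I will argue correctness of the reduction. By Lemma~\ref{lem:kPolygonTucker:proof:trivialSimplex}, the vertex $\{\vec{0}\}$ has out-degree $1$ and in-degree $0$, hence imbalance $\equiv 1 \pmod k$, so setting the distinguished vertex $0^n$ of the \imba{k} instance to (the encoding of) $\{\vec{0}\}$ meets the promise, and in fact the imbalance is exactly $1$, so the instance is a valid \imba{k[\#1]} instance. Any solution returned by the oracle is either (a) a vertex $v \neq \{\vec{0}\}$ that is imbalanced $\pmod k$, or (b) a pair $v,v'$ witnessing a directed-edge inconsistency between $S$ and $P$. Type-(b) solutions cannot occur because $S$ and $P$ are constructed from the \emph{same} symmetric edge relation, so they are consistent by construction; this needs a short check that the direction assigned to an edge is the same whether computed from $\sigma$'s side or $\sigma'$'s side (the well-definedness already noted for $M$, plus the observation that a face and its coface agree on the orientation that ``keeps label $i$ to the right''). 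For a type-(a) solution, Lemma~\ref{lem:kPolygonTucker:proof:solutionSimplex} tells us the corresponding simplex $\sigma$ is either trichromatic or satisfies $\lambda(\sigma)\not\subseteq\{i,i+1\}$ for all $i$; decoding $\sigma$'s vertices then yields either an output of type~2 (three distinct labels) or type~3 (an edge whose two labels differ by something other than $0,\pm 1 \pmod k$) for \polygonTucker{k}. If $\sigma$ is the boundary super vertex with $\lambda$-inconsistency, we instead extract a type-1 solution. The function $g$ of the reduction performs exactly this decoding, so $(\mathcal{L}, v) \in \imba{k[\#1]}$ implies $(\mathcal{L}, g(\mathcal{L},v)) \in \polygonTucker{k}$.

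\textbf{The main obstacle} I anticipate is the bookkeeping around the boundary identification: one must be careful that the ``super vertex'' for the orbit of a boundary edge/vertex is encoded canonically (e.g.\ always as the representative lying in $\sigma^\star_1 = \chull(\{\vec{u}_1,\vec{0},\vec{u}_2\})$), that the successor/predecessor circuits apply $\theta_k^{(t)}$ and its inverse consistently so that the $k$ contributions from the symmetric copies are collapsed into one (super-)vertex, and that the direction-assignment argument in the two boundary cases of Lemma~\ref{lem:kPolygonTucker:proof:solutionSimplex} — where $\sigma^\star$ has $k$ neighbors all oriented the same way — genuinely makes the super vertex balanced $\pmod k$ in $G$ (i.e.\ its degree is a multiple of $k$ and all edges point the same way). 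The excerpt defers the formal determinant computation for these cases to the proof of Theorem~\ref{thm:ptucker-in-ppa-p}; in the write-up I will either invoke that computation or reproduce the short sign argument showing that rotation by $\theta_k$ preserves the orientation that determines edge direction. Everything else — the polynomial-time computability of the circuits, the choice of $0^n$, and the decoding map $g$ — is routine once the encoding conventions of Facts~\ref{fct:parallelTriangulationIndex}--\ref{fct:parallelTriangulationSimplexCheck} are fixed.
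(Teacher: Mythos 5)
Your proposal is correct and follows essentially the same route as the paper's proof: both formalize the path-following construction from Section~\ref{sec:kPolygonTucker:proof} as circuits $\mathcal{S},\mathcal{P}$ over the encoding of simplices (via \cref{fct:parallelTriangulationIndex,fct:parallelTriangulationSimplexCheck}), designate $\{\vec{0}\}$ as the imbalance-$1$ source, pad non-relevant/invalid inputs with self-loops, and invoke \cref{lem:kPolygonTucker:proof:solutionSimplex} to decode an imbalanced vertex into a \polygonTucker{k} solution, giving a many-one reduction to \imbal{k}{1}. Your emphasis on the boundary super-vertex bookkeeping and the type-(b)/consistency observation is slightly more explicit than the paper's terse presentation, but it is the same argument; note that the paper's direction rule is defined only from the perspective of the higher-dimensional simplex $\sigma$ (the edge is ``incoming into'' or ``outgoing out of'' $\sigma$), so the two-sided well-definedness check you flag as a potential obstacle is automatic and not actually needed.
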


\begin{proof}
    This lemma follows from the proof of $k$-Polygon Tucker's Lemma that we 
  presented in \cref{sec:kPolygonTucker:proof}. We only need to
  add the description of the circuits $\mathcal{S}$, 
  $\mathcal{P}$ that define the graph $G$ constructed in the proof. Then, using these circuits
  we reduce \polygonTucker{k} to 
  \imba{k[\#1]}. As we showed in \cref{lem:kPolygonTucker:proof:solutionSimplex},
  every solution to the resulting instance of \imba{k[\#1]} corresponds to a solution
  of \polygonTucker{k}. Hence, \polygonTucker{k} is in $\ppak{k}[\#1]$.
  
    Since the vertices of the graph $G$ correspond to simplices in $\widehat{T}(m)$ and 
  the maximum degree of any node in $G$ is $k$, we define the circuits $\mathcal{S}$,
  $\mathcal{P}$ to have $3 \cdot b$ binary inputs and $k \cdot (3 b)$ outputs, hence
  $\mathcal{S} : \{0, 1\}^{3 b} \to (\{0, 1\}^{3 b})^k$ and 
  $\mathcal{P} : \{0, 1\}^{3 b} \to (\{0, 1\}^{3 b})^k$. For the construction of the circuits, both
  $\mathcal{S}$ and $\mathcal{P}$ first check whether an input 
  $(\vec{a}_1, \vec{a}_2, \vec{a}_3)$ is \textit{valid} or not. An input 
  $(\vec{a}_1, \vec{a}_2, \vec{a}_3)$ is valid in the following cases.
  \begin{itemize}
    \item[$\triangleright$] If all $\vec{a}_1, \vec{a}_2, \vec{a}_3$ are different, 
    then $(\vec{a}_1, \vec{a}_2, \vec{a}_3)$ is valid if
    $\vec{a}_1, \vec{a}_2, \vec{a}_3$ are in lexicographical order and the simplex
    $\chull(\{\vec{x}_1, \vec{x}_2, \vec{x}_3\})$, where $\vec{x}_i$ is the point in 
    $\mathbb{R}^2$ that corresponds to $\vec{a}_i$ according to \cref{fct:parallelTriangulationIndex}, is a valid simplex of $\widehat{T}(m)$.
    Observe that we can check this using a polynomial size circuit by \cref{fct:parallelTriangulationSimplexCheck}.
    \item[$\triangleright$] If  $\{\vec{a}_1, \vec{a}_2, \vec{a}_3\}$ has two
    different vectors, then $(\vec{a}_1, \vec{a}_2, \vec{a}_3)$ is valid if
    the lexicographically first is $\vec{a}_1$, the lexicographically second is
    $\vec{a}_2 = \vec{a}_3$ and the edge
    $\chull(\{\vec{x}_1, \vec{x}_2\})$, where $\vec{x}_i$ is the point in 
    $\mathbb{R}^2$ that corresponds to $\vec{a}_i$ according to \cref{fct:parallelTriangulationIndex}, is a valid simplex of $\widehat{T}(m)$. Observe
    that we can check this using a polynomial size circuit by \cref{fct:parallelTriangulationSimplexCheck}.
    \item[$\triangleright$] If $\vec{a}_1 = \vec{a}_2 = \vec{a}_3$, then $(\vec{a}_1, \vec{a}_2, \vec{a}_3)$ is valid if the point 
    $\vec{x}_1$ in $\mathbb{R}^2$ that corresponds to $\vec{a}_1$ according to \cref{fct:parallelTriangulationIndex} is a valid vertex of $\widehat{T}(m)$. Observe
    that we can check this using a polynomial size circuit by \cref{fct:parallelTriangulationSimplexCheck}.
  \end{itemize}
  
    If input 
  $(\vec{a}_1, \vec{a}_2, \vec{a}_3)$ is not valid, then both circuits 
  $\mathcal{S}$ and $\mathcal{P}$ output $(\vec{a}_1, \vec{a}_2, \vec{a}_3)$ 
  concatenated with itself $k$ times. On the other hand, if 
  $(\vec{a}_1, \vec{a}_2, \vec{a}_3)$ is valid, then both circuits check whether
  this input corresponds to a relevant simplex as we defined in  
  \cref{sec:kPolygonTucker:proof}. It is easy to see that checking relevance can 
  be done efficiently. Again, if 
  $(\vec{a}_1, \vec{a}_2, \vec{a}_3)$ is not relevant, both circuits $\mathcal{S}$
  and $\mathcal{P}$ output $(\vec{a}_1, \vec{a}_2, \vec{a}_3)$ concatenated with
  itself $k$ times. Finally, if $(\vec{a}_1, \vec{a}_2, \vec{a}_3)$ is valid and 
  relevant, then we define the edges of the corresponding vertex in $G$ as described
  in \cref{sec:kPolygonTucker:proof}. From the construction of $G$, it follows that 
  the successors and the predecessors can be computed efficiently. If the
  vertex in $G$ that corresponds to $(\vec{a}_1, \vec{a}_2, \vec{a}_3)$ has less than
  $k$ incoming or outgoing edges, then we repeat the lexicographically last neighboring
  vertex enough times such that the number of bits in the output of both $\mathcal{S}$
  and $\mathcal{P}$ is $k \cdot (3 b)$.
  
    Using our analysis in \cref{sec:kPolygonTucker:proof}, it follows that the above 
  construction of $\mathcal{S}$ and $\mathcal{P}$ defines a reduction from 
  \polygonTucker{k} to \imbal{k}{1}.
\end{proof}

We now define the computational problem associated with the $k$-Polygon Borsuk-Ulam
Theorem. For this
computational problem we need a representation of the continuous function $f$ that is
the input to the $k$-Polygon Borsuk-Ulam Theorem. Following standard techniques in
the literature, we use arithmetic circuits with gates $\times \zeta$ (multiplication by a constant), $+$, $-$, $<$, $\min$, and 
$\max$ and rational constants to define this function.

\bigskip
\noindent\fbox{%
\colorbox{gray!10!white}{
    \parbox{\textwidth}{%
\noindent\underline{\polygonBorsukUlam{k}} 
\smallskip

\noindent \textsc{Input:} An arithmetic circuit $\mathcal{C} : B^2 \to \bbR^2$, an
accuracy parameter $\varepsilon > 0$ and a Lipschitz-constant $L$.
\smallskip

\noindent \textsc{Output:} One of the following.
\vspace{-0.07in}
\begin{enumerate}
    \item A point $\bx \in S^1$ such that $\norm{\mathcal{C}(\Tilde{\theta}_k(\bx)) - \Tilde{\theta}_k(\mathcal{C}(\bx))} \geq \eta(\varepsilon,k) := \varepsilon/8 k^4$.
    \vspace{-0.07in}
    \item Two points $\bx, \by \in B^2$ such that $\norm{\mathcal{C}(\bx)-\mathcal{C}(\by)} > L \norm{\bx-\by}$.
    \vspace{-0.07in}
    \item A point $\bx^* \in B^2$ such that $\norm{\mathcal{C}(\bx^*)} \leq \varepsilon$.
\end{enumerate}}}}

\bigskip

\noindent The first type of solution corresponds to a violation of the boundary conditions. Since we cannot compute $\theta_k$ exactly, we let $\tilde{\theta}_k$ denote a function that computes $\theta_k$ with error at most $\xi(\varepsilon,L,k) := \frac{\varepsilon}{2^7 L k^4}$. The second type of solution corresponds to a violation of $L$-Lipschitz-continuity. Note that the function computed by the circuit $\mathcal{C}$ might not be continuous, because of the comparison gate. Thus we add this extra violation, which ensures that the function is Lipschitz-continuous. This allows us to relate this problem to \polygonTucker{k} (and in particular to show that it always has a solution).

\smallskip

\begin{lemma} \label{lem:kPolygonBorsukUlamInPPAk}
    The problem \polygonBorsukUlam{k} reduces to the problem \polygonTucker{k}. 
  Therefore, \polygonBorsukUlam{k} is in $\ppak{k}$.
\end{lemma}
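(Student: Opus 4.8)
The plan is to construct a polynomial-time many-one reduction from \polygonBorsukUlam{k} to \polygonTucker{k} that mirrors the proof of \cref{lem:polygonTuckerImpliesPolygonBorsukUlam}, but now carried out with finite precision. Given an instance $(\mathcal{C}, \varepsilon, L)$ of \polygonBorsukUlam{k}, first choose the resolution parameter $m$ of the edge-parallel triangulation $\widehat{T}(m)$ of $W_k$ large enough that the diameter $\delta := 1/2^m$ satisfies $L \delta < \eta(\varepsilon, k)/\text{poly}(k)$ — concretely, small enough that any two points of a common simplex have images under $\mathcal{C}$ (composed with the fixed $\theta_k$-equivariant homeomorphism $W_k \cong B^2$) within $\varepsilon/k$ of each other, up to the Lipschitz slack. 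This makes $m = \text{poly}(\log L, \log(1/\varepsilon), \log k)$, so the reduction is polynomial. Then define the labeling circuit $\mathcal{L}$: on input $\vec{a}$ encoding a vertex $\vec{x} \in \widehat{T}(m)$, evaluate $\mathcal{C}$ at the corresponding point of $B^2$, and set $\lambda(\vec{x})$ to be the index $i$ of the polygon vertex $\vec{u}_i$ closest to the output, breaking ties toward smaller $i$ (and, in the degenerate case where the output is near $\vec{0}$, falling back to the direction of $\vec{x}$ itself, exactly as in \cref{lem:polygonTuckerImpliesPolygonBorsukUlam}, with an arbitrary choice at the center). This tie-breaking is $\mathbb{Z}_k$-equivariant, and it is computable by a polynomial-size arithmetic-then-Boolean circuit.

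Next I would translate solutions of \polygonTucker{k} back to solutions of \polygonBorsukUlam{k}. A type-1 solution of \polygonTucker{k} (a boundary pair $\vec{x}, \theta_k(\vec{x})$ whose labels fail the equivariance relation) must, by the equivariance of the tie-breaking rule, stem from $\mathcal{C}$ itself violating approximate equivariance on the boundary; quantitatively, if $\mathcal{C}$ respected equivariance within $\eta(\varepsilon,k)$ at both points then the labels would be forced to satisfy $\lambda(\theta_k(\vec{x})) = \lambda(\vec{x}) + 1$, so we can output $\vec{x}$ (or $\theta_k(\vec{x})$) as a type-1 solution of \polygonBorsukUlam{k}. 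A type-2 solution (a trichromatic triangle) or a type-3 solution (an edge with non-consecutive distinct labels) is handled exactly by the geometric argument in \cref{lem:polygonTuckerImpliesPolygonBorsukUlam}: the label conditions force two of the images $\mathcal{C}(\vec{v}_a)$ to subtend an angle of at least $2\pi/k$ (or, for $k=3$, a pair among the three images subtending at least $2\pi/3$); if all the images had norm $\ge \varepsilon$ this would force their pairwise distance to exceed $2\sin(\pi/k)\,\varepsilon > \varepsilon/k$, contradicting the diameter bound on $T$ \emph{unless} $\mathcal{C}$ itself is not $L$-Lipschitz on that simplex — in which case we output the offending pair as a type-2 solution of \polygonBorsukUlam{k}. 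Otherwise some $\vec{v}_a$ has $\norm{\mathcal{C}(\vec{v}_a)} \le \varepsilon$ and we output it as a type-3 solution. In all cases the output is computed in polynomial time from the \polygonTucker{k}-solution and the original instance, so we have a valid many-one reduction, and the containment \polygonBorsukUlam{k} $\in \ppak{k}[\#1] \subseteq \ppak{k}$ follows from \cref{lem:kpolygonTuckerInPPAK}.

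The main obstacle is the precision bookkeeping: $\mathcal{C}$ is computed by an arithmetic circuit that may be discontinuous (because of the comparison gate), so ``closest polygon vertex'' and the induced label are only meaningful up to the resolution of the arithmetic, and I must choose $\delta = 1/2^m$ and the boundary-tolerance $\xi(\varepsilon,L,k)$ so that every inequality used in the case analysis survives with slack while keeping $m$ polynomial. Concretely, the chain of estimates $2\sin(\pi/k)\,\varepsilon > \varepsilon/k \ge \eta(\varepsilon,k)\cdot(\text{const})$ must be robust against (i) the additive error in $\tilde\theta_k$, (ii) the additive error in evaluating $\mathcal{C}$, and (iii) the possibility that $\mathcal{C}$ fails Lipschitz-continuity (which must then be \emph{detected} and reported rather than silently assumed away). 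This is exactly the kind of careful constant-chasing already encapsulated in the definitions of $\eta(\varepsilon,k) = \varepsilon/8k^4$ and $\xi(\varepsilon,L,k) = \varepsilon/2^7 L k^4$ in the problem statement, so the work is routine but must be done consistently; everything else follows the template of \cref{lem:polygonTuckerImpliesPolygonBorsukUlam} verbatim.
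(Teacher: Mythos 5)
You are following the same high-level template as the paper (mirror \cref{lem:polygonTuckerImpliesPolygonBorsukUlam} with finite precision, pick the triangulation fine enough, map each \polygonTucker{k}-solution to a \polygonBorsukUlam{k}-solution), but your treatment of type-1 solutions (boundary violations) has a genuine gap, and it is not a mere matter of ``constant chasing.''

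You assert that if $\mathcal{C}$ respects equivariance within $\eta(\varepsilon,k)$ at both $\bx$ and $\theta_k(\bx)$, then the labels produced by your closest-vertex rule are forced to satisfy $\lambda(\theta_k(\bx)) = \lambda(\bx) + 1 \pmod k$. That implication is false for your labeling. The label at a vertex depends not on $\mathcal{C}$ evaluated at the exact lattice point, but on $\mathcal{C}$ evaluated at an approximation $\Tilde{\by}$ of the point under the (irrational) homeomorphism, and on approximations $v_i(\Tilde{\by})$ of the inner products $\langle \mathcal{C}(\Tilde{\by}), \bu_i \rangle$ (the $\bu_i$ are irrational for general $k$, so these errors are unavoidable). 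If at $\bx$ the top two inner products are within $\eta_1 + \eta_2$ of one another, then the independent approximation errors at $\bx$ and at $\theta_k(\bx)$ can break the near-tie in \emph{inconsistent} ways, producing a spurious type-1 \polygonTucker{k}-solution from a circuit $\mathcal{C}$ that is \emph{exactly} $\theta_k$-equivariant, $L$-Lipschitz, and has no other solution nearby. You cannot output that $\bx$ as a type-1 \polygonBorsukUlam{k}-solution (the actual violation at $\bx$ is at most $L\xi + \xi \ll \eta(\varepsilon,k)$), you have no Lipschitz violation, and you have no near-zero; the reduction would produce an invalid answer. Crucially, this cannot be fixed by tightening $\delta$, $\eta_1$, $\eta_2$, since the adversary can make the true gap between the top two inner products smaller than whatever error bound you choose. (A secondary issue: ``break ties toward smaller $i$'' is itself not $\mathbb{Z}_k$-equivariant at the wrap-around tie $\{k,1\}$; the equivariant rule is ``between $i$ and $i+1 \pmod k$, pick $i$.'')

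The paper's proof resolves exactly this with an extra device in the labeling circuit: for a boundary vertex $\bx = \theta_k^\ell(\bx')$ outside the fundamental arc $\chull(\{\bu_1,\bu_2\}) \setminus \{\bu_2\}$, the circuit first computes the ``intended'' label $j = \mathcal{L}(\bx') + \ell$ dictated by the boundary condition, and outputs $j$ whenever $\max_i v_i(\Tilde{\by}) - v_j(\Tilde{\by}) \leq \eta_3$ (a threshold strictly larger than the accumulated $2\eta_2$ error), reverting to the $\argmax$ rule only when this gap is substantial. This guarantees two things simultaneously: bogus boundary violations cannot occur, and any \emph{genuine} type-1 \polygonTucker{k}-solution certifies a gap $> \eta_3 - 2\eta_2$, which can then be propagated around the orbit $\bx_0, \theta_k(\bx_0), \dots, \theta_k^{\ell}(\bx_0)$ to locate a point where $\mathcal{C}$ actually violates approximate equivariance by at least $\eta(\varepsilon,k)$. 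Your write-up does not describe this mechanism and would need it to be correct.
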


\begin{proof}
  This result is obtained by following the idea of the proof of
\cref{lem:polygonTuckerImpliesPolygonBorsukUlam}, and constructing a labeling
$\mathcal{L}$ that uses the circuit $\mathcal{C}$ as a sub-routine to compute
the labels of the corresponding \polygonTucker{k} instance. The details are a
bit tricky because we have to account for small errors in various computations.

In more detail, the regular procedure for picking a label at some point is to first compute an approximate value of its coordinates and then based on this to compute an approximate value of the function. However, this might introduce bogus boundary condition violations, even if the function perfectly satisfies the boundary conditions. We can resolve this as follows. For any vertex $\bx$ that lies on the boundary, but not between $\bu_1$ and $\bu_2$, we first check how far away the label obtained through the regular procedure is from the label that satisfies the boundary condition. If it happens that the regular procedure is close to outputting the label that does not violate the boundary conditions then we enforce the output of this label and ignore the output of the regular procedure. Otherwise, we output the label of the regular procedure. Then, we can show that from any solution of $k$-Polygon Tucker we can either extract an approximate zero of $\mathcal{C}$, or a violation of the boundary conditions of $\mathcal{C}$ (in their approximate version), or a violation of Lipschitz continuity.

  We will use the triangulation $\widehat{T}(m)$ where $m$ is picked
sufficiently small so that the diameter of the triangulation when mapped to
$B^2$ by the homeomorphism is at most $\delta=\frac{\varepsilon}{2^7 L k^4}$. The Boolean circuit
computing $\mathcal{L}$ performs the following operations. Recall that the input
to the circuit is the bits representing the index of the vertex 
$\bx \in W_k$ of the triangulation, as per \cref{fct:parallelTriangulationIndex}.
\begin{enumerate}
    \item Let $\by \in B^2$ denote the image of $\bx$ under the homeomorphism
    used in the proof of \cref{lem:polygonTuckerImpliesPolygonBorsukUlam}.
    Compute an approximation $\Tilde{\by}$ of $\by$ with error at most 
    $\eta_1 = \delta = \frac{\varepsilon}{2^7 L k^4}$, i.e., $\|\by - \Tilde{\by}\| \leq \eta_1$. Note this
    is done in two steps, first compute an approximate value for the coordinates
    of $\bx$, which is given through its index, and compute an approximate value
    of the homeomorphism given the approximate value of the coordinates of 
    $\bx$. These steps can be done efficiently by using standard techniques 
    \citep{Brent1976} so that the total approximation error is $\eta_1$.
    \item Compute $\mathcal{C}(\Tilde{\by})$ exactly.
    \item For each $i \in [k]$, compute an estimate $v_i(\Tilde{\by})$ of the 
    inner product $\langle \mathcal{C}(\Tilde{\by}), \bu_i \rangle$ with error
    at most $\eta_2 = \varepsilon/64 k^4$, where the approximation error is introduced
    in computing the coordinates of $\bu_i$.
    \item If $\bx \notin \partial \widehat{T}(m)$ or if 
    $\bx \in \textup{conv}(\{\bu_1, \bu_2\}) \setminus \{\bu_2\}$, then output 
    the label $\argmax_i v_i(\Tilde{\by})$ (break ties arbitrarily but
    deterministically, e.g., lexicographically).
    \item Otherwise, there exists $\ell \in [k-1]$ and a vertex 
    $\bx' \in \textup{conv}(\{\bu_1, \bu_2\}) \setminus \{\bu_2\}$, such that 
    $\bx = \theta_k^{\ell} (\bx')$. In that case, compute 
    $r = \max_i v_i(\Tilde{\by})$ and $j = \mathcal{L}(\bx') + \ell \pmod k$. Let $\eta_3 = \frac{\varepsilon}{16 k^4} (3 + k)$.
    If $r - v_j(\Tilde{\by}) \le \eta_3$ 
    then output 
    $\mathcal{L}(\bx') + \ell \pmod k$, otherwise output
    $\argmax_i v_i(\Tilde{\by})$ (again break ties arbitrarily but 
    deterministically, e.g., lexicographically).
\end{enumerate}

We will use the following useful invariant: if vertex $\bx$ has label $j$, then
$$\langle \mathcal{C}(\tilde{\by}), \bu_j \rangle \geq \max_i \langle \mathcal{C}(\tilde{\by}), \bu_i \rangle - 2\eta_2 - \eta_3.$$
This follows from our labeling procedure.

We begin by considering standard solutions of \polygonTucker{k} and distinguish between
the cases $k = 3$ and $k > 3$. The other type of solution, namely boundary violations are treated for any $k \geq 3$ at the end of the proof.

\paragraph{$\boldsymbol{k = 3}$.} In this case a standard solution to $\mathcal{L}$ 
consists of a simplex with vertices $\bx_1, \bx_2, \bx_3$ that have labels
$1, 2, 3$ respectively. Let $\Tilde{\by}_1, \Tilde{\by}_2, \Tilde{\by}_3$ denote
the corresponding points computed in the aforementioned step 1. of the 
computation of $\mathcal{L}$ with input $\bx_1$, $\bx_2$, $\bx_3$ respectively. Assume that $\norm{\mathcal{C}(\Tilde{\by}_i)} \ge \varepsilon$ for 
$i = 1, 2, 3$, since we have found a solution otherwise.
Using elementary linear algebra, it is easy to show that for any 
$\Tilde{\by} \in B^2$ with $\norm{\mathcal{C}(\Tilde{\by})} \ge \varepsilon$, it
holds that 
$\max_i \langle \mathcal{C}(\Tilde{\by}), \bu_i \rangle \ge \varepsilon/2$ and
$\min_i \langle \mathcal{C}(\Tilde{\by}), \bu_i \rangle \le -\varepsilon/2$.

Let $j = \argmin_i \langle \mathcal{C}(\Tilde{\by}_1), \bu_i \rangle$. Then it follows that $\langle \mathcal{C}(\Tilde{\by}_1), \bu_j \rangle \le -\varepsilon/2$, which
implies that 
$\langle \mathcal{C}(\Tilde{\by}_j), \bu_j \rangle \le -\varepsilon/4$, since
\[ |\langle \mathcal{C}(\Tilde{\by}_1), \bu_j \rangle - \langle \mathcal{C}(\Tilde{\by}_j), \bu_j\rangle| \leq \|\mathcal{C}(\Tilde{\by}_1) - \mathcal{C}(\Tilde{\by}_j)\| \leq L \|\Tilde{\by}_1 - \Tilde{\by}_j\| \leq L(\delta+2\eta_1) \leq \varepsilon/4 \]
unless $\Tilde{\by}_1$ and $\Tilde{\by}_j$ yield a violation of $L$-Lipschitz continuity of $\mathcal{C}$, in which case we have found a solution. Now, recall that $\bx_j$ has label $j$. By the invariant, it must hold that $\langle \mathcal{C}(\tilde{\by}_j), \bu_j \rangle \geq \max_i \langle \mathcal{C}(\tilde{\by}_j), \bu_i \rangle - 2\eta_2 - \eta_3$. But since $\max_i \langle \mathcal{C}(\tilde{\by}_j), \bu_i \rangle \geq \varepsilon/2$, this means that $\langle \mathcal{C}(\tilde{\by}_j), \bu_j \rangle \geq \varepsilon/2 - 2\eta_2 - \eta_3$, which contradicts the fact that $\langle \mathcal{C}(\Tilde{\by}_j), \bu_j \rangle \le -\varepsilon/4$.

\paragraph{$\boldsymbol{k > 3}$.} In this case a solution to $\mathcal{L}$ consists of an edge $\bx_1,\bx_2$ of the triangulation such that the two vertices have labels $j_1,j_2$ that differ by more than one. Assume that $\|\mathcal{C}(\Tilde{\by}_1)\| \geq \varepsilon$ and $\|\mathcal{C}(\Tilde{\by}_2)\| \geq \varepsilon$. Note that $\tilde{\by}_1$\ lies in some cone $\textup{conv}(\{\boldsymbol{0},\bu_t,\bu_{t+1}\})$. By elementary linear algebra, one can show that
\begin{equation}\label{eq:linear-algebra}
\max_{i \in \{t,t+1\}} \langle \mathcal{C}(\Tilde{\by}_1), \bu_i \rangle - \max_{i \notin \{t,t+1\}} \langle \mathcal{C}(\Tilde{\by}_1), \bu_i \rangle \geq \frac{\varepsilon}{2}(1-\cos(2\pi/k)) \ge \frac{\varepsilon}{k^3}
\end{equation}
Since $\frac{\varepsilon}{k^3} \geq \eta_3+2\eta_2$, it follows that the label $j_1$ of $x_1$ must be one of $t$ or $t+1$. As a result, $j_2 \notin \{t,t+1\}$, since it cannot be adjacent to $j_1$. Furthermore, since $j_1$ was picked as the label of $x_1$, by the invariant it holds that $\langle \mathcal{C}(\tilde{\by}_1), \bu_{j_1} \rangle \geq \max_i \langle \mathcal{C}(\tilde{\by}_1), \bu_i \rangle - 2\eta_2 - \eta_3$. Together with equation \eqref{eq:linear-algebra}, it follows that
$$\langle \mathcal{C}(\tilde{\by}_1), \bu_{j_1} \rangle - \langle \mathcal{C}(\tilde{\by}_1), \bu_{j_2} \rangle \geq \frac{\varepsilon}{k^3} -2\eta_2-\eta_3.$$
On the other hand, since $x_2$ has label $j_2$, this means that
$$\langle \mathcal{C}(\tilde{\by}_2), \bu_{j_1} \rangle - \langle \mathcal{C}(\tilde{\by}_2), \bu_{j_2} \rangle \leq 2\eta_2 + \eta_3.$$
As a result, it must be that $\|\mathcal{C}(\tilde{\by}_1) - \mathcal{C}(\tilde{\by}_2)\| \geq \frac{\varepsilon}{2k^3} - 2\eta_2 - \eta_3$. Note that this is a violation of Lipschitz continuity, since normally we should have
$$\|\mathcal{C}(\tilde{\by}_1) - \mathcal{C}(\tilde{\by}_2)\| \leq L \|\tilde{\by}_1 - \tilde{\by}_2\| \leq L (\delta + 2 \eta_1)$$
and this quantity is smaller than $\frac{\varepsilon}{2k^3} - 2\eta_2 - \eta_3$.

\paragraph{Boundary violations.}
  Let $\bx$ be a violation of the boundary conditions, i.e., the label at vertex
$\bx$ is $j_1$, and the label at $\theta_k (\bx)$ is $j_2 \neq j_1 + 1 \pmod k$.
This means that at least one of the two vertices did not obtain its ``intended''
label. Without loss of generality, assume that $\bx$ has not obtained
its intended label. Let $\ell \in [k - 1]$ and 
$\bx_0 \in \textup{conv}(\{\bu_1,\bu_2\}) \setminus \{\bu_2\}$ be such that
$\theta_k^\ell(\bx_0) = \bx$. Since $\bx$ did not obtain its ``intended'' label,
it follows that $\max_i v_i(\Tilde{\by}) - v_j(\Tilde{\by}) > \eta_3$, where 
$\Tilde{\by}$ is the corresponding point in $B^2$ computed in step 1. This
implies that
\[ \max_i \langle \mathcal{C}(\Tilde{\by}), \bu_i \rangle - \langle \mathcal{C}(\Tilde{\by}), \bu_j \rangle > \eta_3 - 2\eta_2. \]
On the other hand, since 
$\bx_0 \in \textup{conv}(\{\bu_1,\bu_2\}) \setminus \{\bu_2\}$ has label 
$j - \ell \pmod k$, it holds that
\[ \max_i \langle \mathcal{C}(\Tilde{\by}_0), \bu_i \rangle - \langle \mathcal{C}(\Tilde{\by}_0), \bu_{j-\ell \pmod k} \rangle \leq 2 \eta_2. \]
\noindent where $\Tilde{\by}_0$ is the corresponding point of $\bx_0$ in $B^2$ 
computed in step 1. By noting that
\[ \max_i \langle \mathcal{C}(\Tilde{\by}_0), \bu_i \rangle - \langle \mathcal{C}(\Tilde{\by}_0), \bu_{j-\ell \pmod k} \rangle = \max_i \langle \theta_k^\ell(\mathcal{C}(\Tilde{\by}_0)), \bu_i \rangle - \langle \theta_k^\ell(\mathcal{C}(\Tilde{\by}_0)), \bu_j \rangle \]
we thus obtain that there exists $i \in [k]$ such that
\[|\langle \mathcal{C}(\Tilde{\by}), \bu_i \rangle - \langle \theta_k^\ell(\mathcal{C}(\Tilde{\by}_0)), \bu_i \rangle| > (\eta_3-4\eta_2)/2.\]
This implies that $\|\mathcal{C}(\Tilde{\by}) - \theta_k^\ell(\mathcal{C}(\Tilde{\by}_0))\| \geq (\eta_3-4\eta_2)/2$. Since
\[ \mathcal{C}(\Tilde{\by}) - \theta_k^\ell(\mathcal{C}(\Tilde{\by}_0)) = \sum_{i=0}^{\ell-1} \theta_k^i\mathcal{C}(\theta_k^{-i}(\Tilde{\by})) - \theta_k^{i+1}(\mathcal{C}(\theta_k^{\ell-i-1}(\Tilde{\by}_0))) \]
it follows that there exists $i \in \{0,1,\dots, \ell-1\}$ such that 
\[ \|\theta_k^i\mathcal{C}(\theta_k^{-i}(\Tilde{\by})) - \theta_k^{i+1}(\mathcal{C}(\theta_k^{\ell-i-1}(\Tilde{\by}_0)))\| \ge (\eta_3-4\eta_2)/(2k). \]

\noindent Now let $\bx^*= \theta_k^{\ell-i-1}(\bx_0)$. Then $\Tilde{\by}^*$
yields a violation of the boundary conditions for $\mathcal{C}$ by noting that
\begin{equation*}
\begin{split}
  \|\mathcal{C}(\tilde{\theta}_k(\Tilde{\by}^*)) - \tilde{\theta}_k(\mathcal{C}(\Tilde{\by}^*))\| &\geq (\eta_3-4\eta_2)/(2k) - \|\mathcal{C}(\tilde{\theta_k}(\Tilde{\by}^*)) - \mathcal{C}(\theta_k(\Tilde{\by}^*))\| - \|\tilde{\theta_k}(\mathcal{C}(\Tilde{\by}^*)) - \theta_k(\mathcal{C}(\Tilde{\by}^*))\|\\
  &\geq (\eta_3-4\eta_2)/(2k) - L(2\eta_1+\xi(\varepsilon,L,k)) - \xi(\varepsilon,L,k)
\end{split}
\end{equation*}
unless we find a violation of Lipschitz continuity. Indeed, note that
$(\eta_3-4\eta_2)/(2k) - L(2\eta_1+\xi(\varepsilon,L,k)) - \xi(\varepsilon,L,k) \geq \eta(\varepsilon,k)$.
\end{proof}

\setcounter{subsection}{1}
\subsection{$k$-Polygon Borsuk-Ulam and $k$-Polygon Tucker are $\ppak{k}[\#1]$-hard}

  In this section we show the $\ppak{k[\#1]}$-hardness of both \polygonTucker{k} and
\polygonBorsukUlam{k}. We start by observing that 
using the ideas from \cref{sec:kPolygonTuckerkPolygonBorsukUlam:equivalence} we
can prove that \polygonTucker{k} is reducible to \polygonBorsukUlam{k}. Then we show 
that \polygonTucker{k} is $\ppak{k[\#1]}$-hard. Finally the results of this section
together with the results of the previous 
\cref{sec:kPolygonTuckerkPolygonBorsukUlam:containment} imply that both 
\polygonTucker{k} and \polygonBorsukUlam{k} are $\ppak{k[\#1]}$-complete.

\begin{lemma} \label{lem:kPolygonTuckerTokPolygonBorsukUlam}
    The problem \polygonTucker{k} reduces to the problem \polygonBorsukUlam{k}.
\end{lemma}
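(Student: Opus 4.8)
The plan is to mirror the (mathematical) argument of \cref{lem:polygonBorsukUlamImpliesPolygonTucker} at the level of the computational problems, carrying out the error analysis in the style of \cref{lem:kPolygonBorsukUlamInPPAk}. Given an instance of \polygonTucker{k} --- a labeling circuit $\mathcal{L}$ on the edge parallel triangulation $\widehat{T}(m)$ of $W_k$ --- I would produce an instance $(\mathcal{C}, \varepsilon, L)$ of \polygonBorsukUlam{k} as follows. Fix the $\theta_k$-equivariant, origin-fixing homeomorphism $\Psi : B^2 \to W_k$ used in \cref{lem:polygonBorsukUlamImpliesPolygonTucker}, which (exactly as in \cref{lem:kPolygonBorsukUlamInPPAk}) can be evaluated up to any prescribed precision by the permitted arithmetic gates. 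The circuit $\mathcal{C}$, on input $\mathbf{x} \in B^2$, (i) computes a sufficiently accurate $\tilde{\mathbf{y}} \approx \Psi(\mathbf{x}) \in W_k$; (ii) performs point location in $\widehat{T}(m)$, determining the full-dimensional simplex $\sigma$ containing $\tilde{\mathbf{y}}$, its vertices $\mathbf{v}_1,\mathbf{v}_2,\mathbf{v}_3$ (equivalently their indices $\mathbf{a}_1,\mathbf{a}_2,\mathbf{a}_3$ in the sense of \cref{fct:parallelTriangulationIndex}) and the barycentric coordinates $\mu_1,\mu_2,\mu_3 \geq 0$, $\sum_j \mu_j = 1$, of $\tilde{\mathbf{y}}$; (iii) queries $\mathcal{L}$ to get $\lambda_j = \mathcal{L}(\mathbf{a}_j) \in [k]$; and (iv) outputs $\sum_{j=1}^3 \mu_j\, \mathbf{u}_{\lambda_j}$. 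All of (ii)--(iv) are implementable with $\times\zeta, +, -, <, \min, \max$: point location in an edge parallel triangulation is a bounded sequence of comparisons, barycentric coordinates are affine, a Boolean circuit such as $\mathcal{L}$ is simulated by $\min/\max/(1-\cdot)$ on $\{0,1\}$-valued reals, and the selection $\lambda_j \mapsto \mathbf{u}_{\lambda_j}$ is a multiplexer over the $k$ constant vectors $\mathbf{u}_1,\dots,\mathbf{u}_k$. Up to its internal approximation error, which I set far below $\varepsilon$, this $\mathcal{C}$ is the map $h\circ\Psi$ of \cref{lem:polygonBorsukUlamImpliesPolygonTucker}; it is continuous and piecewise linear (adjacent simplices of $\widehat{T}(m)$ agree on shared faces), so I take $L$ to be an explicit polynomial upper bound on its Lipschitz constant, essentially $(\text{Lipschitz constant of }\Psi)\cdot O(2^m)$, and I set $\varepsilon$ to a value such as $2^{-\Theta(m+k)}$, in particular smaller than $\tfrac12\cos(\pi/k)=\tfrac12\,\mathrm{dist}(\mathbf{0},[\mathbf{u}_i,\mathbf{u}_{i+1}])$, after checking it is consistent with the (linear-in-$\varepsilon$) error budgets $\eta(\varepsilon,k)$ and $\xi(\varepsilon,L,k)$.

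Next I would show that every solution of \polygonBorsukUlam{k} on $(\mathcal{C},\varepsilon,L)$ yields a solution of \polygonTucker{k} on $\mathcal{L}$, by cases. For an approximate zero $\mathbf{x}^\star$ with $\norm{\mathcal{C}(\mathbf{x}^\star)}\le\varepsilon$, let $\sigma$ be the simplex of $\widehat{T}(m)$ that $\mathcal{C}$'s internal point location used on input $\mathbf{x}^\star$: if all labels on $\sigma$ lay in one consecutive pair $\{i,i+1\}$, then $\mathcal{C}(\mathbf{x}^\star)$ would be a convex combination of $\mathbf{u}_i$ and $\mathbf{u}_{i+1}$, hence at distance $\geq\cos(\pi/k)$ from the origin (up to $\mathcal{C}$'s internal error), contradicting $\norm{\mathcal{C}(\mathbf{x}^\star)}\le\varepsilon$; so $\sigma$ has three distinct labels (a trichromatic-triangle solution of \polygonTucker) or an edge with two non-consecutive labels, which we output. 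For a boundary solution $\mathbf{x}\in S^1$ with $\norm{\mathcal{C}(\tilde{\theta}_k(\mathbf{x})) - \tilde{\theta}_k(\mathcal{C}(\mathbf{x}))} \geq \eta(\varepsilon,k)$, I use that on $\partial W_k$ the map $\mathcal{C}$ is the piecewise-linear extension of $\mathbf{v}\mapsto\mathbf{u}_{\lambda(\mathbf{v})}$ over boundary vertices, that $\mathbf{u}_{\ell+1}=\theta_k(\mathbf{u}_\ell)$, and that $\Psi$ is exactly $\theta_k$-equivariant: if $\lambda$ satisfied the boundary equivariance on the two vertices of the boundary edge containing $\Psi(\mathbf{x})$ and their $\theta_k$-images, then $\mathcal{C}(\theta_k(\mathbf{x}))$ would equal $\theta_k(\mathcal{C}(\mathbf{x}))$ exactly; since the genuine gap $\eta(\varepsilon,k)$ dominates the $\tilde{\theta}_k$-error $\xi(\varepsilon,L,k)$ and $\mathcal{C}$'s internal error, one of these boundary vertices $\mathbf{v}$ must satisfy $\mathcal{L}(\theta_k(\mathbf{v}))\neq\mathcal{L}(\mathbf{v})+1\pmod k$, a boundary-violation solution of \polygonTucker. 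Finally, a Lipschitz-violation solution does not occur, since $L$ was chosen as an honest Lipschitz bound for the continuous piecewise-linear function that $\mathcal{C}$ computes (or, conservatively, such a pair is absorbed by enlarging $L$ by a constant). In every case the offending simplex, edge, or boundary vertex pair is obtained by running $\mathcal{C}$'s point-location subroutine on the returned point, so the back-translation $g$ is polynomial-time.

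I expect the main obstacle to be the error bookkeeping, precisely as in \cref{lem:kPolygonBorsukUlamInPPAk}: choosing $m$, $\varepsilon$ and $L$ simultaneously so that (a) the internal approximation of $\Psi$ and of the inner products with the $\mathbf{u}_i$ is fine enough not to create spurious boundary violations or cause point location to land outside $W_k$; (b) $\varepsilon$ is small enough that an $\varepsilon$-zero forces a non-trivially labeled simplex; (c) $\varepsilon$ is nevertheless large compared with the budget terms $\eta(\varepsilon,k)$ and $\xi(\varepsilon,L,k)$ in which $\varepsilon$ itself appears; and (d) $L$ is a valid Lipschitz bound --- all while keeping $m$, the bit-length of $\varepsilon$, and the size of $\mathcal{C}$ polynomial in $|\mathcal{L}|$. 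A secondary, more mechanical obstacle is spelling out that point location in $\widehat{T}(m)$, the extraction of the vertex indices $\mathbf{a}_j$, and the multiplexer over $\mathbf{u}_1,\dots,\mathbf{u}_k$ can all be realized by polynomial-size arithmetic circuits over $\{\times\zeta, +, -, <, \min, \max\}$; this is routine given \cref{fct:parallelTriangulationIndex} and \cref{fct:parallelTriangulationSimplexCheck}, but should be stated explicitly.
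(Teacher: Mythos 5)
Your proposal is correct and takes essentially the same route as the paper: construct an arithmetic circuit approximately computing the piecewise-linear interpolant $g = h \circ \Psi$ from \cref{lem:polygonBorsukUlamImpliesPolygonTucker}, choose $\varepsilon$ and $L$ so that spurious Lipschitz and boundary-violation solutions cannot occur, and back-translate. The paper is terser --- it delegates the construction of a Lipschitz circuit with additive error $\gamma$ to a cited theorem (Theorem~E.2 of \citep{FearnleyGHS20}) rather than spelling out the point-location and multiplexer implementation, and it fixes the \emph{constant} $\varepsilon = 1/4$ together with the observation $\norm{g(\bx)} \geq 1/2$ off solution simplices, so the error budget becomes trivial ($\gamma < 1/2 - \varepsilon$ plus $2\gamma + (\tilde{L}+1)\xi < \eta$); your exponentially small $\varepsilon$ also works but is more than is needed.
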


\begin{proof}
  This Lemma follows from the proof of \cref{lem:polygonBorsukUlamImpliesPolygonTucker} combined with standard 
  arguments on how to construct an arithmetic circuit approximately computing the piecewise linear function $g: B^2 \to \bbR^2$ described in \cref{lem:polygonBorsukUlamImpliesPolygonTucker}. Recall that the function $g$ is constructed by interpolating within simplices of the triangulation using the labels given by $\mathcal{L}$. Given a precision parameter $\gamma$ and the Boolean circuit $\mathcal{L}$, we can construct an arithmetic circuit $\mathcal{C}$ that computes the function $g$ with error at most $\gamma$, using standard techniques \citep{EY10-Nash-FIXP, DaskalakisP11-CLS, GoldbergH19-HairyBall}. To be more precise, we can invoke Theorem E.2 from \citep{FearnleyGHS20} by observing that the function $g$ is polynomially approximately computable via standard numerical analysis techniques \cite{Brent1976}. Furthermore, the function computed by $\mathcal{C}$ will be Lipschitz-continuous with a Lipschitz constant $\tilde{L}$ that has bit-size polynomial in $m$ and $\log (1/\gamma)$ (where $m$ is the parameter of the triangulation used by $\mathcal{L}$).
  
  For any $\bx \in B^2$ that does not lie in a simplex that is a solution of $\mathcal{L}$, it must hold that $\|g(\bx)\| \geq 1/2$. This can easily be shown by taking into account that at most two labels (that are also adjacent) are used for the interpolation in that case. As a result, as long as we ensure that $1/2-\gamma > \varepsilon$, any point with $\|\mathcal{C}(\bx)\| \leq \varepsilon$ will yield a solution of the \polygonTucker{k} instance.
  
  It remains to show that no bogus boundary condition violations occur. Note that unless the simplex containing $\bx \in \partial B^2$ yields a boundary condition violation for $\mathcal{L}$, it must hold that $g(\theta_k(\bx)) = \theta_k(g(\bx))$ (as shown in \cref{lem:polygonBorsukUlamImpliesPolygonTucker}). Thus, it follows that
  \begin{equation*}
  \begin{split}
      \|\mathcal{C}(\tilde{\theta}_k(\bx)) - \tilde{\theta}_k (\mathcal{C}(\bx))\| &\leq \|\mathcal{C}(\theta_k(\bx)) - \theta_k(\mathcal{C}(\bx))\| + \|\mathcal{C}(\tilde{\theta}_k(\bx)) - \mathcal{C}(\theta_k(\bx))\| + \|\tilde{\theta}_k (\mathcal{C}(\bx)) - \theta_k(\mathcal{C}(\bx))\|\\
      &\leq 2 \gamma + \tilde{L} \cdot \xi(\varepsilon,L,k) + \xi(\varepsilon,L,k).
  \end{split}
  \end{equation*}
  Note that we want this quantity to be strictly less than $\eta(\varepsilon,k)$. Thus, we pick $\varepsilon=1/4$ and then set $\gamma$ and $L$ so that
\begin{itemize}
      \item $L \geq \tilde{L}$,
      \item $\gamma < 1/2 - \varepsilon$,
      \item $2 \gamma + \tilde{L} \cdot \xi(\varepsilon,L,k) + \xi(\varepsilon,L,k) < \eta(\varepsilon,k)$,
\end{itemize}
which can easily be achieved by picking $\gamma$ sufficiently small and $L$ sufficiently large. Note that since $\mathcal{C}$ is $\tilde{L}$-Lipschitz-continuous, it will also be $L$-Lipschitz-continuous.
\end{proof}

Now we present our main proposition in this section.
  
\begin{proposition}\label{prop:k-polygon-hardness}
For all $k \geq 3$, \polygonTucker{k}~is $\ppak{k[\#1]}$-\textup{hard}. 
\end{proposition}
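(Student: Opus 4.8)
The plan is to reduce from \imbal{k}{1}, which is $\ppak{k[\#1]}$-complete~\citep{Hollender19}. Starting from an instance given by circuits $S,P$ defining a directed graph $G'$ on $\{0,1\}^n$ with $|S(0^n)|-|P(0^n)|=1$ and with in-/out-degrees at most $k$, we build in polynomial time a labeling circuit $\mathcal{L}$ of an edge parallel triangulation $\widehat{T}(m)$ of $W_k$, with $m=\mathrm{poly}(n)$, such that $\mathcal{L}$ satisfies the boundary equivariance \emph{exactly} — so no solution of type (1) can ever occur — and every solution of the resulting \polygonTucker{k} instance maps back to a solution of the \imbal{k}{1} instance. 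Geometrically, on $\partial\widehat{T}(m)$ we use the canonical cyclic labeling (the vertex at angular position $\phi$ receives label $\lceil k\phi/2\pi\rceil$), we confine the entire combinatorial ``gadget complex'' that encodes $G'$ to a small disk $D$ around $\vec{0}$, and we fill the annulus $W_k\setminus D$ with a \emph{safe} labeling that in each cone $\chull(\{\vec{0},\bu_i,\bu_{i+1}\})$ uses only the two consecutive labels $i$ and $i+1$; such a labeling of an annulus whose two boundary circles each ``wind around $\mathbb{Z}_k$ once'' exists with no trichromatic triangle and no non-consecutive edge.

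Inside $D$ we realize $G'$ via the standard graph-embedding technique used for the hardness of \textsc{$2$D-Sperner} and \textsc{$2$D-Tucker}~\citep{CD09-2DBrouwer,ABB15-2DTucker}: every $v\in\{0,1\}^n$ gets a small \emph{vertex gadget} $R_v$ placed at a position determined by $v$; every directed edge $u\to v$ (i.e. $v\in S(u)$ and $u\in P(v)$) becomes a \emph{wire}, a thin strip joining $R_u$ to $R_v$ whose two long sides carry two consecutive labels (so it is solution-free) and whose two ends contribute one unit of ``winding'' each to the interface of the gadget they attach to — a $+1$ at the head, a $-1$ at the tail; wires are routed through a fixed recursive layout with \emph{crossing gadgets} at intersections, so that from the index of any vertex of $\widehat{T}(m)$ one can decide \emph{locally}, using $S$ and $P$ as subroutines, which wire, crossing, or gadget it belongs to, and output its label — keeping $\mathcal{L}$ of polynomial size. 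Finally, $\vec{0}$ — which by the analysis of \cref{sec:kPolygonTucker:proof} is forced to behave like a source of winding $1$ — is joined to $R_{0^n}$ by one extra incoming wire. The design goal for the vertex gadget is that the net winding of the interface of $R_v$ equals (number of incoming wires) $-$ (number of outgoing wires) $=|P(v)|-|S(v)|\pmod k$. Then $R_{0^n}$, together with the extra wire, has interface winding $1+|P(0^n)|-|S(0^n)|=0$ and is \emph{transparent} (admits a solution-free labeling), every other $R_v$ is transparent iff $|S(v)|-|P(v)|\equiv 0\pmod k$, and whenever the interface winding is nonzero we instead place on $R_v$ a canonical ``concentrated-winding'' labeling which, exactly as in the argument that a degree-$\neq 0$ equivariant map has a zero (\cref{lem:polygonTuckerImpliesPolygonBorsukUlam}), necessarily contains a trichromatic triangle or a non-consecutive edge.

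Correctness then follows: by construction no solution lies in a wire, a crossing, the filler annulus, or a transparent gadget, so any solution of the \polygonTucker{k} instance lies in a non-transparent $R_v$ with $v\neq 0^n$, and we output $v$, which satisfies $|S(v)|-|P(v)|\neq 0\pmod k$. If $S,P$ are inconsistent, a ``dangling'' wire (one started at $R_u$ because $v\in S(u)$ but with no matching end because $u\notin P(v)$) yields a solution that we instead map to the corresponding pair $x,y$ of the \imbal{k}{1} output. Nonemptiness is consistent, since the unit of winding forced at $\vec{0}$ must be absorbed by some non-transparent gadget, in agreement with the totality of \imbal{k}{1}. Combined with \cref{lem:kpolygonTuckerInPPAK}, \cref{lem:kPolygonBorsukUlamInPPAk} and \cref{lem:kPolygonTuckerTokPolygonBorsukUlam}, this also gives that both \polygonTucker{k} and \polygonBorsukUlam{k} are $\ppak{k[\#1]}$-complete.

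The main obstacle is the explicit gadget construction on the \emph{fixed} edge parallel triangulation: one must exhibit, for each admissible configuration of interface ports, a labeling of the gadget interior using only consecutive labels on every triangle when the net winding is $0$, prove that no such labeling exists when it is nonzero, and arrange all of this so that each gadget ``fits'' into $\widehat{T}(m)$ at polynomially bounded resolution and is decidable from the vertex index by a polynomial-size circuit — together with the recursive wire-routing scheme and crossing gadgets, which is the technically heaviest ingredient (as it already is for $k=2$). By contrast, the $k$-fold rotational symmetry causes no real difficulty here, since the whole gadget complex is kept in the interior while the boundary carries its canonical $\theta_k$-equivariant labeling.
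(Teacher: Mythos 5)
Your plan differs structurally from the paper's: you place the gadget complex in the interior of $W_k$, fence it off with a solution-free annulus, and hope that ``winding numbers'' at gadget interfaces realize the imbalance mod $k$; the paper places all node gadgets on the boundary arcs $\chull(\{\bu_i,\bu_{i+1}\})$, in $k$ rotated copies $K_1(x),\dots,K_k(x)$ per node $x$, and lets $\theta_k$-equivariance do the mod-$k$ bookkeeping. The interior-gadget plan has a genuine gap, and its two halves are in tension. If a ``wire'' is a thin strip whose two long sides carry only two consecutive labels, then its boundary winds zero times around $\mathbb{Z}_k$; such a wire does not transport an integer unit of winding at either end, so the local-degree argument you invoke says nothing about a gadget's interface. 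If instead you use a cable that genuinely carries winding $\pm 1$ at each end (like the paper's bundles of $k-1$ parallel wires with labels $\mathbb{Z}_k\setminus\{i\}$ running through an environment labeled $i$, whose cross-section sweeps all of $\mathbb{Z}_k$), then a gadget $R_v$ with $|P(v)|$ incoming and $|S(v)|$ outgoing cables has interface winding equal to $|P(v)|-|S(v)|$ \emph{as an integer} in $[-k,k]$, not as a residue mod $k$. A node with $|S(v)|=k$ and $|P(v)|=0$ has imbalance $-k\equiv 0\pmod k$ and is \emph{not} a solution of \imbal{k}{1}, yet its gadget's boundary would wind $\pm k\neq 0$ times, and then the very degree argument you rely on forces a trichromatic triangle or non-consecutive edge inside $R_v$ — a false solution of \polygonTucker{k}. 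No gadget interior can absorb nonzero integer winding without a solution; the discrepancy between ``integer winding'' and ``imbalance mod $k$'' is fatal here.

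The paper's construction evades the integer/residue mismatch by never putting more than one cable endpoint on a single gadget. A node $x$ of \bipal{k}{1} with degree $\ell\in\{0,\dots,k\}$ gets $k$ rotated boundary intervals; $\ell$ of them emit cables to actual destinations and the remaining $k-\ell$ emit stub cables that immediately stop, leaving an exposed end — a forced solution. Equivariance holds for free because all $k$ copies are identical up to $\theta_k$, each single interval has winding at most $1$ in magnitude, and exposed-end solutions appear at $x$'s intervals iff $k-\ell\in\{1,\dots,k-1\}$, i.e., iff $\deg(x)\not\equiv 0\pmod k$. To repair your interior approach you would need to split each $R_v$ into $k$ sub-gadgets — one per potential cable — so that each has winding in $\{-1,0,1\}$ and solutions appear iff the imbalance is nonzero mod $k$; but that is precisely the role played by the $k$ boundary intervals, so you would end up reconstructing the paper's construction in a less convenient place.
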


Recall that $\ppak{k[\#1]} = \cap_{p \in PF(k)} \ppak{p}$, where $PF(k)$ denotes the set of prime factors of $k$.

\begin{proof}

To prove this hardness result we reduce from \textsc{Bipartite-mod-$k[\#1]$} (see \cref{sec:def-ppa-k} for the formal definition). Recall that in this problem we are given a bipartite graph on the set of nodes $A \cup B$, where $A = \{0\} \times \{0,1\}^n$ and $B = \{1\} \times \{0,1\}^n$, such that the node $00^n \in A$ has degree $1$. The goal is to find any other node that has degree $\neq 0 \mod k$. All nodes have degree in $\{0,1,\dots,k\}$ and we can assume (see e.g., \citep[Section 4.2]{Hollender19}) that the circuit $C$ which implicitly represents the bipartite graph is consistent, i.e., for all $x,y$ we have $y \in C(x)$ iff $x \in C(y)$.
	
Consider the regular $k$-polygon $W_k$ in $\mathbb{R}^2$ with the edge parallel triangulation (\cref{def:edge-par-triang}). For any $i \in \mathbb{Z}_k$ let $R(i,i+1)$ denote the triangle with endpoints $\{\boldsymbol{0},\bu_i,\bu_{i+1}\}$. In this proof we refer to \emph{nodes} of the \textsc{Bipartite-mod-$k[\#1]$} instance and to \emph{vertices} of the triangulation of $W_k$.

To every node $x \in (A \setminus \{00^n\}) \cup B$ we associate a distinct interval on the outer boundary of $R(1,2)$, i.e., on the edge $\chull(\{\bu_1,\bu_2\})$. This interval, which we denote by $K_1(x)$, is picked such that it covers $4k$ vertices of the triangulation. It is easy to see that we can pick the triangulation to be fine enough so that there are indeed enough vertices on $\chull(\{\bu_1,\bu_2\})$ to associate a distinct interval of $4k$ vertices to each node $x \in (A \setminus \{00^n\}) \cup B$. Since the triangulation is symmetric on the boundary with respect to $\theta_k$, we can immediately also extend this association to the outer boundary of $R(i,i+1)$ for all other $i$. In other words, for any $i \in \mathbb{Z}_k$ and any $x \in (A \setminus \{00^n\}) \cup B$, we let $K_i(x) = (\theta_k)^{i-1} (K_1(x))$, which is an interval of $4k$ vertices on $\chull(\{\bu_i,\bu_{i+1}\})$. Thus, for any $x$ there are $k$ distinct intervals on the boundary associated to it, one in each of $\chull(\{\bu_i,\bu_{i+1}\})$, $i \in \mathbb{Z}_k$.

In the rest of this proof we explain how to assign a label in $\mathbb{Z}_k$ to every vertex of the triangulation so that the \polygonTucker{k} boundary conditions are satisfied and any solution (i.e., a trichromatic triangle or an edge with non-consecutive labels) must contain a vertex that lies in some interval $K_i(x)$ where $x$ is a solution-node of \textsc{Bipartite-mod-$k[\#1]$} (i.e., a node with degree $\neq 0 \mod k$). This will ensure that from any solution of the \polygonTucker{k} instance we can easily obtain a solution-node of the \textsc{Bipartite-mod-$k[\#1]$} instance.

We begin by defining the ``environment'' label for every vertex of the triangulation. This corresponds to the standard label that the vertex will have, unless we specify it otherwise in the construction. Any vertex lying in $R(i,i+1) \setminus \chull(\{\boldsymbol{0},\bu_{i+1}\})$ has the environment label $i$. Next, we define ``cables'', which will be used to embed the edges of the \textsc{Bipartite-mod-$k[\#1]$} instance in our construction.

\paragraph{Wires and Cables.}
A ``wire'' has an associated label $i \in \mathbb{Z}_k$ and it simply consists of a path of vertices in the triangulation such that all vertices on the path have the label $i$. A ``cable'' is made out of $k-1$ wires, where each wire has a distinct associated label. The wires are arranged in parallel inside the cable so that only wires with consecutive labels are allowed to touch. More precisely, if the cable uses labels $\mathbb{Z}_k \setminus \{i\}$, then the wires are arranged according to their labels in the order $i+1,i+2,\dots,i+(k-1)$ from right to left, in the forward direction of the cable. Note that while wires are not directed, we can define a direction for every cable, based on the order of the wires inside the cable. A cable using the labels $\mathbb{Z}_k \setminus \{i\}$ is only allowed to exist inside a region with environment label $i$. This ensures that any vertex that is adjacent to either side of the cable, and thus to the wire labeled $i+1$ or $i+(k-1)$, is labeled $i$ and does not introduce a solution. The construction of the cables ensures that the wires are ``isolated'' from each other and from the environment, in the sense that no solution is introduced along the cable. However, if the start or end of a cable using the labels $\mathbb{Z}_k \setminus \{i\}$ is allowed to ``touch'' the environment label $i$, then this will necessarily yield a trichromatic triangle at that point. See \cref{fig:cable} for an illustration of how a cable is constructed.

\begin{figure}
    \centering
    \includegraphics{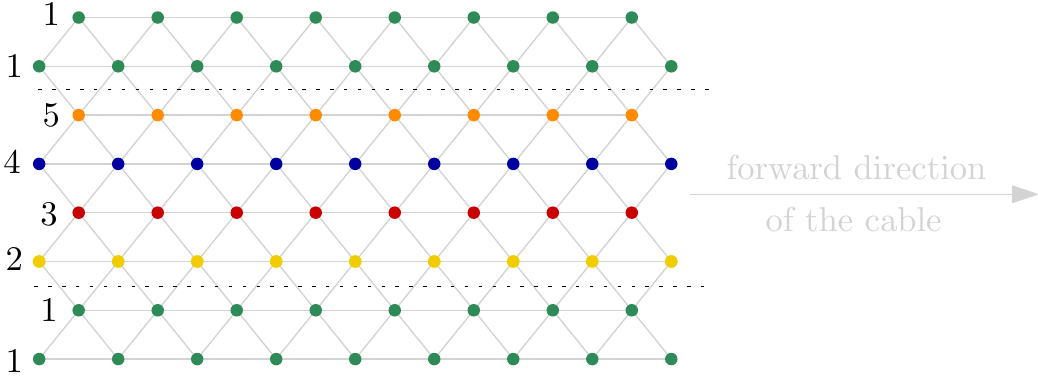}
    \caption{A cable for the case $k=5$. The cable uses the labels $\mathbb{Z}_5 \setminus \{1\}$ and the environment has label $1$. The labels are color-coded as indicated on the left-hand side, i.e., green is label $1$, yellow is label $2$ etc. The forward direction of the cable is indicated by an arrow on the right-hand side. Note that the portion of the cable shown in the figure does not introduce any solution of \polygonTucker{5}.}
    \label{fig:cable}
\end{figure}

It is easy to see that a cable can turn without introducing solutions. Next, let us see how a cable can transition from one environment to another. Consider a cable in environment $i \in \mathbb{Z}_k$, i.e., it uses the labels $\mathbb{Z}_k \setminus \{i\}$ and lies in $R(i,i+1)$. If the cable arrives on the boundary $\chull(\{\boldsymbol{0},\bu_{i+1}\})$ of $R(i,i+1)$, we can transform it into a cable that uses the labels $\mathbb{Z}_k \setminus \{i+1\}$ and continues into $R(i+1,i+2)$, i.e., in the environment $i+1$, on the other side of $\chull(\{\boldsymbol{0},\bu_{i+1}\})$. Importantly, the direction of the cable does not change and we do not introduce any new solutions. This transformation of the cable is shown in \cref{fig:cable-trafo}. The idea is simple. Consider a cable in environment $i$ that arrives on $\chull(\{\boldsymbol{0},\bu_{i+1}\})$ moving forward. The wires are arranged according to their labels in the order $i+1,i+2,\dots,i+(k-1)$ from right to left, in the forward direction of the cable. When the cable reaches $\chull(\{\boldsymbol{0},\bu_{i+1}\})$, the right-most wire (which is labeled $i+1$) is dropped from the cable, i.e., it merges into the environment $i+1$ of $R(i+1,i+2)$. On the other side of the cable, a new wire with label $i = i+1+(k-1)$ is created by using the environment $i$ of $R(i,i+1)$. Thus, we obtain a cable with wires $i+2,\dots,i+1+(k-1)$ (from right to left) in the environment $i+1$, as desired. It is easy to see that this construction does not introduce any new solutions, because we have ensured that non-consecutive labels do not ``touch''.

\begin{figure}
    \centering
    \includegraphics{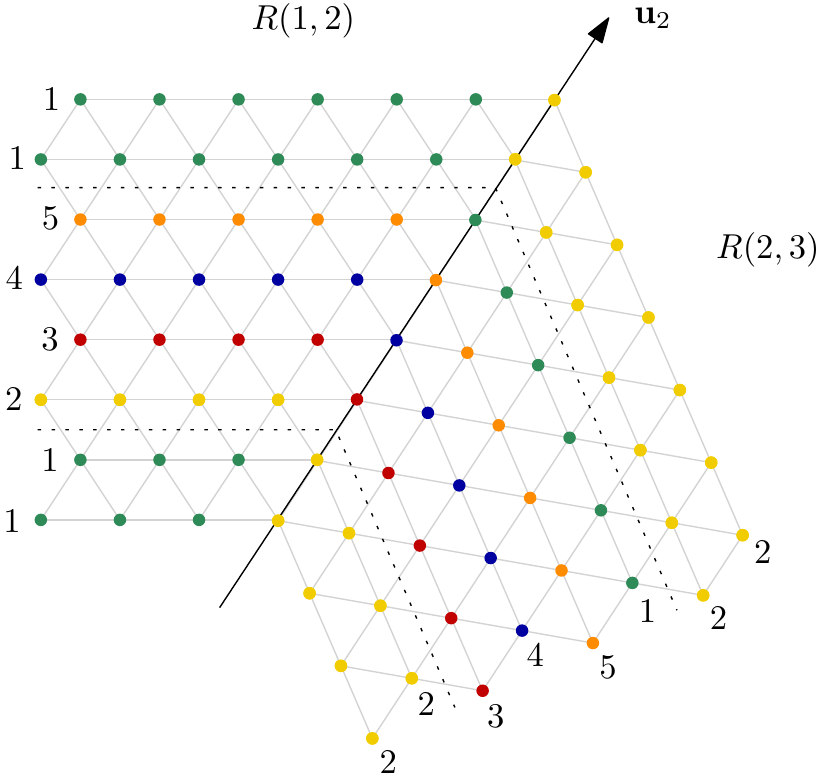}
    \caption{Transformation of a cable for the case $k=5$. In the region $R(1,2)$ the cable uses labels $\mathbb{Z}_5 \setminus \{1\}$ and has environment $1$. When the cable reaches region $R(2,3)$, the construction shown in the figure ensures that from now on, the cable uses labels $\mathbb{Z}_5 \setminus \{2\}$ and has environment $2$. The cable uses the labels $\mathbb{Z}_5 \setminus \{2\}$ and the environment has label $2$. Note that the transformation of the cable as shown in the figure does not introduce any solution of \polygonTucker{5}.}
    \label{fig:cable-trafo}
\end{figure}

When a cable starts or ends on the outer boundary of $R(i,i+1)$, i.e., on $\chull(\{\bu_{i},\bu_{i+1}\})$, the \polygonTucker{k} boundary conditions force a cable to start or end at the corresponding position in each of the regions $R(j,j+1)$, $j \in \mathbb{Z}_k \setminus \{i\}$. These $k-1$ cables will have the same direction as the original cable. In other words, if the cable in region $R(i,i+1)$ ends on the outer boundary, then the $k-1$ corresponding cables will also end on their corresponding boundary. Similarly, if the original cable starts on the boundary, then the $k-1$ corresponding cables will start on their corresponding boundary.

\paragraph{Construction of the instance.}
Before we begin, let us introduce the following useful terminology.
Every node $x$ of the \textsc{Bipartite-mod-$k[\#1]$} instance has some number $\ell \in \{0,1,\dots, k\}$ of neighbors, as given by $C(x)$. For any $i \in [\ell]$, we define the ``$i$th neighbor of $x$'' to be the $i$th node in the lexicographically ordered list of neighbors of $x$.

We are now ready to begin describing the instance we construct.
Recall that we have defined an environment label for every vertex of the triangulation except $\boldsymbol{0}$. Now consider the labeling where every vertex is simply labeled by its environment label. Clearly, this labeling satisfies the \polygonTucker{k} boundary conditions. Furthermore, no matter how we pick the label of $\boldsymbol{0}$, there will be a solution there, since $\boldsymbol{0}$ is adjacent to all environments. Now it is easy to see that we can ``move'' this solution by locally modifying some of the labels. Namely, instead of having all labels meet at $\boldsymbol{0}$, we can instead construct a cable that uses the labels $\mathbb{Z}_k \setminus \{1\}$ and moves into the region with environment label $1$. As a result, there is no longer a solution at $\boldsymbol{0}$, but instead there is now a solution at the end of the cable. In more detail, every environment label except $1$ yields a wire with the corresponding label and the wires are arranged into a cable that uses the labels $\mathbb{Z}_k \setminus \{1\}$. \cref{fig:origin-cable} illustrates this construction in the case $k=5$.

\begin{figure}[ht]
    \centering
    \includegraphics{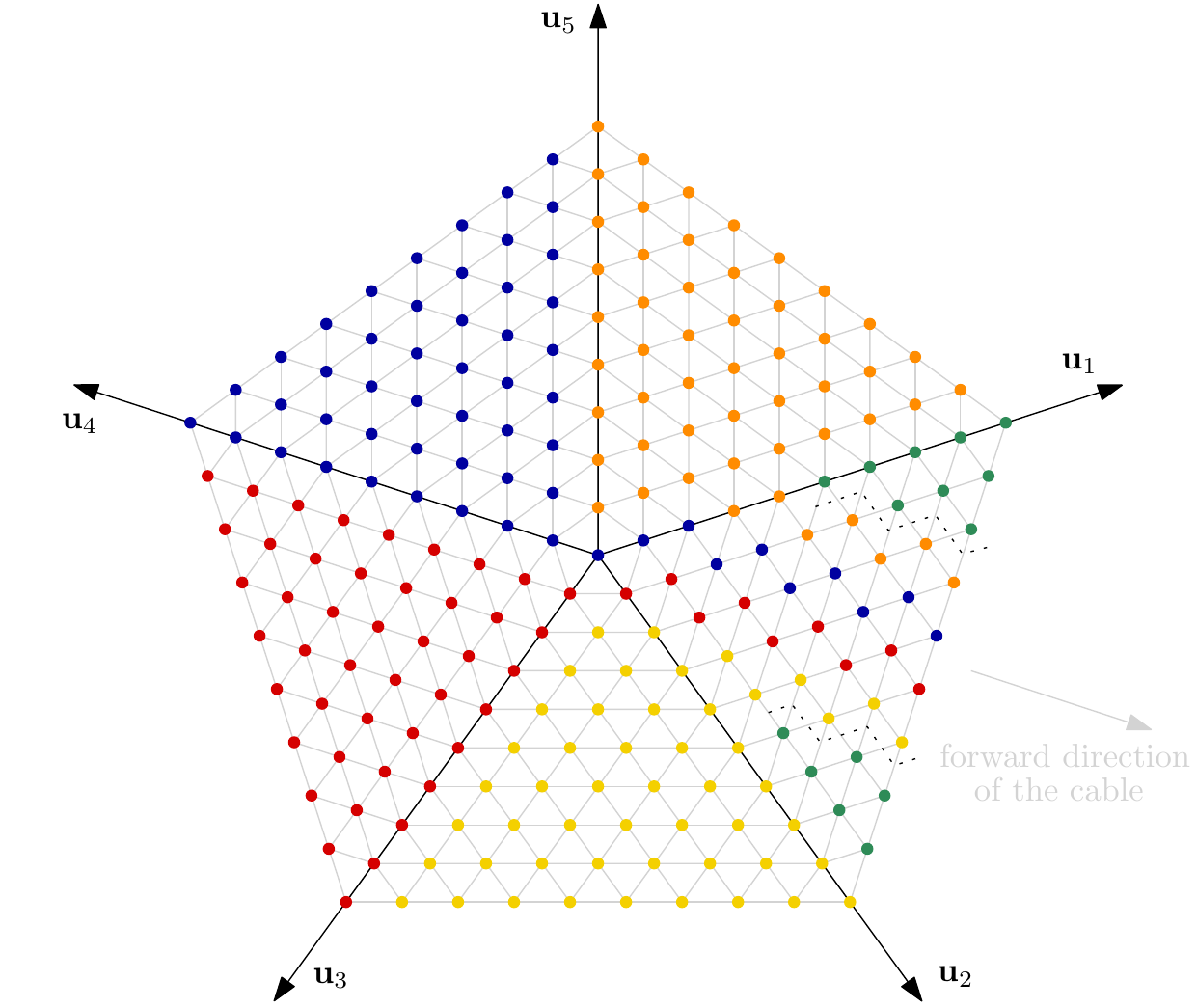}
    \caption{Construction around the origin for the case $k=5$. Even though all the environments ``meet'' at the origin, the construction shown in the figure ensures that there is no solution around the origin, but instead a cable is created. The labels are color-coded as in \cref{fig:cable} and \cref{fig:cable-trafo}.}
    \label{fig:origin-cable}
\end{figure}

Recall that the node $00^n \in A$ has exactly one neighbor $y \in B$. Let $j \in [k]$ be the number such that $00^n$ is the $j$th neighbor of y. The cable that we just constructed at the center of the instance will be routed so that it ends at $K_j(y)$ (a segment on the outer boundary of $R(j,j+1)$, as defined above). Furthermore, for any $x \in A \setminus \{00^n\}$ and $y \in B$ such that $x$ and $y$ are neighbors, there will be a cable starting at $K_i(x)$ and ending at $K_j(y)$, where $i,j \in [k]$ are such that $x$ is the $j$th neighbor of $y$, and $y$ is the $i$th neighbor of $x$. This ensures that the following properties hold.

Consider a node $x \in A \setminus \{00^n\}$ that has $\ell$ neighbors, where $\ell \in \{0,1,\dots,k\}$.
If $\ell = 0$, i.e., $x$ is an isolated node, then for all $i \in [k]$ there is no cable at $K_i(x)$. In particular, there is no \polygonTucker{k} solution in any $K_i(x)$.
If $\ell \in [k]$, then for all $i \in [\ell]$ there is a cable starting at $K_i(x)$ (and ending at some $K_j(y)$). For all $i \in [k] \setminus [\ell]$, there is a start of a cable at $K_i(x)$, but the cable just stops immediately and does not go anywhere. This ensures that the \polygonTucker{k} boundary conditions are satisfied, but also that there is a \polygonTucker{k} solution at $K_i(x)$ for all $i \in [k] \setminus [\ell]$ (because of an exposed end of cable). Thus, we obtain that
\begin{itemize}
	\item if $\ell \in [k-1]$, there is a solution at $K_i(x)$ for some $i \in [k]$,
	\item if $\ell \in \{0,k\}$, then there is no solution in any $K_i(x)$.
\end{itemize}

Similarly, consider a node $y \in B$ that has $\ell$ neighbors, where $\ell \in \{0,1,\dots,k\}$.
If $\ell = 0$, i.e., $y$ is an isolated node, then for all $j \in [k]$ there is no cable at $K_j(y)$. In particular, there is no \polygonTucker{k} solution in any $K_j(y)$.
If $\ell \in [k]$, then for all $j \in [\ell]$ there is a cable ending at $K_j(y)$ (that started at some $K_i(x)$). For all $j \in [k] \setminus [\ell]$, there is an end of a cable at $K_j(y)$, but the cable just started there and does not come from anywhere else. This ensures that the \polygonTucker{k} boundary conditions are satisfied, but also that there is a \polygonTucker{k} solution at $K_j(y)$ for all $j \in [k] \setminus [\ell]$ (because of an exposed start of cable). Thus, we obtain that
\begin{itemize}
	\item if $\ell \in [k-1]$, there is a solution at $K_j(y)$ for some $j \in [k]$,
	\item if $\ell \in \{0,k\}$, then there is no solution in any $K_j(y)$.
\end{itemize}

As a result, if the cables can indeed be constructed to connect the various $K_i(x)$ and $K_j(y)$ as desired, then any \polygonTucker{k} solution of the instance will have to be next to some $K_i(x)$ such that $x$ is a solution-node or next to some $K_j(y)$ such that $y$ is a solution node. One immediate obstacle to routing the cables as desired is that we are working in two dimensions and it is very likely that cables will have to cross each other. Fortunately, there is a simple ``trick'' that has been used in prior work to resolve this issue \citep{CD09-2DBrouwer}. Consider the following idea: cut the two cables that want to cross each other at the point of crossing. This creates two ends of cables and two starts of cables. It is easy to see that we can connect an end of cable with a start of cable, and the other end with the other start, so that no crossing occurs anymore. This modification of the cables is completely local and does not have any impact on the rest of the instance.

\paragraph{Constructing the labeling function.}
Since we want to be able to construct a circuit for the labeling function, we need to be a bit more precise about the path followed by every cable. One way to achieve this is to reserve a separate ``circular lane'' for each pair $(y,j)$, where $y \in B$ and $j \in [k]$. A circular lane is a path of sufficient width that simply stays parallel to the outer boundary in each region $R(i,i+1)$ and thus makes a full ``circle'' around the center of the domain. By picking a fine enough triangulation, we can ensure that there is a separate, disjoint circular lane $L_{y,j}$ for each pair $(y,j)$, where $y \in B$ and $j \in [k]$. Then the cable going from some $K_i(x)$ to some $K_j(y)$ will be routed as follows. Starting at $K_i(x)$, move perpendicularly to the outer boundary towards the inside of the domain, until the circular lane $L_{y,j}$ is reached. Next, follow the lane in clockwise direction. When following the lane, the cable might have to transition from some environment to the next and this is implemented as described earlier. The cable stops following the lane when it reaches the part of the lane lying just ``above'' $K_j(y)$. In other words, the cable stops following the lane when it is at the point where it can just turn left and move straight towards the boundary to end up at $K_j(y)$, as desired. Clearly, we can pick the triangulation fine enough so that this routing is indeed well defined.

This construction has two advantages. First of all, it ensures that any crossing involves at most two cables, not more. We can then use the trick described above to locally resolve these crossings. Furthermore, it ensures that we can construct a Boolean circuit computing the labeling function. Indeed, given any vertex of the triangulation we can easily determine on which circular lane and on which path perpendicular to the outer boundary it lies. This gives us enough information to then use the \textsc{Bipartite-mod-$k[\#1]$} circuit $C$ as a sub-routine to determine whether the vertex lies on a cable and if so, how exactly the cable behaves locally (including possible crossing-avoiding trick). The circuit $C$ only needs to be queried a constant number of times and thus the resulting circuit for the labeling will have polynomial size with respect to the size of $C$ and $n$. We omit the full details, since this part of the proof is essentially the same as in prior work (see e.g., \citep{CD09-2DBrouwer}).
\end{proof}

  Based on the results presented in this and the previous section we get the  
following theorem.

\begin{theorem} \label{thm:kPolygon:completeness}
    The problems \polygonTucker{k} and \polygonBorsukUlam{k} are both $\ppak{k}[\#1]$-complete.
\end{theorem}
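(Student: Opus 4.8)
The plan is to assemble the completeness result from the four ingredients already established in this section, establishing membership and hardness for each of the two problems separately and then combining. For membership I would first invoke \cref{lem:kpolygonTuckerInPPAK}, which places \polygonTucker{k} in $\ppak{k}[\#1]$ directly via the reduction to \imbal{k}{1} built from the path-following graph $G$ of \cref{sec:kPolygonTucker:proof}. To get \polygonBorsukUlam{k} into the same class, I would use \cref{lem:kPolygonBorsukUlamInPPAk}: it gives a many-one reduction from \polygonBorsukUlam{k} to \polygonTucker{k}, and since $\ppak{k}[\#1]$ is closed under many-one reductions (it is defined as the downward closure of \bipal{k}{1}), membership of \polygonTucker{k} transfers to \polygonBorsukUlam{k}. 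One minor wrinkle to address: \cref{lem:kPolygonBorsukUlamInPPAk} is stated as ``\polygonBorsukUlam{k} is in $\ppak{k}$'', but the reduction it constructs actually lands in \polygonTucker{k}, so the sharper conclusion $\ppak{k}[\#1]$ is what we genuinely obtain, and I would state this explicitly.

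For hardness, the core is \cref{prop:k-polygon-hardness}, which shows \polygonTucker{k} is $\ppak{k}[\#1]$-hard by reducing from \bipal{k}{1} (i.e.\ \textsc{Bipartite-mod-$k[\#1]$}) through the cable-routing embedding into the edge-parallel triangulation of $W_k$. To transfer hardness to \polygonBorsukUlam{k}, I would compose this with the many-one reduction of \cref{lem:kPolygonTuckerTokPolygonBorsukUlam} from \polygonTucker{k} to \polygonBorsukUlam{k}; since a composition of two many-one reductions is again a many-one reduction, any problem that many-one reduces to \polygonTucker{k} also many-one reduces to \polygonBorsukUlam{k}, so \polygonBorsukUlam{k} inherits $\ppak{k}[\#1]$-hardness.

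Putting these together: \polygonTucker{k} is in $\ppak{k}[\#1]$ (\cref{lem:kpolygonTuckerInPPAK}) and is $\ppak{k}[\#1]$-hard (\cref{prop:k-polygon-hardness}), hence $\ppak{k}[\#1]$-complete; and \polygonBorsukUlam{k} is in $\ppak{k}[\#1]$ (\cref{lem:kPolygonBorsukUlamInPPAk} composed with the previous line) and $\ppak{k}[\#1]$-hard (\cref{prop:k-polygon-hardness} composed with \cref{lem:kPolygonTuckerTokPolygonBorsukUlam}), hence also $\ppak{k}[\#1]$-complete. I would close by recalling that when $k = p^r$ is a prime power we have $\ppak{k}[\#1] = \ppak{k} = \ppak{p}$ by \cref{prop:ppak-properties} and the proposition relating $\ppak{k}[\#1]$ to prime factors, so in that case the statement reads $\ppak{p}$-completeness.

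\textbf{Main obstacle.} There is no substantial new mathematical content here — every hard step is outsourced to the lemmas and proposition above — so the only thing to be careful about is the bookkeeping of \emph{which} class each reduction targets. In particular, since \cref{prop:k-polygon-hardness} reduces from the ``$[\#1]$'' variant \bipal{k}{1} rather than the full \bipa{k}, one must make sure the claimed hardness is for $\ppak{k}[\#1]$ (equivalently $\cap_{p \in PF(k)} \ppak{p}$) and not for $\ppak{k}$ in general, which is exactly why the theorem is stated with the $[\#1]$ annotation. Getting this qualifier consistent across the two problems — and verifying that \cref{lem:kPolygonBorsukUlamInPPAk} really gives a reduction into \polygonTucker{k} and not merely membership in the larger class $\ppak{k}$ — is the one point that needs explicit attention; everything else is a routine composition of reductions.
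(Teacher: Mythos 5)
Your proposal is correct and is exactly the paper's (implicit) argument: the theorem in the paper is stated without a detailed proof as an assembly of \cref{lem:kpolygonTuckerInPPAK}, \cref{lem:kPolygonBorsukUlamInPPAk}, \cref{prop:k-polygon-hardness}, and \cref{lem:kPolygonTuckerTokPolygonBorsukUlam}, just as you describe. Your observation that \cref{lem:kPolygonBorsukUlamInPPAk} actually gives the sharper containment in $\ppak{k}[\#1]$ (since the reduction targets \polygonTucker{k}, which lies in $\ppak{k}[\#1]$) is a fair clarification of the paper's slightly weaker phrasing.
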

In particular, if $k=p^r$ is a prime power, then \polygonTucker{k} and \polygonBorsukUlam{k} are \ppak{p}-complete.

\section{The BSS Theorem is \ppak{p}-complete}
\label{sec:BSS-BSStucker}

The main result of this section is the \ppak{p}-completeness of the computational problem associated with the BSS Theorem. We refer to this problem as \textsc{$p$-BSS} and we define it formally in \cref{sec:compBSS}. We state the main theorem of the section below.

\begin{theorem}
\label{thm:tucker-completeness}
For every prime $p$, the problem \textsc{$p$-BSS} is \ppak{p}-complete.
\end{theorem}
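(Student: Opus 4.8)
The plan is to prove membership and hardness separately, in each case composing the chain of reductions announced in \cref{sec:intro:BSSTheorem} and \cref{sec:intro:Necklace}; the statement itself is essentially bookkeeping, so the proof is short.

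\textbf{Membership.} To show \textsc{$p$-BSS} $\in \ppak{p}$, I would chain three facts. By \cref{thm:BSStuckerBSSequiv}, \textsc{$p$-BSS} is polynomial-time equivalent to \textsc{$p$-BSS-Tucker}, so it suffices to place \textsc{$p$-BSS-Tucker} in $\ppak{p}$. By \cref{lem:bss-to-pstar}, \textsc{$p$-BSS-Tucker} many-one reduces to \pstartucker{p}. By \cref{thm:ptucker-in-ppa-p}, \pstartucker{p} lies in $\ppak{p}$ (via a reduction to the weighted variant of \imba{p}). Since $\ppak{p}$ is closed under many-one reductions — by transitivity of reductions and the definition of the class — it follows that \textsc{$p$-BSS} $\in \ppak{p}$.

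\textbf{Hardness.} I would traverse the same diagram in the opposite direction. Since $p$ is prime it is in particular a prime power, hence $\ppak{p}[\#1] = \ppak{p}$, and so by \cref{thm:kPolygon:completeness} the problem \polygonTucker{p} (equivalently \polygonBorsukUlam{p}) is $\ppak{p}$-hard; the hardness is furnished by \cref{prop:k-polygon-hardness}. Then \cref{thm:BSS-tucker-ppak-hard} gives a many-one reduction \polygonTucker{p} $\leq$ \textsc{$p$-BSS-Tucker}, and \cref{thm:BSStuckerBSSequiv} gives \textsc{$p$-BSS-Tucker} $\leq$ \textsc{$p$-BSS}. Composing these with the reduction witnessing $\ppak{p}$-hardness of \polygonTucker{p}, every problem in $\ppak{p}$ many-one reduces to \textsc{$p$-BSS}, i.e.\ \textsc{$p$-BSS} is $\ppak{p}$-hard. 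Membership and hardness together give $\ppak{p}$-completeness.

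\textbf{Where the difficulty lies.} All the real content is in the deferred lemmas rather than in the present assembly. The equivalence \cref{thm:BSStuckerBSSequiv} must pass, in both directions, between the analytic BSS statement with the free action $\alpha_p$ on $S^{n(p-1)-1}$ and its simplicial counterpart, handling the same kind of boundary-approximation bookkeeping seen in \cref{lem:kPolygonBorsukUlamInPPAk} and \cref{lem:kPolygonTuckerTokPolygonBorsukUlam}, now in dimension $n(p-1)$. The hardness embedding \cref{thm:BSS-tucker-ppak-hard} is the main obstacle: one must realize the $2$-dimensional $\theta_p$-symmetric polygon-and-cable construction of \cref{prop:k-polygon-hardness} inside the high-dimensional $\alpha_p$-symmetric ball so that \polygonTucker{p} solutions (trichromatic triangles or edges with non-consecutive labels) translate into the ``$(1,j),\dots,(p,j)$-simplex'' solutions of \textsc{$p$-BSS-Tucker}. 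Finally, \cref{thm:ptucker-in-ppa-p} requires carrying the generalized \citet{FT81} path-following argument over the star-convex domain $R^{t(p-1)}_p$ and orienting and weighting the edges of the constructed graph so that its imbalanced-mod-$p$ vertices coincide exactly with the desired simplices.
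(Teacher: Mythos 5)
Your proposal is correct and assembles exactly the same chain of reductions as the paper: membership via \cref{thm:BSStuckerBSSequiv}, the reduction inside the proof of \cref{lem:bss-to-pstar} (from $p$-\tucker\ to \pstartucker{p}), and \cref{thm:ptucker-in-ppa-p}; hardness via \cref{prop:k-polygon-hardness} (using $\ppak{p}[\#1]=\ppak{p}$ for prime $p$), \cref{thm:BSS-tucker-ppak-hard}, and again \cref{thm:BSStuckerBSSequiv}. One small gap: \polygonTucker{k} is defined only for $k\geq 3$, and accordingly \cref{thm:BSS-tucker-ppak-hard} is stated only for primes $p\geq 3$; your hardness chain therefore does not cover $p=2$. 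For that case the paper notes that $2$-\tucker\ is precisely the standard Tucker's lemma and cites its known \ppa-hardness \citep{ABB15-2DTucker}, so you should add the same case split.
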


\noindent In order to prove the theorem, first we show that the BSS theorem is equivalent to a generalization of Tucker's Lemma, which we call \tucker, and then we define the corresponding computational problems and show that they are equivalent. Then, we show the \ppak{p}-completeness of \tucker, which then implies \cref{thm:tucker-completeness}. We show the \ppak{p}-hardness via a reduction from the \polygonTucker{p} problem, proven to be \ppak{p}-complete in the previous section. The membership in \ppak{p} is proven in \cref{sec:pstartucker}, where we show that \tucker\ reduces to another variant of Tucker's lemma, which we call \pstartucker{p} (\cref{lem:bss-to-pstar}), and for which we prove membership in \ppak{p}.

\subsection{The BSS Theorem and Equivalent Formulations}
\label{sec:bss}

Our notation follows that of \citet{BSS81}. 
Let $p \geq 2$ prime, $n \in \mathbb{N}$ and let $P = \{(\bv_1,\bv_2, \dots, \bv_p) \in (\mathbb{R}^n)^p : \sum_{i=1}^p \bv_i = \boldsymbol{0}\}$ and $P_2 = \{\bu \in P : \|\bu\|_2 \leq 1\}$.
It holds that

\[P \cong \mathbb{R}^{n(p-1)} \text{ and } P_2 \cong  B^{n(p-1)}. \] 

Note that $P$ is a hyperplane of $\bbR^{np}$ with dimension $n(p-1)$. Let $\mathcal{B}$ be an orthogonal basis of $P$ in $\bbR^{np}$ and let $\phi:P_2 \rightarrow B^{n(p-1)}$ be the function that maps $\bx \in P_2$ to its coefficients in base $\mathcal{B}$. Then, $\phi$ is a homeomorphism of $P_2$ and $B^{n(p-1)}$. Notice that $\phi(\partial P_2) = S^{n(p-1)-1}$ and that $\phi$ and $\phi^{-1}$ are efficiently computable. 
The same mapping shows that  $P$ and $\mathbb{R}^{n(p-1)}$ are homeomorphic.

Let 
$$\theta(\bv_1, \dots, \bv_p) = (\bv_p,\bv_1, \dots, \bv_{p-2},\bv_{p-1}).$$

The function $\theta$ has order $p$; namely the composition by itself $p$ times is equal to the identity. Also,  for all $i \in \{1, \dots, p-1\}$ and all $\bx \in P \setminus \{\boldsymbol{0}\}$, $\theta^{i}(\bx) \neq \bx$. In other words,  for any $i < p$, $\theta^{i}$ restricted to $P \setminus \{\boldsymbol{0}\}$ has no fixed points. Hence, $\theta$ acts freely on $P \setminus \{\boldsymbol{0}\}$. It follows with a similar argument that the function $\theta$ acts freely also on $P_2 \setminus \{\boldsymbol{0}\}$ and on $\partial P_2$.

\medskip

The original statement of the BSS Theorem requires the notion of CW-complexes, but since in our work we do not use CW-complexes, we will not define them formally. Intuitively, a CW-complex consists of building blocks that can be topologically glued together.

Let $p$ be a prime, $n \geq 1$ and $X$ be a CW-complex consisting of $p$ copies of the $n(p-1)$-dimensional ball glued on their boundaries.

Let $\alpha: \mathbb{R}^{n(p-1)} \rightarrow \mathbb{R}^{n(p-1)}$ such that $\alpha = \phi \circ \theta \circ \phi^{-1}$. Note that $\alpha$ acts freely on $\mathbb{R}^{n(p-1)} \setminus \{\boldsymbol{0}\}$, $B^{n(p-1)} \setminus \{\boldsymbol{0}\}$ and $S^{n(p-1)-1}$. Let $\omega$ be the extension of $\alpha$ on $X$ defined as follows:
\[\omega(\by,r,q) = (\alpha \by, r, q+1 \pmod p),\]
where $(\by,r,q)$ denotes the point of the $q$-th ball with radius $r$ and direction $\by \in S^{n(p-1)-1}$.

The map $\omega$ is a free action on $X$ and the following theorem holds:
\begin{theorem}[BSS Theorem, \citep*{BSS81}]\label{thm:BSStheorem}
\label{thm:BSS}
For the mapping $\omega$ and any continuous map $h:X \rightarrow \bbR^n$,
there exists an $\bx \in X$ such that $h(\bx) = h(\omega \bx) = \dots = h(\omega^{p-1}\bx) $.
\end{theorem}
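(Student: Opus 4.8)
The plan is to prove \cref{thm:BSStheorem} not head-on but through a chain of reductions, ending in a combinatorial lemma. First I would reduce the CW-complex statement to a ``ball statement'' --- namely, that every continuous $f:B^{n(p-1)}\to\bbR^{n(p-1)}$ that is $\alpha_p$-equivariant on the boundary has a zero, i.e.\ the informal \cref{itm:BSSTheorem}. Next I would reduce that ball statement to a discrete Tucker-type lemma, the informal \cref{itm:BSSTucker} (``BSS Tucker's Lemma''), by simplicial approximation together with compactness. Finally I would prove the Tucker-type lemma combinatorially, by a modulo-$p$ path-following argument that generalises the proof of $k$-Polygon Tucker's Lemma in \cref{sec:kPolygonTucker:proof}. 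I expect the first two steps to be routine, if technical, and the third to be the genuine obstacle.

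For the first reduction, given a continuous $h:X\to\bbR^n$, set $H=(h,h\circ\omega,\dots,h\circ\omega^{p-1}):X\to(\bbR^n)^p$ and let $g=\pi\circ H$, where $\pi$ is the orthogonal projection of $(\bbR^n)^p$ onto $P$, i.e.\ $\pi$ subtracts the coordinate-wise average. Then $g(\bx)=\boldsymbol 0$ if and only if $h(\bx)=h(\omega\bx)=\dots=h(\omega^{p-1}\bx)$. Moreover $H\circ\omega$ is the cyclic shift of $H$, and $\pi$ commutes with that shift, so $g\circ\omega=\theta^{\pm1}\circ g$ on all of $X$. Restricting $g$ to one of the $p$ copies of $B^{n(p-1)}$ in $X$, and using that $\omega$ acts as $\alpha$ on the shared boundary sphere while $\phi$ identifies $P$ with $\bbR^{n(p-1)}$, I obtain a continuous $f:B^{n(p-1)}\to\bbR^{n(p-1)}$ that is $\alpha_p$-equivariant on $\partial B^{n(p-1)}$, and any zero of $f$ produces the point promised by \cref{thm:BSStheorem}. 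Hence it suffices to prove \cref{itm:BSSTheorem}.

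For the second reduction, view $f$ as valued in $P\subseteq(\bbR^n)^p$, so $f(\bx)=(\bv_1,\dots,\bv_p)$ with $\sum_i\bv_i=\boldsymbol 0$. Fix a fine triangulation $T$ of $B^{n(p-1)}$ with $\alpha_p$-symmetric boundary and colour each vertex $\bx$ by a pair $(i,j)\in\mathbb Z_p\times[n]$ that maximises the $j$-th coordinate of $\bv_i$ over all $i\in[p]$, $j\in[n]$. Since $\theta$ shifts the block index cyclically by one, this colouring is $\alpha_p$-equivariant on the boundary as required by \cref{itm:BSSTucker}, provided the tie-breaking is itself $\mathbb Z_p$-equivariant, exactly as in \cref{lem:polygonTuckerImpliesPolygonBorsukUlam}. \cref{itm:BSSTucker} then yields a $(p-1)$-simplex carrying all colours $(1,j),\dots,(p,j)$ for one fixed $j$; its $p$ vertices are pairwise $\delta$-close, so by uniform continuity the values $(\bv_1)_j,\dots,(\bv_p)_j$ at a common reference point are all within $O(\varepsilon)$ of $\max_{i',j'}(\bv_{i'})_{j'}$, while they sum to $0$, forcing that maximum to be $O(\varepsilon)$; then every coordinate of every block is $O(\varepsilon)$ as well, using $\sum_i(\bv_i)_{j'}=0$ for each $j'$, so $\norm{f}=O(\varepsilon)$ at that point. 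Letting $\delta\to 0$ and using compactness of $B^{n(p-1)}$ gives an exact zero of $f$.

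The heart of the argument, and the step I expect to be hardest, is the combinatorial proof of \cref{itm:BSSTucker}. I would generalise the construction of \cref{sec:kPolygonTucker:proof}: build a graph whose vertices are the simplices of $T$, identifying the $p$ simplices in each $\omega$-orbit on the boundary into one super-vertex; call a simplex \emph{happy}, hence \emph{relevant}, when its position in the cell structure induced by the free $\mathbb Z_p$-action is ``matched'' by the colours it carries; and join a relevant face to a relevant simplex (or to an appropriate rotated copy of it) precisely when the colours of the face complete the happiness of the larger simplex. The delicate ingredient --- the one that turns a parity count into a genuine modulo-$p$ count --- is to \emph{orient} every such edge by the sign of a determinant built from barycentric-type coordinates, in the spirit of Freund's consistently oriented triangulations and of the directed-edge rules already used here in \cref{sec:kPolygonTucker:proof}. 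One then checks, by a case analysis on the dimension of the simplex and on whether it is a boundary super-vertex (where the equivariant boundary colouring is what makes the contributions cancel), that every relevant simplex is balanced modulo $p$ except a single distinguished simplex incident to $\boldsymbol 0$ whose imbalance is $\not\equiv 0\pmod p$ by construction, and that an imbalanced vertex is exactly a ``rainbow'' $(p-1)$-simplex of the desired type. A modulo-$p$ counting argument --- i.e.\ a reduction to \imba{p} --- then produces a second imbalanced vertex, which is the sought solution; performing this reduction with polynomial-size circuits simultaneously yields the membership half of \cref{thm:tucker-completeness}, and alternatively one may route it through \pstartucker{p}. I expect the bookkeeping of these orientation and balance invariants in dimension $n(p-1)$, especially around the boundary super-vertices and the distinguished central simplex, to be the most demanding part; the rest should be a careful but essentially routine elaboration of the two-dimensional argument. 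One could instead invoke the original, cohomological proof via the $\mathbb Z_p$-index of the action, but it is non-constructive and yields no complexity consequences, so I would not use it.
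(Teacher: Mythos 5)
Your decomposition matches the paper's exactly for the first two steps. \Cref{thm:BSSalt} passes from the CW-complex form to the equivariant-ball form (the paper uses the consecutive-difference map $h(\omega^i\bx)-h(\omega^{i-1}\bx)$ rather than your mean-subtracting projection $\pi$, but both land in $P$ and vanish at the same points), and \cref{lm:tuckerImpliesBSS} passes from the ball form to the \tucker{} Lemma (\cref{thm:bss-tucker}) using precisely your maximum-coordinate colouring with a $\mathbb{Z}_p$-equivariant tie-break. Where you genuinely diverge is the third and hardest step: you propose to run an oriented Freund--Todd argument directly on a nice triangulation of $P_2$, whereas the paper does \emph{not} prove \tucker{} directly. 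It reduces \tucker{} to \pstartucker{p} on the star domain $R_{p,m}^d$ via a coordinate-collapsing map $\Psi$ (\cref{lem:bss-to-pstar}) and runs the mod-$p$ argument there (\cref{thm:ptucker-in-ppa-p}); you mention this route only as a fallback.

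You should be wary of the phrase ``essentially routine''. On the star domain it is not: there each label $*^i j$ is associated with $p-1$ distinct axes, so a super-happy simplex can have one ``down'' edge and $k\in[p-1]$ ``up'' edges into distinct sub-orthants, and oriented edges alone do not balance these modulo $p$ once $p\geq 5$; the paper is forced to introduce rational edge weights $\prod_j r_j(\sigma)!$, with Wilson's theorem making the origin's net imbalance $(-1)^t\not\equiv 0\pmod p$. Your sketch mentions no weights. On $P_2$ the combinatorics is kinder --- each label $(i,j)$ corresponds to a unique spanning vector $\be^{i,j}$, so a super-happy simplex has at most one admissible upward extension (when no extension exists, the simplex already carries all of $\mathbb{Z}_p\times\{j\}$ and is a solution), $\{\boldsymbol 0\}$ then has a single outgoing edge, and no weights should be needed --- but this simplification is precisely what you must verify rather than inherit from the two-dimensional case; it is the structural reason your route could avoid the paper's weights, not a freebie. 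If it checks out, your direct proof of \tucker{} is shorter than the paper's. What the paper buys by detouring through \pstartucker{p} is a single combinatorial engine that also serves the Necklace Splitting membership and sits naturally on a Kuhn triangulation where the pivoting and indexing bookkeeping is standard.
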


The following equivalent formulations of \cref{thm:BSS} are useful for defining the computational problem related to BSS.

\begin{theorem}[BSS Theorem, equivalent formulations]\label{thm:BSSalt}
The following statements are equivalent to the BSS Theorem:
\begin{enumerate}
    \item Let $g: P_2 \to P$ be continuous and such that $g(\theta \bx) = \theta g(\bx)$ for all $\bx \in \partial P_2$. Then, there exists $\bx \in P_2$ such that $g(\bx)=0$.
    \item Let $g: B^{n(p-1)} \to \mathbb{R}^{n(p-1)}$ be continuous and such that $g(\alpha \bx) = \alpha g(\bx)$ for all $\bx \in S^{n(p-1)-1}$. Then, there exists $\bx \in B^{n(p-1)}$ such that $g(\bx)=0$.
\end{enumerate}

\end{theorem}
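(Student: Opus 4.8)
The plan is to prove the two equivalences in turn: first that statements (1) and (2) are equivalent to each other, and then that (1) is equivalent to the BSS Theorem (\cref{thm:BSStheorem}). Statement (1), phrased in terms of $P_2,P,\theta$, is the convenient intermediate because it matches the intrinsic description of the space $X$. The equivalence $(1)\Leftrightarrow(2)$ is immediate from the homeomorphism $\phi:P_2\to B^{n(p-1)}$, which is linear (it records coordinates in an orthogonal basis, so $\phi(\boldsymbol 0)=\boldsymbol 0$) and satisfies $\alpha=\phi\circ\theta\circ\phi^{-1}$ and $\phi(\partial P_2)=S^{n(p-1)-1}$ by construction. Given $g$ as in (1), the map $\hat g:=\phi\circ g\circ\phi^{-1}$ is continuous, satisfies $\hat g(\alpha\bx)=\alpha\hat g(\bx)$ on $S^{n(p-1)-1}$, and $\hat g(\bx)=\boldsymbol 0\iff g(\phi^{-1}\bx)=\boldsymbol 0$; the reverse transformation is identical.

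For the equivalence with the BSS Theorem I would first rephrase $X$ intrinsically. Using $\phi$ we may assume $X$ is obtained by gluing $p$ copies of $P_2$ along their common boundary $\partial P_2$, with $\omega$ acting by $\theta$ on the ``sphere coordinate'' and by $+1$ on the ``which-copy'' coordinate $q\in\mathbb{Z}_p$; thus $X=\partial P_2 * \mathbb{Z}_p$ is a join with the diagonal free $\mathbb{Z}_p$-action $\omega$, and on interiors $\omega$ sends the point $r\by$ in copy $q$ to $r\,\theta\by=\theta(r\by)$ in copy $q+1$. The key observation is that a ``$p$-fold coincidence'' in \cref{thm:BSStheorem} is detected by an equivariant map into $P$: writing $(\bbR^n)^p=P\oplus D$ for the orthogonal splitting into the sum-zero hyperplane $P$ and the diagonal $D$, with orthogonal projection $\pi_P$, the values $h(\bx),h(\omega\bx),\dots,h(\omega^{p-1}\bx)$ are all equal if and only if the tuple $\Phi(\bx):=\bigl(h(\bx),h(\omega^{p-1}\bx),h(\omega^{p-2}\bx),\dots,h(\omega\bx)\bigr)$ lies in $D$, i.e. if and only if $G:=\pi_P\circ\Phi$ vanishes at $\bx$. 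Since $\pi_P$ commutes with the cyclic shift, $\Phi\circ\omega=\theta\circ\Phi$ and hence $G\circ\omega=\theta\circ G$. Conversely, from any continuous $G:X\to P$ with $G\circ\omega=\theta\circ G$ one recovers $h:=\pi_1\circ G:X\to\bbR^n$, where $\pi_1$ takes the first $\bbR^n$-block; then the $p$ orbit-values $h(\omega^j\bx)$ are exactly the $p$ blocks of $G(\bx)$ in some order, so they are all equal iff $G(\bx)=\boldsymbol 0$ (the blocks summing to $\boldsymbol 0$). Thus the BSS Theorem is equivalent to the statement $(\star)$: \emph{every continuous $G:X\to P$ with $G\circ\omega=\theta\circ G$ has a zero}.

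It then remains to establish $(\star)\Leftrightarrow(1)$. For $(1)\Rightarrow(\star)$: given an equivariant $G:X\to P$, restrict to copy $0$, obtaining $g:=G|_{P_2^{(0)}}:P_2\to P$; for $\by\in\partial P_2$ we have $\omega\by=\theta\by$, whence $g(\theta\by)=G(\omega\by)=\theta G(\by)=\theta g(\by)$, so $g$ meets the hypothesis of (1) and has a zero in $P_2$, which is a zero of $G$. For $(\star)\Rightarrow(1)$: given $g:P_2\to P$ equivariant on $\partial P_2$, set $G(\bz):=\theta^q g(\theta^{-q}\bz)$ for $\bz$ in copy $q$; since $g$ commutes with every power of $\theta$ on $\partial P_2$, these formulas agree on the common boundary, so $G:X\to P$ is well-defined and continuous, and a one-line computation gives $G\circ\omega=\theta\circ G$. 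Applying $(\star)$ yields a zero of $G$; if it lies in copy $q$, then $\omega^{-q}$ applied to it is a zero of $G$ in copy $0$, i.e. a zero of $g$. Chaining the equivalences gives $\textup{BSS}\Leftrightarrow(\star)\Leftrightarrow(1)\Leftrightarrow(2)$, as required.

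I expect the main obstacle to be bookkeeping rather than conceptual difficulty: correctly identifying $X$ and the action $\omega$ on the interiors of the balls, keeping the direction of the cyclic shift consistent throughout (this is exactly why $\Phi$ lists the orbit in ``reverse'' order, so that $G\circ\omega=\theta\circ G$ holds with no spurious inverse), and verifying that a zero of an equivariant map can always be transported back into copy $0$. None of these steps uses anything beyond elementary linear algebra and continuity of the maps involved; in particular, in contrast with the original proof of \citet{BSS81}, no algebraic topology enters.
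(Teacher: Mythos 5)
Your proof is correct and rests on the same underlying ideas as the paper's, but you organize them more cleanly by factoring through the intermediate statement $(\star)$: every continuous $\mathbb{Z}_p$-equivariant $G:X\to P$ has a zero. The paper proves $(2)\Leftrightarrow\text{BSS}$ directly, in one direction constructing $h:X\to\bbR^n$ from $g$ via the single composite formula $h(\by,r,i)=[\phi^{-1}\circ g\circ\alpha^{1-i}(r\by)]_{2-i}$ (which bundles the ``transport to copy $1$'' and ``extract the right block'' steps), and in the other direction building $g$ from $h$ using consecutive differences $h(\omega^p\bx)-h(\omega^{p-1}\bx),\dots,h(\omega\bx)-h(\bx)$. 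You instead separate the two mechanisms: the coincidence-to-zero conversion ($\text{BSS}\Leftrightarrow(\star)$, via $\pi_P\circ\Phi$) and the restrict/extend step over the copies of $X$ ($(\star)\Leftrightarrow(1)$). The two encodings of coincidence are interchangeable — your $\pi_P\circ\Phi$ and the paper's difference vector are both $\theta$-equivariant maps into $P$ vanishing exactly on coincidence orbits — so this is an expositional reorganization rather than a new argument. What your version buys is transparency: the equivariance $\Phi\circ\omega=\theta\circ\Phi$ is a visible cyclic shift, and the extend/restrict moves are isolated and checked in one line each; the cost is one extra named statement. There is no gap, and the bookkeeping you flag (direction of the shift, the ordering in $\Phi$, transporting a zero back to copy $0$) is handled correctly.
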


\begin{proof}
We first show that the two statements are equivalent and then that statement (2) is equivalent to \cref{thm:BSS}.
\paragraph{(1) $\Longleftrightarrow$ (2)} The equivalence follows from  $P_2 \cong B^{n(p-1)}$ and $P \cong \mathbb{R}^{n(p-1)}$, as well as from the equivariance of $\phi$.

\paragraph{(2) $\Longleftrightarrow$ (BSS)} 
We first show that the BSS Theorem implies (2). Recall that the CW-complex $X$ consists of $p$ copies of $B^{n(p-1)}$ with their boundaries ``glued'' together. 
Define $h: X \to \mathbb{R}^n$ as follows: for $\bx = (\by,r,i) \in X$, let $h(\bx) = [\phi^{-1} \circ g \circ \alpha^{1-i} (r\by)]_{2-i} \in \mathbb{R}^n$, where for $j \in \mathbb{Z}_p \equiv [p]$, $[\cdot]_j$ denotes the $j$-th component of an element in $(\mathbb{R}^n)^p$ (i.e., $[(\bv_1, \dots, \bv_p)]_j = \bv_j$). The mapping $h$ is well-defined and continuous on the glued boundary, because for all $i$ and $\by \in S^{n(p-1)-1}$, we have $h(\by,1,i) = [\phi^{-1} \circ g \circ \alpha^{1-i}(\by)]_{2-i} = [\theta^{1-i} \phi^{-1} \circ g (\by)]_{2-i} = [\phi^{-1} \circ g (\by)]_1$ (which does not depend on $i$). Finally, note that any $\bx = (\by,r,1) \in X$ with $h(\bx)=h(\omega \bx)= \dots = h(\omega^{p-1}\bx)$ yields $\bz = r\by \in B^{n(p-1)}$ with $[\phi^{-1} \circ g(\bz)]_1=\dots=[\phi^{-1} \circ g(\bz)]_p$, which implies $\phi^{-1} \circ g(\bz) = 0$ (by definition of $P$), and thus $g(\bz)=0$.

Conversely, (2) implies BSS.
We identify $B^{n(p-1)}$ with the first ball in the CW-complex $X$. Given a continuous function $h:X \to \mathbb{R}^n$, define $g: B^{n(p-1)} \to \mathbb{R}^{n(p-1)}$ by $g(\bx)= \phi(h(\omega^p\bx)-h(\omega^{p-1} \bx), h(\omega^{p-1} \bx) - h(\omega^{p-2} \bx), \dots, h(\omega \bx)-h(\bx))$. It is easy to check that $g(\alpha \bx) = \alpha g(\bx)$ for all $\bx \in S^{n(p-1)-1}$ by noting that $\omega \bx = \alpha \bx$ for such $\bx$.
\end{proof}

\subsection{The \tucker\ Lemma}

In this section, we define a generalization of Tucker's Lemma, that we call the \tucker\ Lemma and show that it is equivalent to the \bss\ Theorem.
The \tucker\ Lemma applies to triangulations of $P_2$ that have some special properties. 

For $j \in [n]$, let $\be_j \in \mathbb{R}^n$ be the $j$-th unit vector. For $(i,j) \in \bbZ_p \times [n]$, let
$$\be^{i,j}=\frac{1}{2(p-1)}(-\be_j, \dots, -\be_j, (p-1)\be_j, -\be_j, \dots, -\be_j) \in P$$
where the term $(p-1)\be_j$ is in the $i$-th position. For each $\bs = (s_1,\dots, s_n) \in [p]^n$, we define the simplex $\sigma_{\bs} = \{\boldsymbol{0}\} \cup \{\be^{i,j} \text{ s.t. for each } j \in [n], i \in \bbZ_p\setminus\{s_j\}\}$. Note that $|V(\sigma_{\bs})| = (p-1)n + 1 $.

\begin{lemma}
 $T^* = \{\tau: \tau \subseteq \sigma_{\bs} \text{ with } \bs \in [p]^n\}$ is a triangulation  of $P_2$.
\end{lemma}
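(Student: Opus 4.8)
The plan is to show directly that the collection $T^*$ of all faces of the simplices $\sigma_{\bs}$, $\bs \in [p]^n$, is a triangulation of $P_2$; i.e., (i) each $\sigma_{\bs}$ is a genuine $n(p-1)$-dimensional simplex contained in $P_2$, (ii) the $\sigma_{\bs}$ have pairwise disjoint interiors, (iii) any two of them meet in a common face, and (iv) their union is all of $P_2$. The underlying geometry is that each $\sigma_{\bs}$ is the convex hull of $\boldsymbol 0$ together with, for every coordinate block $j \in [n]$, all but one of the $p$ vectors $\be^{1,j}, \dots, \be^{p,j}$ — the omitted one being $\be^{s_j, j}$. So the picture factors as a product over the $n$ blocks: in block $j$ we are looking at the simplex structure of $\chull(\{\be^{1,j},\dots,\be^{p,j}\})$, which is a regular $(p-1)$-simplex centered at the origin spanning the $j$-th copy of the hyperplane $\{\sum_i v_i = 0\} \subseteq \bbR^p$, and the simplices $\sigma_{\bs}$ restricted to that block are exactly the $p$ ``cones from the center'' $\chull(\{\boldsymbol 0\} \cup \{\be^{i,j} : i \neq s_j\})$ that form the standard barycentric-type subdivision of that regular simplex into $p$ pieces.

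First I would make the product structure precise. Observe $P = \bigoplus_{j=1}^n P^{(j)}$ where $P^{(j)} \cong \{\bw \in \bbR^p : \sum_i w_i = 0\}$ is the $j$-th block, and $\be^{i,j} \in P^{(j)}$; moreover $\sum_{i=1}^p \be^{i,j} = \boldsymbol 0$ and the vectors $\{\be^{i,j} : i \neq i_0\}$ are linearly independent for any fixed $i_0$ (they are $p-1$ vertices of a regular $(p-1)$-simplex, hence affinely independent, and since the barycenter is $\boldsymbol 0$, together with $\boldsymbol 0$ they are linearly independent). This immediately gives (i): the $(p-1)n+1$ points $\boldsymbol 0$ and $\be^{i,j}$ ($i \neq s_j$) are linearly independent across blocks, so $\sigma_{\bs}$ is a full-dimensional simplex in $P$; and a direct computation of the $\ell_2$-norm of an arbitrary convex combination $\sum_{i \neq s_j, j} \lambda_{i,j}\be^{i,j}$, using that within a block the $\be^{i,j}$ have norm $\le 1/2$ and bounded pairwise inner products, shows every point of $\sigma_{\bs}$ has norm $\le 1$, so $\sigma_{\bs} \subseteq P_2$. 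Actually the cleanest route to (i)–(iv) at once: in block $j$, let $Q^{(j)} = \chull(\{\be^{1,j}, \dots, \be^{p,j}\})$ and let $D_j : P^{(j)} \to \bbR$ be the piecewise-linear ``gauge'' that equals $1$ on each $\be^{i,j}$ and $0$ at $\boldsymbol 0$, extended linearly on each cone $C_{i,j} := \chull(\{\boldsymbol 0\} \cup \{\be^{\ell,j}: \ell \neq i\})$; standard facts about barycentric subdivision of a simplex give that $\{C_{i,j}\}_{i \in [p]}$ triangulates $Q^{(j)}$. Then I claim $P_2 = \{ \bx \in P : \sum_j D_j(\bx^{(j)}) \le 1 \}$ is exactly the union of the $\sigma_{\bs}$: writing $\bx = (\bx^{(1)},\dots,\bx^{(n)})$ with $\bx^{(j)} \in C_{s_j, j}$ for appropriate $s_j$, the point $\bx$ lies in $\sigma_{\bs}$ iff $\sum_j D_j(\bx^{(j)}) \le 1$, because $\sigma_{\bs}$ is precisely $\{\boldsymbol 0\} * \prod_j (\text{facet of } Q^{(j)} \text{ opposite } \be^{s_j,j})$ scaled into the unit ball — i.e., $\sigma_{\bs} = \{\sum_j t_j \by_j : \by_j \in F_{s_j,j}, t_j \ge 0, \sum t_j \le 1\}$ where $F_{s_j,j}$ is the facet of $Q^{(j)}$ not containing $\be^{s_j,j}$. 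The intersection and covering properties then follow from the corresponding properties of the $C_{i,j}$ in each block together with the simple observation that the "join with $\boldsymbol 0$, truncated by $\sum t_j \le 1$" operation preserves being a triangulation (this is the standard cone/suspension construction).

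The main obstacle, and the step I would spend the most care on, is verifying that $\bigcup_{\bs} \sigma_{\bs}$ is exactly $P_2$ rather than some smaller or larger star-shaped body — equivalently, pinning down that the PL gauge $\bx \mapsto \sum_j D_j(\bx^{(j)})$ has $P_2 = B^{n(p-1)}$ (under $\phi$) as its sublevel set at level $1$. This is really a claim that $\max\{\|\bx\|_2 : \bx \in \partial(\bigcup_\bs \sigma_\bs)\}$ behaves correctly; it is not automatic since the $\be^{i,j}$ were chosen with the specific scaling $1/(2(p-1))$, and one must check both containment $\bigcup \sigma_\bs \subseteq P_2$ (done above by the norm computation) and that $\bigcup \sigma_\bs$ is star-shaped around $\boldsymbol 0$ and "fills" $P_2$ — but in fact, rereading the statement, the lemma only asserts $T^*$ is \emph{a} triangulation of $P_2$, which on inspection of how it is used later (for $\mathbb Z_p$-star Tucker the relevant domain $R^d_p$ is star-convex, not the ball) strongly suggests the intended reading is that $T^*$ triangulates the \emph{star-shaped body} $\bigcup_\bs \sigma_\bs =: P_2$, i.e., this union is the definition of the domain on which BSS-Tucker is later phrased, homeomorphic to $B^{n(p-1)}$ via a radial map. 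So I would actually prove: $\bigcup_\bs \sigma_\bs$ is a compact star-shaped neighborhood of $\boldsymbol 0$ in $P$ homeomorphic to $B^{n(p-1)}$, and $T^*$ is a simplicial complex whose underlying space is this body — the only genuinely non-routine ingredient being the product-of-barycentric-subdivisions verification of the face-to-face intersection property, which I would reduce to the one-block statement (barycentric subdivision of a simplex from its barycenter is a triangulation) and then lift via the join-with-origin construction.
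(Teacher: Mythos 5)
Your overall plan is sound in spirit, but you miss the one observation that makes the paper's argument essentially trivial, and as a result you spend the bulk of the proposal worrying about things that are not actually issues. The paper's proof is two lines: it sets $C = \chull\bigl((\be^{i,j})_{(i,j)}\bigr)$, notes that $\norm{T^*} = C$, and concludes because $C$ is a compact, full-dimensional, \emph{convex} subset of $P \cong \bbR^{n(p-1)}$, hence homeomorphic to $B^{n(p-1)} \cong P_2$ (recall the paper's definition of ``triangulation of $X$'' in \cref{ap:definitions} only asks that $\norm{K} \cong X$). The identity $\bigcup_\bs \sigma_\bs = C$ is a short calculation: for any convex combination $\sum_{i,j}\lambda_{i,j}\be^{i,j}$, subtract $\mu_j \sum_i \be^{i,j} = \boldsymbol 0$ blockwise with $\mu_j = \min_i \lambda_{i,j}$ to kill one coefficient per block; the remaining coefficients sum to $1 - p\sum_j \mu_j \le 1$, exhibiting the point in some $\sigma_\bs$. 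The face-to-face property of $T^*$ then follows because two simplices $\sigma_\bs, \sigma_{\bs'}$ meet exactly in the simplex $\chull(\{\boldsymbol 0\}\cup\{\be^{i,j} : i \neq s_j, i \neq s'_j\})$, which is a common face.

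Where you go astray: (a) you characterize $\bigcup_\bs \sigma_\bs$ as a ``star-shaped body'' and propose a radial homeomorphism to the ball, when it is in fact the convex polytope $C$ — convexity is exactly what makes the homeomorphism free; (b) you entertain the misreading that $P_2$ is redefined to be this union, whereas $P_2 = \{\bu \in P : \norm{\bu}_2 \le 1\}$ is fixed earlier and the lemma only requires a homeomorphism, so the scaling factor $\tfrac{1}{2(p-1)}$ in $\be^{i,j}$ and your worry about whether the union ``fills'' the ball are red herrings; (c) the stated bound $\norm{\be^{i,j}} \le 1/2$ is false (it equals $\tfrac12\sqrt{p/(p-1)} > 1/2$), though the conclusion $\sigma_\bs \subseteq P_2$ still holds by convexity of the norm since $\norm{\be^{i,j}} \le 1/\sqrt 2 < 1$. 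Your ``product of blockwise cone subdivisions, joined with $\boldsymbol 0$, truncated by $\sum_j t_j \le 1$'' picture of $\sigma_\bs$ is correct and could be pushed through to a full proof, but you leave the face-to-face verification as a plan (``I would reduce to the one-block statement \dots and then lift''), which is precisely the non-routine step; the convexity observation lets you sidestep most of that work.
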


\begin{proof}
Let $C = \chull\left((\be^{i,j})_{(i,j) \in \bbZ_p \times [n]}\right)$. Then, by definition of $T^*$, $C \cong \norm{T^*}$ . So, the lemma follows from the fact that $C \cong P_2$.
\end{proof}

\begin{definition}
We say that a triangulation $T$ of $P_2$ is \emph{nice} if it satisfies the following two conditions:
\begin{itemize}
    \item If $\sigma \in T \cap \partial P_2$ then $\theta \sigma \in T$.
    \item $T$ refines the triangulation $T^*$ (that is, for each $\sigma \in T$ there
is $\tau \in T^*$ with $\sigma \subseteq \tau$ ).
\end{itemize}
\end{definition}

\begin{theorem}[\tucker\ Lemma]
\label{thm:bss-tucker}
Let $T$ be a nice triangulation of $P_2$. Let $\lambda = (\lambda_1, \lambda_2) : V(T) \to \mathbb{Z}_p \times [n]$ be a labeling such that for all $\bx \in \partial T$, $\lambda(\theta \bx) = (\lambda_1(\bx)+1, \lambda_2(\bx))$. Then, there exists a $(p-1)$-simplex $\sigma$ of $T$ such that $\lambda(\sigma) = \mathbb{Z}_p \times \{j\}$ for some $j \in [n]$, where $\lambda(\sigma) = \{\lambda(\bx) : \bx \in V(\sigma)\}$.
\end{theorem}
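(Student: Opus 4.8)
The plan is to prove the \tucker\ Lemma by a combinatorial ``path-following'' argument that is the higher-dimensional analogue of the argument given for $k$-Polygon Tucker's Lemma in \cref{sec:kPolygonTucker:proof}, following the scheme of \citet{FT81} with the ``consistently oriented'' directing of edges inspired by \citet{Freund84a}. As in that section, I would build a directed graph $G$ whose vertices are certain simplices of $T$, arrange that the node $\{\boldsymbol{0}\}$ is the unique guaranteed imbalanced node of ``degree $1$,'' and show that every other imbalanced-mod-$p$ node is forced to be a solution simplex (a $(p-1)$-simplex whose label set is $\mathbb{Z}_p \times \{j\}$). A modulo-$p$ counting argument (i.e., the totality of \imba{p}) then yields such a solution. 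Concretely, for a simplex $\sigma \in T$ I would define $S(\sigma)$, the minimal set of generators $\{\be^{i,j}\}$ whose cone (over $\boldsymbol{0}$) contains $\sigma$; since $T$ refines $T^*$, $\sigma$ lives in a single $\sigma_{\bs}$, so $S(\sigma)$ selects, for each coordinate $j \in [n]$, a subset of $\mathbb{Z}_p \setminus \{s_j\}$. Call $\sigma$ \emph{happy} if $S(\sigma) \subseteq \lambda(\sigma)$ in the appropriate sense (every generator index of $S(\sigma)$ is realized by a label of a vertex of $\sigma$), and declare a simplex \emph{relevant} if it is happy and either lies in the interior of $P_2$, or lies on $\partial P_2$ inside the ``canonical'' chamber $\sigma_{\bs_0}$ with $\bs_0 = (p,p,\dots,p)$ (so that the $p$ symmetric boundary copies are identified into one super-node, exactly as in the 2D proof).

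Next I would specify the edge rule: connect relevant $\sigma$ and its facet $\sigma'$ (or, when $\sigma' \in \partial P_2$, a suitable rotate $\theta^{(t)}\sigma'$) whenever $\sigma'$ is happy and its labels already suffice to make $\sigma$ happy, i.e. $S(\sigma) \subseteq \lambda(\sigma')$. The direction of each such edge is the crucial ingredient: since $\sigma$ lies in a cone spanned by the $\be^{i,j}$ appearing in $S(\sigma)$ plus possibly $\boldsymbol{0}$, each vertex of $\sigma$ has barycentric-type coordinates in that basis, and I orient the facet-edge by the sign of the determinant of the matrix of coordinate-differences (vertex-not-in-facet minus each vertex-in-facet), precisely generalizing the matrix $M$ used in the 2D proof. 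The content of the argument is then a sequence of local case analyses, one per ``shape'' of a relevant simplex: (i) $\{\boldsymbol{0}\}$ has exactly one relevant neighbor, the edge toward the generator $\be^{\lambda(\boldsymbol{0})}$, giving out-degree~$1$; (ii) every interior relevant simplex that is \emph{not} a solution simplex has exactly two relevant neighbors with oppositely-signed orientation determinants, hence is balanced (this is where the determinant identities do the work, exactly the ``enter one side, exit the other'' picture); and (iii) a relevant simplex on the boundary that is not a solution has exactly $p$ neighbors, all with the same orientation sign (because the $p$ symmetric copies are identified and $\theta$ acts by a rotation that preserves orientation), hence is balanced mod~$p$. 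A simplex fails to be balanced only when it is a $(p-1)$-dimensional happy simplex whose generator set $S(\sigma)$ is all of $\mathbb{Z}_p \setminus \{s_j\}$ for a single $j$ and whose labels therefore comprise $\mathbb{Z}_p \times \{j\}$ — i.e.\ a solution.

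The main obstacle, and where I expect to spend the most care, is step~(iii) together with the bookkeeping of which boundary copy is ``canonical'': one must check that identifying the $p$ rotated copies of a boundary facet really does produce $p$ edges all of the \emph{same} orientation (so the super-node is balanced mod~$p$, not merely balanced over $\mathbb{Z}$), using that $\theta$ restricted to $\partial P_2$ is a free, orientation-preserving action of order $p$ and that $\lambda$ increments its $\mathbb{Z}_p$-component by $1$ under $\theta$ while fixing the $[n]$-component. A secondary subtlety is verifying that the matrices whose determinants define the edge directions are genuinely nonsingular (true because $\sigma$ is a simplex and the $\be^{i,j}$ for $i \neq s_j$ are affinely independent together with $\boldsymbol{0}$) and that the determinant-sign assignment is well-defined (independent of the labeling/ordering of the facet's vertices), which follows from the standard antisymmetry of the determinant under simultaneous row/column swaps, exactly as in the 2D case. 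Once these local facts are in place, \cref{lem:kPolygonTucker:proof:trivialSimplex}-style reasoning plus the guaranteed-imbalanced node $\{\boldsymbol{0}\}$ forces another imbalanced node, which by the case analysis is a solution simplex, completing the proof; the very same construction, made effective with circuits as in \cref{lem:kpolygonTuckerInPPAK}, will later give the $\ppak{p}$-membership statement.
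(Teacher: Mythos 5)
Your proposal takes a genuinely different route from the paper's. The paper does not prove the \tucker\ Lemma by a direct Freund--Todd-style argument on $P_2$; instead it reduces the associated computational problem $p$-\tucker\ to \pstartucker{p} on the star-shaped domain $R_{p,m}^d$ (\cref{lem:bss-to-pstar}) and proves \pstartucker{p} combinatorially (\cref{thm:ptucker-in-ppa-p}). That latter proof is the one that generalizes \citet{FT81}, and it crucially needs \emph{weighted} edges: in $R_{p,m}^d$ each label $*^i j$ is associated with $p-1$ distinct half-axes, so a super-happy simplex can extend into up to $p-1$ different sub-orthants, and each edge must carry weight $\prod_j r_j(\sigma)!$ for the modulo-$p$ count to close. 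Your proposed direct argument on $P_2$ sidesteps this difficulty: the generators $\be^{i,j}$ of $T^*$ are in bijection with the labels $(i,j) \in \mathbb{Z}_p \times [n]$, so a super-happy simplex has a \emph{unique} higher-dimensional happy cofacet (or none, in which case it is already a solution), no edge weights are needed, and the argument stays closer to the 2D proof of \cref{thm:polygonTucker}. If carried out correctly this would be a genuinely simpler, self-contained proof of this particular theorem.

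There is, however, a concrete gap in your boundary handling. You declare a boundary simplex relevant only if it lies in a single ``canonical'' chamber $\sigma_{\bs_0}$ with $\bs_0 = (p,\dots,p)$, ``so that the $p$ symmetric boundary copies are identified into one super-node, exactly as in the 2D proof.'' That works for \polygonTucker{k} because $W_k$ has exactly $k$ chambers which $\theta_k$ permutes cyclically, so one chamber is a fundamental domain on the boundary. Here $P_2$ has $p^n$ chambers $\sigma_{\bs}$, $\bs\in [p]^n$, and $\theta$ acts by $\theta\sigma_{\bs} = \sigma_{\bs + (1,\dots,1)}$, so they fall into $p^{n-1}$ orbits of size $p$; for $n\geq 2$ a single chamber $\sigma_{\bs_0}$ is nowhere near a fundamental domain of $\theta$ on $\partial P_2$, and restricting relevance to it discards almost all boundary-happy simplices and destroys the degree count. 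The fix is the one the paper uses for \pstartucker{p}: do not single out a canonical region, but merge each boundary-happy facet $\tau$ with its rotates $\theta\tau,\dots,\theta^{p-1}\tau$ into an equivalence class $[\tau]$ and prove that all $p$ edges of $[\tau]$ have the same direction. Two smaller points: the definition of a happy simplex must also require $\dim O(\sigma) = \dim\sigma$ (in 2D this is implicit in the case split by dimension $0,1,2$, but for arbitrary dimension it is needed); and the justification ``$\theta$ restricted to $\partial P_2$ is orientation-preserving'' is not quite the right invariant --- what is actually needed is the local $\mathrm{sgn}(\pi)^2 = 1$ computation of the kind the paper carries out in the \pstartucker{p} proof, since global orientation-preservation of $\theta$ on $\partial P_2$ fails for $p=2$ and odd $n$ even though the degree argument must still go through there.
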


\begin{lemma}
\label{lm:BBSimpliestucker}
    The BSS Theorem (\cref{thm:BSSalt}) implies
  the \tucker~Lemma (\cref{thm:bss-tucker}).
\end{lemma}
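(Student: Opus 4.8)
The plan is to mimic the classical deduction of Tucker's Lemma from Borsuk–Ulam, using the equivalent formulation (1) of the BSS Theorem (\cref{thm:BSSalt}) applied to the map $g : P_2 \to P$. Concretely, I would first fix a nice triangulation $T$ of $P_2$ with a labeling $\lambda = (\lambda_1,\lambda_2) : V(T) \to \mathbb{Z}_p \times [n]$ satisfying the boundary equivariance condition, and assume for contradiction that no $(p-1)$-simplex $\sigma$ of $T$ has $\lambda(\sigma) = \mathbb{Z}_p \times \{j\}$ for any $j$. I would then build a piecewise-linear map $g : P_2 \to P$ by assigning to each vertex $\bx \in V(T)$ the point $g(\bx) = \be^{\lambda_1(\bx),\lambda_2(\bx)} \in P$ and extending affinely over each simplex of $T$. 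The boundary equivariance of $\lambda$ translates directly into $g(\theta\bx) = \theta g(\bx)$ for $\bx \in \partial P_2$, because $\theta$ permutes the vectors $\be^{i,j}$ by $\theta\,\be^{i,j} = \be^{i+1,j}$ (cyclically shifting the block structure), which matches $\lambda(\theta\bx)=(\lambda_1(\bx)+1,\lambda_2(\bx))$. Hence $g$ satisfies the hypotheses of the BSS Theorem.

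Next I would apply formulation (1) of \cref{thm:BSSalt} to conclude there is a point $\bx^\star \in P_2$ with $g(\bx^\star) = \boldsymbol{0}$. Since $g$ is piecewise linear with respect to $T$, the point $\bx^\star$ lies in some simplex $\tau \in T$, so $\boldsymbol{0}$ is a convex combination (with positive coefficients on the vertices whose barycentric coordinate at $\bx^\star$ is nonzero) of the vectors $g(\bv) = \be^{\lambda(\bv)}$ over the vertices $\bv$ of that face. The crux of the argument is then a linear-algebra claim: the vectors in $\{\be^{i,j} : (i,j) \in \mathbb{Z}_p \times [n]\}$ have $\boldsymbol{0}$ in their convex hull only through convex combinations that, for at least one coordinate $j$, use all $p$ vectors $\be^{1,j},\dots,\be^{p,j}$ — in fact the only "minimal" way to write $\boldsymbol{0}$ as such a combination is $\frac{1}{p}\sum_{i=1}^p \be^{i,j} = \boldsymbol{0}$ for a single fixed $j$. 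This is because the $\be^{i,j}$ for fixed $j$ span only the one-dimensional subspace of $P$ corresponding to coordinate $j$ (each $\be^{i,j}$ has its $\pm\be_j$-pattern only in the $j$-th $\mathbb{R}^n$-slots), and within that line the positive vectors are exactly $\be^{i,j}$, so $\boldsymbol{0}$ can only be obtained by combining enough of them to cancel; a careful count shows all $p$ of them must appear. Therefore the set of labels appearing at $\bv$ with positive barycentric weight must contain $\mathbb{Z}_p \times \{j\}$ for some $j$, giving a face of $\tau$ — hence a simplex of $T$, and in particular a $(p-1)$-simplex contained in it — whose label set is $\mathbb{Z}_p \times \{j\}$, contradicting our assumption.

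I expect the main obstacle to be the linear-algebra step: making precise that the only barycentric combinations of the $\be^{i,j}$ summing to zero are those using a full "fiber" $\{\be^{i,j} : i \in \mathbb{Z}_p\}$ for some $j$. One must rule out, for instance, cancellations that mix several values of $j$ partially. The clean way is to observe that $P = \bigoplus_{j=1}^n P^{(j)}$ where $P^{(j)} = \{(\bv_1,\dots,\bv_p) : \sum_i \bv_i = 0, \ \bv_i \in \mathrm{span}(\be_j)\} \cong \{c \in \mathbb{R}^p : \sum c_i = 0\}$, that $\be^{i,j} \in P^{(j)}$, and that projecting any convex combination $\sum_{v} \mu_v \be^{\lambda(v)} = 0$ onto each $P^{(j)}$ forces the sub-combination using labels in $\mathbb{Z}_p \times \{j\}$ to vanish on its own; within the single block $P^{(j)}$, the vectors $\be^{i,j}$ are (up to a positive scalar) $\be_i - \tfrac1p\mathbf{1}$ and one checks directly that a convex combination of a subset of them is $\boldsymbol 0$ iff the subset is everything and the weights are uniform. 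Once this is in place, the rest of the proof is a routine unwinding of the PL structure; the only bookkeeping point is to pass from "the positively-weighted vertices of $\tau$ carry all labels of some fiber" to "there is a $(p-1)$-simplex of $T$ with exactly that label set", which follows since those vertices span a face of $\tau$ of dimension at least $p-1$ (as the $p$ vectors $\be^{1,j},\dots,\be^{p,j}$ together with $\boldsymbol 0$ are affinely independent in $P$) and we can take any $(p-1)$-dimensional subface realizing all $p$ labels.
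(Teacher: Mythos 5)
Your proposal is correct and follows essentially the same route as the paper's proof: interpret each label $(i,j)$ as the vector $\be^{i,j}$, extend piecewise linearly over the triangulation to obtain an equivariant $g: P_2 \to P$, apply formulation (1) of \cref{thm:BSSalt} to obtain a zero of $g$, and invoke the linear-algebra fact that a nonnegative combination of the $\be^{i,j}$ summing to $\boldsymbol{0}$ must use a full fiber $\{\be^{i,j} : i \in \mathbb{Z}_p\}$ for some $j$. You elaborate the linear-algebra step (the block decomposition $P = \bigoplus_j P^{(j)}$) in more detail than the paper, which glosses it in one sentence; the only blemish is the phrase ``one-dimensional subspace of $P$ corresponding to coordinate $j$'' in your informal aside -- each block $P^{(j)}$ has dimension $p-1$, as your own formalization later makes clear.
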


\begin{proof}
We interpret each label $(i,j) \in \mathbb{Z}_p \times [n]$ as the vector $\be^{i,j}$ and we set $g$ to be the extension of $\lambda$ to a piecewise linear function on $P_2$. Notice that $g(\theta \bx) = \theta g(\bx)$ for $\bx \in \partial P_2$.  Then, it follows from \cref{thm:BSSalt} that there exists an $\bx \in P_2$ such that $g(\bx) = 0$. The lemma follows by noting that any convex combination of different vectors $\be^{i,j}$ that equals $\boldsymbol{0}$ must contain $p$ vectors $\be^{i_1,j_1}, \dots, \be^{i_p,j_p}$ such that $\{i_k, j_k\}_{k \in [p]} = \mathbb{Z}_p \times \{j\}$. Hence, the point $\bx$ lies in a simplex with a $(p-1)$-dimensional face $\sigma$ such that $\lambda(\sigma) = \mathbb{Z}_p \times \{j\}$ for some $j \in [n]$. 
\end{proof}

\begin{lemma}
\label{lm:tuckerImpliesBSS}
    The \tucker~Lemma (\cref{thm:bss-tucker}) implies
  the BSS Theorem (\cref{thm:BSSalt}).
\end{lemma}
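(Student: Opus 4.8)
The plan is to mirror the structure of the reverse direction (\cref{lm:BBSimpliestucker}) but now going from the combinatorial statement to the continuous one, using the standard compactness-and-refinement argument that appears in the Brouwer-from-Sperner and Borsuk-Ulam-from-Tucker proofs. By \cref{thm:BSSalt}, it suffices to prove formulation (1): given a continuous $g : P_2 \to P$ with $g(\theta\bx) = \theta g(\bx)$ for all $\bx \in \partial P_2$, produce $\bx \in P_2$ with $g(\bx) = 0$. Since $P_2$ is compact and $g$ continuous, it is enough to show that for every $\varepsilon > 0$ there is a point $\bx_\varepsilon \in P_2$ with $\norm{g(\bx_\varepsilon)} \leq \varepsilon$; then a convergent subsequence of $(\bx_{1/m})_m$ has a limit $\bx^\star$ with $g(\bx^\star) = 0$ by continuity.

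First I would fix $\varepsilon > 0$. By uniform continuity of $g$ on the compact set $P_2$, choose $\delta > 0$ so that $\norm{\bx - \bx'}_2 < \delta$ implies $\norm{g(\bx) - g(\bx')} < \varepsilon'$ for a suitably small $\varepsilon'$ (to be fixed in terms of $\varepsilon$, $p$, $n$, and the geometry of the vectors $\be^{i,j}$). Take a nice triangulation $T$ of $P_2$ all of whose simplices have diameter at most $\delta$; such a $T$ exists by iterated barycentric-type subdivision of $T^*$ keeping the boundary $\theta$-symmetric. Now define the labeling $\lambda = (\lambda_1,\lambda_2) : V(T) \to \mathbb{Z}_p \times [n]$ by letting $\lambda(\bx)$ be the index $(i,j)$ maximizing the inner product $\langle g(\bx), \be^{i,j} \rangle$ (breaking ties in a fixed, $\theta$-equivariant way, e.g.\ lexicographically on $(i,j)$ after rotating, exactly as in the proof of \cref{lem:polygonTuckerImpliesPolygonBorsukUlam}); one must check that this rule satisfies $\lambda(\theta\bx) = (\lambda_1(\bx)+1, \lambda_2(\bx))$ on the boundary, which follows because $\theta$ permutes the vectors $\be^{i,j} \mapsto \be^{i+1,j}$ and $g$ is $\theta$-equivariant there. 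By \cref{thm:bss-tucker} there is a $(p-1)$-simplex $\sigma$ of $T$ with vertices $\bx_1,\dots,\bx_p$ whose labels are exactly $(1,j),\dots,(p,j)$ for some fixed $j \in [n]$.

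The final step is the linear-algebra estimate showing such a $\sigma$ forces $g$ to be small somewhere on it. The key geometric fact is: the vectors $\be^{1,j},\dots,\be^{p,j}$ are the vertices of a regular $(p-1)$-simplex centered at the origin in the $j$-th coordinate block, so if a vector $\bw \in \mathbb{R}^n$ (the $j$-th block is all that matters) has $\norm{\bw}$ bounded away from $0$, it cannot be simultaneously ``closest in the $\argmax$ sense'' to all of $\be^{1,j},\dots,\be^{p,j}$ — indeed, the condition $\langle \bw, \be^{i,j}\rangle = \max_{i'} \langle \bw, \be^{i',j}\rangle$ for every $i$ would force $\langle \bw, \be^{i,j}\rangle$ equal for all $i$, and since $\sum_i \be^{i,j} = 0$ this common value is $0$, i.e.\ $\bw \perp \be^{i,j}$ for all $i$, and (since the $\be^{i,j}$ span the $j$-block) $\bw$ has zero $j$-block. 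Quantitatively, if $\norm{g(\bx_k)} \geq \varepsilon$ for all $k$, then two of the $g(\bx_k)$ must differ by at least some explicit constant times $\varepsilon$ (depending only on $p,n$), because their $\argmax$-directions point to ``opposite'' vertices of the regular simplex; but the diameter bound on $\sigma$ and uniform continuity give $\norm{g(\bx_k) - g(\bx_\ell)} < \varepsilon'$, a contradiction once $\varepsilon'$ is chosen small enough. Hence some $\norm{g(\bx_k)} \leq \varepsilon$, giving $\bx_\varepsilon := \bx_k$.

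The main obstacle I expect is making the $\theta$-equivariant tie-breaking in the labeling genuinely consistent on the boundary while also ensuring no ``spurious'' boundary violations arise — this is exactly the delicate point that the authors handle carefully in \cref{lem:kPolygonBorsukUlamInPPAk}; here, since we only need the existence statement (not a polynomial-time circuit), we can afford a cleaner argument, but one still has to verify that the boundary condition $\lambda(\theta\bx) = (\lambda_1(\bx)+1,\lambda_2(\bx))$ holds exactly, not just approximately. The second mildly technical point is pinning down the explicit geometric constant in the regular-simplex estimate so that a valid choice of $\varepsilon'$ (hence $\delta$, hence $T$) exists; this is a routine computation with the Gram matrix of the $\be^{i,j}$ and I would not carry it out in detail, merely record that the constant is positive and depends only on $p$.
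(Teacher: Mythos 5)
Your proposal is correct and follows essentially the same route as the paper's proof: compactness plus uniform continuity reduces to finding $\varepsilon$-approximate zeros, a labeling is defined from the largest coordinate-direction of $g$, and a fully-labeled $(p-1)$-simplex forces a contradiction with $\sum_i [g(\bx)]_{i,j}=0$. Note that your inner-product labeling $\argmax_{(i,j)}\langle g(\bx),\be^{i,j}\rangle$ is literally proportional to the paper's coordinate-based rule (for $g(\bx)\in P$ one computes $\langle g(\bx),\be^{i,j}\rangle = \tfrac{p}{2(p-1)}[g(\bx)]_{i,j}$), so the two labelings coincide up to tie-breaking, and for the latter the paper uses the explicit $\mathbb{Z}_p$-equivariant tie-breaker $T_p$ of \cref{def:tie-breaking} (together with a separate rule for the case $g(\bx)=\boldsymbol{0}$, which you would also need to address), rather than the vaguer ``lexicographic after rotating'' you sketch.
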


\begin{proof}

We show that \tucker~implies Statement (1) of \cref{thm:BSSalt}.
Using standard arguments, it suffices to show that for every $\varepsilon > 0$ we can find a point $\bx$ such that $\norm{g(\bx)}_{\infty} \leq \varepsilon$. 
Since $g:P_2 \rightarrow P$ is a continuous function in a compact set, it is also uniformly continuous. Thus, for every $\varepsilon > 0$, there exists $\delta$ such that if $\norm{\bx - \bx'}_2 < \delta$, then $\norm{g(\bx) - g(\bx')}_{\infty} < \varepsilon/n$. 
Assume that $T$ is a nice triangulation of $P_2$ with diameter at most $\delta$.

For any point $\bx \in V(T)$, the label $\lambda(\bx)=(i^*,j^*)$ is defined as follows. For $(i,j) \in \bbZ \times [n]$, let $[g(\bx)]_{i,j}$ denote the $(i,j)$-coordinate of $g(\bx) \in P$. First, consider the case where $g(\bx) \neq \boldsymbol{0}$. Then, pick $j^* = \argmax_{j \in [n]} \max_{i \in [p]} [g(\bx)]_{i,j}$. Break ties by picking the smallest such $j$. Let $S = \{i : [g(\bx)]_{i,j^*} = \max_\ell [g(\bx)]_{\ell,j^*}\}$ and pick $i^* = T_p(S)$, where $T_p$ is the $\bbZ_p$-equivariant tie-breaking function of \cref{def:tie-breaking}. Note that $S \notin \{\emptyset, [p]\}$, because $g(\bx) \neq \boldsymbol{0}$ and by the choice of $j^*$. If $g(\bx) = \boldsymbol{0}$, then if $\bx = \boldsymbol{0}$, we assign an arbitrary label, and if $\bx \neq \boldsymbol{0}$, we use the same procedure as above but with $\bx$ instead of $g(\bx)$.

With this definition of the labeling, it is easy to check that for any $\bx \in \partial T$, we always have $\lambda(\theta \bx) = (\lambda_1(\bx)+1, \lambda_2(x))$, since $g(\theta \bx) = \theta g(\bx)$. Hence, by \cref{thm:bss-tucker} there exists a $\sigma = \{\bx_1, \dots, \bx_p\} \in T$ and a $j^* \in [n]$ such that $\lambda(\bx_i) = (i,j^*)$ for all $i \in [p]$.  

The lemma follows by showing that there exists $i \in [p]$ such that $\norm{g(\bx_i)}_{\infty} \leq \varepsilon$. First of all, note that for any $\bx \in P$, by definition we have that $\norm{g(\bx)}_\infty \leq n \cdot \max_{i,j} [g(\bx)]_{i,j}$. Now, assume for the sake of contradiction that $\norm{g(\bx_i)}_\infty > \varepsilon$ for all $i \in [p]$. Then, it follows that $[g(\bx_i)]_{i,j^*} > \varepsilon/n$ for all $i \in [p]$. But by the choice of diameter for the triangulation and uniform continuity of $g$, this implies that $[g(\bx_1)]_{i,j^*} > 0$ for all $i \in [p]$, which contradicts $\bx_1 \in P$.
\end{proof}

\subsection{Computational problems: $p$-\tucker~and $p$-\bss}\label{sec:compBSS}

Motivated by \cref{thm:BSSalt} and \cref{thm:bss-tucker}, we define the computational problems corresponding to the BSS Theorem and the \tucker\ Lemma. Combining the proof ideas in \cref{lm:BBSimpliestucker} and \cref{lm:tuckerImpliesBSS}, together with some efficiency requirements of the triangulation, as per \cref{def:indexValueCircuits}, we show that the two computational problems are polynomially equivalent.
 
As before, we assume that the input functions are represented as arithmetic circuits with operations $\times \zeta$, $+$, $-$, $<$, $\min$, and $\max$ and rational constants. Similarly to our definition of \polygonBorsukUlam{k}, and for the same reasons, we will add a solution type to ensure that it is Lipschitz-continuous.

\medskip

The computational analogue of the BSS theorem is based on statement (2) of \cref{thm:BSSalt}. It takes as input an arithmetic circuit, which evaluates the function $g: B^{n(p-1)} \rightarrow \bbR^{n(p-1)}$. In order to deal with the polynomial-size representation of any of the solutions we allow as a solution any approximate violation of the equivariance condition of the BSS theorem, as we did with \polygonBorsukUlam{p} in Section \ref{sec:2dBSS}.

\medskip

\noindent\fbox{%
\colorbox{gray!10!white}{
    \parbox{\textwidth}{%
\noindent\textbf{\underline{$p$-BSS:}} 

\smallskip

\noindent \textsc{Input:} An integer $n \geq 1$, an accuracy parameter $\varepsilon > 0$, 
a Lipschitz constant $L$, 
and an arithmetic circuit $\mathcal{C}$.
\smallskip

\noindent \textsc{Output:} 
\vspace{-0.07in}
\begin{enumerate}
    \item A point $\bx \in S^{n(p-1)-1}$ such that $\norm{\mathcal{C}(\alpha \bx) - \alpha \mathcal{C}(\bx)} > \eta(\varepsilon, p) := \varepsilon/8 p^4$
    \vspace{-0.07in}
    \item Two points $\bx, \by \in B^{n(p-1)}$ such that $\norm{\mathcal{C}(\bx) - \mathcal{C}(\by) } > L \norm{\bx - \by}$
    \vspace{-0.07in}
    \item A point $\bx^* \in B^{n(p-1)}$ such that $\norm{\mathcal{C}(\bx^*)}_{\infty} \leq \varepsilon$
\end{enumerate}

\noindent A valid output of the $p$-\bss~problem is either a point that violates the boundary condition of \cref{thm:BSSalt}, two points that violate the Lipschitz-continuity or an approximate root of the function $g$.}}}

\medskip

The computational analogue of the \tucker\ Lemma is based on \cref{thm:bss-tucker} and is parameterized by a ``triangulation scheme'' $\mathcal{T}$. Namely, given an $m \in \mathbb{N}$, $\mathcal{T}(m)$ yields a nice triangulation $T$ with diameter at most $1/2^m$. The triangulation $T$ is given through two arithmetic circuits, $\idx$ and $\val$ (see \cref{def:indexValueCircuits}), that have size polynomial in $n$ and $m$. Assume that for $m =0$, $\mathcal{T}$ yields the $\idx^*$ and $\val^*$ circuits of $T^*$.

\medskip

\noindent\fbox{%
\colorbox{gray!10!white}{
    \parbox{\textwidth}{%
\noindent\textbf{\underline{$p$-\tucker[$\mathcal{T}$]:}} 

\smallskip

\noindent \textsc{Input:} An arithmetic  circuit $\lambda$ that outputs a number in $\bbZ_p \times [n]$. 

\smallskip

\noindent \textsc{Output:} 
\vspace{-0.07in}
\begin{enumerate}
    \item A vertex $\bx \in \partial T$ such that $\lambda(\theta \bx) \neq (\lambda_1(\bx)+1, \lambda_2(\bx))$
    \vspace{-0.07in}
    \item A simplex $\sigma^* \in T$ such that $\lambda(\sigma^*) = \bbZ_p \times \{j\}$ for some $j \in [n]$
\end{enumerate}

\noindent The valid outputs of the $p$-\tucker~problem correspond either to points that
violate the boundary condition of \cref{thm:bss-tucker} or to a fully labeled simplex in $T$.}}}

\paragraph{Remark:} To efficiently check whether $\bx \in \partial T$, we use the $\idx^*$ and $\val^*$ circuits of $T^*$, which exist by assumption on $\mathcal{T}$ for $m = 0$. If $\val^*(\idx^*(\bx))$ does not contain $\boldsymbol{0}$, then by definition of $T^*$ and the fact that $T$ is a nice triangulation $\bx \in \partial T$.

\begin{theorem}
\label{thm:BSStuckerBSSequiv}
$p$-\tucker[$\mathcal{T}$]~and $p$-\bss~are polynomially equivalent.
\end{theorem}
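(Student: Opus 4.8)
The plan is to prove the two many-one reductions separately, re-using the constructions that underlie the two mathematical implications \cref{lm:BBSimpliestucker} and \cref{lm:tuckerImpliesBSS}, and carrying out the book-keeping of numerical errors exactly as in the proofs of \cref{lem:kPolygonBorsukUlamInPPAk} and \cref{lem:kPolygonTuckerTokPolygonBorsukUlam}. The roles played there by the rotation $\theta_k$ on $B^2$ and by the polygon vertices $\bu_i$ are now played by the free action $\theta$ on $P_2$ (equivalently $\alpha$ on $B^{n(p-1)}$) and by the vectors $\be^{i,j}$. Since the orthogonal basis $\mathcal{B}$ of $P$ can be chosen with rational entries (Gram--Schmidt of rational vectors is rational), the maps $\phi$, $\phi^{-1}$, and hence $\alpha = \phi \circ \theta \circ \phi^{-1}$, are computed \emph{exactly} by arithmetic circuits with $\times\zeta$ and $+$ gates; this is why $p$-\bss is stated with the exact action $\alpha$, and it makes the error analysis here slightly cleaner than in the two-dimensional case (no approximate $\tilde\theta$ is needed).

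\emph{From $p$-\tucker[$\mathcal{T}$] to $p$-\bss.} Given a labeling circuit $\lambda$ over the vertices of $T=\mathcal{T}(m)$, I would interpret each label $(i,j)$ as the vector $\be^{i,j}$ and let $g:P_2\to P$ be the piecewise-linear extension of $\bx\mapsto \be^{\lambda(\bx)}$, using the $\idx$ and $\val$ circuits of $\mathcal{T}$ (\cref{def:indexValueCircuits}) to locate, for a point, the simplex of $T$ containing it and the barycentric coordinates inside it. Conjugating by $\phi$ gives a map on $B^{n(p-1)}$ which, by standard numerical techniques, can be computed by an arithmetic circuit $\mathcal{C}$ up to any prescribed error $\gamma$, and which is $\tilde L$-Lipschitz for a $\tilde L$ of bit-size polynomial in $m$ and $n$. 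Then three checks finish the direction: (i) since the true $g$ is $\tilde L$-Lipschitz, outputting $L\ge \tilde L$ rules out spurious Lipschitz violations; (ii) by the linear-algebra fact in \cref{lm:BBSimpliestucker} (a convex combination of distinct $\be^{i,j}$ vanishes only if it contains a full color class $\mathbb{Z}_p\times\{j\}$), whenever the simplex of $T$ containing $\bx$ has no full color class among its vertex labels one gets $\|g(\bx)\|_\infty \ge c$ for an absolute constant $c=\Omega(1/(np^2))$, so picking $\gamma<c/4$ and $\varepsilon<c/2$ forces every type-$3$ solution of $p$-\bss into a simplex of $T$ whose labels form a full color class, i.e.\ a $p$-\tucker solution; (iii) unless the simplex of $T$ containing a boundary point $\bx$ already witnesses a boundary violation of $\lambda$, the true $g$ satisfies $g(\theta\bx)=\theta g(\bx)$, so a type-$1$ solution $\|\mathcal{C}(\alpha\bx)-\alpha\mathcal{C}(\bx)\|>\eta(\varepsilon,p)$ can only occur near a genuine $\lambda$-violation, which after accounting for the $\gamma$-errors is handled by choosing parameters as in \cref{lem:kPolygonTuckerTokPolygonBorsukUlam}.

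\emph{From $p$-\bss to $p$-\tucker[$\mathcal{T}$].} Given $\mathcal{C}$ (computing $g:B^{n(p-1)}\to\mathbb{R}^{n(p-1)}$), accuracy $\varepsilon$ and Lipschitz constant $L$, I would pick the triangulation parameter $m$ large enough that $T=\mathcal{T}(m)$ has diameter at most $\delta$, where $\delta$ is small compared with $\varepsilon$, $1/L$, $1/n$ and an inverse polynomial in $p$. Then define a labeling circuit $\lambda$ on $V(T)$ which, on a vertex, computes (via $\val$ and $\phi$) an approximation $\tilde\by$ of the corresponding point of $B^{n(p-1)}$, evaluates $\mathcal{C}(\tilde\by)$ and $\phi^{-1}(\mathcal{C}(\tilde\by))\in P$, and outputs the label $(i^*,j^*)$ by the rule of \cref{lm:tuckerImpliesBSS}: $j^*$ maximizes $\max_i[\,\cdot\,]_{i,j}$ (smallest such $j$), and $i^*$ comes from the set of $i$ attaining $\max_\ell[\,\cdot\,]_{\ell,j^*}$ via the $\mathbb{Z}_p$-equivariant tie-breaking function of \cref{def:tie-breaking} (with the degenerate case $\mathcal{C}(\tilde\by)\approx \boldsymbol 0$ either handled as there or simply left arbitrary, since such a vertex already yields a type-$3$ solution); on the part of $\partial T$ outside a fixed fundamental domain of $\theta$ I would apply the ``enforced-label'' correction of \cref{lem:kPolygonBorsukUlamInPPAk} to avoid bogus equivariance violations. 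This $\lambda$ has a polynomial-size circuit and is equivariant on $\partial T$ wherever $\mathcal{C}$ is, so \cref{thm:bss-tucker} applies and a solution exists. A fully-labeled simplex $\sigma=\{\bx_1,\dots,\bx_p\}$ with $\lambda(\bx_i)=(i,j^*)$ forces $[g(\tilde\by_i)]_{i,j^*}$ to be large for every $i$; the diameter bound and $L$-Lipschitz continuity then, by the argument of \cref{lm:tuckerImpliesBSS}, give either some $i$ with $\|\mathcal{C}(\tilde\by_i)\|_\infty\le\varepsilon$ (type $3$) or a pair violating $L$-Lipschitz continuity (type $2$). A boundary violation $\lambda(\theta\bx)\neq(\lambda_1(\bx)+1,\lambda_2(\bx))$ means one of the two vertices missed its ``intended'' label; pushing the resulting gap back through the at most $p-1$ iterations of $\theta$ on the boundary, exactly as in the ``Boundary violations'' paragraph of \cref{lem:kPolygonBorsukUlamInPPAk}, produces either a point $\bz\in S^{n(p-1)-1}$ with $\|\mathcal{C}(\alpha\bz)-\alpha\mathcal{C}(\bz)\|>\eta(\varepsilon,p)$ (type $1$) or a Lipschitz violation.

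\emph{Main obstacle.} As in the two-dimensional case, the genuinely delicate part is not conceptual but the numerical book-keeping: choosing $\gamma,\delta,\varepsilon,L$ and the intermediate precisions consistently so that no reduction ever outputs a spurious ``solution''—in particular ruling out bogus equivariance violations (via the enforced-label trick) and controlling how the approximation error compounds under the iterated action $\alpha^{\ell}$ on the boundary. Everything else is a routine transcription of the steps already carried out for \polygonTucker{k} and \polygonBorsukUlam{k}, replacing $B^2$, $\theta_k$, $\bu_i$ by $P_2$ (or $B^{n(p-1)}$), $\theta$ (or $\alpha$), and $\be^{i,j}$; no new idea is needed.
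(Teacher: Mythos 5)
Your proposal follows essentially the same route as the paper's proof: both directions translate between labelings and piecewise-linear functions via the constructions underlying \cref{lm:BBSimpliestucker} and \cref{lm:tuckerImpliesBSS}, conjugate through the homeomorphism $\phi$ between $P_2$ and $B^{n(p-1)}$, and redo the error analysis already carried out for \cref{lem:kPolygonBorsukUlamInPPAk,lem:kPolygonTuckerTokPolygonBorsukUlam} with $\theta_k$, $\bu_i$, $B^2$ replaced by $\theta$, $\be^{i,j}$, $P_2$. The paper's own proof is an explicit sketch that defers the numerical case analysis to that two-dimensional template, which is exactly what you do in your ``main obstacle'' paragraph.

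The one place you add a claim the paper does not make is that $\phi$, $\phi^{-1}$, and $\alpha=\phi\circ\theta\circ\phi^{-1}$ are computed \emph{exactly} over the rationals because Gram--Schmidt applied to rational vectors stays rational, and that therefore no approximate $\tilde\alpha$ is needed. Be careful here: Gram--Schmidt gives a rational \emph{orthogonal} basis, but for $\phi$ to carry $P_2$ onto $B^{n(p-1)}$ and to preserve $\ell_2$ distances — a fact the paper uses explicitly in the $p$-\bss~$\le$~$p$-\tucker[$\mathcal{T}$] direction to conclude $\|\mathcal{C}(\phi(\bx))\|_2 = \|g(\bx)\|_2$ — the basis $\mathcal{B}$ must be orthonormal. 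Normalizing a rational orthogonal basis divides by square roots of integers and generically destroys rationality, so the map $\phi$ you describe is either not rational or not an isometry. Either you retain an approximate $\tilde\alpha$ with an error allowance in the boundary-violation solution type (exactly as the paper does with $\tilde\theta_k$ in two dimensions), or you abandon the isometry and carry the bounded distortion $\|\phi\|\cdot\|\phi^{-1}\|$ through the $\ell_2/\ell_\infty$ estimates, or you recast $p$-\bss~over the domain $P_2$ directly (Statement (1) of \cref{thm:BSSalt}) where $\theta$ is a literal coordinate permutation. Any of these patches is routine and leaves the structure of your argument intact, but the ``no $\tilde\alpha$ needed'' simplification is not established by the argument you gave.
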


\begin{proof}
First, we show that $p$-\tucker[$\mathcal{T}$]~reduces to $p$-\bss~and then that $p$-\bss~reduces to $p$-\tucker[$\mathcal{T}$]. For simplicity of the presentation and in order for the main ideas to be clear we present here a proof sketch of these reductions where we assume that $\eta(\varepsilon, p) = 0$ in the $p$-\bss problem. The complete proof then follows using the tedious but straightforward case analysis that we used in the proof of Lemma \ref{lem:kPolygonBorsukUlamInPPAk}. 

\paragraph{$p$-\tucker[$\mathcal{T}$]~$\leq$~$p$-\bss: } Pick $\varepsilon < \frac{1}{(np)^2}$. Define the circuit $\mathcal{C}$ using the procedure described in \cref{lm:BBSimpliestucker} and the homeomorphism $\phi$ of $B^{n(p-1)}$ and $P_2$. Note that $\mathcal{C}$ is $L$-Lipschitz-continuous for some $L = O(2^m)$.
Thus, a solution of this instance of $p$-\bss~is:
\begin{enumerate}
    \item a point $\bx \in S^{n(p-1)-1}$ such that $\mathcal{C}(\alpha \bx) \neq \alpha \mathcal{C}(\bx)$. In this case, let $\sigma$ be the simplex such that $\phi^{-1}(\bx) \in \norm{\sigma}$, then there exists a vertex $\bv \in V(\sigma)$ such that $\lambda(\theta\bv) \neq (\lambda_1(\bv)+1, \lambda_2(\bv))$. Observe that $\sigma = \val(\idx(\phi^{-1}(\bx)))$ and that it has at most $n(p-1)+1$ vertices. Hence, we can efficiently find $\bv$.
    \vspace{-0.07in}
    \item a point $\bx^* \in B^{n(p-1)}$ such that $\norm{\mathcal{C}(\bx^*)}_{\infty} \leq \varepsilon$. In this case, $\phi^{-1}(\bx^*)$ must lie in a simplex with a fully labeled $(p-1)$-dimensional face; this follows from the choice of $\varepsilon$ and the proof ideas of \cref{lm:BBSimpliestucker}. Observe that $\phi^{-1}(\mathcal{C}(\bx^*))$ is the convex combination of at most $n(p-1) + 1$ different $e^{i,j}$'s. Hence, there must be a vector $\be^{i^*,j^*}$ that appears in the convex combination with coefficient at least $\frac{1}{n(p-1)+1}$.
    If there is an $i'$ such that $\be^{i',j^*}$ does not appear in the convex combination, then $\phi^{-1}(\mathcal{C}(\bx^*))$ has at least one coordinate at least as large as $\frac{1}{n(p-1) + 1}$. Hence, $\norm{\phi^{-1}(\mathcal{C}(\bx^*))}_{2} \geq \norm{\phi^{-1}(\mathcal{C}(\bx^*))}_{\infty} \geq \frac{1}{n(p-1) + 1}$, which means that $\norm{\mathcal{C}(\bx^*)}_{2} \geq \frac{1}{n(p-1) + 1}$. This is a contradiction since it implies that  $\norm{\mathcal{C}(\bx^*)}_{\infty} \geq \frac{\norm{\mathcal{C}(\bx^*)}_{2}}{\sqrt{np}} \geq \frac{1}{(np)^2} > \varepsilon$.

    The point $\phi^{-1}(\bx^*)$ lies in the simplex $\sigma = \val(\idx(\phi^{-1}(\bx)))$, which has at most $n(p-1) + 1$ vertices. Hence, finding the $(p-1)$-dimensional fully labeled face can be done efficiently. 
\end{enumerate}

\paragraph{$p$-\bss~$\leq$~$p$-\tucker[$\mathcal{T}$]:} Set $m$ in $\mathcal{T}$ such that $ 1/2^m  \leq \frac{\varepsilon}{n L}$. The labeling $\lambda$ is defined as in \cref{lm:tuckerImpliesBSS} using as function $g: P_2 \rightarrow P$ the function given by $\phi^{-1} \circ \mathcal{C} \circ \phi $, where $\phi$ is the homeomorphism of $B^{n(p-1)}$ and $P_2$; note that all operations can be described with an arithmetic circuit. A solution of this instance of $p$-\tucker[$\mathcal{T}$] is:
\begin{enumerate}
    \item a vertex $\bx \in \partial T$ such that $\lambda(\theta \bx) \neq (\lambda_1(\bx)+1, \lambda_2(\bx))$. In this case, it must hold that $\mathcal{C}(\alpha \circ \phi(\bx)) \neq \alpha \mathcal{C}(\phi(\bx))$. Thus, $\phi(\bx)$ is a solution of $p$-\bss.
    \vspace{-0.07in}
    \item a simplex $\sigma^* \in T$ such that $\lambda(\sigma^*) = \bbZ_p \times \{j\}$ for some $j \in [n]$. In this case, following the proof of \cref{lm:tuckerImpliesBSS}, there exists a vertex $\bx$ in $\sigma^*$ such that $\norm{g(\bx)}_{\infty} \leq \varepsilon$ (or a violation of $L$-Lipschitz-continuity). Then, since $\phi$ as defined in \cref{sec:bss} preserves the $\ell_2$ distances between $P_2$ and $B^{n(p-1)}$, $\norm{\mathcal{C}(\phi(\bx))}_{\infty} \leq \norm{\mathcal{C}(\phi(\bx))}_{2} \leq \varepsilon$. Thus, $\phi(\bx)$ is a solution of $p$-\bss.
\end{enumerate}
\end{proof}

\begin{remark}[Kuhn's triangulation for $p$-\tucker]\label{rem:bss-kuhn}

In order to use Kuhn's triangulation we work on the domain $C_\infty = \{(c^1, \dots, c^p) \in ([0,1]^n)^p | \forall j \in [n], \exists i \in [p] : c^i_j = 0 \}$ instead of $P_2$. These are coordinates with respect to the vectors $e^{i,j}$. Note that $C_\infty \cong P_\infty$, where $P_\infty = \{\sum_{i,j} c^i_j e^{i,j} | (c^1, \dots, c^p) \in C_\infty\}$, and clearly $P_\infty \cong P_2$.

We can triangulate $C_\infty$ by using Kuhn's triangulation (\cref{def:kuhn}) to triangulate each cube $\{(c^1, \dots, c^p) \in C_\infty | \forall j \in [n] : c^{i_j}_j = 0 \}$ for each $(i_1, \dots, i_n) \in [p]^n$. By the properties of Kuhn's triangulation, it immediately follows that this yields a triangulation of $C_\infty$. In particular, the triangulations of two cubes ``match'' on their common subspace. Since we constructed the triangulation separately on each cube, it follows that it refines the triangulation $T^*$. Furthermore, for any simplex $\sigma$ lying on the boundary of $C_\infty$, it follows that $\theta \sigma$ is also a simplex of the triangulation. This is easy to see, because $\theta$ just changes the order of the coordinates, and the Kuhn triangulation is invariant with respect to such transformations by definition. Thus, Kuhn's triangulation is indeed a nice triangulation.
\end{remark}

\subsection{\tucker~is \ppak{p}-complete}\label{sec:tucker-hardness}
\label{sec:tucker-completeness}
Having defined the computational problems corresponding to the BSS Theorem and to \tucker, we are ready to prove \cref{thm:tucker-completeness}. The following theorem follows from \cref{lem:bss-to-pstar}, presented in \cref{sec:pstartucker}.

\begin{theorem}\label{thm:BSS-tucker-in-ppak}
For any prime $p$, $p$-\tucker[$\mathcal{T}]$ is in \ppak{p}. 
\end{theorem}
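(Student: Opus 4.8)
The plan is to obtain the statement as an immediate consequence of two results proved in \cref{sec:pstartucker}: the many-one reduction from $p$-\tucker[$\mathcal{T}$] to \pstartucker{p} given by \cref{lem:bss-to-pstar}, together with the membership \pstartucker{p} $\in$ \ppak{p} (\cref{thm:ptucker-in-ppa-p}). Since \ppak{p} is, by definition, the class of \tfnp\ problems that many-one reduce to \imba{p} (equivalently \bipa{p}), it is closed under composition of many-one reductions; chaining the reduction of \cref{lem:bss-to-pstar} with the one witnessing \cref{thm:ptucker-in-ppa-p} therefore yields a many-one reduction from $p$-\tucker[$\mathcal{T}$] to \imba{p}, which is exactly what is needed. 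Thus the only real content sits in \cref{lem:bss-to-pstar}, proved separately, and in what follows I only sketch the form that reduction should take.

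The reduction of \cref{lem:bss-to-pstar} must transform a nice triangulation of $P_2 \cong B^{n(p-1)}$ carrying a $\mathbb{Z}_p \times [n]$-labeling with the $\theta$-equivariance boundary condition into a triangulation of the star-convex domain $R^{n(p-1)}_p$ carrying a $\mathbb{Z}_p \times [n]$-labeling with the $\theta_p$-equivariance boundary condition, so that any fully $\mathbb{Z}_p\times\{j\}$-labeled $(p-1)$-simplex (or boundary violation) of the \pstartucker{p} instance pulls back efficiently to a solution of the original instance. The natural route is to exhibit an explicit, efficiently-invertible map of $B^{n(p-1)}$ onto a ``convex core'' of $R^{n(p-1)}_p$ that is $\theta_p$-equivariant on the boundary, push the triangulation scheme $\mathcal{T}$ forward through it (updating the $\idx$ and $\val$ circuits so they stay polynomial-size, exactly as in the manipulations of \cref{thm:BSStuckerBSSequiv}), and then extend the labeling over the remaining ``star'' part of $R^{n(p-1)}_p$ without creating new fully-labeled simplices, in the same spirit as the environment/cable bookkeeping that avoids spurious solutions in \cref{prop:k-polygon-hardness}.

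The hard part is entirely inside \cref{lem:bss-to-pstar}: reconciling the geometry of the ball $P_2$ with that of the Meunier-style star-convex set $R^d_p$, which is not homeomorphic to a ball, while keeping the free action of order $p$ consistent across the two boundaries and not introducing phantom solutions. One could instead try to prove $p$-\tucker $\in$ \ppak{p} directly, by running the modulo-$p$ path-following argument of \cref{sec:kPolygonTucker:proof} on the ball itself (with the oriented-simplex directing trick); but the reason the paper detours through \pstartucker{p} is precisely that the star-convex domain is what makes the modulo-$p$ count close up with the weighted directed graph of \cref{thm:ptucker-in-ppa-p}, so a direct attack on $P_2$ would essentially re-derive that construction. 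Hence the cleanest proof here is the short composition above, with the substance deferred to \cref{lem:bss-to-pstar} and \cref{thm:ptucker-in-ppa-p}.
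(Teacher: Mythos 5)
Your top-level decomposition is exactly the paper's: \cref{thm:BSS-tucker-in-ppak} is obtained as an immediate corollary by chaining the many-one reduction from $p$-\tucker\ to \pstartucker{p} that constitutes the proof of \cref{lem:bss-to-pstar} with the \ppak{p}-membership of \pstartucker{p} from \cref{thm:ptucker-in-ppa-p}, using closure of \ppak{p} under composition of many-one reductions. So the short composition argument you open with is the paper's proof, verbatim.

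Where your sketch of \cref{lem:bss-to-pstar} diverges from the paper is the \emph{direction} of the map that drives the reduction. You propose to embed the ball-like domain $C_\infty$ into a ``convex core'' of the star domain $R^{t(p-1)}_p$, push the given labeling forward, and then extend it over the remaining star part while suppressing spurious fully-labeled simplices --- a cable/bookkeeping construction in the style of \cref{prop:k-polygon-hardness}. The paper does the reverse and it is considerably simpler: it defines a many-to-one ``collapse'' map $\Psi : A \to C_\infty$ from the \pstartucker{p} domain $A$ to the $p$-\tucker\ domain $C_\infty$, given coordinatewise by $\psi_j(a^i) = \max_{k \in [p-1]} a^i_{j,k}$, and then sets the new labeling on all of $A$ in one stroke by pull-back, $\lambda' := \lambda \circ \Psi$. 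There is no extension step and no danger of phantom solutions; the entire content is the verification that $\Psi$ is well-defined and $\theta$-equivariant on the boundary, and that $\Psi$ sends Kuhn simplices of $A$ to Kuhn simplices of $C_\infty$, so a fully-labeled $(p-1)$-simplex in $A$ maps to one in $C_\infty$. The ``embed then extend'' route you gesture at is not obviously wrong, but it leaves you with the hard problem of labeling the rest of a domain that is not a ball without creating new solutions; the pull-back along $\Psi$ avoids that problem entirely and is the actual argument in the paper.
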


Next we prove that $p$-\tucker\ is \ppak{p}-hard, through a reduction from \polygonTucker{p}.

\begin{theorem}\label{thm:BSS-tucker-ppak-hard}
For any prime $p \geq 3$, $p$-\tucker~is \ppak{p}-hard, even for fixed dimension $n \geq 1$.
\end{theorem}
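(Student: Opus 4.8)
The plan is a many-one reduction from \polygonTucker{p}, which by \cref{thm:kPolygon:completeness} is \ppak{p}-complete for every prime $p$; this suffices since $p$-\tucker\ lies in \tfnp. We work throughout with the parameter $n=1$, so the $p$-\tucker\ domain is $P_2 = \{(v_1,\dots,v_p)\in\bbR^p : \sum_i v_i = 0,\ \|\cdot\|_2\le 1\}$ with the action $\theta$ cyclically shifting the $p$ coordinates. The starting point is a representation-theoretic decomposition: since $p$ is an odd prime, the $\bbZ_p$-representation $P=\{\sum_i v_i=0\}$ splits orthogonally into $(p-1)/2$ two-dimensional blocks, on the $j$-th of which $\theta$ acts by rotation by $2\pi j/p$. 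Isolating the $j=1$ block yields a $\bbZ_p$-equivariant, efficiently computable homeomorphism $P_2 \cong W_p \times Q$, where $W_p$ is the regular $p$-gon (carrying the rotation $\theta_p$ by $2\pi/p$) and $Q$ is a full-dimensional $\beta$-invariant polytope of dimension $p-3$ carrying the residual action $\beta$ (the direct sum of the blocks $j\ge 2$); since $\gcd(j,p)=1$ for each such block and $p$ is prime, $\beta$ has no nonzero fixed point, i.e.\ it acts freely on $Q\setminus\{\boldsymbol 0\}$. For $p=3$ the factor $Q$ is a point, and the reduction is essentially the identity, as both $W_3$ and $P_2$ are $2$-disks carrying the rotation by $2\pi/3$ on their boundary.

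For $p\ge 5$ the reduction raises the dimension from $2$ to $p-1$ by a $\bbZ_p$-equivariant \emph{suspension} of the two-dimensional instance over $Q$. One takes $Q$ to be, concretely, the convex hull of a single $\beta$-orbit, so that $W_p\times Q$ is an explicit $(p-1)$-dimensional polytope on which the construction stays piecewise-linear and one can fix a compatible nice triangulation scheme $\mathcal{T}$ (a product-type refinement of $T^*$, in the spirit of the Kuhn triangulation used in \cref{rem:bss-kuhn}). Given a \polygonTucker{p} labelling $\mathcal{L}$ of $W_p$, we define the $p$-\tucker\ labelling $\lambda$ on $W_p\times Q$ so that on the central slice $W_p\times\{\boldsymbol 0\}$ it coincides with $\mathcal{L}$, while as the $Q$-coordinate $q$ moves toward $\partial Q$ the labelling is transformed, in a $\beta$-equivariant manner determined only by which facet of $Q$ the ray through $q$ meets, into a fixed ``standard'' $\beta$-equivariant labelling of $\partial Q$ that carries no fully-coloured simplex on its own. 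This is exactly the combinatorial analogue of forming the join of the two-dimensional instance with a fixed equivariant map on the sphere $S^{p-4}$, and it is engineered so that: (i) the $\bbZ_p$ boundary conditions of $p$-\tucker\ hold on $\partial(W_p\times Q)=(\partial W_p\times Q)\cup(W_p\times\partial Q)$; and (ii) every fully-coloured $(p-1)$-simplex of $\lambda$ is supported near the central slice over a region that projects onto a trichromatic triangle or a non-consecutive-label edge of $\mathcal{L}$. Consequently, a solution of the constructed $p$-\tucker[$\mathcal{T}$]\ instance, projected to $W_p$, is a solution of the given \polygonTucker{p} instance.

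Efficiency is then routine: $W_p\times Q$, the triangulation $\mathcal{T}$, and the circuit computing $\lambda$ all have polynomial size, and $\lambda$ is evaluated by a circuit that queries $\mathcal{L}$ only a constant number of times, so this is a polynomial-time many-one reduction. (If one prefers the continuous route, the same suspension reduces \polygonBorsukUlam{p} to $p$-\bss\ by suspending the piecewise-linear circuit $\mathcal{C}$ from \cref{lem:kPolygonTuckerTokPolygonBorsukUlam} over $Q$, with the by-now standard bookkeeping of the accuracy and Lipschitz parameters, as in \cref{lem:kPolygonBorsukUlamInPPAk} and \cref{thm:BSStuckerBSSequiv}; one then invokes the equivalence $p$-\bss\ $\equiv$ $p$-\tucker[$\mathcal{T}$]\ of \cref{thm:BSStuckerBSSequiv}.) The construction uses $n=1$, so the hardness holds for $n=1$; the same suspension idea — now over the remaining $(n-1)(p-1)$ coordinates, on which $\theta$ again acts freely away from the origin — handles any fixed $n\ge 1$ without introducing new solutions. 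Since \polygonTucker{p} is \ppak{p}-complete for prime $p$, this shows $p$-\tucker\ is \ppak{p}-hard, even for fixed $n\ge 1$.

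The step I expect to be the genuine obstacle is making the equivariant suspension of the labelling simultaneously correct and efficient: one must define $\lambda$ on the transition region $W_p\times(Q\setminus\{\boldsymbol 0\})$ precisely enough that the $\bbZ_p$ boundary condition holds exactly, that no accidental fully-coloured $(p-1)$-simplex arises away from the central slice — here the freeness of $\beta$ on $Q\setminus\{\boldsymbol 0\}$ is what prevents the ``standard'' part from closing up on its own — and that the labelling is still computed by a small circuit with only black-box access to $\mathcal{L}$. This is precisely why it is convenient to work with the explicit polytopal model $W_p\times Q$ and a triangulation scheme tailored to it, rather than with the round ball $B^{p-1}$.
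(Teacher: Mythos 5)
Your proposal takes a genuinely different route from the paper's, and the key step remains unresolved in a way that I think is a real gap, not merely an omission of routine detail.

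The paper's reduction from \polygonTucker{p} to $p$-\tucker\ (with $n=1$) is entirely combinatorial and uses Kuhn's triangulation of $C_\infty$. It embeds the \polygonTucker{p} domain as the explicit subcomplex $D = \cup_{i \in \mathbb{Z}_p} D_i$ of $\overline{C}_\infty$ (points with at most two \emph{consecutive} nonzero coordinates), and defines $\lambda'$ to equal the given labeling $\lambda$ on $D$ and the tie-break label $T_p(\{j : c_j = 0\})$ everywhere else. There is no interpolation and no continuous transition region: the extension off $D$ is given by a single closed-form rule depending only on which coordinates vanish. Soundness is then a finite case analysis driven by a monotonicity property of Kuhn's triangulation (the vertices of a simplex are coordinate-wise ordered, so the simplex meets $D$ in an initial segment of vertices), which rules out spurious fully-colored simplices: if a fully-colored $(p-1)$-simplex has $k$ vertices in $D$, the remaining $p-k$ vertices can only carry labels dictated by their zero-coordinate sets, and the paper checks case by case ($k \in \{0,1,2,\geq 3\}$) that this forces the $D$-portion to already be a trichromatic triangle or a non-consecutive edge of $\lambda$, or forces a contradiction. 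Your construction instead works on the round ball via the character decomposition $P \cong \mathbb{R}^2 \oplus \mathbb{R}^{p-3}$, realizes $P_2 \cong W_p \times Q$, puts $\mathcal{L}$ on the central slice, a fixed $\beta$-equivariant labeling on $W_p \times \partial Q$, and an equivariant interpolation in between. The topological intuition — this is the join of the 2D instance with a free $\mathbb{Z}_p$-sphere $S^{p-4}$ — is sound, and matches the classical argument that shows the $k$-polygon statement sits ``inside'' BSS.

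The gap is precisely where you flag it, and I do not think it is a matter of bookkeeping. You assert as engineering goals that (i) the boundary conditions hold and (ii) every fully-colored $(p-1)$-simplex projects to a \polygonTucker{p} solution, but (ii) is exactly the content of the reduction and nothing in your description forces it. Concretely, in the transition region the label at $(w,q)$ is, by your description, some mixture of $\mathcal{L}(w)$ and the standard label determined by the facet of $Q$ toward which $q$ points; a $(p-1)$-simplex there can take several vertices with transition labels and a few with near-slice labels, and there is no argument ruling out that the $p$ labels assemble without the near-slice vertices forming a \polygonTucker{p} solution — the near-slice labels need not even agree with $\mathcal{L}$ until $q$ is exactly $\boldsymbol 0$. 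A second, more minor issue: your product triangulation of $W_p \times Q$ is not obviously a refinement of $T^*$ under the homeomorphism $\phi$, which the definition of $p$-\tucker[$\mathcal{T}$] requires; you would either need to build the triangulation so that this holds, or reprove the nice-triangulation facts for your $\mathcal{T}$ from scratch. The paper sidesteps both difficulties by never leaving the discrete Kuhn world: the ``environment'' labeling $T_p(\{j : c_j = 0\})$ plays the role of your standard labeling, but because it is a sharp rule with no interpolation, the case analysis is finite and checkable. If you want to pursue the suspension picture, the honest comparison is that the paper's off-$D$ rule \emph{is} the standard boundary labeling you are looking for, already extended discretely to the full domain, and the monotonicity of Kuhn's triangulation \emph{is} the tool that replaces the freeness-of-$\beta$ appeal; what you would need to add to your write-up is a concrete discrete analogue of your interpolation together with a case analysis proving (ii), and at that point you would essentially have rederived the paper's argument in coordinates.
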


Note that for $p=2$, $p$-\tucker~corresponds to the standard version of Tucker's lemma, which is known to be \ppa-hard for any $n \geq 2$ \citep{ABB15-2DTucker}.

\begin{proof}
Here we prove that $p$-\tucker~is \ppak{p}-hard for $n=1$. The hardness for any $n \geq 1$ then follows from \cref{lm:n1sufficies}, which gives a reduction from $n$ to $n+1$.

To show the hardness for $n=1$ we will reduce from \polygonTucker{p} to $p$-\tucker. Instead of the triangulation we used in the presentation of \polygonTucker{p}, we will use Kuhn's triangulation (\cref{def:kuhn}). It can be shown that the hardness of \polygonTucker{p} proved in \cref{prop:k-polygon-hardness}, also holds if we use Kuhn's triangulation, since there is a simple homeomorphism between the two domains. We omit the details for this.

Let $\lambda$ be an instance of \polygonTucker{p} with Kuhn's triangulation of size $m$. Thus, the domain for this problem can be written as
$$A = \{(c_1, \dots, c_p) \in (U_m)^p | \exists i \in [p], \forall j \notin \{i, i+1\} : c_j = 0\}.$$

We construct an instance of $p$-\tucker\ with $n=1$ on the domain $C_\infty$ with Kuhn's triangulation of size $m$. Recall that the set of vertices can be written as $\overline{C}_\infty := \{(c_1, \dots, c_p) \in U_m^p | \exists i \in \mathbb{Z}_p : c_i = 0\}$. Let $D := \cup_{i \in \mathbb{Z}_p} D_i$, where $D_i := \{(c_1, \dots, c_p) \in \overline{C}_\infty | \forall j \notin \{i, i+1\} : c_j = 0\}$.

We are going to embed the \polygonTucker{p} domain $A$ into $D$ in the most natural way. Since we are using Kuhn's triangulation, the restriction of the triangulation of $\overline{C}_\infty$ to $D$ corresponds to Kuhn's triangulation on that domain, and thus to Kuhn's triangulation of $A$. We define $\lambda' : \overline{C}_\infty \to \mathbb{Z}_p$:
\begin{equation*}
\lambda'(c_1, \dots, c_p) = \left\{ \begin{tabular}{ll}
    $\lambda(c_1,\dots,c_p)$ & if $(c_1,\dots,c_p) \in D$ \\
    $T_p(\{j : c_j = 0\})$ & otherwise
\end{tabular} \right.
\end{equation*}
where $T_p$ is the $\mathbb{Z}_p$-equivariant tie-breaking function defined in \cref{def:tie-breaking}. Note that $\lambda'$ is well-defined, because if $(c_1,\dots,c_p) \in D$, then it corresponds to a point in the \polygonTucker{p} domain. Furthermore, if $(c_1,\dots,c_p) \notin D$, then $|\{j : c_j = 0\}| \in [1,p-1]$, so is a valid input to $T_p$.

Since $\theta D = D$, $\lambda(\theta c) = \lambda(c) + 1$ and $T_p(\{j: (\theta c)_j = 0\}) = T_p(\{j : c_j = 0\}) + 1$, it follows that $\lambda'$ satisfies the boundary conditions.

Let $c^1, c^2, \dots, c^p$ be a $(p-1)$-simplex of $\overline{C}_\infty$ that carries all the labels in $\mathbb{Z}_p$ (with respect to $\lambda'$). We now show how this yields a solution to the \polygonTucker{p} instance. Without loss of generality, we can assume that $c^1, c^2, \dots, c^p$ are ordered in the order in which the simplex is defined by Kuhn's triangulation. Namely, for every $i \in \{1, \dots, p-1\}$, there exists $j_i$ such that $c^{i+1}_{j_i} = c^i_{j_i} + 1/m$ and $c^{i+1}_{j} = c^i_{j}$ for all $j \neq j_i$. Furthermore, the $j_i$ are all distinct. Note that if for some $i^*$, $c^{i^*} \notin D$, then $c^i \notin D$ for all $i \geq i^*$. The following cases can occur:
\begin{itemize}
    \item the simplex does not intersect $D$: it follows that $c^1 \neq 0 \in D$, and thus there exists $j$ such that $c^1_j > 0$. But then $c^i_j > 0$ for all $i$ and the label $j$ cannot be obtained by any vertex of this simplex.
    \item the intersection of the simplex with $D$ is a face of dimension $2$: then the three vertices of this face have pairwise distinct labels, and thus yield a solution to \polygonTucker{p}.
    \item the intersection of the simplex with $D$ is a face of dimension $1$: then it must hold that $c^1,c^2 \in D$ and $c^3, \dots, c^p \notin D$. We distinguish between the two sub-cases:
    \begin{itemize}
        \item $c^2_{j_1} > 0$ and $c^2_j = 0$ for all $j \neq j_1$. Then, since $c^3 \notin D$, it follows that $j_2 \notin \{j_1-1,j_1,j_1+1\}$. By definition of the $j_i$, we have that $c^3_{j_1} > 0$ and $c^3_{j_2} > 0$, which implies that $c^3, \dots, c^p$ can only have labels in $\mathbb{Z}_p \setminus \{j_1,j_2\}$. As a result, $c^1$ and $c^2$ must have labels $j_1$ and $j_2$ (in any order). This yields a solution to \polygonTucker{p}, because the labels are distinct and non-consecutive.
        \item there exists $j^*$ such that $c^2_{j^*} > 0$, $c^2_{j^*+1} > 0$ and $c^2_j = 0$ for all $j \notin \{j^*,j^*+1\}$. Since $c^3 \notin D$, it must hold that $j_2 \notin \{j^*,j^*+1\}$. Thus we have $c^3_{j_2} > 0$, and the vertices $c_3, \dots, c_p$ can only obtain the labels $\mathbb{Z}_p \setminus \{j^*,j^*+1, j_2\}$. It follows that the simplex cannot possibly be fully labeled.
    \end{itemize}
    \item the intersection of the simplex with $D$ is a face of dimension $0$: then $c^2 \notin D$, and thus there exist distinct $j,j'$ (and non-consecutive, but we don't need this here) such that $c^2_j > 0$ and $c^2_{j'} > 0$. But then, the vertices $c_2, \dots, c_p$ can only obtain the labels $\mathbb{Z}_p \setminus \{j,j'\}$. It follows that the simplex cannot possibly be fully labeled.
\end{itemize}
This completes the proof.
\end{proof}

\section{The \pstartucker{p} Lemma: Statement and \ppak{p}-completeness} \label{sec:pstartucker}

In this section, we introduce a $\mathbb{Z}_p$-generalization of Tucker's Lemma. We further define the associated computational problem \pstartucker{p}, and show that it is \ppak{p}-complete. In the next section, we will use this problem to prove the membership of $p$-thief Necklace Splitting in \ppak{p}.

In \pstartucker{p}, the coordinates of the vertices lie on a star-like domain. This domain was used by \citet{Meunier2014simplotopal} for the first fully combinatorial proof of necklace splitting with $p$ thieves.

For any prime $p$ and any $m \geq 1$, we define $R_{p,m} = \{0\} \cup \{*^i j : i \in \mathbb{Z}_p, j \in [m]\}$, where $[m] = \{1,2,\dots, m\}$. For ease of notation, we also let $*^i 0 = 0$ for all $i \in \mathbb{Z}_p$. The symbols $*^1, \dots, *^p$ should be interpreted as $p$ different ``signs'' that will generalize the use of ``$+$'' and ``$-$'' in Tucker's Lemma. The way to picture $R_{p,m}$ is as follows: the point $0$ lies at the center and there are $p$ segments of length $m$ leaving from $0$ in $p$ different directions. In that sense, we also call $R_{p,m}$ a $p$-star. The boundary of the $p$-star is the set of points $*^1 m, \dots, *^p m$. 

The $\mathbb{Z}_p$-action $\theta$ is defined on $R_{p,m}$ in the natural way, i.e., $\theta(*^ij) = *^{i+1} j$ (recall that $i \in \mathbb{Z}_p$). In particular, $\theta(0)=0$. For any $d \geq 1$, $\theta$ can be extended to $R_{p,m}^d := (R_{p,m})^d$ by simply applying $\theta$ separately to each coordinate. Note that $\theta$ is a free action when restricted to the boundary of $R_{p,m}^d$ (i.e., the points that have at least one coordinate of the form $*^{\cdot} m$). See \cref{fig:zpstartucker}.

There is a very natural metric on $R_{p,m}$. $\textup{dist}(*^{i_1}j_1,*^{i_2}j_2)$ is defined to be $|j_1-j_2|$ if $i_1=i_2$, and $j_1+j_2$ otherwise. We let $\textup{dist}_\infty(\cdot, \cdot)$ denote the generalization to $R_{p,m}^d$ (where we take the maximum). Finally, we triangulate the domain $R_{p,m}^d$ by using Kuhn's triangulation on every subcube (see \cref{rem:zp-star-kuhn} below for details). We can now state a Tucker's lemma for this domain.

\begin{theorem}[$\mathbb{Z}_p$-star Tucker's Lemma]\label{thm:ptucker}
Let $p$ be prime and $m,t \geq 1$, $d=t(p-1)$, and $T$ be Kuhn's triangulation of $R_{p,m}^d$. Let $\lambda: R_{p,m}^d \to R_{p,t} \setminus \{0\}$\footnote{Notice that the set $R_{p, t} \setminus \{0\}$ is isomorphic to the set $[p] \times [t]$.} be any labeling that satisfies $\lambda(\theta x) = \theta \lambda(x)$ for all $x \in \partial R_{p,m}^d$. Then there exists a $(p-1)$-simplex $x_1, \dots, x_p$ of $T$ and $j \in [t]$ such that $\lambda(x_i) = *^i j$ for all $i \in [p]$.
\end{theorem}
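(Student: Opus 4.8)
The plan is to adapt the path-following proof of Tucker's lemma of \citet{FT81} to the $\mathbb{Z}_p$-equivariant setting, exactly as in the combinatorial proof of $k$-Polygon Tucker's Lemma in \cref{sec:kPolygonTucker:proof}, but now in arbitrary dimension $d=t(p-1)$ and on the star-shaped domain $R_{p,m}^d$ rather than in the plane. I would build a weighted directed graph $G$ whose vertices are the simplices of Kuhn's triangulation $T$, with the $\theta$-orbit of every simplex lying on $\partial R_{p,m}^d$ collapsed into a single ``super-vertex'' (so that the modulo-$p$ counting argument is not broken at the boundary, just as in \cref{sec:kPolygonTucker:proof}). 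After adding directed edges by a ``door'' rule, I would show that (i) the vertex $\{\boldsymbol{0}\}$ at the centre of the star has weighted imbalance $\equiv 1 \pmod p$, and (ii) every other vertex is either balanced modulo $p$ or is a $(p-1)$-simplex carrying labels $\{*^1 j,\dots,*^p j\}$ for some $j \in [t]$. A modulo-$p$ counting argument then forces a vertex of the second type, which is exactly the conclusion of the theorem. Phrasing the whole construction as a reduction to \imbal{p}{1} (legitimate because weighted \imba{p} is equivalent to \imba{p}, as noted in \cref{sec:def-ppa-k}) simultaneously gives \cref{thm:ptucker-in-ppa-p}, i.e.\ \pstartucker{p} $\in$ \ppak{p}; this is why I work with a \emph{weighted} graph.

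To define the door rule I would, for each simplex $\sigma$ and coordinate $\ell \in [d]$, record the branch $*^i$ of $R_{p,m}$ in which $\sigma$ lies in coordinate $\ell$ --- this is well defined and constant on $\sigma$ whenever coordinate $\ell$ is not identically $0$ on $\sigma$, precisely because Kuhn's triangulation keeps every simplex inside a single subcube of $R_{p,m}^d$ --- together with the set of coordinates that are ``active'' on $\sigma$; call the resulting data $S(\sigma)$, generalising the cone $S(\sigma)$ of the polygon proof. Then $\sigma$ is \emph{happy} when its label set $\lambda(\sigma)$ accounts for the branches forced by $S(\sigma)$ (the correct formulation of this compatibility, using the identification $R_{p,t}\setminus\{0\} \cong [p]\times[t]$ and the grouping of the $d=t(p-1)$ coordinates into $t$ blocks, is where the star structure genuinely enters), and a directed edge is placed between a happy simplex $\sigma$ and a happy face $\sigma' \subset \sigma$ --- or, for boundary faces, a suitable power $\theta^{(s)}(\sigma')$, exactly as in case 2 of \cref{sec:kPolygonTucker:proof} --- whenever $\sigma'$ already carries enough labels to make $\sigma$ happy. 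The orientation of each edge is fixed by the sign of a determinant built from the branch-coordinates of the relevant vertices, i.e.\ the ``consistent direction of triangulations'' device of \citet{Freund84a}, and integer weights record the multiplicity with which a face can serve as a door, which in dimension $d>2$ need not be $1$.

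The two lemmas to prove are the $d$-dimensional analogues of \cref{lem:kPolygonTucker:proof:trivialSimplex} and \cref{lem:kPolygonTucker:proof:solutionSimplex}. The first is easy: $\{\boldsymbol{0}\}$ is a door only of the $1$-simplices leaving it along a single branch, and the sign convention makes these edges point outward, so with appropriately chosen weights its imbalance is $1$. The second --- the heart of the proof, and the step I expect to be the main obstacle --- is a case analysis, on the dimension of a happy non-solution simplex $\sigma$ and on how $\lambda(\sigma)$ sits inside $S(\sigma)$, showing that the door-faces of $\sigma$ either split into two with opposite orientation (so $\sigma$ is balanced) or, on the boundary, form a full $\theta$-orbit of $p$ faces all oriented the same way (so $\sigma$ has imbalance a multiple of $p$); whenever $\sigma$ cannot be balanced in this way it must contain a $(p-1)$-face with labels $\{*^1 j,\dots,*^p j\}$ for some $j$. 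The orientation bookkeeping is markedly harder than in the plane: one must establish the general determinant identities that \cref{lem:kPolygonTucker:proof:solutionSimplex} already deferred ``to the proof of \cref{thm:ptucker-in-ppa-p}'', namely that switching between the two faces of $\sigma$ reverses the sign of the orientation determinant, and that the $p$ determinants in a boundary $\theta$-orbit all share a sign. Boundary-condition violations, the one remaining case, cannot occur because the equivariance hypothesis on $\lambda$ excludes them (in the computational version they are reported as a type-1 solution).

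Finally, for the computational statement I would realise $G$ by polynomial-size circuits $\mathcal{S}, \mathcal{P}$ computing weighted successors and predecessors, using the index and value circuits ($\idx$, $\val$) of Kuhn's triangulation (\cref{rem:zp-star-kuhn}) to navigate $T$ efficiently and querying $\lambda$ only a bounded number of times per vertex; reducing weights modulo $p$ and splitting weighted edges as in \cref{sec:def-ppa-k} produces a genuine instance of \imbal{p}{1} whose every solution maps back to a $(p-1)$-simplex of the required form (or to a boundary violation, which is impossible). Combined with \cref{thm:con1/p-to-ptucker} this yields the combinatorial membership of $p$-thief Necklace Splitting and Consensus-$1/p$-Division in \ppak{p}, and combined with \cref{lem:bss-to-pstar} it yields a combinatorial proof of the BSS theorem.
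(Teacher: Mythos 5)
Your proposal follows essentially the same route as the paper's proof in \cref{sec:app:ptucker-in-ppa-p}: happy and super-happy simplices, boundary $\theta$-orbits merged into single graph nodes, a Freund-style determinant orientation that makes the edges directable, edge weights to balance the super-happy simplices, and a modulo-$p$ count with $0^d$ as the trivially imbalanced vertex. The only small inaccuracies are that the imbalance at $0^d$ works out to $(-1)^t \pmod p$ by Wilson's theorem rather than $1$, and the paper's edge weight is $\prod_{j=1}^t r_j(\sigma)!$, a product of factorials determined by the sub-orthant of $\sigma$, rather than a literal ``door multiplicity'' --- but neither affects the soundness of your plan.
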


In particular, by the properties of Kuhn's triangulation, it holds that for every solution $x_1, \dots, x_p$, we have $\textup{dist}_\infty(x_i,x_k) \leq 1$ for all $i,k \in [p]$.

Note that $\mathbb{Z}_2$-star Tucker's Lemma corresponds to the standard version of Tucker's Lemma. Since we are interested in the computational aspect, we also define the naturally corresponding TFNP problem.

\bigskip

\noindent\fbox{%
\colorbox{gray!10!white}{
    \parbox{\textwidth}{%
    
    \noindent\textbf{\underline{\pstartucker{p}:}} 

\smallskip

\noindent \textsc{Input:} $m,t \geq 1$, $d=t(p-1)$ and a Boolean circuit computing a labeling $\lambda: R_{p,m}^d \to R_{p,t} \setminus \{0\}$ that satisfies $\lambda(\theta x) = \theta \lambda(x)$ for all $x \in \partial R_{p,m}^d$
\smallskip

\noindent \textsc{Output:} A $(p-1)$-simplex $x_1, \dots, x_p \in R_{p,m}^d$ and $j \in [t]$ such that $\lambda(x_i) = *^i j$ for all $i \in [p]$
}}}
\bigskip

Note that the property ``$\lambda(\theta x) = \theta \lambda(x)$ for all $x \in \partial R_{p,m}^d$'' can be enforced syntactically. Thus, \pstartucker{p} is not a promise problem. 

The main result of the section is the following theorem.

\begin{theorem}
For all primes $p$, \pstartucker{p} is \ppak{p}-complete.
\end{theorem}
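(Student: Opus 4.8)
The plan is to prove the two directions of \ppak{p}-completeness separately: membership in \ppak{p} and \ppak{p}-hardness. Membership is established by the chain of reductions already set up in the excerpt. Specifically, I would invoke \cref{thm:ptucker-in-ppa-p} (announced as showing that \pstartucker{p} is inside \ppak{p}), whose proof, as hinted in the introduction, proceeds by constructing a \emph{weighted} directed graph and reducing to \imba{p}. The key conceptual ingredients are: (i) take the graph whose vertices are the simplices of Kuhn's triangulation of $R_{p,m}^d$, with the boundary simplices identified into super-vertices modulo the $\theta$-action, exactly as in the proof of $k$-Polygon Tucker (\cref{sec:kPolygonTucker:proof}); (ii) direct edges using the consistent orientation of simplices, via the sign of the relevant determinant, as in \cref{lem:kPolygonTucker:proof:solutionSimplex}; (iii) verify that the only imbalanced-mod-$p$ vertex other than the one corresponding to $\{\boldsymbol{0}\}$ must be a $(p-1)$-simplex carrying all colors $*^1 j, \dots, *^p j$ for some $j$. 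The reduction to \imba{p} then requires exhibiting succinct $\mathcal{S}, \mathcal{P}$ circuits computing successors and predecessors, and checking that the edge-weights can be handled by the weighted-to-unweighted transformation noted right after the definition of \imba{k}.

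For \ppak{p}-hardness, I would observe that the excerpt already reduces \tucker\ (the BSS-Tucker problem) to \pstartucker{p} via \cref{lem:bss-to-pstar}, and that \tucker\ is \ppak{p}-hard by \cref{thm:BSS-tucker-ppak-hard}, which in turn reduces from \polygonTucker{p}, shown \ppak{p}-hard (hence \ppak{p}-complete, since $p$ is prime so $\ppak{p}[\#1] = \ppak{p}$) in \cref{prop:k-polygon-hardness}. Chaining these many-one reductions gives \polygonTucker{p} $\leq$ \tucker\ $\leq$ \pstartucker{p}, so \pstartucker{p} is \ppak{p}-hard. Combined with membership, this yields \ppak{p}-completeness. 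Alternatively, one can short-circuit and note that \cref{thm:BSS-tucker-in-ppak} is itself derived from \cref{lem:bss-to-pstar} together with the membership of \pstartucker{p}, so the logical dependency is: first prove \pstartucker{p} $\in$ \ppak{p} directly (the weighted-graph argument), then deduce \tucker\ $\in$ \ppak{p}, and separately establish the hardness chain.

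The main obstacle is the membership direction, i.e., making the weighted-directed-graph construction and its analysis fully rigorous on the star-shaped, non-manifold domain $R_{p,m}^d$. Unlike the convex ball $B^m$, the $p$-star $R_{p,m}$ is not a manifold at the center $0$, and a $(p-1)$-simplex near $0$ may span several of the $p$ ``arms''; one must be careful about which lower-dimensional faces are ``relevant'' (the analogue of the happy-simplex condition), how the orientation/determinant sign is defined when the simplex crosses the branching locus, and why this makes the center vertex the unique unbalanced source. The use of \emph{weights} on edges (rather than simple multiplicities) is precisely what handles the combinatorial discrepancy caused by this branching: the identification of the $p$ symmetric boundary copies into one super-vertex introduces a factor that must be absorbed into an edge weight for the modulo-$p$ count to come out right. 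Verifying the local balance at every non-solution simplex — in particular the boundary super-vertices and the faces touching $0$ — is the delicate case analysis, analogous to but more involved than \cref{lem:kPolygonTucker:proof:solutionSimplex}; the excerpt explicitly defers the general orientation computation to the proof of \cref{thm:ptucker-in-ppa-p}, confirming this is the technical heart. The remaining steps (succinctness of the circuits, the reduction chain for hardness, the weighted-to-unweighted reduction) are routine given the machinery already developed.
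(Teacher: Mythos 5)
Your decomposition matches the paper's exactly: membership via \cref{thm:ptucker-in-ppa-p} (the weighted, directed-graph reduction to \imba{p} generalizing Freund--Todd), and hardness by chaining \polygonTucker{p} $\le$ $p$-\tucker\ $\le$ \pstartucker{p} through \cref{prop:k-polygon-hardness}, \cref{thm:BSS-tucker-ppak-hard}, and \cref{lem:bss-to-pstar}, with the logical ordering you describe to avoid circularity with \cref{thm:BSS-tucker-in-ppak}. One small misattribution in your narration of the membership argument: the edge weights are not what compensates for merging the $p$ symmetric boundary copies into a super-vertex (those are already balanced modulo $p$ once edges are directed, since all $p$ of their edges point the same way); the weights $\prod_j r_j(\sigma)!$ are needed at the super-happy simplices, where the dimension changes and a label is tied to $p-1$ different axes, so that the $r\in[p-1]$ higher-dimensional neighbors can be balanced against the single lower-dimensional one.
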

The proof of the theorem will follow from \cref{thm:ptucker-in-ppa-p} and \cref{lem:bss-to-pstar} below. The hardness is obtained by a reduction from $p$-\tucker~, which is \ppak{p}-hard for all primes $p$, as shown in \cref{thm:BSS-tucker-ppak-hard}.

\begin{remark}\label{rem:zp-star-kuhn}
The domain $R_{p,m}^d$ can be triangulated in a standard way as follows. We start by subdividing the domain into hypercubes $\{*^{i_1} (a_1-1), *^{i_1} a_1\} \times \dots \times \{*^{i_d} (a_d-1), *^{i_d} a_d\}$ for $a_1,\dots, a_d \in [m]$ and $i_1, \dots, i_d \in \mathbb{Z}_p$. Then, we can use Kuhn's triangulation on each hypercube.

Similarly to \cref{rem:bss-kuhn}, the triangulation $T$ of $R_{p,m}^d$ has the following nice properties:
\begin{enumerate}
    \item The restriction of $T$ on any sub-orthant of $R_{p,m}^d$ (i.e., a subspace of the form $A_1 \times A_2 \times \dots \times A_d$, where $A_\ell = \{*^{i_\ell} j : 0 \leq j \leq m\}$ or $A_\ell = \{0\}$) yields a triangulation of that sub-orthant.
    \item On the boundary of $R_{p,m}^d$, the triangulation $T$ is symmetric with respect to $\theta$ : for any simplex $\sigma$ of $T$ that lies on the boundary of $R_{p,m}^d$, the simplices $\theta \sigma, \theta^2\sigma, \dots, \theta^{p-1} \sigma$ are also simplices of $T$ (that also lie on the boundary).
    \item $T$ is computationally efficient, in the sense that we can perform pivoting and indexing operations in polynomial time.
\end{enumerate}
\end{remark}

\subsection{\pstartucker{p} is in \ppak{p}}
First, we prove the membership of \pstartucker{p}\ in \ppak{p}. We have the following theorem.

\begin{theorem}\label{thm:ptucker-in-ppa-p}
For all primes $p$, \pstartucker{p} lies in \ppak{p}.
\end{theorem}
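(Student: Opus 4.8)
\textbf{Proof plan for \cref{thm:ptucker-in-ppa-p}.}
The plan is to mimic the combinatorial proof of $k$-Polygon Tucker's Lemma from \cref{sec:kPolygonTucker:proof}, but working on the $p$-star domain $R_{p,m}^d$ with $d=t(p-1)$ and reducing to \imba{p} (or rather to its weighted variant, which is equivalent to \imba{p} as explained in \cref{sec:def-ppa-k}). The key object is an auxiliary directed graph $G$ whose vertices are the simplices of Kuhn's triangulation $T$, with the symmetric boundary simplices $\theta^i\sigma$ identified into a single ``super vertex''. As in the 2D case, we will designate a canonical starting simplex of small dimension (the analogue of $\{\vec 0\}$) that is the unique imbalanced-mod-$p$ vertex ``for free'', so that the modulo-$p$ counting principle guarantees a second imbalanced vertex, which we will argue must be a fully-labeled $(p-1)$-simplex, i.e.\ a solution to \pstartucker{p}.

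The main structural ingredients, in the order I would carry them out: (i) define, for each simplex $\sigma\in T$, a ``type'' set $S(\sigma)\subseteq \mathbb{Z}_p\times[t]$ recording which $*^i j$-directions the cone spanned by $\sigma$ can reach (the analogue of $S(\sigma)$ in the 2D proof), together with the label set $\lambda(\sigma)$, and call $\sigma$ \emph{happy} when $S(\sigma)\subseteq\lambda(\sigma)$; (ii) declare a simplex \emph{relevant} if it is happy and, when on the boundary, lies in the canonical fundamental domain (the analogue of the $\chull(\{\bu_1,\bu_2\})$ choice), so that we correctly quotient by $\theta$; (iii) add a directed edge between relevant $\sigma,\sigma'$ when $\sigma'$ (or a $\theta$-translate $\tau=\theta^{(s)}\sigma'$) is a facet of $\sigma$ whose labels suffice to make $\sigma$ happy; (iv) direct each such edge via a sign-of-determinant rule (in the spirit of \citet{Freund84a}) using a consistent orientation of the simplices of $T$ — this is exactly where the ``weighted'' directed graph is needed, because a boundary super vertex is incident to $p$ copies of an edge, and we must weight these so the total imbalance is controlled mod $p$. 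Then I would prove two lemmas paralleling \cref{lem:kPolygonTucker:proof:trivialSimplex} and \cref{lem:kPolygonTucker:proof:solutionSimplex}: the canonical start simplex has out-degree $1$ and in-degree $0$, hence is imbalanced; and any non-start vertex of $G$ that is imbalanced mod $p$ must correspond to a $(p-1)$-simplex with $\lambda(\sigma)=\mathbb{Z}_p\times\{j\}$ for some $j$. The bulk of the argument is the case analysis for the second lemma, split by how the simplex $\sigma^\star$ sits relative to the stratification of $R_{p,m}^d$ (which coordinates are $0$, which ``signs'' $*^i$ appear): in the ``generic'' happy-but-not-fully-labeled cases $\sigma^\star$ has exactly two neighbours with opposite edge directions (balanced), and in boundary cases it has $p$ neighbours whose edge directions all agree (balanced mod $p$), using a determinant-sign computation like the one at the end of the proof of \cref{lem:kPolygonTucker:proof:solutionSimplex}.

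Finally, to get the \ppak{p} membership I would package this as an explicit polynomial-time reduction: build Boolean circuits $\mathcal S,\mathcal P$ computing successors/predecessors (with integer weights) of a given simplex of $T$, using the computational efficiency of Kuhn's triangulation from \cref{rem:zp-star-kuhn} (pivoting/indexing in polynomial time) and a constant number of calls to the input labeling circuit $\lambda$; verify that the canonical start simplex has $|S(0^n)|-|P(0^n)|\equiv 1\pmod p$, and invoke the definition of \imba{p}. Every solution of the resulting \imba{p}/weighted-imbalance instance is, by the second lemma, either our trivial start vertex (excluded by the reduction's output convention) or a genuine \pstartucker{p} solution, and a consistency violation $y\in S(x)$ with $x\notin P(y)$ simply cannot arise since the graph is symmetric by construction; hence \pstartucker{p}$\in$\ppak{p}.

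\textbf{Expected main obstacle.} The hard part will be the weighting scheme for edges incident to boundary super vertices together with the orientation bookkeeping: one must choose a $\theta$-consistent orientation of Kuhn's triangulation of $R_{p,m}^d$ so that the $p$ symmetric copies of an edge at the boundary all get the same sign (so their contribution to the imbalance is $0$ or $\pm p$, i.e.\ $0\bmod p$), and simultaneously ensure that an internal happy-non-solution $(p-1)$-facet gets exactly one incoming and one outgoing edge. Getting the determinant-sign computation to come out right in the star domain — where moving between the $p$ ``arms'' is not a linear operation globally — is the delicate point; this is presumably why the theorem statement explicitly promises a \emph{weighted} graph, in contrast to the unweighted one used for \polygonTucker{k}.
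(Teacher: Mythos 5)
Your overall skeleton matches the paper's: directed simplicial graph, happiness as ``labels cover the directions the simplex spans,'' quotienting by $\theta$ on the boundary, a determinant-based orientation rule so that generic vertices have one in- and one out-edge, and a reduction to the weighted variant of \imba{p}. The place where your plan goes wrong is the diagnosis of \emph{where} and \emph{why} the weights are needed. You attribute the weights to the boundary super-vertices $\{\tau,\theta\tau,\dots,\theta^{p-1}\tau\}$, and you think the orientation must be chosen so that the $p$ symmetric copies of the incident edge have identical signs. That part is indeed handled by the orientation alone: since the $p$ neighboring simplices are $\theta$-translates of one another, a $\theta$-consistent orientation automatically gives all $p$ edges the same direction, and since they all live in $\theta$-translated sub-orthants, they automatically get the same weight; the boundary super-vertex is then balanced mod $p$ with no extra cleverness. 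The weights exist for a different reason entirely, one your plan does not anticipate.

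The genuine new phenomenon for $p\geq 3$ is the \emph{interior super-happy simplex}. Because $d=t(p-1)$ while there are $tp$ labels, each label $*^i j$ is associated to $p-1$ distinct axes; the happiness condition must therefore also require that the ambient sub-orthant uses at most one axis per label (formally, $|S(O(\sigma))|=\dim O(\sigma)$). When a $k$-simplex $\sigma$ is happy and carries one \emph{extra} label $*^i\ell\notin S(O(\sigma))$, it has exactly one edge ``down'' to the facet with $\lambda(\tau)=S(O(\sigma))$, but there are $r_\ell(\sigma)\in\{1,\dots,p-1\}$ distinct higher-dimensional sub-orthants whose new axis is associated with $*^i\ell$, hence $r_\ell(\sigma)$ edges ``up.'' For $p=2$ this number is always $1$, so the graph is a union of paths; for $p\geq 3$ it can exceed $1$, and a unit-weight graph has an imbalanced interior vertex that is not a solution. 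The fix in the paper is to assign to an edge created by $\sigma$ the weight $\prod_{j=1}^{t}r_j(\sigma)!$, where $r_j(\sigma)$ counts how many $j$-axes are still zeroed out in $O(\sigma)$: then the one downward edge has weight $\prod_j r_j(\sigma)!$ and each of the $r_\ell(\sigma)$ upward edges has weight $\prod_j r_j(\sigma)!/r_\ell(\sigma)$, so they cancel exactly. This is also what produces the trivial starting imbalance $((p-1)!)^t\equiv(-1)^t\pmod p$ at $0^d$, via Wilson's theorem. Without the sub-orthant/$r_j$ bookkeeping, the super-happy branching, and this multiplicative weighting, the case analysis in step (ii) of your plan will not close; so the proposal as written has a real gap even though its high-level architecture is the right one.
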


 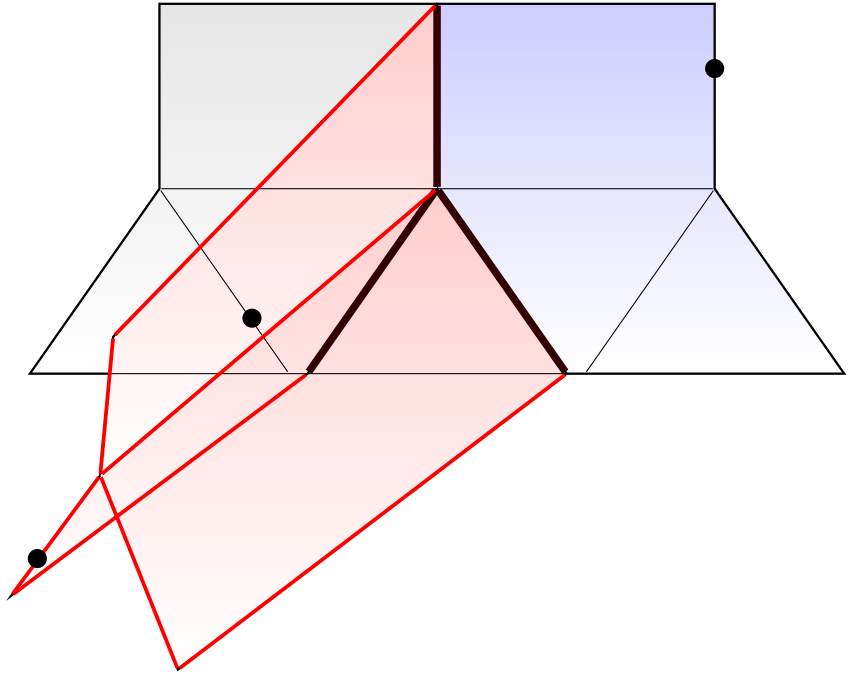
\begin{figure}
 \centering
 \begin{tikzpicture}[scale=0.7]
	\node[inner sep=0pt] (left) at (100pt,0pt) {};
	\node[inner sep=0pt] (right) at (400pt, 0pt) {};
	\draw (left)--(right);
	
	\node[inner sep=0pt] (topleft) at (100pt, 100pt) {};
	\node[inner sep=0pt] (topright) at (400pt, 100pt) {};
	\draw (left) -- (topleft);
	\draw (right) -- (topright);
	\draw (topright) -- (topleft);
	
	\node[inner sep=0pt] (midtop) at (250pt, 100pt) {};
	\node[inner sep=0pt] (midbottom) at (250pt, 0pt) {};
	\draw [line width=1mm, red!20!black] (midtop) -- (midbottom);
	
	\node[inner sep=0pt] (leftleftbot) at (30pt, -100pt) {};
	\node[inner sep=0pt] (rightleftbot) at (170pt, -100pt) {};
	\draw (left) -- (leftleftbot);
	\draw (left) -- (rightleftbot);
	
	\node[inner sep=0pt] (leftmidbot) at (180pt, -100pt) {};
	\node[inner sep=0pt] (rightmidbot) at (320pt, -100pt) {};
	
	\draw [line width=1mm, red!20!black] (midbottom) -- (leftmidbot);
	\draw [line width=1mm, red!20!black]  (midbottom) -- (rightmidbot);
	
	\node[inner sep=0pt] (leftrightbot) at (330pt, -100pt) {};
	\node[inner sep=0pt] (rightrightbot) at (470pt, -100pt) {};
	
	\draw (right) -- (leftrightbot);
	\draw (right) -- (rightrightbot);
	
	\draw(leftleftbot) -- (rightrightbot);
	
	\node[inner sep=0pt] (redtopleft) at (75pt, -80pt) {};
	\node[inner sep=0pt] (redmidleft) at (68pt, -155pt) {};
	\node[inner sep=0pt] (redbotleft) at (20pt, -220pt) {};
	\node[inner sep=0pt] (redbotright) at (110pt, -260pt) {};

	\draw [line width=0.5mm, red] (midtop) -- (redtopleft);
	\draw [line width=0.5mm, red] (redmidleft) -- (redtopleft);
	\draw [line width=0.5mm, red] (redmidleft) -- (midbottom);
	
	\draw [line width=0.5mm, red] (redmidleft) -- (redbotleft);
	\draw [line width=0.5mm, red] (redmidleft) -- (redbotright);
	
	\draw [line width=0.5mm, red] (redbotleft) -- (leftmidbot);
	\draw [line width=0.5mm, red] (redbotright) -- (rightmidbot);
	
	\begin{pgfonlayer}{background}
    \fill[draw, line width=0.3mm,top color=gray!20,bottom color=white] (midbottom.center) -- (midtop.center) -- (topleft.center) -- (left.center) -- (leftleftbot.center) -- (leftmidbot.center);
    \end{pgfonlayer}
    
    \begin{pgfonlayer}{background}
    \fill[draw, line width=0.3mm,top color=blue!20,bottom color=white] (midbottom.center) -- (midtop.center) -- (topright.center) -- (right.center) -- (rightrightbot.center) -- (rightmidbot.center);
    \end{pgfonlayer}
	
	\begin{pgfonlayer}{background}
    \fill[draw, line width=0.3mm,top color=red!20,bottom color=white] (midtop.center) -- (redtopleft.center) -- (redmidleft.center) -- (midbottom.center);
    \end{pgfonlayer}
    
    \begin{pgfonlayer}{background}
    \fill[draw, line width=0.3mm,top color=red!20,bottom color=white] (midbottom.center) -- (rightmidbot.center) -- (redbotright.center) -- (redmidleft.center);
    \end{pgfonlayer}
    
    \begin{pgfonlayer}{background}
    \fill[draw, line width=0.3mm,top color=red!20,bottom color=white] (midbottom.center) -- (leftmidbot.center) -- (redbotleft.center) -- (redmidleft.center);
    \end{pgfonlayer}

	\node[circle, fill=black,scale=0.7] (dot1) at (400pt, 65pt) {};
	\node[circle, fill=black,scale=0.7] (dot2) at (150pt, -70pt) {};
	\node[circle, fill=black,scale=0.7] (dot3) at (34pt, -200pt) {};

\end{tikzpicture}
 \caption{A view of the domain $R_{p,m}^d$ for $p=3$ and $d=2$. Note that this corresponds to $R_{3,m} \times R_{3,m}$. The three black points are in correspondence under $\theta$. The three thick lines at the center of the picture correspond to the place where the three pieces are ``glued'' together.}\label{fig:zpstartucker}
 \end{figure}

The result is proved by reducing the problem to \imba{p}. In particular, this also provides a combinatorial proof of $\mathbb{Z}_p$-star Tucker's Lemma (\cref{thm:ptucker}). This proof can be seen as a generalization of the combinatorial proof of Tucker's Lemma given by \citet{FT81}. We provide an overview of the proof below; the full proof can be found in \cref{sec:app:ptucker-in-ppa-p}.

\subsubsection{Proof Overview of \cref{thm:ptucker-in-ppa-p}}\label{sec:proofoverview}

As noted earlier, \pstartucker{2} corresponds to the standard version of Tucker's Lemma and the domain is equivalent to $\{-m,-(m-1)\dots,0,\dots,m\}^d$. The computational problem is known to lie in \ppa\ (recall that $\ppa = \ppak{2}$) by using an argument given by \citet{FT81}. More precisely, Freund and Todd gave a constructive proof of Tucker's Lemma and as noted by \citep{Papadimitriou94-TFNP-subclasses,ABB15-2DTucker}, this yields a reduction to \ppa. The constructive proof relies on a path-following argument on a graph where the nodes are simplices of the triangulation. We start by giving some details about their argument, since our proof is a generalization of their construction.

The nodes of the graph $G$ consist of all simplices of the triangulation that satisfy some properties that depend on the labels of the simplex (and the coordinate subspace orthant in which the simplex lies). Following the presentation of the proof given by \citet{Mat03BorsukUlam}, we call these ``happy simplices''. Undirected edges are added between happy simplices based on some simple rules (e.g., if they share a facet and that facet has some desired labels, etc). Given the definition of the edges, it is easy to show that the happy simplex $0^d$ has degree $1$ and any other node of degree $1$ is either a solution, or is a happy simplex lying on the boundary of the domain. In the latter case, because of the boundary conditions, this means that there must be another such simplex that lies on the antipodally opposite side of the domain and also has degree $1$. Thus, merging these two nodes into a single one yields a vertex of degree $2$, eliminating these ``fake'' solutions.

Our first contribution is to note that the edges of the graph can be directed in a \emph{consistent} way in Freund and Todd's construction. Namely, any non-merged degree-$2$ vertex has one incoming and one outgoing edge, and any merged vertex has either two incoming edges, or two outgoing edges. This yields a reduction to \textsc{Imbalance-mod-$2$}. Note that if all degree $2$ vertices were always perfectly balanced, then we would obtain a reduction to \textsc{End-of-Line}, which is impossible, unless $\ppad = \ppa$.

When we move to the case $p>2$, the notions used to define the graph can be generalized in a natural way, despite the unusual domain $R_{p,m}^d$. While the ability to direct edges was not actually needed for $p=2$, it now becomes absolutely necessary. Indeed, for any degree-$1$ happy simplex on the boundary, there are now $p-1$ other such simplices (by using $\theta$). Merging these into a single node yields degree $p$. We show that directing the edges yields a merged node that is balanced modulo $p$ (namely, all $p$ edges are incoming, or they are all outgoing).

However, another difficulty arises for $p>2$. Recall that a path can visit simplices of various dimensions. The vertices where the dimension changes are special vertices, that we call super-happy simplices. These super-happy simplices have one edge with a same or lower-dimensional happy simplex, and $k$ edges with $k$ different higher-dimensional happy-simplices, where $k \in [p-1]$. Directing the edges as before, yields that the $k$ edges are directed the same way, and in the opposite direction to the single edge. By changing the way the direction of edges is defined, it is possible to salvage the situation for $p=3$. However, this fails for any $p \geq 5$.

The solution is to carefully assign \emph{weights} to all edges. The weight of an edge only depends on the nature of the coordinate subspace orthants in which it lies, in particular the dimension. With these weights, we show that any vertex that is not a solution is now balanced modulo $p$ (except the trivial solution $0^d$). Namely, the non-solution vertices of the graph are:
\begin{itemize}
    \item the trivial solution $0^d$: all its edges are outgoing and it has degree $(-1)^t \mod p$
    \item the merged simplices on the boundary: $p$ edges, all incoming/outgoing, all the same weight
    \item happy, but not super-happy simplices: once incoming edge, one outgoing, both same weight
    \item super-happy simplices: one incoming edge with weight $w$ and $k \in [p-1]$ outgoing edges, each with weight $w/k$ (or opposite direction for all edges)
\end{itemize}
Thus, apart from the trivial solution and any actual solutions, all vertices are balanced modulo $p$.

\medskip

\begin{remark}[Path-following arguments]
Even though this proof is a natural generalization of the argument by \citet{FT81}, it is not a \emph{path-following argument} for $p \geq 3$. Indeed, in the case where $p \geq 3$, it is not clear how we could explore this graph by following a path that is guaranteed to end at a solution. In fact, we provide strong evidence that it is not possible to prove \pstartucker{p} by a path-following argument. Since \pstartucker{p} is \ppak{p}-hard (see next Section), and since a path-following proof of \pstartucker{p} would presumably show that the problem lies in \ppa, this would imply that $\ppak{p} \subseteq \ppa$, for prime $p \geq 3$. However, this is not expected to hold \citep{johnson2011reductions,GoosKSZ2019,Hollender19}.

Similarly, if one can show that \neck{p} is \ppak{p}-hard for some prime $p \geq 3$, then this would provide strong evidence that the Necklace Splitting theorem with $p$ thieves cannot be proved by a path-following argument.
\end{remark}

\subsection{\pstartucker{p} is \ppak{p}-hard}

In this section we prove that \pstartucker{p} is \ppak{p}-hard, by reducing from $p$-\tucker, which is \ppak{p}-hard by \cref{thm:BSS-tucker-ppak-hard}.

\begin{proposition}\label{lem:bss-to-pstar}
For any prime $p$, \pstartucker{p} is \ppak{p}-hard, even for fixed dimension $t \geq 1$.
\end{proposition}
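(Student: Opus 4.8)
The plan is to reduce from $p$-\tucker with Kuhn's triangulation (\cref{rem:bss-kuhn}), which is \ppak{p}-hard even for fixed dimension $n\geq 1$ by \cref{thm:BSS-tucker-ppak-hard} (for $p=2$ this is the classical \ppa-hardness of Tucker's lemma, so assume $p\geq 3$). Given a $p$-\tucker instance with labeling circuit $\lambda_{\mathrm{BSS}}$ on the Kuhn-triangulated domain $C_\infty$ (dimension $n(p-1)$, with $\theta$ cyclically permuting its $p$ ``copies''), the goal is to produce a \pstartucker{p} instance on $R_{p,m}^{t(p-1)}$, for a suitable fixed $t\geq n$ and large enough $m$, whose solutions decode efficiently into solutions of the $p$-\tucker instance. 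The overall shape of the reduction is the same as in \cref{thm:BSS-tucker-ppak-hard} and in the cable construction of \cref{prop:k-polygon-hardness}: realize $C_\infty$ as a $\theta$-invariant subcomplex $D$ of $R_{p,m}^{t(p-1)}$, transport $\lambda_{\mathrm{BSS}}$ onto $D$, and fill in $R_{p,m}^{t(p-1)}\setminus D$ with ``inert'' labels produced by the $\mathbb{Z}_p$-equivariant tie-breaking function $T_p$ of \cref{def:tie-breaking}, arranging that the resulting labeling $\lambda'$ is $\theta$-equivariant on $\partial R_{p,m}^{t(p-1)}$ (so that the instance genuinely lies in TFNP, as \pstartucker{p} demands) and introduces no rainbow simplex away from the image of $C_\infty$.

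The first ingredient is the combinatorial embedding $\iota$. Both Kuhn triangulations decompose into top-dimensional cubes indexed by discrete data on which $\theta$ acts by adding the all-ones tuple: for $C_\infty$, the tuple $(i_1,\dots,i_n)\in[p]^n$ recording which coordinate of each color is pinned to $0$; for $R_{p,m}^{t(p-1)}$, the arm-choice tuple in $[p]^{t(p-1)}$. I would fix an injection of the first kind of data into the second that commutes with the all-ones shift, respects incidence of cubes (cubes sharing a face map to cubes sharing the corresponding face), and carries the manifold boundary of $C_\infty$ (a coordinate equal to $1$) into $\partial R_{p,m}^{t(p-1)}$ (a coordinate at a tip $*^\cdot m$); taking the subdivision level of $C_\infty$ and $m$ to agree then yields a $\theta$-equivariant simplicial embedding $\iota$ of $C_\infty$ onto a subcomplex $D$ of $R_{p,m}^{t(p-1)}$, with $\iota,\iota^{-1}$ and membership in $D$ poly-time computable. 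Identifying $\mathbb{Z}_p\times[n]$ with a subset of $R_{p,t}\setminus\{0\}$ (padding with $t-n$ unused colors), set $\lambda'$ on $D$ to be $\lambda_{\mathrm{BSS}}$ transported through $\iota$, and for $x\notin D$ set $\lambda'(x)=*^{T_p(A(x))}\, 1$ where $A(x)\subsetneq[p]$ is the nonempty set of arm-indices occurring among the nonzero coordinates of $x$; finally overwrite $\lambda'$ on $\partial R_{p,m}^{t(p-1)}$ to syntactically enforce $\lambda'(\theta x)=\theta\lambda'(x)$ (fix one representative per $\theta$-orbit). As in \cref{lem:kPolygonBorsukUlamInPPAk}, this overwrite touches only boundary vertices and is vacuous when $\lambda_{\mathrm{BSS}}$ is already equivariant on $\partial C_\infty$; when it is not, any resulting mismatch near the boundary traces through $\iota^{-1}$ to a vertex $\bx\in\partial C_\infty$ witnessing non-equivariance of $\lambda_{\mathrm{BSS}}$ -- which is exactly the first solution type of $p$-\tucker.

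For soundness, take a \pstartucker{p} solution, a $(p-1)$-simplex $x_1,\dots,x_p$ with $\lambda'(x_i)=*^ij$. If it lies in $D$, then $\iota^{-1}$ sends it to a $(p-1)$-simplex of the $C_\infty$ triangulation carrying all $p$ labels of color $j$, a $p$-\tucker solution. If it lies entirely outside $D$, one checks -- as in the analogous step of \cref{thm:BSS-tucker-ppak-hard} -- that along a single Kuhn simplex the sets $A(\cdot)$, and hence the $T_p$-labels, cannot realize all $p$ arms, so no such simplex exists. A simplex straddling $D$ and its complement, or meeting $\iota(\partial C_\infty)$, is handled by the boundary bookkeeping above and yields a boundary violation of $\lambda_{\mathrm{BSS}}$. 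All maps involved are poly-time and $\lambda_{\mathrm{BSS}}$ is queried $O(1)$ times per label, so the reduction is polynomial; since $t$ is a fixed constant (take the fixed $n$ we started from), this gives hardness for a fixed dimension, and a standard padding argument analogous to \cref{lm:n1sufficies} -- adding a dummy color whose new coordinates only ever receive an equivariant, never-rainbow override label -- lifts it to all $t\geq 1$.

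The main obstacle is constructing the equivariant embedding $\iota$ itself. The $p$-\tucker domain is a ball-like complex ($\cong P_2$), while $R_{p,m}^{t(p-1)}$ is a product of $p$-stars and is not homeomorphic to it for $p\geq 3$ (their links at singular points already differ), so $\iota$ must realize $C_\infty$ as a genuine subcomplex of the ``larger'' star-product while simultaneously (a) commuting with both $\theta$-actions, (b) carrying $\partial C_\infty$ into $\partial R_{p,m}^{t(p-1)}$, and (c) leaving the complement ``$T_p$-labelable'' with no spurious rainbow simplices -- and doing so with $t$, hence the target dimension, bounded by a constant. Pinning down such an $\iota$, and the minimal $t$ and $m$ for which it exists, is the technical heart of the proof; everything else is bookkeeping that mirrors the proofs of \cref{prop:k-polygon-hardness} and \cref{thm:BSS-tucker-ppak-hard}.
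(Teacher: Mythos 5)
Your proposal goes in the opposite direction from the paper's proof, and the step you flag as ``the technical heart'' is indeed a genuine gap. You want to \emph{embed} the $p$-\tucker\ domain $C_\infty$ as a $\theta$-equivariant subcomplex $D$ of $R_{p,m}^{t(p-1)}$, transport $\lambda_{\mathrm{BSS}}$ onto $D$, and fill the complement with $T_p$-based environment labels. But you never construct the embedding $\iota$, and there are real obstructions to doing so in a way that satisfies all of your constraints (a)--(c): $C_\infty\cong P_2$ is a ball of dimension $n(p-1)$, whereas $R_{p,m}^{t(p-1)}$ with $t=n$ is a non-manifold of the same top dimension whose singular strata do not match those of a ball. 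Even granting the existence of $\iota$, your soundness argument for simplices \emph{straddling} $D$ and its complement is asserted, not proved: a mixed simplex carries some $\lambda_{\mathrm{BSS}}$ labels and some $T_p$-environment labels, and there is no reason it must trace back to a boundary violation of $\lambda_{\mathrm{BSS}}$ -- the analogous step in \cref{thm:BSS-tucker-ppak-hard} required a careful four-way case analysis precisely to rule out this kind of spurious rainbow, and nothing in your sketch replicates it at this level of generality.

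The paper avoids the embedding problem entirely by going the other way. It defines a \emph{projection} $\Psi$ from the \pstartucker{p} domain $A$ onto $C_\infty$ (with $t=n$), given coordinate-wise by max-pooling over the $p-1$ redundant axes attached to each color: $\psi_j(a^i) = \max\{a^i_{j,k} : k \in [p-1]\}$. One checks directly that $\Psi$ is well-defined (the pigeonhole argument that at most $p-1$ of the $p$ arms can be nonzero for each color), is $\theta$-equivariant, and -- this is the one short lemma -- maps Kuhn simplices of $A$ to Kuhn simplices of $C_\infty$, since $\Psi$ is monotone and $1$-Lipschitz in $\|\cdot\|_\infty$. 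The labeling is then simply pulled back, $\lambda' := \lambda \circ \Psi$, on the \emph{entire} domain. There is no complement to fill, no tie-breaking function, no environment labels, and no case analysis about mixed simplices: every rainbow simplex of the \pstartucker{p} instance maps under $\Psi$ to a rainbow simplex of the $p$-\tucker\ instance, full stop. Your approach, if the embedding could be constructed and the mixed-simplex cases verified, would also prove the result, but it would be substantially heavier machinery for the same conclusion; the paper's $\Psi$ collapses all of that into a one-page argument.
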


\begin{proof}
We show this by reducing from $p$-\tucker\ with $n=t$.
Let $\lambda$ be an instance of $p$-\tucker\ for some prime $p$ and some $n \geq 1$, where we use Kuhn's triangulation.

In this case the domain of $p$-\tucker\ corresponds to
$$C_\infty = \{(c^1, \dots, c^p) \in U^{np}_m | \forall j \in [n], \exists i \in [p] : c^i_j = 0 \}.$$
On the other hand, the domain of \pstartucker{p} can be described as
$$A = \{(a^1, \dots, a^p) \in U^{np(p-1)}_m | \forall j,k \in [n] \times [p-1], \exists i^* \in [p] : a^i_{j,k} = 0 \textup{ for all } i \in [p] \setminus \{i^*\}\}.$$
Note that $\theta (c^1,\dots, c^p) = (c^p,c^1, \dots, c^{p-1})$ and $\theta (a^1,\dots, a^p) = (a^p,a^1, \dots, a^{p-1})$.

Define $\Psi : A \to C_\infty$ as $\Psi(a^1, \dots, a^p) = (\psi(a^1), \dots, \psi(a^p))$, where $\psi_j(a^i) = \max\{a^i_{j,k} : k \in [p-1]\}$ for all $i \in [p]$, $j \in [n]$. Note that $\Psi$ is well-defined, namely if $a \in A$, then $\Psi(a) \in C_\infty$. Indeed, for any $j \in [n]$, it holds that $|\{i \in [p]| \psi_j(a^i) > 0\}| = |\{i \in [p]| \exists k \in [p-1] : a^i_{j,k} > 0\}| \leq |\{(i,k) \in [p] \times [p-1] | a^i_{j,k} > 0\}| \leq p-1$, since for every $k \in [p-1]$ there exists at most one $i \in [p]$ such that $a^i_{j,k} > 0$ (for any fixed $j$). Furthermore, it is easy to see that $\Psi(\theta a) = \theta \Psi(a)$ by construction.

Now define the labeling $\lambda': A \to \mathbb{Z}_p \times [n]$ by $\lambda'(a) := \lambda(\Psi(a))$. Since $\Psi(A) \subseteq C_\infty$, $\lambda'$ is well-defined. Furthermore, for any $a \in \partial A$, it holds that $\Psi(a) \in \partial C_\infty$. Thus, we get that for any $a \in \partial A$, $\lambda'(\theta a) = \lambda(\Psi(\theta a)) = \lambda(\theta \Psi(a)) = \theta \lambda(\Psi(a)) = \theta \lambda'(a)$, i.e., $\lambda'$ satisfies the boundary conditions.

If $\sigma = \{z_1, \dots, z_p\}$ is a $(p-1)$-simplex of $A$ such that $\lambda'(\sigma) = \mathbb{Z}_p \times \{j\}$ for some $j \in [n]$, then the set of vertices $\Psi(\sigma) = \{\Psi(z_1), \dots, \Psi(z_p)\}$ satisfies $\lambda(\Psi(\sigma)) = \mathbb{Z}_p \times \{j\}$. Thus, the proof is completed by the following claim:
\begin{claim}
If $\sigma = \{z_1, \dots, z_p\}$ is a $(p-1)$-simplex in Kuhn's triangulation of $A$, then $\Psi(\sigma) = \{\Psi(z_1), \dots, \Psi(z_p)\}$ is a simplex in Kuhn's triangulation of $C_\infty$.
\end{claim}

\begin{proof}
Since we use Kuhn's triangulation, without loss of generality, we can assume that $z_1, \dots, z_p$ are ordered such that $z_1 \leq z_2 \leq \dots \leq z_p$ (component-wise) and $\|z_1-z_p\|_\infty \leq 1/m$. By construction of $\Psi$ it holds that $\Psi(a) \leq \Psi(a')$ whenever $a \leq a'$. Thus, $\Psi(z_1) \leq \dots \leq \Psi(z_p)$. In order to show that $\Psi(\sigma)$ is a simplex in Kuhn's triangulation of $C_\infty$, it remains to prove that $\|\Psi(z_1) - \Psi(z_p)\|_\infty \leq 1/m$. To see this, note that if $\|a-a'\|_\infty \leq 1/m$, then for all $i \in [p]$ and $j \in [n]$, we get that $|\psi_j(a^i) - \psi_j(a'^i)| = |\max\{a^i_{j,k} : k \in [p-1]\} - \max\{a'^i_{j,k} : k \in [p-1]\}| \leq 1/m$.
\end{proof}

This concludes the proof of \cref{lem:bss-to-pstar}.
\end{proof}

\section{Necklace Splitting with $p$ Thieves lies in \ppak{p}} \label{sec:necklace}

In this section, we prove our main result regarding the Necklace Splitting Theorem; we prove that the associated computational problem \neck{p} lies in \ppak{p} for any prime $p$.

\begin{theorem}
\label{thm:inclusions}
For every prime $p$, \neck{p} is in \ppak{p}. 
\end{theorem}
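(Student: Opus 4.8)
The plan is to reduce \neck{p} to \pstartucker{p}, which lies in \ppak{p} by \cref{thm:ptucker-in-ppa-p}. This parallels the route taken for Consensus-$1/p$-Division, and the two reductions share most of their structure; the necklace version requires additionally discretizing the cut positions and arguing that a fractional solution of the continuous problem can be rounded to a genuine combinatorial split with the prescribed number of cuts. Since \neck{p} reduces to \pstartucker{p} and the latter is in \ppak{p}, totality and the polynomial bound on instance size are inherited, so the theorem follows.

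First I would set up the encoding. A necklace with $t$ colors and beads to be split among $p$ thieves corresponds to $t$ discrete measures on the interval; a split using $(p-1)t$ cuts is described by a point specifying, for each color-coordinate block, which of the $p$ thieves owns which interval. The domain $R_{p,m}^{t(p-1)}$ is the natural home for this: the $t(p-1)$ coordinates track the $(p-1)$ ``free'' cut positions per color, each living on a $p$-star whose $p$ arms correspond to the $p$ ways a piece can be assigned, and $m$ is chosen polynomially large enough to resolve every bead boundary. The $\theta_p$-action on $R_{p,m}^d$ matches the cyclic relabeling of the $p$ thieves, which is exactly the symmetry Alon's proof (and its continuous analogue) exploits.

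Next I would define the labeling $\lambda : R_{p,m}^d \to R_{p,t}\setminus\{0\}$. Given a point of the domain, decode it into a partition of $[0,1]$ into (at most) $(p-1)t+1$ intervals together with an assignment of each interval to one of the $p$ thieves; then among the $p$ thieves and $t$ colors, select a color $j\in[t]$ and a thief $i\in\mathbb{Z}_p$ for which thief $i$'s share of color $j$ is (tie-broken to be) maximal — using the $\mathbb{Z}_p$-equivariant tie-breaking function from \cref{def:tie-breaking} so that equivariance on the boundary $\lambda(\theta x)=\theta\lambda(x)$ holds by construction, exactly as in the proof of \cref{lm:tuckerImpliesBSS} and in \cref{lem:bss-to-pstar}. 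A fully labeled $(p-1)$-simplex $x_1,\dots,x_p$ with $\lambda(x_i)=*^i j$ then consists of $p$ nearby partitions (at $\textup{dist}_\infty \le 1$ of each other, hence differing in the position of a single cut by one discretization step) such that for each thief $i$, in partition $x_i$ thief $i$ gets the largest share of color $j$. A convexity/averaging argument across the simplex — the discrete analogue of the intermediate-value step — shows this forces an exact equal split of color $j$ at some interpolated cut configuration; pushing the same argument through all coordinates, or iterating the lemma, yields a configuration that is balanced in every color simultaneously and uses only $(p-1)t$ cuts. The rounding from ``interpolated'' to an actual bead-respecting split is where the discretization granularity $m$ must be large enough to snap each borderline cut to an adjacent bead boundary without changing any thief's count modulo the structure of the necklace.

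The main obstacle I anticipate is precisely this last point: turning the guarantee provided by \pstartucker{p} (a local, $\textup{dist}_\infty\le 1$ fully labeled simplex) into a global exact split for the genuinely discrete necklace, while controlling the number of cuts. For the continuous Consensus problem an approximate-equality-at-every-scale argument suffices, but the necklace is combinatorial, so one has to ensure the fully labeled simplex actually pins down a split where every color is split exactly, not just approximately — this is the step where Alon's original proof invokes a compactness/limiting argument and where \citet{Meunier2014simplotopal} needed the simplotopal machinery. I would handle it by choosing $m$ large relative to $n$ (number of beads) so that the labeling can only change at bead boundaries, making the ``approximate'' balance forced by a fully labeled simplex coincide with exact balance; verifying that this choice keeps the circuit computing $\lambda$ of polynomial size, and that the boundary condition remains syntactically enforceable, is the technical heart of the argument. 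I would also note that, as remarked in the excerpt, this reduction deliberately does \emph{not} route through the \ppak{p}-membership of BSS, since several steps of Alon's BSS-based proof are not evidently polynomial-time; going directly via \pstartucker{p} sidesteps that difficulty.
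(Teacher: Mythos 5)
You take the right high-level route — reduce to \pstartucker{p}, which lies in \ppak{p} by \cref{thm:ptucker-in-ppa-p} — but your plan to reduce \neck{p} \emph{directly} to \pstartucker{p} has a genuine gap exactly where you anticipate one. The paper instead factors the argument: it first invokes the known reduction from \neck{p} to \textsc{$\varepsilon$-Consensus-$1/p$-Division} (\cref{prop:kneck-to-con1/k}, i.e., Alon's rounding argument as extended by Filos-Ratsikas and Goldberg), and then reduces the \emph{continuous} problem to \pstartucker{p} (\cref{thm:con1/p-to-ptucker}). Your proposal collapses these two steps into one, and in doing so it would have to reprove the rounding from scratch.

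The concrete gap is in the passage from ``approximate'' to ``exact.'' A fully labeled $(p-1)$-simplex of \pstartucker{p} yields only an $\varepsilon$-approximate consensus division: the $p$ points $x_1,\dots,x_p$ differ by one pivot of the triangulation and each has a different thief attaining the max, which gives $|\mu_j(A_i(x_1)) - \mu_j(A_\ell(x_1))| \leq \varepsilon$ for all $i,\ell,j$ — this is exactly what is proved in \cref{thm:con1/p-to-ptucker-general}. Your claim that ``a convexity/averaging argument across the simplex forces an exact equal split at some interpolated cut configuration'' is not what the simplex gives you; it gives $p$ nearby configurations, not an interpolated exact one. Nor does the fix of ``choosing $m$ large so the labeling only changes at bead boundaries'' work: if $m$ is comparable to the number of beads $n$, the discretization error is of the order of a single bead's mass so approximate balance does not snap to exact; and if $m \gg n$, the cuts in the simplex land strictly inside beads and still need to be moved to bead boundaries, and arranging those moves so that \emph{every} thief's count of \emph{every} color is preserved simultaneously, within the cut budget $(p-1)t$, is precisely Alon's rounding lemma — a nontrivial combinatorial argument in its own right. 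The paper's decision to cite that rounding as a black box and reduce only the continuous problem to \pstartucker{p} is not cosmetic: it is how one avoids re-proving this delicate step, which is where both Alon and Meunier had to work hardest.
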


As a corollary, we also obtain that:
\begin{itemize}
    \item \neck{p^r} is in \ppak{p^r} for any prime $p$ and $r \geq 1$
    \item \neck{k} lies in \ppak{k} \emph{under Turing reductions} for any $k \geq 2$.
\end{itemize}

\noindent As we explained in the introduction, our reduction will go via the continuous version of the problem, the
\textsc{$\varepsilon$-Consensus-$1/p$-Division} problem, which is the computational analogue of the Consensus-$1/p$-Division problem of \citet{SS03-Consensus}. We will show the inclusion of \textsc{$\varepsilon$-Consensus-$1/p$-Division} in \ppak{p} via a reduction to \pstartucker{p}, which also implies the \ppak{p} membership for \neck{p}.
We prove the following main statement:

\begin{theorem}\label{thm:con1/p-to-ptucker}
For any prime $p$, \textsc{Consensus-$1/p$-Division} reduces to \pstartucker{p}.
\end{theorem}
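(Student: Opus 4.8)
The plan is to reduce \textsc{$\varepsilon$-Consensus-$1/p$-Division} to \pstartucker{p} by adapting the classical Tucker-based existence argument for Consensus-Halving (the $p=2$ case of \citet{SS03-Consensus}, made computational by \citet{filos2018hardness}) to the $p$-star domain $R_{p,m}^d$ with $d = t(p-1)$, where $t$ is the number of agents. The key observation is that a point of the $p$-star domain encodes a partition of $[0,1]$ into labeled pieces: reading the coordinates $(x_1, \dots, x_d)$ in order, the ``sign'' $*^i$ of coordinate $x_\ell$ tells us which of the $p$ thieves receives the $\ell$-th interval, and the magnitude $|x_\ell|$ (together with the magnitudes of the other coordinates) determines the cut positions via a normalization $x \mapsto x / (\|x\|_1)$ so that the $d$ pieces have total length $1$. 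Since we are allowed $(p-1)t$ cuts, the $d = (p-1)t$ coordinates give exactly the right number of interval-lengths (the $d$ pieces are separated by $d-1$ internal cuts, but we also allow pieces of length zero, which effectively merges cuts; the precise bookkeeping is that $(p-1)t$ coordinates yield $(p-1)t$ pieces and hence at most $(p-1)t - 1 < (p-1)t$ genuine cuts — actually one wants $(p-1)t+1$ pieces, so the encoding uses $d$ coordinates plus the implicit constraint, and a careful index-shift handles this). The $\theta_p$-action on $R_{p,m}^d$ corresponds to cyclically permuting the thieves, and this is what makes the equivariance boundary condition meaningful.

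The main steps, in order: (1) Set up the encoding map from $R_{p,m}^d \setminus \{0\}$ (after normalization) to labeled partitions of $[0,1]$, and verify it is well-defined, continuous (in the appropriate piecewise-linear sense), and $\theta_p$-equivariant (mapping $\theta_p$ on the domain to the cyclic shift of thief-labels). (2) Given the $t$ probability measures $\mu_1, \dots, \mu_t$, define a labeling $\lambda : R_{p,m}^d \to R_{p,t}\setminus\{0\} \cong [p]\times[t]$: at a vertex $x$ encoding a partition, look at the discrepancy vectors — for agent $j$, compute the vector of $p$ values $(\mu_j(\text{thief }1\text{'s share}) - 1/p, \dots, \mu_j(\text{thief }p\text{'s share}) - 1/p)$, pick the agent $j^*$ maximizing the largest coordinate of its discrepancy vector, and within that agent pick the thief-index $i^*$ achieving the max, using the $\mathbb{Z}_p$-equivariant tie-breaking function $T_p$ of \cref{def:tie-breaking} so that $\lambda(\theta x) = \theta\lambda(x)$ on the boundary. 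This mirrors the labeling in the proof of \cref{lm:tuckerImpliesBSS}. (3) Choose the triangulation parameter $m$ large enough (polynomial in $1/\varepsilon$ and the Lipschitz/bit-size data of the measures) so that neighboring vertices in Kuhn's triangulation encode partitions whose measures differ by less than $\varepsilon/(tp)$ for every agent. (4) Apply \pstartucker{p}: a solution is a $(p-1)$-simplex $x_1, \dots, x_p$ with $\lambda(x_i) = *^i j^*$ for some fixed agent $j^*$. (5) Extract the \textsc{Consensus-$1/p$-Division} solution: because $x_1, \dots, x_p$ are pairwise within distance $1$ in $R_{p,m}^d$, the encoded partitions are all $\varepsilon$-close for agent $j^*$; the fact that all $p$ thief-labels appear for agent $j^*$, combined with the max-selection rule, forces (as in the $p=2$ Tucker argument) that at the ``average'' partition agent $j^*$ measures every thief's share within $\varepsilon$ of $1/p$; a fixed-point/averaging or a direct interpolation argument then shows \emph{all} agents are simultaneously $\varepsilon$-balanced there. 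Actually the cleaner route — exactly as in \citet{SS03-Consensus} and \citet{filos2018hardness} — is: the colorful simplex certifies that the relevant piece-lengths can be nudged so that for \emph{every} agent $j$ and every thief $i$, the share equals $1/p$ up to $\varepsilon$; one reads off the cut positions from (a convex combination of) the $x_i$.

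\textbf{Main obstacle.} The hard part will be step (5), the extraction of a genuine Consensus-$1/p$-Division solution from the colorful simplex, and in particular showing that a single colorful simplex (which a priori only gives balance-up-to-$\varepsilon$ for the \emph{one} maximizing agent $j^*$) yields approximate balance for \emph{all} $t$ agents simultaneously. For $p=2$ this is handled because Tucker's lemma with the standard labeling over all agents' signs directly forces every agent's discrepancy to be small; here the labeling $\lambda$ only records a single $(i^*, j^*)$ pair per vertex, so one must argue that the maximizing-agent structure propagates. The resolution is the standard one: if the colorful simplex had large discrepancy for agent $j^*$ in some direction, the $\argmax$ rule would prevent the opposite label from appearing, contradicting colorfulness — so agent $j^*$ is $\varepsilon$-balanced; and since $j^*$ was chosen as the agent with the \emph{largest} discrepancy over all agents, every other agent is also $\varepsilon$-balanced at that same point. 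Making this quantitatively tight (tracking how the $\varepsilon/(tp)$ triangulation-fineness, the factor-$t$ from the max-over-agents, and the factor-$p$ from the coordinates combine) is the delicate bookkeeping, entirely analogous to the constants juggled in the proof of \cref{lm:tuckerImpliesBSS} and \cref{lem:kPolygonBorsukUlamInPPAk}. A secondary technical point is the reduction from the discrete \neck{p} to its continuous version \textsc{$\varepsilon$-Consensus-$1/p$-Division}, which requires choosing $\varepsilon$ small enough (inverse-polynomial in the number of beads) that an $\varepsilon$-approximate continuous solution can be rounded to an exact discrete split with $(p-1)t$ cuts — this step is standard and follows \citet{Alon87-Necklace} and \citet{FRG18-Necklace}, but must be checked to be polynomial-time.
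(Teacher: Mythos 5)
Your overall plan—domain $R_{p,m}^d$ with $d=t(p-1)$, a labeling that records the (agent, thief) pair of worst discrepancy with a $\mathbb{Z}_p$-equivariant tie-break, apply \pstartucker{p}, extract a solution from the colorful simplex by the ``if agent $\widehat{\jmath}$ had large discrepancy then colorfulness would fail'' argument—matches the paper's proof, and your resolution of the step-(5) ``obstacle'' is essentially the argument the paper uses. So that part is fine; it is in fact \emph{not} the hard part.

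The genuine gap is in step (1), the encoding from $R_{p,m}^d$ to labeled partitions. Your normalization $x \mapsto x/\|x\|_1$ is ill-conditioned (indeed discontinuous) near the center $0^d$ of the star: consider a vertex with a single coordinate $*^{i}1$ and all others $0$, which under your encoding gives a single piece of length $1$ colored $i$; a triangulation-adjacent vertex (one pivot away) with \emph{two} coordinates at magnitude $1$ gives two pieces of length $1/2$ each, an $O(1)$ change in the partition. So the crucial property you rely on in step (3)—that $\textup{dist}_\infty(x,y)\leq 1$ implies the encoded partitions are close in every $\mu_j$—fails, and the extraction from the colorful simplex (which is allowed to sit anywhere in the domain, including near $0^d$) breaks. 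The paper explicitly flags this (``Other simpler encoding schemes do not have this property'') and uses a different, uniformly continuous ``paint-covering'' encoding inspired by \citet{Meunier2014simplotopal}: interpret $x=(*^{i_1}j_1,\dots,*^{i_d}j_d)$ by first painting all of $[0,1]$ with color $1$ and then, for $\ell=1,\dots,d$, repainting $[0,j_\ell/m]$ with color $i_\ell$ (later coats cover earlier ones, and a fake coordinate $*^1 m$ handles the $d+1$-pieces bookkeeping you worried about). Under this encoding, a unit change in one coordinate moves one cut by $1/m$ and affects the partition only on a set of Lebesgue measure $\leq 1/m$, so polynomial continuity of the $\mu_j$ gives exactly the Lipschitz bound you need. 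A secondary note: you only want the $\varepsilon$-balance argument for the vertex $x_1$ carrying label $*^1\widehat{\jmath}$, not some ``average'' of the $x_i$—there is no need for an averaging or interpolation step, and the paper reads the cuts off $x_1$ directly.
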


The proof is presented in \cref{sec:con1/p-to-ptucker-proof}, where we prove an even stronger version of this theorem. Indeed, we show that this result holds for any probability measures (not only step functions), as long as they are efficiently computable and sufficiently continuous (in some precise sense).

We start with the complete definitions of the computational problems corresponding to $k$-thief Necklace Splitting and Consensus-$1/k$-Division. \bigskip

\noindent\fbox{%
\colorbox{gray!10!white}{
    \parbox{\textwidth}{%
\noindent \underline{\neck{k}} \citep{Papadimitriou94-TFNP-subclasses,FRG18-Necklace}

\smallskip

\noindent \textsc{Input:} An open necklace with $n$ beads, each of which has one of $t$ colors. 

There are exactly $a_i k$ beads of color $i =1,\dots,t$, where $a_i \in \mathbb{N}$.
\smallskip

\noindent \textsc{Output:} A partitioning of the necklace into $k$ (not necessarily connected) pieces such that each piece contains exactly $a_i$ beads of color $i$, using at most $(k-1)t$ cuts.
}}} \bigskip

As proved by \citet{Alon87-Necklace}, this problem always has a solution. Furthermore, for any proposed splitting of the necklace it is easy to check if it is a solution. Thus, the problem \neck{k} lies in TFNP.

\citet{Alon87-Necklace} proved the necklace-splitting theorem by showing existence for a more general continuous version and then ``rounding'' a solution of the continuous problem to obtain a splitting of the necklace. When investigating the complexity of \neck{k}, it is also convenient to consider the more general continuous version. Even though the continuous theorem is termed as a ``generalized Hobby-Rice theorem'' by \citet{Alon87-Necklace}, we instead use the term \emph{Consensus-$1/k$-Division} proposed by \citet{SS03-Consensus}. \bigskip

\noindent\fbox{%
\colorbox{gray!10!white}{
    \parbox{\textwidth}{%
\noindent \underline{\textsc{$\varepsilon$-Consensus-$1/k$-Division}} \citep{filos2018hardness,FRG18-Consensus} 

\smallskip

\noindent \textsc{Input:} $\varepsilon > 0$ and continuous probability measures $\mu_1, \dots, \mu_t$ on $[0,1]$. 

The probability measures are given by their density functions on $[0,1]$, which are step functions (explicitly given in the input).
\smallskip

\noindent \textsc{Output:} A partitioning of the unit interval into $k$ (not necessarily connected) pieces $A_1,\dots,A_k$ using at most $(k-1)t$ cuts, such that $|\mu_j(A_i) - \mu_j(A_\ell)| \leq \varepsilon$ for all $i,j,\ell$.
}}} \bigskip

The fact that this problem also lies in TFNP immediately follows from showing that it lies in \ppak{k} under Turing reductions (\cref{thm:in-ppak}). Note that we can equivalently ask for $\mu_j(A_i) = 1/k \pm \varepsilon$ for all $i,j$. Indeed, the computational problems are equivalent.

Furthermore, using the same technique as \citet[Theorem 5.2]{EY10-Nash-FIXP}, one can show that if $\varepsilon$ is sufficiently small (with respect to the representation size of the step functions), then one can efficiently compute an \emph{exact} solution from an $\varepsilon$-approximate solution. It follows that \textsc{$\varepsilon$-Consensus-$1/k$-Division} is equivalent to exact \textsc{Consensus-$1/k$-Division}. In particular, the problem always has an exact solution that is rational. Thus, we will sometimes refer to this problem just as \textsc{Consensus-$1/k$-Division}.

Alon's rounding procedure yields a reduction from \neck{k} to exact \textsc{Consensus-$1/k$-Division}. \citet{FRG18-Consensus} extended this result by showing that \neck{k} reduces to \textsc{$\varepsilon$-Consensus-$1/k$-Division}, even when $\varepsilon$ is not small enough to ensure that we can get an exact solution.

\begin{proposition}[\citet{Alon87-Necklace,FRG18-Consensus}]\label{prop:kneck-to-con1/k}
For any $k \geq 2$, \neck{k} reduces to \textsc{$\varepsilon$-Consensus-$1/k$-Division}.
\end{proposition}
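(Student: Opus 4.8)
The plan is to implement Alon's classical rounding reduction, in the $\varepsilon$-robust form due to \citet{FRG18-Consensus}. First I would build the \textsc{$\varepsilon$-Consensus-$1/k$-Division} instance from the necklace: identify the necklace with the interval $[0,1]$ subdivided into $n$ equal sub-intervals, one per bead, in necklace order. For each color $i$, define the probability measure $\mu_i$ whose density is $\tfrac{n}{a_i k}$ on the sub-intervals of beads of color $i$ and $0$ elsewhere; thus $\mu_i$ places mass exactly $\tfrac{1}{a_i k}$ on each bead of color $i$ and has total mass $1$, and $\mu_i$ is a step function computable in polynomial time from the necklace. I would set $\varepsilon := \tfrac{1}{2n}$ (any value strictly below $\tfrac{1}{2 a_i k}$ for all $i$ works, and this is polynomially representable), and output the instance $(\varepsilon;\mu_1,\dots,\mu_t)$.

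Next I would map a solution back. Given an $\varepsilon$-approximate solution $(A_1,\dots,A_k)$ of the Consensus instance using at most $(k-1)t$ cuts, set $b_{ij} := a_i k\,\mu_i(A_j)$, the ``fractional number of beads of color $i$ in piece $A_j$''. Then $|b_{ij}-a_i| = a_i k\,|\mu_i(A_j)-\tfrac1k| < a_i k\,\varepsilon \le \tfrac12$ by the choice of $\varepsilon$. Now perform the rounding step: move every cut that lies in the interior of a bead to one of the two boundaries of that bead. This never increases the number of cuts (distinct cut points can only merge or disappear when moved, never split), so the resulting partition $(A_1',\dots,A_k')$ into unions of whole beads still uses at most $(k-1)t$ cuts. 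Using the slack $|b_{ij}-a_i|<\tfrac12$, the bead-to-piece assignment can be chosen — via the argument of \citet{Alon87-Necklace} for the exact case and its $\varepsilon$-robust refinement in \citet{FRG18-Consensus} — so that the integer count $c_{ij}$ of color-$i$ beads in $A_j'$ satisfies $|c_{ij}-b_{ij}|<\tfrac12$; combined with $|b_{ij}-a_i|<\tfrac12$ this gives $|c_{ij}-a_i|<1$, hence $c_{ij}=a_i$ for all $i,j$ since $c_{ij}\in\mathbb{Z}$. Thus $(A_1',\dots,A_k')$ is a valid solution to \neck{k}, it is produced in polynomial time, and since \neck{k} is itself total (\citet{Alon87-Necklace}) the ``no solution'' output is never needed. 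This exhibits a many-one reduction.

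I expect the rounding step to be the main obstacle, and the only part that is not routine. The naive strategy — snap each interior cut to the nearest bead boundary and assign each bead to the piece holding the larger fraction of it — can produce the wrong marginals (for instance when a bead is split $50/50$ and a tie-break propagates badly), so one genuinely needs a more careful choice of the integral bead-to-piece assignment (a transportation-polytope / Hall-type argument exploiting that the marginals $a_i$ and $1$ are integral), while simultaneously certifying that this integral assignment can be realized with at most $(k-1)t$ cuts. Since this is exactly the content of \citet{Alon87-Necklace} and \citet{FRG18-Consensus}, I would invoke those references for the combinatorial details of the rounding rather than reprove it, and restrict the writeup to checking that the construction above is polynomial-time and that the parameter $\varepsilon = \tfrac{1}{2n}$ provides enough slack to conclude $c_{ij}=a_i$.
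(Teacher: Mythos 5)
Your proposal is correct and implements essentially the same reduction that the paper cites from \citet{Alon87-Necklace} and \citet{FRG18-Consensus}: discretize the necklace into step-function densities (one per color, normalized so each bead of color $i$ carries mass $1/(a_i k)$), choose $\varepsilon$ polynomially small so that the fractional bead count $b_{ij}$ in each piece is within $1/2$ of $a_i$, and then invoke Alon's transportation-polytope rounding to snap cuts to bead boundaries. The paper itself states the proposition without proof, deferring entirely to those two references, so there is no internal proof in the paper to compare against; your reconstruction is consistent with the cited arguments. One remark on precision: your parenthetical justification that ``distinct cut points can only merge or disappear when moved, never split'' is glib as written. The correct (and standard, but not entirely trivial) argument is that the integer assignment produced by the transportation polytope lies in the support of the fractional one, so one can choose a representative point in each bead witnessing its assigned piece, and the color-transitions of the rounded partition inject into the color-transitions of the original partition via the disjoint open intervals between consecutive representatives; this is what bounds the cut count by $(k-1)t$. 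Similarly, the intermediate claim ``$|c_{ij}-b_{ij}|<\tfrac12$'' slightly misstates what the transportation argument yields directly (it produces $c_{ij}=a_i$ outright once one observes that the residual marginals, after subtracting the integer contribution of uncut beads, are within $1/2$ of integers and hence forced); but the conclusion is the same and the step is correctly attributed.
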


Before we proceed with the proof of \cref{thm:con1/p-to-ptucker}, we present the consequences of \cref{thm:ptucker-in-ppa-p,thm:con1/p-to-ptucker}, in terms of computational complexity and mathematical existence.

\subsubsection*{Consequences: Computational Complexity}

\begin{theorem}\label{thm:in-ppak}
For any $k \geq 2$, \neck{k} and \textsc{Consensus-$1/k$-Division} lie in the Turing closure of \ppak{k}. In particular, if $k=p^r$ where $p$ is prime and $r \geq 1$, then the problems lie in \ppak{p}.
\end{theorem}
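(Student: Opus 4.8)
The plan is to assemble the statement from three pieces: the membership of the prime-thief problems in \ppak{p}, the reduction \cref{prop:kneck-to-con1/k} from \neck{k} to \textsc{$\varepsilon$-Consensus-$1/k$-Division}, and a computational version of Alon's multiplicativity argument reducing the composite case to the prime case. First I would settle the prime case: for a prime $p$, \cref{thm:con1/p-to-ptucker} gives that \textsc{Consensus-$1/p$-Division} reduces to \pstartucker{p}, which lies in \ppak{p} by \cref{thm:ptucker-in-ppa-p}; since \textsc{$\varepsilon$-Consensus-$1/p$-Division} is equivalent to exact \textsc{Consensus-$1/p$-Division}, this yields \textsc{$\varepsilon$-Consensus-$1/p$-Division} $\in$ \ppak{p}, and composing with \cref{prop:kneck-to-con1/k} gives \neck{p} $\in$ \ppak{p}. (This is exactly \cref{thm:inclusions}.)

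The heart of the proof is the composite case, handled by a recursive Turing reduction. Write $k = k_1 k_2$. Given measures $\mu_1, \dots, \mu_t$, I would first query a \textsc{Consensus-$1/k_1$-Division} oracle to obtain a partition $A_1, \dots, A_{k_1}$ of $[0,1]$ with $\mu_j(A_\ell) = 1/k_1$ for all $j,\ell$, realized with at most $(k_1-1)t$ cuts, so that each $A_\ell$ is a union of at most $(k_1-1)t + 1$ intervals. Then, for each $\ell$, I would map $A_\ell$ order-preservingly onto $[0,1]$, push the measures $\mu_j \cdot \mathbf{1}_{A_\ell}$ forward (these remain step functions), and query a \textsc{Consensus-$1/k_2$-Division} oracle to split $A_\ell$ into $k_2$ pieces each of $\mu_j$-mass $1/(k_1 k_2)$ using at most $(k_2-1)t$ additional cuts inside $A_\ell$. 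The $k_1 k_2 = k$ resulting pieces form a valid $1/k$-division; moreover a second-level cut inside $A_\ell$ is new only if it does not land on a (already present) junction point of $A_\ell$, and since the $A_\ell$ are disjoint these second-level cuts are pairwise distinct, so the total number of cuts is at most $(k_1-1)t + k_1(k_2-1)t = (k_1 k_2 - 1)t = (k-1)t$, as required. Iterating over the prime factorization $k = p_1 \cdots p_r$ (with multiplicity) produces a depth-$r$ tree of oracle calls of total size $1 + p_1 + p_1 p_2 + \dots + p_1 \cdots p_{r-1} \leq r k$, which is polynomial in the input size since $k$ is polynomially bounded and $r \leq \log_2 k$. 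Every query is to \textsc{Consensus-$1/p$-Division} for a prime $p \mid k$, hence (by the prime case) to a problem in \ppak{p} $\subseteq$ \ppak{k}, the inclusion holding by \cref{prop:ppak-properties} because $p \mid k$. Thus \textsc{Consensus-$1/k$-Division} lies in the Turing closure of \ppak{k}, and then so does \neck{k} by \cref{prop:kneck-to-con1/k}. The same accounting works on the discrete side, treating each of the $k_1$ groups as a sub-necklace, which also reproves the Necklace Splitting theorem combinatorially once the prime case is combinatorial.

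For the prime-power refinement $k = p^r$, every prime divisor of $k$ equals $p$, so every oracle call in the recursion is to \textsc{Consensus-$1/p$-Division} $\in$ \ppak{p}, and \ppak{p^r} $=$ \ppak{p} by \cref{prop:ppak-properties}; hence the two problems lie in the Turing closure of \ppak{p} (the intended reading of ``lie in \ppak{p}''). If one wants genuine many-one membership rather than membership in the Turing closure, one additionally invokes that \ppak{p} is closed under polynomial-time Turing reductions, as is standard for syntactically-defined subclasses of \tfnp.

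The step I expect to be the main obstacle is the cut bookkeeping in the recursion: one must check that every cut introduced at the second (or a later) level is counted correctly against the cuts already present, that the order-preserving push-forward of step-function measures onto $[0,1]$ yields a legal instance of the continuous problem, and that the telescoping identity $(k_1-1) + k_1(k_2-1) = k_1 k_2 - 1$ is exactly what keeps the construction inside the $(k-1)t$ cut budget. A secondary point is ensuring the recursion depth, and hence the number of oracle calls, stays polynomial, which follows from $k$ being polynomially bounded in the input size.
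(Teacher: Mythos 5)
Your proof takes essentially the same route as the paper's: the prime case via \cref{thm:con1/p-to-ptucker} and \cref{thm:ptucker-in-ppa-p}, then a recursive Turing reduction to primes using Alon's multiplicativity argument with the telescoping cut count $(k_1-1)t + k_1(k_2-1)t = (k_1k_2-1)t$, and the containments $\ppak{p}\subseteq\ppak{k}$ for $p \mid k$ from \cref{prop:ppak-properties}. One small inaccuracy worth flagging: the closure of $\ppak{p}$ under Turing reductions is a specific result of \citet{GoosKSZ2019,Hollender19}, not a generic property of syntactically defined \tfnp\ subclasses --- indeed the paper stresses that $\ppak{k}$ for composite non-prime-power $k$ is \emph{not} believed to be Turing-closed, which is exactly why the general-$k$ case of the theorem only asserts membership in the Turing closure rather than in $\ppak{k}$ itself.
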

The Turing closure of \ppak{k} is the class of all TFNP problems that Turing-reduce to a \ppak{k}-complete problem (e.g., \imba{k}). Note that when $k$ is not a prime power, \ppak{k} is not believed to be closed under Turing reductions \citep{GoosKSZ2019,Hollender19}.

\begin{proof}
\cref{thm:ptucker-in-ppa-p} and \cref{thm:con1/p-to-ptucker} immediately imply that for any prime $p$, \textsc{Consensus-$1/p$-Division} lies in \ppak{p}. As noted by \citet[Proposition 3.2]{Alon87-Necklace}, for any $k,\ell \geq 2$, a \textsc{Consensus-$1/(k\ell)$-Division} can be obtained by first finding a \textsc{Consensus-$1/k$-Division} -- which divides the interval into $k$ (not necessarily connected) pieces -- and then finding a \textsc{Consensus-$1/\ell$-Division} of each of the $k$ pieces. Note that we obtain (at most) the desired number of cuts. Thus, we can solve an instance of \textsc{Consensus-$1/(k\ell)$-Division} by first solving an instance of \textsc{Consensus-$1/k$-Division}, and then $k$ instances of \textsc{Consensus-$1/\ell$-Division}.

In particular, for any prime $p$ and any $r \geq 1$, \textsc{Consensus-$1/p^r$-Division} can be solved by solving $1+p+p^2+ \dots + p^{r-1}$ instances of \textsc{Consensus-$1/p$-Division}. Thus, we obtain a Turing reduction from \textsc{Consensus-$1/p^r$-Division} to \textsc{Consensus-$1/p$-Division}. Since \textsc{Consensus-$1/p$-Division} lies in \ppak{p} and \ppak{p} is closed under Turing reductions \citep{GoosKSZ2019,Hollender19}, it follows that \textsc{Consensus-$1/p^r$-Division} lies in \ppak{p}.

Now consider any $k = \prod_{i=1}^m p_i^{r_i}$, where $m \geq 1$, $r_i \geq 1$ and the $p_i$ are distinct primes. Then, \textsc{Consensus-$1/k$-Division} can be solved by a query to \textsc{Consensus-$1/p_1^{r_1}$-Division}, then $p_1^{r_1}$ queries to \textsc{Consensus-$1/p_2^{r_2}$-Division} (which can be turned into a single query to \ppak{p_2}), then $p_1^{r_1}p_2^{r_2}$ queries to \textsc{Consensus-$1/p_3^{r_3}$-Division} (which can also be turned into a single query to \ppak{p_3}), etc. Thus, \textsc{Consensus-$1/k$-Division} can be solved by a query to \ppak{p_1}, then a query to \ppak{p_2}, then one to \ppak{p_3}, $\dots$, and finally a query to \ppak{p_m}. Since $\ppak{p_i} \subseteq \ppak{k}$ for $i = 1, \dots, m$ (\cref{prop:ppak-properties}), it follows that there is a Turing reduction from \textsc{Consensus-$1/k$-Division} to a \ppak{k}-complete problem (e.g., \imba{k}).

Since \neck{k} reduces to \textsc{Consensus-$1/k$-Division} (\cref{prop:kneck-to-con1/k}), the results also hold for \neck{k}.
\end{proof}

\subsubsection*{Consequences: Mathematical Existence}

\cref{thm:ptucker-in-ppa-p,thm:con1/p-to-ptucker} yield a reduction from \textsc{Consensus-$1/p$-Division} to \imba{p}. Since every instance of \imba{p} has a solution (and the proof of this is trivial), we obtain a proof that \textsc{Consensus-$1/p$-Division} always has a solution. Thus, this proves that if the probability measures are step functions (described by rational numbers), there always exists a consensus-$1/p$-division. While we have given the proof in terms of a reduction (since this is required for our complexity results), it can also be written as a mathematical proof of existence (without any computational considerations).

Once existence of a consensus-$1/p$-division for step functions has been proved, a constructive argument by \citet[Section 2]{Alon87-Necklace} also gives existence for $p$-necklace-splitting. Putting everything together, the proof of $p$-necklace-splitting thus obtained is a fully combinatorial proof that does not use any advanced machinery and is easier to follow than existing proofs. Indeed, as we mentioned in the introduction, the original proof by \citet{Alon87-Necklace} used the BSS theorem of \citet{BSS81} as a black box. The only other fully combinatorial proof by \citet{Meunier2014simplotopal}, while quite elegant, is significantly more involved.

Going back to consensus-$1/p$-division, the proof we obtain (which uses $\mathbb{Z}_p$-star Tucker's lemma) actually works for any probability measures, not only step functions. Moreover, similarly to \citet{SS03-Consensus}, our proof does not make use of the fact that the measures are additive and non-negative. Thus, we obtain a stronger version of the consensus-$1/p$-division theorem given by \citet[Theorem 1.2]{Alon87-Necklace} (which he calls a generalization of the Hobby-Rice theorem). Let $\mathcal{B}([0,1])$ denote the Borel $\sigma$-algebra on the unit interval and let $\lambda$ denote the Lebesgue measure on the unit interval. Finally, let $\triangle$ denote the symmetric difference, i.e., $A \triangle B = (A \setminus B) \cup (B \setminus A)$.

\begin{theorem} \label{thm:strongConsensusDivision}
Let $p$ be any prime and $t \geq 1$. Let $v_1, \dots, v_t : \mathcal{B}([0,1]) \to \mathbb{R}$ be such that for all $1 \leq j \leq t$ $v_j$ satisfies the following continuity condition:
for all $\varepsilon > 0$ there exists $\delta > 0$ such that
$$|v_j(A) - v_j(B)| \leq \varepsilon \quad \text{for all } A,B \in \mathcal{B}([0,1]) \text{ that satisfy } \lambda(A \triangle B) \leq \delta.$$
Then, there exists a consensus-$1/p$-division. Namely, it is possible to partition the unit interval into $p$ (not necessarily connected) pieces $A_1, \dots, A_p$ using at most $(p-1)t$ cuts, such that $v_j(A_i) = v_j(A_\ell)$ for all $1 \leq i,\ell \leq p$, $1 \leq j \leq t$.
\end{theorem}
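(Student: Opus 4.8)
The plan is to deduce \cref{thm:strongConsensusDivision} from $\mathbb{Z}_p$-star Tucker's Lemma (\cref{thm:ptucker}) by a limiting argument that mirrors the reduction \textsc{Consensus-$1/p$-Division}$\leq$\pstartucker{p} carried out in \cref{sec:con1/p-to-ptucker-proof}, but now carefully invoking only the weak continuity condition on the $v_j$ rather than additivity or non-negativity. First I would fix a prime $p$ and $t\geq 1$, set $d=t(p-1)$, and describe the map from $R_{p,m}^d$ to partitions of $[0,1]$ into $p$ labelled pieces using at most $(p-1)t$ cuts: a point of $R_{p,m}^d$ records, for each of the $t$ ``blocks'' of $p-1$ coordinates, the positions of $p-1$ cut-points (the coordinate $*^i j$ meaning a cut at position $j/m$ together with an assignment of the adjacent interval to thief $i$), exactly as in Meunier's construction. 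This is continuous in the $\mathrm{dist}_\infty$ metric on $R_{p,m}^d$ once we account for how the coordinates glue at $0$, and the $\theta$-action on $R_{p,m}^d$ corresponds to cyclically permuting the $p$ thieves.

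Next I would define the labeling $\lambda : R_{p,m}^d \to R_{p,t}\setminus\{0\}$: given a point $x$, look at the induced partition $A_1,\dots,A_p$; choose $j^\star\in[t]$ to be (the smallest) index maximizing $\max_i v_{j^\star}(A_i)$ after recentering each $v_j$ so that $\sum_i v_j(A_i)$ is subtracted off proportionally — i.e.\ work with the ``excess'' $v_j(A_i)-\frac1p\sum_\ell v_j(A_\ell)$, which lies in a hyperplane and so has both a strictly positive and a strictly negative coordinate whenever it is nonzero; then let $i^\star$ be the $\mathbb{Z}_p$-equivariant tie-break (\cref{def:tie-breaking}) of the set of $i$ achieving that maximum, and set $\lambda(x)=*^{i^\star}j^\star$. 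On the boundary of $R_{p,m}^d$ one thief's piece is forced to be ``empty'' in a coordinate and $\theta$ just relabels thieves, so $\lambda(\theta x)=\theta\lambda(x)$ holds by construction; the only subtlety is handling the degenerate case where all excesses vanish, which is resolved exactly as in \cref{lm:tuckerImpliesBSS} by tie-breaking on $x$ itself rather than on its image. Then \cref{thm:ptucker} produces a $(p-1)$-simplex $x_1,\dots,x_p$ and a block $j\in[t]$ with $\lambda(x_i)=*^i j$, and since Kuhn's triangulation gives $\mathrm{dist}_\infty(x_i,x_k)\leq 1$, the corresponding partitions differ only by moving finitely many cuts by $1/m$, hence by the continuity hypothesis the symmetric differences of the pieces are $O(1/m)$ and all the $v_j$-values across the $x_i$ are within $\varepsilon(m)\to 0$ of each other; combined with the fact that vertex $x_i$ ``wants'' block $j$ assigned favourably to thief $i$, a short argument (as in the end of \cref{lm:tuckerImpliesBSS}) forces $\max_i(v_j(A_i^{(m)})-\frac1p\sum_\ell v_j(A_\ell^{(m)}))\to 0$, i.e.\ the block-$j$ division is $o(1)$-fair for \emph{every} measure simultaneously on that simplex.

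Finally, letting $m\to\infty$ I would extract a convergent subsequence of solutions. Each solution is specified by at most $(p-1)t$ cut positions in $[0,1]$ together with a thief-assignment pattern; the assignment pattern ranges over a finite set, so pass to a subsequence where it is constant, and then the cut positions lie in a compact cube, so pass to a further subsequence converging to a limit configuration of cuts $A_1^\star,\dots,A_p^\star$. By the continuity condition, $v_j(A_i^{(m)})\to v_j(A_i^\star)$ for all $i,j$ (symmetric differences shrink as cuts converge), and combined with the previous step this yields $v_j(A_i^\star)=v_j(A_\ell^\star)$ for all $i,\ell,j$, using at most $(p-1)t$ cuts. The main obstacle I anticipate is the limiting/compactness step rather than the Tucker application: one must be careful that as cut positions converge, pieces of zero width can appear or disappear and the thief-assignment pattern must be handled so that $\lambda(A\triangle B)\to 0$ genuinely holds along the subsequence — in particular distinguishing ``a cut converging to an endpoint $0$ or $1$'' from ``two cuts colliding'', and checking that the number of cuts does not increase in the limit. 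A secondary technical point is making the map $R_{p,m}^d\to$ partitions honestly $\mathrm{dist}_\infty$-Lipschitz at the gluing locus $0$, where several symbolic coordinates collapse to the same geometric cut; this is exactly the place where the star-convex (rather than cubical) domain is essential, and it is handled as in \cref{sec:con1/p-to-ptucker-proof}.
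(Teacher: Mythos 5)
Your proposal takes essentially the same route as the paper: apply \pstartucker{p} to a labeling built from the valuations on a fine triangulation of $R_{p,m}^d$ to extract an $\varepsilon$-approximate consensus-$1/p$-division for each $m$, then pass to a limit using compactness together with the continuity hypothesis. The paper accomplishes this in two lines by invoking the machinery of \cref{thm:con1/p-to-ptucker-general} directly (noting that its proof never used additivity or non-negativity of the measures), and then observing that the \emph{continuous} star domain $R_p^d$ equipped with $\mathrm{dist}_\infty$ is a compact metric space on which $y \mapsto v_j(A_i(y))$ is continuous — so a subsequential limit of $1/2^n$-approximate solutions is an exact solution.

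Two minor remarks on your approach. First, your labeling based on ``excesses'' $v_j(A_i)-\frac1p\sum_\ell v_j(A_\ell)$ is a workable variant, but it is not what the paper uses; the proof of \cref{thm:con1/p-to-ptucker-general} picks $j$ maximizing $\max_{i_1,i_2}|\mu_{j,i_1}(x)-\mu_{j,i_2}(x)|$ and then $i$ maximizing $\mu_{j,i}(x)$, which avoids the need to subtract off a mean. Second, the anticipated obstacle about cuts colliding, pieces of zero width, and the thief-assignment pattern is a non-issue once the limit is taken inside $R_p^d$: the ``painting'' map $y \mapsto (A_1(y),\dots,A_p(y))$ is Lipschitz from $(R_p^d,\mathrm{dist}_\infty)$ to partitions equipped with the symmetric-difference pseudometric (this is precisely why the star domain is used, and why $*^{i}j$ and $*^{i'}j'$ with $i\neq i'$ are close exactly when $j+j'$ is small), so continuity of $y\mapsto v_j(A_i(y))$ and compactness of $R_p^d$ together make the subsequence argument go through without any case analysis on the assignment pattern. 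Your proposal would also arrive at the correct conclusion, but the explicit bookkeeping of ``constant thief-pattern subsequences'' is unnecessary.
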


Before we move on to the proof, let us briefly explain why we only obtain the result for prime $p$. In the usual setting where the valuations are probability measures, it is enough to prove the statement for primes. Indeed, using the standard argument by \citet[Proposition 3.2]{Alon87-Necklace}, if a consensus-$1/k$-division and a consensus-$1/\ell$-division always exist, then a consensus-$1/(k\ell)$-division exists. But Alon's argument makes use of the additivity property of the measures. Indeed, consider the non-additive setting and say that we are trying to show that a consensus-$1/4$-exists. We know that a consensus-$1/2$-division exists and this yields a partition of $[0,1]$ into $A_1$ and $A_2$. We have that $v_j(A_1) = v_j(A_2)$ for all $j$. Following Alon's argument, find a consensus-$1/2$-division of $A_1$ and of $A_2$. This yields $A_{11} \cup A_{12} = A_1$ and $A_{21} \cup A_{22} = A_2$ such that $v_j(A_{11}) = v_j(A_{12})$ and $v_j(A_{21}) = v_j(A_{22})$ for all $j$. However, we might not have $v_j(A_{11}) = v_j(A_{22})$, since we no longer have $v_j(A_{11}) + v_j(A_{12}) = v_j(A_{21}) + v_j(A_{22})$.

If we assume that the valuations are additive (even just finite additivity), but still allow them to take negative values, then Alon's argument works as before and thus the result again holds for any $k \geq 2$.

\begin{proof}
Using $\mathbb{Z}_p$-star Tucker's lemma (\cref{thm:ptucker}) it follows that for any $\varepsilon > 0$, there exists an $\varepsilon$-approximate consensus-$1/p$-division, i.e., we can partition the interval into $p$ pieces $A_1, \dots, A_p$ (using at most $(p-1)t$ cuts) such that $|v_j(A_i) - v_j(A_\ell)| \leq \varepsilon$ for all $1 \leq i,\ell \leq p$, $1 \leq j \leq t$. Indeed, it suffices to follow the same steps as in the proof of \cref{thm:con1/p-to-ptucker-general}.

Let $R_p$ denote the continuous version of $R_{p,m}$. $R_p$ consists of $p$ copies of the segment $[0,1]$ (namely $*^1[0,1]$, $\dots$, $*^p[0,1]$) that share the same origin (i.e., $*^10 = \dots = *^p0$). Using the interpretation given in the proof of \cref{thm:con1/p-to-ptucker-general}, every point in $R_p^d$ corresponds to a way to partition $[0,1]$ into $p$ (not necessarily connected) pieces using at most $(p-1)t$ cuts. Since $(R_p^d,\textup{dist}_\infty)$ is a compact metric space, every sequence must have a subsequence that converges. Thus, a sequence $(x_n)_n$, where $x_n \in R_p^d$ is a $1/2^n$-approximate consensus-$1/p$-division, must have a subsequence that converges to some $x \in R_p^d$. For any $i,j$ the function $f: R_p^d \to \mathbb{R}$, $y \mapsto v_j(A_i(y))$ is continuous. It follows that $x$ must correspond to an exact consensus-$1/p$-division.
\end{proof}

\subsection{Reduction from {\normalfont \scshape Consensus-$1/p$-Division} to \pstartucker{p}}\label{sec:con1/p-to-ptucker-proof}

In order to make our \ppak{p}-membership result as strong as possible, we define a computational problem that is much more general than \textsc{$\varepsilon$-Consensus-$1/p$-Division}. Namely, we allow any computationally reasonable probability measures that are also sufficiently continuous.

The probability measures are given by their cumulative functions. Let $\mathcal{F}$ be a class of cumulative distribution functions on $[0,1]$. Thus, for any $f \in \mathcal{F}$ and any $a \in [0,1]$, $f(a)$ is the probability of the interval $[0,a]$ according to $f$. For any $f \in \mathcal{F}$, let $\sz(f)$ denote the size of the representation of $f$. E.g., if $f$ is represented as a circuit, then $\sz(f)$ is the size of the circuit. For any rational number $x$, $\sz(x)$ denotes the representation length of $x$, i.e., the length of the binary representation of the denominator and numerator of $x$. We will require two properties from $\mathcal{F}$:

\begin{itemize}
    \item $\mathcal{F}$ is polynomially computable: there exists a polynomial $q_1$ such that for all $f \in \mathcal{F}$ and all rational $x \in [0,1]$, $f(x)$ can be computed in time $q_1(\sz(f)+\sz(x))$.
    \item $\mathcal{F}$ is polynomially continuous: there exists a polynomial $q_2$ such that for all $f \in \mathcal{F}$ and all rational $\widehat{\varepsilon} > 0$, there exists rational $\widehat{\delta} > 0$ with $\sz(\widehat{\delta}) \leq q_2(\sz(f)+\sz(\widehat{\varepsilon}))$ such that $|x-y| \leq \widehat{\delta} \implies |f(x)-f(y)| \leq \widehat{\varepsilon}$ for all $x,y \in [0,1]$.
\end{itemize}
These properties are quite natural and they were used by \citet{EY10-Nash-FIXP} in the context of fixed point problems. In particular, they hold when $\mathcal{F}$ is the class of all cumulative distribution functions given by step function densities (represented explicitly). But they also hold for much more general families.

\begin{definition}
Let $k \geq 2$ and let $\mathcal{F}$ be a polynomially computable and polynomially continuous class of cumulative distribution functions on $[0,1]$. The problem \textsc{$\varepsilon$-Consensus-$1/k$-Division[$\mathcal{F}$]} is defined exactly as \textsc{$\varepsilon$-Consensus-$1/k$-Division}, except that the probability measures are given by cumulative distribution functions in $\mathcal{F}$.
\end{definition}

Notice that \textsc{$\varepsilon$-Consensus-$1/k$-Division} corresponds to the special case where $\mathcal{F}$ is the class of all cumulative distribution functions given by step function densities (represented explicitly). Thus, the following is a stronger version of \cref{thm:con1/p-to-ptucker}.

\begin{theorem}\label{thm:con1/p-to-ptucker-general}
Let $p$ be prime and $\mathcal{F}$ be a polynomially computable and polynomially continuous class of cumulative distribution functions on $[0,1]$. Then \textsc{$\varepsilon$-Consensus-$1/p$-Division[$\mathcal{F}$]} reduces to \pstartucker{p}.
\end{theorem}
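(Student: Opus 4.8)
**Proof plan for Theorem \ref{thm:con1/p-to-ptucker-general} (reduction from $\varepsilon$-Consensus-$1/p$-Division$[\mathcal{F}]$ to \pstartucker{p}).**

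The plan is to encode a proposed partition of $[0,1]$ into $p$ pieces (using at most $(p-1)t$ cuts) as a point of the star-domain $R_{p,m}^d$ with $d = t(p-1)$, and then to build a labeling $\lambda : R_{p,m}^d \to R_{p,t}\setminus\{0\}$ whose \pstartucker{p} solutions are exactly approximately-balanced partitions. First I would fix the combinatorial interpretation of $R_{p,m}^d$: think of the $d = t(p-1)$ coordinates as being grouped into $t$ blocks of $p-1$ coordinates each, one block per agent/color $j \in [t]$. A point in block $j$ consists of $p-1$ values of the form $*^{i}a$; reading these off in order describes $(p-1)$ of the cut positions ``owned'' by agent $j$ together with the label in $\mathbb{Z}_p$ of the piece lying to the immediate right of each such cut, exactly as in Meunier's encoding. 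Normalizing the $m$-scale to $[0,1]$ and ordering all $(p-1)t$ cuts along the interval, every point $x \in R_{p,m}^d$ thus determines a partition $A_1(x),\dots,A_p(x)$ of $[0,1]$ into $p$ (not necessarily connected) pieces with at most $(p-1)t$ cuts. The map $\theta$ on $R_{p,m}^d$ should correspond to the cyclic relabeling of the $p$ pieces, so that $A_i(\theta x) = A_{i-1}(x)$; the key point to check is that on $\partial R_{p,m}^d$ (where some coordinate hits the extreme $*^{\cdot}m$) the encoded partition is ``degenerate'' in a way that makes this relabeling consistent, i.e.\ a cut sitting at an endpoint can be dropped, which is what forces $\lambda(\theta x) = \theta\lambda(x)$ on the boundary to be enforceable syntactically.

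Next I would define the labeling. Pick the triangulation granularity $m$ (and hence the problem size) using polynomial continuity of $\mathcal{F}$: choose $\widehat\delta$ small enough that any two partitions differing by moving cuts by at most $O(1/m)$ change each $v_j(A_i)$ by at most $\varepsilon/(4p)$, say; polynomial continuity guarantees $\sz(\widehat\delta)$, hence $m$, is polynomial in the input. For a vertex $x$ of Kuhn's triangulation, compute (using polynomial computability of $\mathcal{F}$, i.e.\ evaluating the $\mu_j$ on the finitely many subintervals of the partition $A_\cdot(x)$) the values $\mu_j(A_i(x))$ for all $i\in[p]$, $j\in[t]$. Choose $j^\star$ to be the color $j$ maximizing $\max_i \mu_j(A_i(x))$ (ties broken by smallest $j$), and choose $i^\star$ to be the piece maximizing $\mu_{j^\star}(A_i(x))$, with ties among the $i$'s broken by the $\mathbb{Z}_p$-equivariant tie-breaking function $T_p$ (as in the proof of \cref{lm:tuckerImpliesBSS}). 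Set $\lambda(x) = *^{i^\star} j^\star$. On the boundary, handle the degenerate case exactly as in \cref{lem:kPolygonBorsukUlamInPPAk}: if the ``intended'' equivariant label is within the noise margin of being selected by the generic rule, force it, so that $\lambda(\theta x) = \theta\lambda(x)$ holds on $\partial R_{p,m}^d$ by construction. Since each $\mu_j(A_i(x))$ is a rational computable in polynomial time and the tie-breaking is polynomial, $\lambda$ is computed by a polynomial-size Boolean circuit, so this is a valid \pstartucker{p} instance.

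Finally I would argue correctness of the reduction. A \pstartucker{p} solution is a $(p-1)$-simplex $x_1,\dots,x_p$ with $\lambda(x_i) = *^i j^\star$ for a common $j^\star \in [t]$; by the diameter property of Kuhn's triangulation, $\textup{dist}_\infty(x_i,x_k)\le 1$, so all $A_\cdot(x_i)$ agree up to moving cuts by $O(1/m)$, hence $|\mu_j(A_i(x_i)) - \mu_j(A_i(x_1))| \le \varepsilon/(4p)$ for all $i,j$. The fact that vertex $x_i$ received label $*^i j^\star$ means that for the partition at $x_i$, piece $i$ is (near-)maximal for color $j^\star$ among all pieces and that color $j^\star$ is the globally maximal color; transporting everything to the reference partition at $x_1$ and summing $\sum_i \mu_{j^\star}(A_i) = 1$ forces $\mu_{j^\star}(A_i(x_1))$ to be within $\varepsilon$ of $1/p$ for every $i$ (if some piece had mass $> 1/p + \varepsilon$ another would have mass $< 1/p$, contradicting that each was near-maximal up to noise), and then the same bound propagates to every color $j$ because $j^\star$ was the maximal color and $\sum_i\mu_j(A_i)=1$. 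Hence the partition $A_1(x_1),\dots,A_p(x_1)$ is an $\varepsilon$-approximate consensus-$1/p$-division with at most $(p-1)t$ cuts, and it is extracted in polynomial time. I expect the main obstacle to be twofold: first, setting up the encoding $R_{p,m}^d \to \{\text{partitions}\}$ so that $\theta$ genuinely implements cyclic piece-relabeling \emph{and} the boundary degeneracy makes the equivariance a syntactic constraint rather than a promise; second, the error-budget bookkeeping that turns the discrete labeling combinatorics (``near-maximal piece for the maximal color'') into the quantitative statement $\mu_j(A_i) = 1/p \pm \varepsilon$ uniformly over all $i,j$ — this requires carefully choosing $m$ against $\varepsilon$ via polynomial continuity and tracking how the $\varepsilon/(4p)$-type slacks accumulate across the $p$ vertices of the simplex and the $t$ colors, much as in \cref{lem:kPolygonBorsukUlamInPPAk}.
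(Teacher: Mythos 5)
Your high-level plan — encode a candidate partition as a point of $R_{p,m}^d$, label each vertex by a (maximal color, maximal piece) pair, and extract an $\varepsilon$-approximate consensus-division from a fully labeled $(p-1)$-simplex — matches the paper. But the encoding you propose is not the one that makes the argument go through, and the paper flags precisely this pitfall. You interpret each coordinate $*^i a$ as a cut at $a/m$ carrying the label $i$ of ``the piece to its immediate right,'' then sort all cuts. This is ambiguous (what labels the left-most piece? what if cuts coincide?) and, more to the point, it is not visibly compatible with the star's $\theta$-action on $\partial R_{p,m}^d$: there is no reason that applying $\theta$ to a boundary vertex produces the cyclic relabeling $A_i \mapsto A_{i-1}$ exactly. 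The paper instead uses a ``painting'' encoding: start with a fake coordinate $*^1 m$ painting all of $[0,1]$ color $1$, then for $\ell = 1,\dots,d$ paint the prefix $[0, j_\ell/m]$ with color $i_\ell$, later paints overwriting earlier ones; formally, the subinterval $[a,b]$ is colored $i_{\widehat{\ell}}$ with $\widehat{\ell} = \max\{0 \le \ell \le d : j_\ell/m \ge b\}$. Its crucial feature is that on $\partial R_{p,m}^d$ some real coordinate has $j_\ell = m$, so the fake default is entirely overwritten and the encoded partition commutes \emph{exactly} with $\theta$. Consequently $\lambda(\theta x) = \theta\lambda(x)$ holds syntactically, with no ``forcing.'' Your proposal to force the equivariant label within a noise margin (imported from \cref{lem:kPolygonBorsukUlamInPPAk}) is misplaced here: in the $k$-polygon setting the approximation arose from the irrational rotation $\theta_k$, whereas on $R_{p,m}^d$ the action $\theta$ is an exact permutation of a discrete set and the labeling is computed exactly. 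With the painting encoding, boundary equivariance is exact and the extra machinery would only introduce bugs.

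Two smaller points. Your choice of $j^\star$ as the color maximizing $\max_i \mu_j(A_i)$ differs from the paper's, which maximizes the spread $\max_{i_1,i_2}\lvert \mu_{j,i_1}-\mu_{j,i_2}\rvert$. Your criterion works for probability measures because you can invoke $\sum_i \mu_j(A_i)=1$ to push a bound on the maximal color down to every color (at the cost of a $\Theta(p)$ constant), but it does not carry over to the non-additive/non-normalized setting of \cref{thm:strongConsensusDivision}; the spread criterion is slightly cleaner and strictly more general. Also, your tie-break on $i^\star$ via $T_p$ is undefined when all $p$ pieces tie, since $\mathbb{Z}_p$ is outside the domain of $T_p$; the paper breaks ties by picking the $i$ for which $\min A_i(x)$ is smallest, which is always well-defined and remains $\theta$-equivariant on $\partial R_{p,m}^d$.
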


\begin{proof}
Let $\varepsilon > 0$ and $\mu_1, \dots, \mu_t$ be probability measures on $[0,1]$ given by functions in $\mathcal{F}$. We consider the domain $D = R_{p,m}^d$, where $d=t(p-1)$ and $m \geq 1$ will be set later. A point in $D$ represents a way to partition $[0,1]$ into $p$ (not necessarily connected) pieces using at most $t(p-1)$ cuts. This is a slight modification of the domain that was used by \citet{Meunier2014simplotopal} to encode a splitting of a necklace. Intuitively it can be explained as follows. Let $x=(*^{i_1} j_1, \dots, *^{i_d} j_d) \in D$. We interpret each element $i \in \mathbb{Z}_p$ as a different color. Then:
\begin{enumerate}
    \item Paint the whole interval $[0,1]$ with the color $1$.
    \item For $\ell = 1,2, \dots, d$ : paint $[0,j_\ell/m]$ with the color $i_\ell$
\end{enumerate}
Note that applying a fresh coat of paint on a previously painted part of the interval covers up the old paint. The way the interval $[0,1]$ is colored at the end of this procedure gives us the partition encoded by $x \in D$. An important advantage of this encoding is that it is sufficiently continuous in a certain sense. Indeed, small changes in the coordinates of $x$ have a small effect on the corresponding partition. Other simpler encoding schemes do not have this property.

Formally, this encoding can be described as follows. Add a ``fake'' $0$th coordinate $*^{i_0} j_0 = *^1 m$. Place cuts at all positions $j_0/m, j_1/m, \dots, j_d/m$. This subdivides the interval $[0,1]$ into at most $d+1=t(p-1)+1$ subintervals. Then, allocate the subinterval $[a,b]$ to $i_{\widehat{\ell}} \in \mathbb{Z}_p$, where $\widehat{\ell} = \max \{0 \leq \ell \leq d : j_\ell/m \geq b\}$.

This encoding also behaves nicely with respect to $\theta$. For any $x \in \partial D$, $\theta x$ encodes the same partition as $x$, except that $i$ has been replaced by $i+1$, for all $i \in \mathbb{Z}_p$. This is easy to see since for any $x \in \partial D$, there exists $\ell \geq 1$ such that $j_\ell = m$ and thus the ``fake'' coordinate $*^{i_0} j_0$ does not play any role.

We are now ready to define the labeling $\lambda : D \to R_{p,t} \setminus \{0\}$. This labeling is a natural generalization of the one used by \citet{SS03-Consensus}. Given $x \in D$, construct the partition it encodes, namely $A_1(x), \dots, A_p(x)$. Then, for all $i \in \mathbb{Z}_p$ and $j \in [t]$, let $\mu_{j,i}(x) = \mu_j(A_i(x))$, i.e., the total measure of type $j$ that is allocated to $i$. Finally, set $\lambda(x) = *^i j$, where $i,j$ are determined as follows:
\begin{enumerate}
    \item Pick $j \in [t]$ that maximizes $\max_{i_1,i_2} |\mu_{j,i_1}(x) - \mu_{j,i_2}(x)|$. Break ties by picking the minimum such $j$.
    \item Then, pick $i \in \mathbb{Z}_p$ that maximizes $\mu_{j,i}(x)$. If there are multiple $i$'s that maximize this, break ties by picking the one such that $\min A_i(x)$ is minimal (i.e., such that $A_i(x)$ contains the point closest to the left end of the unit interval).
\end{enumerate}

By using the observation above about the behavior of $\theta$ on $\partial D$, it is easy to see that $\lambda(\theta x) = \theta \lambda(x)$ for all $x \in \partial D$. Thus, $\lambda$ is a valid instance of \pstartucker{p} and we obtain a solution $x_1, \dots, x_p \in D$ and $\widehat{\jmath} \in [t]$, such that $\textup{dist}_\infty (x_i,x_k) \leq 1$ and $\lambda(x_i) = *^i \widehat{\jmath}$ for all $i,k \in [p]$. It remains to show that by picking $m$ large enough, we obtain a solution to \textsc{$\varepsilon$-Consensus-$1/p$-Division[$\mathcal{F}$]}.

Let $\varepsilon' = \frac{\varepsilon}{2t(p-1)}$. Since $\mathcal{F}$ is polynomially continuous, we can pick $m$ large enough so that the value $\mu_j([0,a])$ changes by at most $\varepsilon'$, if $a$ moves by $1/m$. Note that $m$ has representation length polynomial in the size of the instance. If a single coordinate of $x$ changes by $1$, $\mu_{j,i}(x)$ changes by at most $\varepsilon'$ for all $i,j$. Since there are $d=t(p-1)$ coordinates, it follows that $|\mu_{j,i}(x_k) - \mu_{j,i}(x_\ell)| \leq \varepsilon' t (p-1) = \varepsilon/2$ for all $i,j$  and for all $k,\ell \in [p]$.

Let $x := x_1$. By construction of the labeling, we obtain that for all $j,i,\ell$
$$|\mu_{j,i}(x) - \mu_{j,\ell}(x)| \leq \max_{i_1,i_2} |\mu_{\widehat{\jmath},i_1}(x) - \mu_{\widehat{\jmath},i_2}(x)| \leq \varepsilon$$
The first inequality holds because $\lambda(x) = *^1 \widehat{\jmath}$. The second inequality holds because if we instead had $\mu_{\widehat{\jmath},i_1}(x) > \mu_{\widehat{\jmath},i_2}(x) + \varepsilon$ for some $i_1, i_2$, then it would follow that $\mu_{\widehat{\jmath},i_1}(x_{i_2}) > \mu_{\widehat{\jmath},i_2}(x_{i_2})$, contradicting $\lambda(x_{i_2}) = *^{i_2} \widehat{\jmath}$. Thus, $x$ corresponds to an $\varepsilon$-approximate solution.
\end{proof}

\section{Conclusion and Future Work}

Our topological characterization of \ppak{p} can possibly enable us to obtain similar membership or hardness results for other interesting problems. For example, are the problems whose totality is established via the BSS Theorem, like the Chromatic Number of Kneser Hypergraphs\footnote{The computational version of this problem would be of the form: given a coloring (as a circuit) that cannot possibly be correct everywhere, because it does not use enough colors, find any edge where it makes a mistake.} studied in \citep{alon1986chromatic} in \ppak{p}? Are they \ppak{p}-complete? We believe that due to its simplicity, our \polygonTucker{p}\ problem can be a very useful tool for obtaining hardness results for these problems. What about other problems that generalize problems that are known to be in \ppa\ or are even \ppa-complete? For example, \citet{FRG18-Necklace} showed that the discrete Ham-Sandwich problem is also \ppa-complete. Is there a generalization of the problem that could be complete for \ppak{p}? A computational version of the Center Transversal Theorem \citep{dol1992generalization,zivaljevic1990extension} might be a good candidate. Another interesting open problem is to investigate the connection of the general statement of Dold's Theorem \citep{Dold1983} from algebraic topology with the subclasses of TFNP. Finally, although our paper takes a definitive step in the direction of resolving the complexity of $p$-thief Necklace Splitting and Consensus-$1/p$-Division, proving a \ppak{p}-hardness result remains a challenging open problem. In very recent work \citep{FRHSZ2020consensus-easier}, we have made a first step in that direction, by providing a significantly simpler proof (and strengthening) of the \ppak{2}-hardness result of \citet{FRG18-Necklace}, as well as the first hardness result for Consensus-$1/3$-Division, showing that it is hard for the class \ppad. Showing the \ppad-hardness of the Consensus-$1/p$-Division problem for $p > 3$ is also a very interesting first step that might be easier than showing the \ppak{p}-hardness.

\medskip

\subsubsection*{Acknowledgments}
Alexandros Hollender is supported by an EPSRC doctoral studentship (Reference 1892947). Katerina Sotiraki is supported in part by NSF/BSF grant \#1350619, an MIT-IBM grant, and a DARPA Young Faculty Award,  MIT Lincoln Laboratories and Analog Devices. Part of this work was done while the author was visiting the Simons Institute for the Theory of Computing. Manolis Zampetakis is supported by a Google Ph.D. Fellowship and NSF Award \#1617730.  \\

\noindent We would like to thank Paul Goldberg and Fr{\'e}d{\'e}ric Meunier for helpful discussions and comments, as well as the anonymous reviewers for their suggestions that helped improve the paper.

\bigskip

\bibliographystyle{plainnat}
\bibliography{biblio}

\clearpage

\appendix

\section{Topological Definitions}
\label{ap:definitions}

In this section, we include all the necessary notation and topological definitions that are used throughout the paper. For a more detailed exposition on simplicial complexes, we refer the interested reader to \citep{Mat03BorsukUlam}.

\paragraph{Notation:} Let $B^n = \{x \in \mathbb{R}^n : \|x\|_2 \leq 1\}$ denote the $n$-dimensional unit ball and $S^{n-1} = \partial B^n$ be the corresponding unit sphere.

\begin{definition}[\textsc{Homeomorphism}]
A \emph{homeomorphism} of topological spaces $(X_1, \mathcal{O}_1)$ and
$(X_2, \mathcal{O}_2)$ is a bijection $\phi: X_1 \rightarrow X_2$ such that for every $U \subseteq X_1$, $\phi(U) \in \mathcal{O}_2$ if and only if $U \in \mathcal{O}_1$. In other words, a bijection $\phi: X_1 \rightarrow X_2$ is a
homeomorphism if and only if both $\phi$ and $\phi^{-1}$ are continuous. If there is a homeomorphism $\phi: X_1 \rightarrow X_2$, we write $X \cong Y$.
\end{definition}

We say that a function $f$ has \emph{order} $p$ if $f^{p} = f$, where the notation $f^{i}$ denotes to the composition of $f$ by itself $i$ times.
\begin{definition}[\textsc{Free Action}]
Let $f: X \rightarrow Y$ be a function of order $p$ and let $P$ be a set. We say that $f$ acts freely on $P$ if for all $x \in X$ and all $i \in \{1, \dots, p-1\}$, $f^{i}(x) \neq x$. 
\end{definition}

\begin{definition}[\textsc{Affine Independence}]
We call the points $\bv_1, \bv_2, \dots, \bv_k$ \emph{affine dependent} if there exist numbers $a_1, a_2, \dots, a_k \in \bbR$ not all 0 such that $\sum\limits_{i = 1}^k a_i\bv_i = \boldsymbol{0}$ and $\sum\limits_{i = 1}^k a_i = 0$. Otherwise, $\bv_1, \dots, \bv_k$ are called \emph{affine independent}.
\end{definition}

Geometrically, some examples of simplices are points, lines and triangles. Formally, the definition requires the notion of affine independence.

\begin{definition}[\textsc{Simplex}]
A \emph{simplex} $\sigma$ is the convex hull of a finite set $A$ of affine independent vectors in $\bbR^n$. The points in  $A$ are called the \emph{vertices} of $\sigma$ and denoted by $V(\sigma)$. The dimension of $\sigma$ is equal to $|A| - 1$. Namely, a $k$ dimensional simplex, called $k$-simplex for short, has $k+1$ vertices. The convex hull of an arbitrary subset of the vertices of $\sigma$ is called a \emph{face} of $\sigma$. A proper face of $\sigma$ is called \emph{facet}.
\end{definition}

From the above definitions, it holds that every face is itself a simplex. For simplicity, we denote a simplex as the set of its vertices.\\

\subsection{Simplicial Complexes, Value \& Index Functions and Triangulations}
Very central to our paper is the notion of \emph{geometric simplicial complexes}, which are used to describe subspaces of $\bbR^d$. These subspaces consist of simple building blocks, such as points, line segments, triangles, tetrahedra, that are pasted together.

\begin{definition}[\textsc{Simplicial Complex}]
A \emph{simplicial complex} $K$ is a non-empty set of simplices that satisfies the following properties:
\begin{itemize}
    \item Each face of a simplex $\sigma \in K$ is also a simplex in $K$.
    \item The intersection $\sigma_1 \cap \sigma_2$ of any two simplices $\sigma_1, \sigma_2 \in K$ is a face of both $\sigma_1$ and $\sigma_2$. 
\end{itemize}
The union of the simplices in $K$ is called the polyhedron of $K$ and is denoted by $\norm{K}$. The dimension of $K$ is $\mathsf{dim(K)} := \max_{\sigma \in K}\{\mathsf{dim}(\sigma)\}$ and the vertex set of $K$, denoted by $V(K)$, is the union of the vertex sets of all its simplices.

We denote by $\Sigma_\sigma$ for a simplex $\sigma$ the simplicial complex that contains all simplices $\tau$ such that $\tau \subseteq \sigma$.
\end{definition}

According to the above definition, zero-dimensional simplicial complexes correspond to points and one-dimensional simplicial complexes to sets of non-intersecting line segments as shown in \cref{fig:complex}.

\begin{figure}
    \centering
    \includegraphics{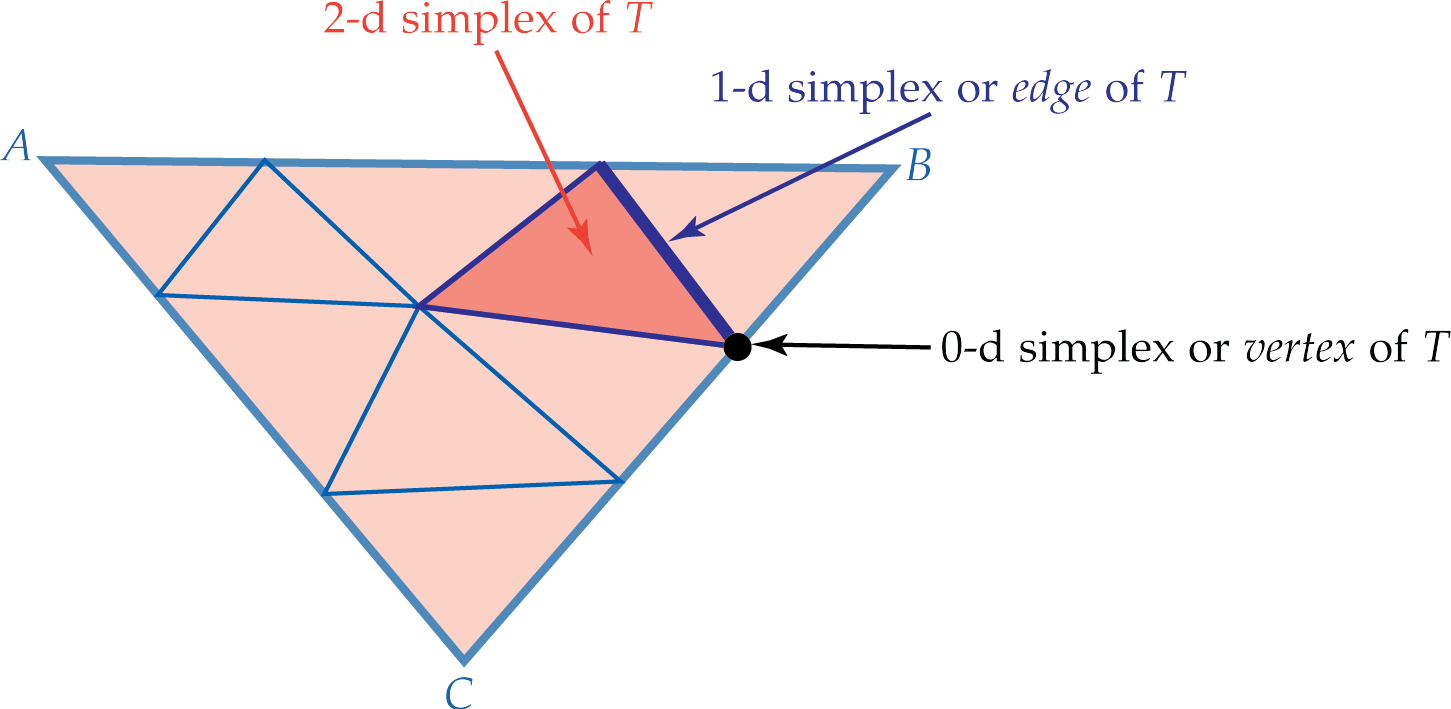}
    \caption{A simplicial complex $T$ that defines a triangulation of the triangle $A-B-C$.}
    \label{fig:complex}
\end{figure}

The notion of triangulation relates simplicial complexes with topological spaces. 
\begin{definition}[\textsc{Triangulation}]
A simplicial complex $K$ is a triangulation of a topological space $X$ if $\norm{K} \cong X$.
\end{definition}

For instance, the boundary of the $n$-simplex $\sigma_n$, namely a simplicial complex containing all proper faces of $\sigma_n$, is a triangulation of the sphere $S^{n-1}$.

Triangulation is a very powerful tool in studying the computational complexity of topological problems, because they allow us to partition a simplicial complex into smaller simplices that are connected in useful ways. We will mainly use the \emph{Kuhn triangulation}, which is described in more detail below. We refer the interested reader to \citep{Mat03BorsukUlam} and \citep{munkres1984} for further information.

We define two functions of a triangulation, $\idx$ and $\val$, that are essential for the definition of our computational problems and our reductions.

\begin{definition}[\textsc{Value \& Index Functions}] \label{def:indexValueCircuits}
    Let $K$ be a simplicial complex consisting of $k$
  simplices, including the non-full dimensional ones and
  let $M > k$. We define the \textit{value function}
  $\val: [M] \rightarrow \bar{K}$, where $\bar{K} = K \cup \{\emptyset\}$, to be an
  efficiently computable function such that
  \begin{enumerate}
  \item $\val$ is bijective on $K$,
    \item if $\val(x) = \{\vec{v}_1, \dots, \vec{v}_{\ell}\}$, then $x$ 
    is called the index of the simplex $\sigma \in K$ with vertices 
    $\{\vec{v}_1, \dots, \vec{v}_{\ell}\}$, and
    \item if $\val(x) = \emptyset$ then $x$ does not correspond to a valid 
    index of any non-empty simplex in $K$,
  \end{enumerate}
  
    We also define the \textit{index function} 
  $\idx:\bbR^n \rightarrow [M]$ to be an efficiently computable function such
  that if $\vec{x} \in \norm{K}$ then $\vec{x} \in \norm{\val(\idx(\vec{x}))}$. 
\end{definition}

Intuitively, $\val$ provides a way to enumerate over the simplices and $\idx$ on input a point $\bx$ returns the simplex that contains $\bx$. 

\bigskip

\begin{definition}[\textsc{Kuhn's Triangulation} \citep{kuhn1960some}]\label{def:kuhn}
Kuhn's triangulation is a standard way to triangulate a domain that is a cube. For any $n \in \mathbb{N}$, the cube $[0,1]^n$ is triangulated with granularity $m \in \mathbb{N}$ as follows:
\begin{enumerate}
    \item the set of vertices of the triangulation is $U_m^n$, where $U_m = \{0,1/m,2/m, \dots, m/m\}$
    \item every $x \in (U_m \setminus \{1\})^n$ is the base of the cube containing all vertices $\{y : y_i \in \{x_i,x_i + 1/m\}\}$
    \item every such cube is subdivided into $n!$ $n$-dimensional simplices as follows: for every permutation $\pi$ of $[n]$, $\sigma = \{y^0,y^1, \dots, y^n\}$ is a simplex, where $y^0 = x$ and $y^i = y^{i-1} + \frac{1}{m}e_{\pi(i)}$ for all $i \in [n]$ ($e_i$ is the $i$th unit vector)
\end{enumerate}
It is easy to see that Kuhn's triangulation has the following properties:
\begin{itemize}
    \item For any simplex $\sigma = \{z^1, \dots, z^k\}$ it holds that $\|z^i-z^j\|_\infty \leq 1/m$ for all $i,j$, and there exists a permutation $\pi$ of $[k]$ such that $z^{\pi(1)} \leq \dots \leq z^{\pi(k)}$ (component-wise).
    \item The restriction of Kuhn's triangulation of $[0,1]^n$ on some subspace $A_1 \times A_2 \times \dots \times A_n$ of $[0,1]^n$, where for each $i \in[n]$, $A_i \in \{\{0\}, [0,1]\}$, coincides with Kuhn's triangulation of that subspace.
    \item Every $n$-dimensional simplex can be indexed by its smallest vertex (component-wise), which is also the base of the cube containing the simplex, and by the permutation that yields this simplex within this cube. Given some index, it is easy to check whether this is a valid index, and if so, output the vertices of the simplex. Thus, the $\idx$ function can be computed efficiently.
    \item Given a point $x \in [0,1]^n$, we can efficiently determine the index of a simplex that contains it as follows. First find the base $y$ of a cube of $U_m^n$ that contains $x$. Next, find a permutation $\pi$ such that $x_{\pi(1)}-y_{\pi(1)} \geq \dots \geq x_{\pi(n)}-y_{\pi(n)}$. Then, it follows that $(y,\pi)$ is the index of a simplex containing $x$. Thus, the $\val$ function can be computed efficiently.
    \item Given an $n$-simplex $\{z^0, \dots, z^n\}$ and $i \in \{0,1,\dots, n\}$, we can efficiently compute the index of the other $n$-simplex that also has $\{z^0, \dots, z^n\} \setminus \{z_i\}$ as a facet. This is called a \emph{pivot} operation.
\end{itemize}
\end{definition}

\subsection{The Borsuk-Ulam Theorem and Tucker's Lemma}

Here, we provide the definitions of the problems that we generalize. Note that in \cref{sec:2dBSS}, we explain how the Borsuk-Ulam Theorem can be interpreted under a more general definition, which explains how our definition of Polygon Borsuk-Ulam is indeed a generalization. We start with the Borsuk-Ulam Theorem, which is usually stated as ``for every continuous function from $S^n$ to $\mathbb{R}^n$, there exists a point $\mathbf{x} \in \mathbb{R}^n$, such that $f(\mathbf{x})=f(-\mathbf{x})$''. We present an alternative definition (as stated in \citep{Mat03BorsukUlam}), which is more appropriate for our results.

\begin{theorem}[\textsc{Borsuk-Ulam Theorem} \citep{Borsuk1933,Mat03BorsukUlam}]
For every antipodal mapping $f:S^n \rightarrow \mathbb{R}^n$ (i.e., a function $f$ which is continuous and $f(\mathbf{x}) = f(\mathbf{-x})$), there exists a point $\mathbf{x} \in S^n$ such that $f(\mathbf{x})=0$. 
\end{theorem}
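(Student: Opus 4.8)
The plan is to deduce this classical statement as the $p=2$ specialization of the BSS machinery developed above, plus a routine passage from the ball to the sphere. Here, as usual, an antipodal (odd) map satisfies $f(-\bx) = -f(\bx)$. Recall from \cref{sec:bss} that for $p=2$ we have $n(p-1)=n$, the hyperplane $P = \{(\bv_1,\bv_2) \in (\bbR^n)^2 : \bv_1+\bv_2 = \boldsymbol{0}\}$ is carried onto $\bbR^n$ by the identification $(\bv,-\bv) \mapsto \bv$, and under this identification the order-$2$ action $\theta(\bv_1,\bv_2) = (\bv_2,\bv_1)$ becomes the point reflection $\alpha(\bx) = -\bx$, while $P_2 \cong B^n$. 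Consequently, statement~(2) of \cref{thm:BSSalt} for $p=2$ reads: every continuous $g : B^n \to \bbR^n$ with $g(-\bx) = -g(\bx)$ for all $\bx \in S^{n-1}$ has a zero in $B^n$. This statement is established in the paper: it follows from the \tucker\ Lemma (\cref{thm:bss-tucker}) via \cref{lm:tuckerImpliesBSS}, and the \tucker\ Lemma is itself proved combinatorially (for $p=2$ it is exactly the classical Tucker's Lemma, i.e.\ \pstartucker{2}).

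From here I would argue as follows. Identify $B^n$ with the closed upper hemisphere of $S^n$ via the continuous chart $\Psi : B^n \to S^n$, $\Psi(\by) = (\by,\ \sqrt{1-\|\by\|_2^2})$. For $\by$ on the boundary sphere $S^{n-1} = \partial B^n$ we have $\Psi(\by) = (\by,0)$, hence $\Psi(-\by) = -\Psi(\by)$. Given an odd continuous $f : S^n \to \bbR^n$, set $g = f \circ \Psi : B^n \to \bbR^n$; then $g$ is continuous and, for $\by \in S^{n-1}$, $g(-\by) = f(-\Psi(\by)) = -f(\Psi(\by)) = -g(\by)$, so $g$ satisfies the hypothesis above. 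By the $p=2$ BSS statement there is $\by^\star \in B^n$ with $g(\by^\star)=0$, i.e.\ $f(\Psi(\by^\star)) = 0$; since $\bx^\star := \Psi(\by^\star) \in S^n$, this is the desired zero of $f$.

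In the spirit of the paper, one can equivalently give a direct combinatorial proof via Tucker's Lemma. It suffices to produce, for every $\varepsilon>0$, a point with $\|f\|_\infty \le \varepsilon$, since then compactness of $S^n$ and continuity of $f$ yield an exact zero along a convergent subsequence. Fixing $\varepsilon$, uniform continuity of $f$ gives $\delta>0$ with $\|\bx-\bx'\|_2<\delta \Rightarrow \|f(\bx)-f(\bx')\|_\infty<\varepsilon$; take an antipodally symmetric triangulation of $S^n$ of mesh $<\delta$ and label every vertex $\bv$ by $\lambda(\bv) = \mathrm{sign}(f_k(\bv))\cdot k$, where $k$ is the least index with $|f_k(\bv)| = \|f(\bv)\|_\infty$ (and we stop immediately if some $f(\bv)=0$). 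Oddness of $f$ makes $\lambda$ antipodal, so Tucker's Lemma gives adjacent $\bu,\bv$ with $\lambda(\bu) = -\lambda(\bv) = +k$; then $f_k(\bu)-f_k(\bv) = \|f(\bu)\|_\infty + \|f(\bv)\|_\infty$ while adjacency forces $|f_k(\bu)-f_k(\bv)|<\varepsilon$, so $\|f(\bu)\|_\infty < \varepsilon$, as required.

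I expect no serious obstacle for the reduction-to-BSS route: everything is a specialization, and the only verification needed is that $\Psi$ is continuous and antipodally equivariant on $\partial B^n$, which is immediate. The one mildly delicate point arises only if one insists on the self-contained combinatorial route: one must exhibit an antipodally symmetric triangulation of $S^n$ with controlled mesh whose restriction to a hemisphere is a triangulation to which the paper's \tucker\ Lemma applies --- this can be done by reflecting a fine ``nice'' triangulation of $B^n$ across the equator (controlling the mesh through a bi-Lipschitz hemisphere chart), or by simply invoking the classical Tucker's Lemma, which holds for arbitrary antipodally symmetric triangulations.
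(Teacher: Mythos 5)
Your proposal is correct, but note that the paper itself does not prove this statement at all: it is quoted in the appendix as a classical background theorem, with citations to \citet{Borsuk1933} and \citet{Mat03BorsukUlam}, and is never used as a lemma requiring proof. What you have done is show that the theorem is recoverable from the paper's own machinery, which is a genuinely different (and self-contained) route. Your first argument --- specializing statement~(2) of \cref{thm:BSSalt} to $p=2$, where $\theta$ restricted to $P$ becomes negation so $\alpha(\bx)=-\bx$, and then transferring from $B^n$ to $S^n$ via the upper-hemisphere chart $\Psi(\by)=(\by,\sqrt{1-\|\by\|_2^2})$ --- is sound, and there is no circularity: the existence chain \pstartucker{p} (proved combinatorially via \imba{p}) $\Rightarrow$ \tucker\ (\cref{thm:bss-tucker}) $\Rightarrow$ BSS (\cref{lm:tuckerImpliesBSS}) nowhere invokes Borsuk--Ulam. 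Your second, direct argument via Tucker's Lemma with the labeling $\lambda(\bv)=\mathrm{sign}(f_k(\bv))\cdot k$ is the standard Tucker-to-Borsuk--Ulam derivation and is also correct, with the mesh/symmetric-triangulation caveat you already flag. One point worth making explicit: the paper's parenthetical ``$f(\mathbf{x})=f(\mathbf{-x})$'' is a typo (under that literal reading a nonzero constant map would be a counterexample); your reading of ``antipodal'' as odd, $f(-\bx)=-f(\bx)$, is the intended and correct hypothesis, and is the one your proof actually uses.
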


Tucker's lemma is a well-known combinatorial existence theorem, which is an analogue of the Borsuk-Ulam Theorem. It is usually stated on the $d$-dimensional unit ball or sphere. For computational purposes the following version is more commonly used.

\begin{theorem}[\textsc{Tucker's lemma} \citep{tucker1945some}]
Let $m,d \geq 1$. Let $\lambda: ([-m,m] \cap \mathbb{N})^d \to \{\pm 1, \pm 2, \dots, \pm d\}$ be any labeling that satisfies $\lambda(-x) = - \lambda(x)$ for all $x \in ([-m,m] \cap \mathbb{N})^d$ such that $\exists i$ with $|x_i|=m$. Then there exist two points $x,y \in ([-m,m] \cap \mathbb{N})^d$ with $\|x-y\|_\infty \leq 1$ that have opposite labels, i.e., $\lambda(x) = - \lambda(y)$.
\end{theorem}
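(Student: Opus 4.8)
The plan is to recover the classical case $p=2$ by giving the combinatorial path-following proof of \citet{FT81}, in the presentation of \citet[Chapter~2]{Mat03BorsukUlam}; this is precisely the argument that the body of the paper generalizes to $\mathbb{Z}_p$, so it is the natural route. First I would fix a concrete triangulation: subdivide $([-m,m]\cap\mathbb{N})^d$ into unit subcubes and triangulate each subcube with Kuhn's triangulation (\cref{def:kuhn}). Since $[-m,m]^d \cong B^d$ and the resulting boundary triangulation is invariant under the free antipodal map $x \mapsto -x$, this fits the usual Tucker framework; moreover any two vertices of a common simplex are at $\ell_\infty$-distance at most $1$, so it suffices to find a simplex containing two vertices with opposite labels (a \emph{complementary edge}). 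Suppose, for contradiction, that no complementary edge exists. Then no simplex can carry both $+i$ and $-i$ for any $i$ (its vertices are pairwise adjacent), so each simplex uses every coordinate with at most one sign; in particular every $d$-simplex, having $d+1$ vertices but only $d$ available label values, repeats a label.

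Next I would build the Freund--Todd graph $G$ whose nodes are the \emph{happy} simplices, where a simplex is happy according to a local condition on its labels together with the sign pattern of the orthant in whose relative interior it lies (the precise rule is the one in \citet[Section~2.3]{Mat03BorsukUlam}, roughly: the labels form a prescribed alternating pattern of size equal to the dimension of that orthant). Edges of $G$ join a happy $k$-simplex to a happy $(k+1)$-simplex when the former is a facet of the latter and contributes exactly the labels needed to make the latter happy, and similarly between two happy simplices sharing a facet. A routine case analysis by dimension and by orthant shows that in $G$ every happy simplex has degree at most $2$, and that it has degree exactly $1$ only if it is the distinguished simplex at the center of the construction or it lies on $\partial B^d$. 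The boundary case is where the hypothesis $\lambda(-x)=-\lambda(x)$ is used: boundary happy simplices come in antipodal pairs $\{\sigma,-\sigma\}$, each of degree $1$, and I would merge every such pair into one node of degree $2$.

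In the merged graph every node then has even degree except the distinguished central node, which has degree $1$; this contradicts the handshake lemma unless one of the simplices involved was actually a complementary edge, contrary to our assumption. Hence a complementary edge exists, which is Tucker's lemma. The step I expect to be the main obstacle is the combinatorial bookkeeping behind the happy-simplex rule: one must choose the alternating pattern so that the central simplex is the \emph{unique} non-boundary degree-$1$ node, so that the boundary degree-$1$ nodes pair up cleanly under $x\mapsto -x$, and so that the local degree is exactly $2$ at happy simplices of every other dimension. This is exactly the delicate part that the present paper re-derives in its heavier \emph{weighted} and \emph{directed} form in order to make the analogous graph balanced modulo $p$ for $p\ge 3$; for $p=2$ the plain parity argument above suffices. (An alternative, shorter but less elementary proof extends $\lambda$ to a piecewise-linear map $B^d\to\mathbb{R}^d$ that is antipodal on $\partial B^d$, invokes the Borsuk--Ulam Theorem to obtain a zero, and notes that any simplex containing that zero must have a complementary edge; I would nonetheless prefer the combinatorial argument, as it is self-contained and constructive.)
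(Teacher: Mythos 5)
Your proposal is correct and is precisely the Freund--Todd path-following argument that the paper itself treats as its starting point. The paper does not re-prove classical Tucker's lemma (it is stated in \cref{ap:definitions} as a preliminary, cited to Tucker and Freund--Todd), but the proof overview in \cref{sec:proofoverview} restates exactly the construction you describe --- happy simplices, the undirected degree-$\leq 2$ graph, merging antipodal boundary pairs, and the parity contradiction from the unique degree-$1$ node $0^d$ --- as the $p=2$ base case of the paper's $\mathbb{Z}_p$-star Tucker argument (\cref{thm:ptucker-in-ppa-p}), where the same skeleton is then equipped with consistent edge orientations and weights to make the modulo-$p$ counting go through. So your plan matches the paper's intended $p=2$ proof both in outline and in its identification of where the delicate bookkeeping lies.
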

The corresponding computational problem \textsc{Tucker} is known to be PPA-complete \citep{Papadimitriou94-TFNP-subclasses,ABB15-2DTucker}, even if the dimension is fixed to be $d=2$.

\subsection{$\mathbb{Z}_p$-equivariant tie-breaking}

For some of our constructions, we will require a tie-breaking that is $\mathbb{Z}_p$-equivariant and efficiently computable. Observe that non-efficient tie breaking rules exist by carefully choosing one representative for any equivalence class but this is not sufficient for our proofs. We define an efficiently computable rule below for any set $S \in 2^{\mathbb{Z}_p} \setminus \{\emptyset, \mathbb{Z}_p\}$, let $S+i := \{x+i : x \in S\}$.

\begin{definition}[\textsc{$\mathbb{Z}_p$-equivariant tie-breaking}]
\label{def:tie-breaking}
For any prime $p$, the $\mathbb{Z}_p$-equivariant tie-breaking function $T_p : 2^{\mathbb{Z}_p} \setminus \{\emptyset, \mathbb{Z}_p\} \to \mathbb{Z}_p$ is computed as follows on input $S \in 2^{\mathbb{Z}_p} \setminus \{\emptyset, \mathbb{Z}_p\}$:
\begin{enumerate}
    \item For every $i \in \bbZ_p$, write $S+i$ as a bit-string of length $p$. Namely, construct the bit-string $b(i)$ where the $j$th bit from the left indicates whether $j \in S+i$ (for $j=0, \dots, p-1$).
    \item Output $-i^* \in \mathbb{Z}_p$, where $i^* = \argmax_{i} b(i)$ ($b(i)$ interpreted as a number in binary).
\end{enumerate}
\end{definition}

\begin{lemma}
For any prime $p$, the $\mathbb{Z}_p$-equivariant tie-breaking function $T_p : 2^{\mathbb{Z}_p} \setminus \{\emptyset, \mathbb{Z}_p\} \to \mathbb{Z}_p$ is well-defined and satisfies for any $S \in 2^{\mathbb{Z}_p} \setminus \{\emptyset, \mathbb{Z}_p\}$:
\begin{itemize}
    \item $T_p(S) \in S$
    \item $T_p(S+i) = T_p(S)+i$ for all $i \in \mathbb{Z}_p$.
\end{itemize}
\end{lemma}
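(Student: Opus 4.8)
The statement to prove is that the $\mathbb{Z}_p$-equivariant tie-breaking function $T_p$ is well-defined and satisfies $T_p(S) \in S$ together with the equivariance property $T_p(S+i) = T_p(S) + i$ for all $i \in \mathbb{Z}_p$. The approach is entirely elementary: unfold \cref{def:tie-breaking} and track what the maximizing index $i^*$ does under the substitution $S \mapsto S+i$. First I would observe that the argmax in step 2 is well-defined because the bit-strings $b(0), \dots, b(p-1)$ are pairwise distinct. Indeed, if $b(i) = b(i')$ as bit-strings, then $S + i = S + i'$ as subsets of $\mathbb{Z}_p$, which forces $S = S + (i'-i)$; since $\mathbb{Z}_p$ is a group of prime order, the subgroup generated by $i'-i$ is either trivial or all of $\mathbb{Z}_p$, so either $i = i'$ or $S$ is invariant under all translations, the latter meaning $S \in \{\emptyset, \mathbb{Z}_p\}$, which is excluded. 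Hence all $p$ strings are distinct and the argmax is unique, so $T_p$ is well-defined and clearly efficiently computable (we just enumerate $p$ strings of length $p$ and compare).

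\textbf{Equivariance.} Next I would prove the equivariance property, which is the crux. Fix $S$ and $i_0 \in \mathbb{Z}_p$, and let $S' = S + i_0$. Writing $b'(\cdot)$ for the bit-strings associated with $S'$, we have $b'(i) = $ the bit-string of $S' + i = S + (i_0 + i)$, i.e., $b'(i) = b(i_0 + i)$ where $b(\cdot)$ are the strings for $S$. Therefore $\argmax_i b'(i) = \argmax_i b(i_0+i)$; if $i^* = \argmax_i b(i)$ for $S$, then the maximizer for $S'$ is the unique $i$ with $i_0 + i = i^*$, namely $i = i^* - i_0$. Consequently $T_p(S') = -(i^* - i_0) = -i^* + i_0 = T_p(S) + i_0$, which is exactly the claimed identity.

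\textbf{Containment.} Finally, for $T_p(S) \in S$: let $i^* = \argmax_i b(i)$, so $T_p(S) = -i^*$. I claim $0 \in S + i^*$, which is equivalent to $-i^* \in S$. Suppose not, i.e., $0 \notin S + i^*$. Since $S \ne \mathbb{Z}_p$... wait, I need $S \ne \emptyset$: since $S \ne \emptyset$, pick any $j \in S + i^*$ with $j \neq 0$ (it exists because $S + i^* \neq \emptyset$ and $0 \notin S+i^*$), so the leftmost $1$-bit of $b(i^*)$ is at position $j \ge 1$. Now consider $i' = i^* + j$: then $S + i' = (S + i^*) + j \ni j + j$... hmm, this needs care. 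Let me instead argue directly: $0 \in S + i^*$ iff the $0$-th (leftmost) bit of $b(i^*)$ is $1$. If this bit were $0$, then pick $i' := i^* + j$ where $j$ is the position of the leftmost $1$ of $b(i^*)$; then $0 \in S + i^* + (-j) \cdot(-1)$... The clean way: $b(i^*)$ has its leftmost $1$ at some position $j \geq 1$ (since bit $0$ is $0$ and $S + i^* \neq \emptyset$). Then $b(i^* - j)$ is the string of $S + i^* - j$, which contains $j - j = 0$, so its bit $0$ is $1$, making $b(i^* - j) > b(i^*)$ as binary numbers (it has a $1$ in a strictly more significant position than the leftmost $1$ of $b(i^*)$). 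This contradicts maximality of $b(i^*)$. Hence bit $0$ of $b(i^*)$ is $1$, i.e., $0 \in S + i^*$, i.e., $-i^* = T_p(S) \in S$.

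\textbf{Main obstacle.} The only subtlety — and it is genuinely the one place to be careful — is the containment argument $T_p(S) \in S$: it is tempting to think it is immediate, but it requires the specific "leftmost bit" reasoning above, exploiting that a binary string is maximized by pushing $1$'s toward the most significant (leftmost) position; the translate $S + i^* - j$ realizes a $1$ in position $0$, dominating any string whose top $1$ sits at position $\geq 1$. Everything else is routine group-theoretic bookkeeping in $\mathbb{Z}_p$, using primality only for the well-definedness (distinctness of the $b(i)$) step.
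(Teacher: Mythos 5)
Your proof is correct and follows the paper's argument essentially step for step: well-definedness from the pairwise distinctness of the bit-strings $b(i)$ (using primality of $p$), equivariance by observing that $b'(i) = b(i + i_0)$ shifts the argmax by $-i_0$, and containment from the leftmost-bit observation. The only cosmetic difference is in the containment step, where you argue by contradiction (if bit $0$ of $b(i^*)$ were $0$, the shift by the position of its leftmost $1$ would produce a strictly larger string), whereas the paper reaches the same conclusion directly by noting that $S \neq \emptyset$ guarantees \emph{some} $b(i)$ has leftmost bit $1$, which the maximizer must therefore also have.
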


\begin{proof}
If $p$ is prime and $S+ i = S$ for some $i \in \bbZ_p \setminus\{0\}$, then $S \in \{\emptyset, \mathbb{Z}_p\}$. This follows from observing that the corresponding bit strings $b(0)$ and $b(i)$ must be equal and that this implies that the bits of $b(0)$ with index $\{k\cdot i\}_{k \in \bbZ_p}$ are all equal. Since $p$ is prime, $\{k\cdot i\}_{k \in \bbZ_p} = \bbZ_p$.

Hence, $|\{S+i: i \in \mathbb{Z}_p\}| = p$ for any $S \in 2^{\mathbb{Z}_p} \setminus \{\emptyset, \mathbb{Z}_p\}$. Thus, the bit-strings $b(i)$ are all distinct, $T_p(S)$ is unique and $T_p$ is well-defined. Next, by construction, it is easy to see that $T_p(S+i) = -(i^*-i) = T_p(S) + i$. Finally, since $S \neq \emptyset$, $i^*$ will be such that $b(i^*)$ has a $1$ in the left-most position. Thus, $0 \in S+i^*$, which implies that $-i^* \in S$.
\end{proof}

\paragraph{Example.} Let $p = 3$ and $S = \{2\}$. Then, $S + 0 = \{2\}$, $S + 1 = \{0\}$, $S + 2 = \{1\}$, and $b(0) = 010$, $b(1) = 001$ and $b(2) = 100$ (in binary). From \cref{def:tie-breaking}, $T_3(S) = 2$.

\section{$(p,n)$-\tucker~reduces to $(p,n+1)$-\tucker}

Let $(p,n)$-\tucker\ denote the $p$-\tucker\ problem with dimension parameter $n$. We have the following lemma.

\begin{lemma}
\label{lm:n1sufficies}
For all $n \geq 1$ and prime $p \geq 2$, $(p,n)$-\tucker~reduces to $(p,n+1)$-\tucker.
\end{lemma}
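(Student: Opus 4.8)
The plan is to embed an instance of $(p,n)$-\tucker\ into an instance of $(p,n+1)$-\tucker\ by treating the extra ``color coordinate'' $j = n+1$ as an inert collapsed direction, so that effectively no triangulation happens along it and it never produces a fully-labeled $(p-1)$-simplex. Concretely, recall that the domain of $(p,n)$-\tucker\ (with Kuhn's triangulation, as in \cref{rem:bss-kuhn}) is $C_\infty^{(n)} = \{(c^1,\dots,c^p) \in U_m^{np} : \forall j \in [n]\ \exists i \in [p]\ c^i_j = 0\}$, and the domain of $(p,n+1)$-\tucker\ is $C_\infty^{(n+1)} = \{(c^1,\dots,c^p) \in U_m^{(n+1)p} : \forall j \in [n+1]\ \exists i \in [p]\ c^i_j = 0\}$. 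First I would fix the $(n+1)$-st coordinate block to $0$, i.e. consider the subspace $D = \{(c^1,\dots,c^p) \in C_\infty^{(n+1)} : c^i_{n+1} = 0 \text{ for all } i \in [p]\}$, which is naturally homeomorphic (and Kuhn-triangulation-compatible, by the restriction property of Kuhn's triangulation in \cref{def:kuhn}) to $C_\infty^{(n)}$. Away from $D$, every point has some $c^{i}_{n+1} > 0$, and I will define the labeling so that it forces such vertices to have ``color'' $n+1$ with a $\bbZ_p$-equivariant index, so that they cannot complete a fully-labeled simplex in the $(n+1)$-st color and they propagate boundary-equivariance correctly.

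The key steps, in order: (1) Given an instance $\lambda : V(C_\infty^{(n)}) \to \bbZ_p \times [n]$ of $(p,n)$-\tucker, define $\lambda' : V(C_\infty^{(n+1)}) \to \bbZ_p \times [n+1]$ by $\lambda'(c) = \lambda(\text{projection of } c \text{ onto the first } n \text{ color blocks})$ if $c \in D$ (i.e. all $c^i_{n+1} = 0$), and otherwise $\lambda'(c) = (T_p(\{i : c^i_{n+1} = 0\}),\, n+1)$, where $T_p$ is the $\bbZ_p$-equivariant tie-breaking function of \cref{def:tie-breaking}; note $\{i : c^i_{n+1} = 0\}$ is a proper nonempty subset of $[p]$ exactly because such $c \notin D$ still satisfies the ``$\exists i : c^i_{n+1}=0$'' constraint, so $T_p$ is applicable. (2) Check the boundary condition for $\lambda'$: for $c \in \partial C_\infty^{(n+1)}$, either the witnessing full coordinate is in one of the first $n$ color blocks --- then $\theta c$ differs from $c$ by cyclic shift, $D$ is $\theta$-invariant, and $\lambda'(\theta c) = \theta \lambda'(c)$ follows from the boundary condition on $\lambda$ --- or all full coordinates are in the $(n+1)$-st block, in which case $\lambda'$ uses the $n+1$ branch and the equivariance $T_p(S + 1) = T_p(S) + 1$ from the lemma following \cref{def:tie-breaking} gives $\lambda'(\theta c) = \theta\lambda'(c)$. (The case where $c$ is on the boundary because of a coordinate in the first $n$ blocks but also has $c^i_{n+1}>0$ for some $i$ needs a small check: I would argue that if $c \notin D$ then $\lambda'(c)$ has second component $n+1$, and the same holds for $\theta c$, and equivariance again reduces to the $T_p$ equivariance; the only subtlety is that ``$c \in D$'' is a $\theta$-invariant predicate.) (3) From a solution of the $(p,n+1)$-instance --- a $(p-1)$-simplex $\sigma = \{z_1,\dots,z_p\}$ with $\lambda'(\sigma) = \bbZ_p \times \{j^*\}$ for some $j^* \in [n+1]$ --- conclude: if $j^* \leq n$, then every vertex of $\sigma$ has second label component $\leq n$, so by definition of $\lambda'$ every vertex of $\sigma$ lies in $D$ (a vertex outside $D$ would be forced to have color $n+1$), hence $\sigma \subseteq D$, and its projection into $C_\infty^{(n)}$ is a $(p-1)$-simplex of Kuhn's triangulation (again using the restriction property) with $\lambda$-labels exactly $\bbZ_p \times \{j^*\}$ --- a solution of the original instance. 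If instead $j^* = n+1$, I need to rule this out: all $z_k \notin D$ (since $D$-vertices get color $\leq n$), so each $z_k$ has the label $(T_p(\{i : (z_k)^i_{n+1} = 0\}), n+1)$; since $\sigma$ is a Kuhn simplex with $\|z_1 - z_p\|_\infty \leq 1/m$ and (by permutation) $z_1 \leq \dots \leq z_p$ componentwise, the sets $S_k := \{i : (z_k)^i_{n+1} = 0\}$ are nested and decreasing in $k$, hence $T_p(S_k)$ takes at most... --- here I must be careful --- so I would instead argue more directly that along such a monotone Kuhn simplex the multiset $\{(z_k)^i_{n+1}\}$ changes in only one color coordinate at a time, so the membership set $S_k$ changes by at most losing one index per step; to get $p$ distinct values $T_p(S_1),\dots,T_p(S_p)$ we would need the $S_k$ to run through $p$ distinct sets, but since the simplex has only $p$ vertices and the $(n+1)$-st block can change across at most... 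Let me instead phrase the exclusion as: within the cube containing $\sigma$, at most $p-1$ of the color-blocks can have a coordinate changing (since in $C_\infty$ the $(n+1)$-st block has at least one zero coordinate among the $p$ parts, so at most $p-1$ of its parts are free to increase), forcing some part to be constantly $0$ across $\sigma$; then all $z_k$ have that index in $S_k$, so $\bigcap_k S_k \neq \emptyset$, but $T_p(S_k) \in S_k$ and the $S_k$ are nested, which bounds $|\{T_p(S_k)\}|$ away from $p$ --- contradiction.

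The main obstacle I anticipate is step (3), specifically the clean exclusion of the $j^* = n+1$ case: one must show that the $n+1$-color vertices along a single Kuhn simplex of $C_\infty^{(n+1)}$ cannot realize all $p$ values of $T_p$. This is really a statement about how the combinatorics of Kuhn's triangulation interacts with the $\bbZ_p$-tie-breaking, and it is the place where the ``$C_\infty$'' constraint (at least one zero per color block) does essential work --- it is exactly the same phenomenon as in the case analysis at the end of the proof of \cref{thm:BSS-tucker-ppak-hard}, and I would model the argument on that. A cleaner alternative, which I would try first, is to not use $T_p$ at all on the $(n+1)$-st block but instead to collapse it completely: triangulate $C_\infty^{(n+1)}$ only along the first $n$ color blocks (treating the $(n+1)$-st block as contributing no subdivision), so that no simplex ever has two vertices differing in the $(n+1)$-st block --- but this breaks niceness/Kuhn-compatibility as stipulated by the problem definition, so it likely needs the $T_p$ route after all. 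Either way, steps (1) and (2) are routine bookkeeping; the payoff is a single-query many-one reduction, as claimed.
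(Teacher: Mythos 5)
Your overall strategy is the paper's: embed the $(p,n)$ domain as the slice of the $(p,n+1)$ domain where the $(n+1)$-st color block vanishes, keep $\lambda$ there, and on the complement force the label to have color component $n+1$ with a $\bbZ_p$-equivariant index via $T_p$. Where you diverge is the \emph{choice of argument fed to $T_p$}. The paper uses $T_p(\argmax_{i\in[p]} c^i_{n+1})$; you use $T_p(\{i : c^i_{n+1}=0\})$. Both are valid, but they lead to different exclusion arguments for the $j^*=n+1$ case, and the paper's is slicker: by the Kuhn/niceness observation there is a fixed $i$ with $c^i_{n+1}=0$ on every vertex of the solution simplex $\sigma$, and since every vertex of $\sigma$ lies outside $D$ (so its $(n+1)$-block has a strictly positive maximum), that $i$ can never be an $\argmax$, hence $T_p(\argmax)\neq i$ for all vertices — the label $(i,n+1)$ simply never appears. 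One line, done.

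Your exclusion argument can be made to work, but as written it has a logical gap. After the pivot you argue: some part $i$ is constantly $0$, so $\bigcap_k S_k \neq \emptyset$, the $S_k$ are nested, $T_p(S_k)\in S_k$, hence $|\{T_p(S_k)\}|<p$. But $\bigcap_k S_k\neq\emptyset$ together with nesting and $T_p(S_k)\in S_k$ does \emph{not} bound $|\{T_p(S_k)\}|$ away from $p$: the labels all lie in $S_1$, and if $S_1=\bbZ_p$ you could a priori realize all $p$ of them. The fact you need — and which is available to you, since you already observed that all vertices of $\sigma$ lie outside $D$ — is that $z_1\notin D$, so some $(z_1)^i_{n+1}>0$, hence $S_1\subsetneq\bbZ_p$, i.e., $|S_1|\leq p-1$. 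Combined with $T_p(S_k)\in S_k\subseteq S_1$, this gives at most $p-1$ distinct labels. Notice that once you invoke $|S_1|\leq p-1$, the ``constant-zero part $i$'' and the ``$\bigcap_k S_k\neq\emptyset$'' observations become unnecessary (they only show $T_p$ is well-defined on $S_k$, which already follows from each $z_k$ lying in $C_\infty\setminus D$). So you should cut those and state the $|S_1|\leq p-1$ fact explicitly; or switch to the paper's $\argmax$ tie-break, which sidesteps the nesting argument entirely. Steps (1) and (2) of your plan are correct and match the paper.
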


\begin{proof}
The domain of $(p,n)$-\tucker~with Kuhn's triangulation can be written as
$$X_n = \{(c^1, \dots, c^p) \in U^{np}_m | \forall j \in [n], \exists i \in [p] : c^i_j = 0 \}.$$

Note that the subset of $X_{n+1}$ corresponding to $c^1_{n+1} = \dots = c^p_{n+1} = 0$ can be identified with $X_n$. Since we use Kuhn's triangulation in both cases, the triangulations ``match''.

Let $\lambda$ be an instance of $(p,n)$-\tucker. We construct an instance $\lambda'$ of $(p,n+1)$-\tucker~as follows. For any vertex $(c^1, \dots, c^p)$, we set 
\[\lambda'(c^1, \dots, c^p) = \left\{ \begin{array}{ll} 
                    \lambda(c^1, \dots, c^p), &\text{if } c^1_{n+1} = \dots = c^p_{n+1}=0\\
                    (k,n+1) \text{ where } k = T_p(\argmax_{i \in [p]} c^i_{n+1}), &\text{otherwise}
                \end{array} \right.
\]
In the second case, we have used the tie-breaking rule defined in \cref{def:tie-breaking}. Since this tie-breaking is $\mathbb{Z}_p$-equivariant, it is easy to see that $\lambda'$ also satisfies the boundary conditions.

Consider any solution to this instance, i.e., a $(p-1)$-simplex $\sigma$ that has all labels $(1,\ell), \dots, (p,\ell)$ for some $\ell$.  If $\ell = n+1$, then there exists $i \in [p]$ such that $c^i_{n+1}=0$ \emph{for all} vertices of $\sigma$. This follows from the fact that the triangulation is ``nice''. But then, $\sigma$ cannot have the label $(i,n+1)$. Hence, it must be that $\ell < n+1$ and $\sigma$ is contained in the region identified with $X_n$ where $\lambda = \lambda'$. Thus, $\sigma$ also yields a solution to the original instance.
\end{proof}

\section{\pstartucker{p} is in \ppak{p}, Full Proof}\label{sec:app:ptucker-in-ppa-p}

In this section, we provide the full proof of \cref{thm:ptucker-in-ppa-p}. Namely, we show how to reduce \pstartucker{p} to \imba{p}.
\begin{proof}

The proof is a generalization of the construction given by \citet{FT81} for Tucker's lemma. The main difficulties in generalizing their approach are:
\begin{itemize}
    \item For $p=2$, when a path hits the boundary, there is a corresponding path that also hits the boundary on the antipodal side, and we can join the two endpoints. For $p > 2$, when a path hits the boundary, there are now $p-1$ other corresponding paths that also hit the boundary. We ensure that no solution occurs there by directing all the edges. We show how to direct the edges consistently and efficiently.
    \item For $p=2$, the original construction associates a label with each axis of the domain. For $p > 2$, there are more axes than labels, and so a single label must be associated to multiple axes. This creates imbalanced nodes in the graph that are not solutions. We solve this problem by carefully assigning weights to the edges of the graph.
\end{itemize}

\paragraph*{Sub-orthants.} Recall that $d=t(p-1)$. Consider the domain $R_{p,m}^d$ with a labeling function $\lambda$ given by a Boolean circuit. Let $T$ be Kuhn's triangulation of $R_{p,m}^d$ as described earlier. The domain $R_{p,m}^d$ can be subdivided into what we call sub-orthants, which are orthants of coordinate subspaces. Formally, a sub-orthant is a space of the form $A_1 \times A_2 \times \dots \times A_d$, where $A_\ell = \{*^{i_\ell} j : 0 \leq j \leq m\}$ or $A_\ell = \{0\}$ for $\ell=1, \dots, d$.

We associate a label to every axis of $R_{p,m}^d$ as follows. The label $*^i j$ is associated to the $*^i$-axis of the $[(j-1)(p-1) + \ell]$th copy of $R_{p,m}$, for $\ell = 1,2,\dots,p-1$. Thus, every label is associated to exactly $p-1$ axes. For any sub-orthant $X$, let $S(X)$ denote the set of labels associated with the axes that are used by $X$. For any simplex $\sigma$ of $T$, we let $O(\sigma)$ denote the smallest sub-orthant that contains $\sigma$.

For any sub-orthant $O$ and $j \in [t]$, let $r_j(O)$ be the number of coordinates in the range $(j-1)(p-1)+1, \dots, j(p-1)$ that are equal to $0$ in $O$. In particular, we have $\sum_{j=1}^t r_j(O) = t(p-1) - \dim(O)$. Note that if $|S(O)| = \dim (O)$, then $r_j(O) = p-1 - |\{i \in \mathbb{Z}_p : *^i j \in S(O)\}|$. We abuse notation and denote $r_j(\sigma) := r_j(O(\sigma))$.

\paragraph*{Happy simplices.}
Let $k \in \{0,1,\dots, d\}$. A $k$-dimensional simplex $\sigma$ of $T$ is \emph{happy}, if
\begin{enumerate}
    \item $\dim(O(\sigma)) = k$ \emph{($\sigma$ is full-dimensional in its sub-orthant)}
    \item $|S(O(\sigma))| = \dim (O(\sigma))$ \emph{(the sub-orthant uses $\leq 1$ axis associated with each label)}
    \item $S(O(\sigma)) \subseteq \lambda(\sigma)$ \emph{($\sigma$ carries all the labels associated with its sub-orthant)}
\end{enumerate}
A happy simplex is called \emph{super-happy} if we actually have $S(O(\sigma)) \subsetneq \lambda(\sigma)$. In particular, the $0$-dimensional simplex $0^d$ is super-happy.

Consider a happy simplex $\sigma$ that has a facet $\tau \subset \partial R_{p,m}^d$ such that $\lambda(\tau) = S(O(\sigma))$. Such a simplex is called a boundary-happy simplex. If $\sigma$ is such a simplex, then the simplex that has $\theta \tau$ as a facet is also boundary-happy. Thus, we group $\tau, \theta \tau, \dots, \theta^{p-1} \tau$ together into an equivalence class $[\tau]$. Every such equivalence class has size exactly $p$. Formally, let $B$ be the set of all simplices $\tau \subset \partial R_{p,m}^d$ such that $\lambda(\tau) = S(O(\tau))$ and $|S(O(\tau))| = \dim (O(\tau))$. We define an equivalence relation on $B$ by $\tau_1 \equiv \tau_2$ if and only if $\exists i \in \mathbb{Z}_p$ such that $\tau_1 = \theta^i \tau_2$.

We construct a graph $G$. The vertices of $G$ are the happy simplices of $T$ and the equivalence classes $[\tau]$ (for all $\tau \in B$).

\paragraph*{Orientation.}
We now define an orientation for our simplices. Fix an ordering of the labels, e.g., $*^1 1, *^2 1, \dots, *^p 1, *^1 2, \dots$. Let $\sigma$ be any happy $k$-simplex, $k \geq 1$, and let $x_0 x_1 \dots x_k$ be any ordering of its vertices. Let $\widehat{x}_0, \widehat{x}_1, \dots, \widehat{x}_k \in \mathbb{N}^k$ denote the coordinates of $x_0, x_1, \dots, x_k$ in $O(\sigma)$, where the coordinates are ordered according to the fixed ordering of the labels. Note that there is at most one coordinate associated to each label (because $|S(O(\sigma))| = k$) and thus the coordinate vectors are uniquely determined. Furthermore, note that the coordinates are non-negative. We define the $k \times k$ matrix $M = [\widehat{x}_0-\widehat{x}_1, \widehat{x}_0 - \widehat{x}_2, \dots, \widehat{x}_0-\widehat{x}_k]$, i.e., the $i$th column is $\widehat{x}_0 - \widehat{x}_i$. Then, we define the orientation of happy simplex $\sigma$ with ordering $x_0x_1\dots x_k$ as $\textup{or}(\sigma|x_0 \dots x_k) = \det M$. Note that by construction of the triangulation $T$, we always have $\det M \in \{-1,+1\}$. Indeed, it is easy to check that $M$ can be transformed into the identity matrix by elementary operations that can only change the sign of the determinant.

\paragraph*{Edges.}
Let $\sigma$ be a happy $k$-simplex, $k \geq 1$. If $\sigma$ is super-happy, then it has a single facet $\tau$ such that $\lambda(\tau) = S(O(\sigma))$. If it is not super-happy, then it has exactly two facets $\tau_1$ and $\tau_2$ such that $\lambda(\tau_i) = S(O(\sigma))$, $i=1,2$. In any case, any such facet $\tau$ of $\sigma$ yields an edge as follows:
\begin{itemize}
    \item If $\tau$ does not lie in the boundary of the sub-orthant $O(\sigma)$, then there is exactly one other $k$-simplex $\sigma'$ in $O(\sigma)$ that also has $\tau$ as its facet. $\sigma'$ is also happy, and we put an edge between $\sigma$ and $\sigma'$.
    \item If $\tau$ lies in the boundary of the sub-orthant $O(\sigma)$, there are two cases:
    \begin{itemize}
        \item $\tau$ lies in $\partial R_{p,m}^d$. In that case, $\sigma$ is a boundary-happy simplex and we put an edge between $\sigma$ and $[\tau]$.
        \item $\tau$ does not lie in $\partial R_{p,m}^d$. Then, $\tau$ is a super-happy $(k-1)$-simplex and we put an edge between $\sigma$ and $\tau$.
    \end{itemize}
\end{itemize}
In all of these cases, the direction of the edge is determined as follows. Let $x_0x_1 \dots x_k$ be the ordering of the vertices of $\sigma$ such that $\tau = \{x_1, \dots, x_k\}$ and $x_1 \dots x_k$ are ordered according to their labels. If $\textup{or}(\sigma|x_0 \dots x_k) = 1$, then the edge is incoming into $\sigma$. Otherwise, it is outgoing out of $\sigma$. Finally, the weight of the edge is always $\prod_{j=1}^t r_j(\sigma)!$.

By the definition, it follows that there are three types of edges:
\begin{itemize}
    \item (Type 1) An edge between two happy $k$-simplices $\sigma_1, \sigma_2$ that lie in the same sub-orthant and share a facet $\tau$ with $\lambda(\tau) = S(O(\sigma_i))$, $i=1,2$.
    \item (Type 2) An edge between a happy simplex $\sigma$ and its super-happy facet $\tau$ such that $\lambda(\tau) = S(O(\sigma))$.
    \item (Type 3) An edge between a boundary-happy simplex $\sigma$ and its facet equivalence class $[\tau]$.
\end{itemize}

An edge of Type 2 or 3 is always ``created'' by exactly one of its endpoints. Thus, its direction and weight are well-defined. An edge of Type 1 however, is ``created'' by both of its endpoints and we will prove that it is well-defined, i.e., both endpoints agree on its direction and weight. For the weight this is easy to see, since $O(\sigma_1)=O(\sigma_2)$ implies that $r_j(\sigma_1) = r_j(\sigma_2)$ for all $j$. For the direction, we postpone this consistency check to the end of the proof.

We now prove that all vertices of $G$ that do not yield a solution are balanced modulo $p$, except $0^d$.

\paragraph*{The trivial solution $0^d$.}
We have $\lambda(0^d) = *^i j$. There are exactly $p-1$ sub-orthants $O$ such that $S(O) = \{*^i j\}$ and each of them contains a happy 1-simplex $\sigma$ that has $0^d$ as a facet. It follows that $0^d$ has $p-1$ edges, each with weight $((p-1)!)^t/(p-1)$. Furthermore, all the edges are outgoing, because $\textup{or}(\sigma|x_0 0^d) = 1$ (i.e., incoming into $\sigma$). It follows that the total imbalance of $0^d$ is $(p-1) ((p-1)!)^t/(p-1) = ((p-1)!)^t = (-1)^t \mod p$, where we used $(p-1)! = -1 \mod p$ since $p$ is prime (Wilson's theorem). It follows that $0^d$ is always a valid trivial solution for \textsc{Imbalance-mod-$p$}, because $(-1)^t \neq 0 \mod p$ for all $t \geq 1$.

\paragraph*{Happy, but not super-happy.}
Consider a happy $k$-simplex $\sigma$ with two facets $\tau_1, \tau_2$ that satisfy $\lambda(\tau_i) = S(O(\sigma))$, $i=1,2$. Then, $\sigma$ has two edges and they both have the same weight. Let $x_0x_1 \dots x_k$ be the ordering of $\sigma$ such that $\tau_1 = \{x_1, \dots, x_k\}$ and $x_1\dots x_k$ are ordered according to their labels. Let $i \in [k]$ be the index such that $x_i$ and $x_0$ have the same label. In particular, $\tau_2 = \{x_1, \dots, x_{i-1},x_0,x_{i+1},\dots, x_k\}$ and $x_1 \dots x_{i-1}x_0x_{i+1}\dots x_k$ are ordered according to their labels. We have
\begin{equation*}\begin{split}
    \det [\widehat{x}_0-\widehat{x}_1, \widehat{x}_0 - \widehat{x}_2, \dots, \widehat{x}_0-\widehat{x}_k] &= \det [\widehat{x}_i-\widehat{x}_1, \dots, \widehat{x}_i - \widehat{x}_{i-1}, \widehat{x}_0 - \widehat{x}_i, \widehat{x}_i - \widehat{x}_{i+1}, \dots, \widehat{x}_i-\widehat{x}_k]\\
    &= - \det [\widehat{x}_i-\widehat{x}_1, \dots, \widehat{x}_i - \widehat{x}_{i-1}, \widehat{x}_i - \widehat{x}_0, \widehat{x}_i - \widehat{x}_{i+1}, \dots, \widehat{x}_i-\widehat{x}_k]
\end{split}\end{equation*}
where we first subtracted the $i$th column from all other columns, and then multiplied the $i$th column by $-1$. It follows that $\textup{or}(\sigma|x_0 \dots x_k) = - \textup{or}(\sigma|x_i x_1 \dots x_{i-1} x_0 x_{i+1} \dots x_k)$. Thus, one edge is incoming and the other outgoing, i.e., $\sigma$ is balanced.

\paragraph*{Equivalence class.}
Consider an equivalence class $[\tau]$. Let $\sigma$ be the happy $k$-simplex that has $\tau$ as a facet. In $G$, $[\tau]$ has exactly $p$ edges: one with each of $\sigma, \theta \sigma, \theta^2 \sigma, \dots, \theta^{p-1} \sigma$. Since $S(O(\theta^i \sigma)) = \theta^i S(O(\sigma))$ for all $i$, it follows that $r_j(\theta^i \sigma) = r_j(\sigma)$ for all $i,j$. Thus, all $p$ edges have the same weight. Let $x_0 \dots x_k$ be the ordering of $\sigma$ such that $\tau = \{x_1, \dots, x_k\}$ and $x_1\dots x_k$ are ordered according to their labels. Let $y_i = \theta x_i$ for all $i$. Then, $y_1 \dots y_k$ might not be ordered according to their labels. We let $\pi$ denote the permutation that we would have to apply to order them correctly. As before, $\widehat{x}_i$ denotes the coordinates of $x_i$ restricted to $O(\sigma)$, where the coordinates are ordered according to the associated label. $\widehat{y}_i$ denotes the coordinates of $y_i$ restricted to $O(\theta \sigma)$, where the coordinates are ordered according to the associated label. Since the associated labels have changed according to $\theta$, it follows that if we re-order the coordinates of $\widehat{x}_i$ according to $\pi$ we obtain $\widehat{y}_i$ for all $i=0,1,\dots, k$. Thus, we have
\begin{equation*}\begin{split}
\textup{or}(\theta \sigma|y_0 \pi(y_1 \dots y_k)) &= \textup{sgn}(\pi) \det [\widehat{y}_0-\widehat{y}_1, \widehat{y}_0 - \widehat{y}_2, \dots, \widehat{y}_0-\widehat{y}_k]\\
&= \textup{sgn}(\pi)^2 [\widehat{x}_0-\widehat{x}_1, \widehat{x}_0 - \widehat{x}_2, \dots, \widehat{x}_0-\widehat{x}_k]\\
&= \textup{or}(\sigma|x_0 \dots x_k)
\end{split}\end{equation*}
It follows that all edges of $[\tau]$ are directed the same way, i.e., they are all incoming or all outgoing. Since there are $p$ edges and they also have the same weight, it follows that $[\tau]$ has imbalance $0$ modulo $p$.
In this argument we assumed that $\lambda$ satisfies the boundary conditions. Thus, if $[\tau]$ is not balanced modulo $p$, we obtain a counter-example, which is a solution.

\paragraph*{Super-happy.}
Consider a super-happy $k$-simplex $\sigma$, $k \geq 1$. Note that $\sigma$ has a single facet $\tau$ that satisfies $\lambda(\tau) = S(O(\sigma))$. Thus, $\sigma$ ``creates'' a single edge. Let $x_0 \dots x_k$ be the ordering of $\sigma$ such that $\tau = \{x_1, \dots, x_k\}$ and $x_1 \dots x_k$ are ordered according to their labels. The edge has weight $\prod_{j=1}^t r_j(\sigma)!$ and it is incoming if $\textup{or}(\sigma|x_0 \dots x_k) = 1$, outgoing otherwise.

Since $\sigma$ is super-happy, we have $\lambda(x_0) \notin \lambda(\tau) = S(O(\sigma))$. Let $*^i \ell = \lambda(x_0)$. If $\{*^j \ell | j \in [p] \setminus \{i\}\} \subseteq S(O(\sigma))$, or equivalently if $r_\ell(\sigma) = 0$, then $\sigma$ yields a solution. Otherwise, there are exactly $r_\ell(\sigma)$ different sub-orthants $O$ such that $O(\sigma) \subset O$ and $S(O) = S(O(\sigma)) \cup \{*^i \ell\}$. Thus, there are exactly $r_\ell(\sigma)$ happy $(k+1)$-simplices $\rho$ such that $\sigma$ is a facet of $\rho$ and $\rho$ is happy because of $\sigma$ (i.e., $S(O(\rho))=\lambda(\sigma)$). It follows that $\sigma$ has $r_\ell(\sigma)$ additional edges (apart from the one it ``created''). Each of these edges has weight $\prod_{j=1}^t r_j(\rho)! = (r_\ell(\sigma))^{-1} \prod_{j=1}^t r_j(\sigma)!$, since $r_\ell(\rho) = r_\ell(\sigma)-1$. Thus, if these $r_\ell(\sigma)$ edges have opposite direction to the edge ``created'' by $\sigma$ (from the perspective of $\sigma$), $\sigma$ will be balanced.

Consider any such $\rho$. Let $x_0\dots x_{k+1}$ be the ordering of $\rho$ such that $\sigma = \{x_1, \dots, x_{k+1}\}$ and $x_1 \dots x_{k+1}$ are ordered according to their labels. Let $t \in [k+1]$ be the index of label $*^i \ell$ if we order the labels in $S(O(\rho))$. Then, we also have that $\tau = \sigma \setminus \{x_t\}$. Furthermore, $x_1 \dots x_{t-1} x_{t+1} \dots x_{k+1}$ are also ordered according to their labels. From the perspective of $\sigma$, the edge it ``created'' is directed according to $\textup{or}(\sigma|x_t x_1 \dots x_{t-1} x_{t+1} \dots x_{k+1})$ and the edge created by $\rho$ is directed according to $- \textup{or}(\rho|x_0 x_1 \dots x_{k+1})$. As before, let $\widehat{x}_j$ denote the coordinates of $x_j$ restricted to $O(\rho)$, where the coordinates are ordered according to the associated label. Let $\bar{x}_j$ denote the coordinates of $x_j$ restricted to $O(\sigma)$, where the coordinates are ordered according to the associated label. Note that if we remove the $t$th coordinate from $\widehat{x}_j$, then we obtain $\bar{x}_j$. We now have
\begin{equation*}\begin{split}
\textup{or}(\rho|x_0 x_1 \dots x_{k+1}) &= \det [\widehat{x}_0-\widehat{x}_1, \dots, \widehat{x}_0-\widehat{x}_{k+1}]\\
&= \det [\widehat{x}_t-\widehat{x}_1, \dots, \widehat{x}_t-\widehat{x}_{t-1}, \widehat{x}_0-\widehat{x}_t, \widehat{x}_t-\widehat{x}_{t+1}, \dots, \widehat{x}_t-\widehat{x}_{k+1}]\\
&= (-1)^{t+t} \det [\bar{x}_t-\bar{x}_1, \dots, \bar{x}_t-\bar{x}_{t-1}, \bar{x}_t-\bar{x}_{t+1}, \dots, \bar{x}_t-\bar{x}_{k+1}]\\
&= \textup{or}(\sigma|x_t x_1 \dots x_{t-1} x_{t+1} \dots x_{k+1})
\end{split}\end{equation*}
where we first subtracted the $t$th column from all other columns, and then we used Laplace's determinant formula along the $t$th row. Note that the $t$th entry in $\widehat{x}_t-\widehat{x}_j$ is $0$ for all $j \in [k+1]$ and it is $1$ in $\widehat{x}_0-\widehat{x}_t$.

\paragraph*{Consistency.}
The only thing that remains to be checked is that edges of Type 1 are well-defined, in terms of the direction. Let $\sigma_1,\sigma_2$ be two happy $k$-simplices that lie in the same sub-orthant and share a facet $\tau$ with $\lambda(\tau) = S(O(\sigma_i))$, $i=1,2$. Let $\{x_1, \dots, x_k\} = \tau$ and $x_1 \dots x_k$ be the ordering according to their labels. Let $\{x_0\} = \sigma_1 \setminus \tau$ and $\{x_0'\} = \sigma_2 \setminus \tau$. We want to show that $\textup{or}(\sigma_1|x_0 x_1 \dots x_k)$ and $\textup{or}(\sigma_2|x_0' x_1 \dots x_k)$ have opposite signs. This can be proved directly combinatorially by using the way the triangulation is constructed, but we provide a proof that is more general here. Let $\phi: \mathbb{R}^k \to \mathbb{R}^k$ be the unique linear function such that $\phi(\widehat{x}_0-\widehat{x}_1)=\widehat{x}_0'-\widehat{x}_1$ and $\phi(\widehat{x}_1-\widehat{x}_i)=\widehat{x}_1-\widehat{x}_i$ for all $i \in [k] \setminus \{1\}$. $\phi$ is unique, because $\widehat{x}_0-\widehat{x}_1, \widehat{x}_1-\widehat{x}_2, \dots, \widehat{x}_1-\widehat{x}_k$ form a basis of $\mathbb{R}^k$. $\phi$ is the identity function on the hyperplane given by $\widehat{x}_1-\widehat{x}_2, \dots, \widehat{x}_1-\widehat{x}_k$ and maps $\widehat{x}_0-\widehat{x}_1$ to $\widehat{x}_0'-\widehat{x}_1$. Since $x_0$ and $x_0'$ lie on opposite sides of the hyperplane defined by $\tau$, it follows that $\widehat{x}_0-\widehat{x}_1$ and $\widehat{x}_0'-\widehat{x}_1$ lie on opposite sides of the hyperplane on which $\phi$ is the identity. It follows that $\det \phi < 0$. Thus, we can write
\begin{equation*}\begin{split}
\det [\widehat{x}_0'-\widehat{x}_1, \dots, \widehat{x}_0'-\widehat{x}_k] &= \det [\widehat{x}_0'-\widehat{x}_1, \widehat{x}_1-\widehat{x}_2 \dots, \widehat{x}_1-\widehat{x}_k]\\
&= \det \phi \det [\widehat{x}_0-\widehat{x}_1, \widehat{x}_1-\widehat{x}_2 \dots, \widehat{x}_1-\widehat{x}_k]\\
&= \det \phi \det [\widehat{x}_0-\widehat{x}_1, \dots, \widehat{x}_0-\widehat{x}_k]
\end{split}\end{equation*}
and the claim follows.
\end{proof}

\end{document}